\global\def\draftcontrol{0}
\pgfplotsset{compat=1.17} 
\renewcommand\section{\@startsection {section}{1}{\z@}%
                                   {-3.5ex \@plus -1ex \@minus -.2ex}%
                                   {2.3ex \@plus.2ex}%
                                   {\normalfont\large\bfseries}} 
\renewcommand\subsection{\@startsection{subsection}{2}{\z@}%
                                   {-3.25ex\@plus -1ex \@minus -.2ex}%
                                   {1.5ex \@plus .2ex}%
                                   {\normalfont\normalsize\bfseries}} 
\renewcommand\subsubsection{\@startsection{subsubsection}{3}{\z@}%
                                   {-3.25ex\@plus -1ex \@minus -.2ex}%
                                   {1.5ex \@plus .2ex}%
                                   {\normalfont\normalsize\it}} 
\renewcommand\paragraph{\@startsection{paragraph}{4}{\z@}%
                                   {-3.25ex\@plus -1ex \@minus -.2ex}%
                                   {1.5ex \@plus .2ex}%
                                   {\normalfont\normalsize\bf}} 
\renewcommand\subparagraph{\@startsection{subparagraph}{5}{\z@}%
                                   {-1.25ex\@plus -1ex \@minus -.2ex}%
                                   {0ex \@plus .2ex}%
                                   {\normalfont\normalsize\it}} 
\newtheorem{theorem}{Theorem}[section]
\newtheorem{proposition}[theorem]{Proposition}
\newtheorem{corollary}[theorem]{Corollary}
\newtheorem{construction}{Construction}[section]
\theoremstyle{definition}
\newtheorem{definition}[theorem]{Definition}
\newtheorem{lemma}[theorem]{Lemma}
\theoremstyle{remark}
\newtheorem*{remark}{Remark}
\numberwithin{equation}{section}
\def\lm{\lambda}
\renewcommand{\AA}{\mathbb{A}}
\newcommand{\CC}{\mathbb{C}}
\newcommand{\HHH}{\mathbb{H}}
\newcommand{\fp}{\mathbb{P}}
\newcommand{\RR}{\mathbb{R}}
\newcommand{\ZZ}{\mathbb{Z}}
\newcommand*{\defeq}{\mathrel{\vcenter{\baselineskip0.5ex \lineskiplimit0pt
                     \hbox{\scriptsize.}\hbox{\scriptsize.}}}%
                     =}
\newcommand\scalemath[2]{\scalebox{#1}{\mbox{\ensuremath{\displaystyle #2}}}}
\def\fd{\mathfrak{d}}
\def\fg{\mathfrak{g}}
\def\fp{\mathfrak{p}}
\def\fr{\mathfrak{r}}
\def\ft{\mathfrak{t}}
\def\so{\mathfrak{so}}
\def\susy{\mathfrak{susy}}
\def\kk{\mathbf{k}}
\def\QQ{\mathcal{Q}}
\def\DD{\mathcal{D}}
\def\HH{\mathcal{H}}
\def\calA{\mathcal{A}}
\def\calC{\mathcal{C}}
\def\calN{\mathcal{N}}
\def\calV{\mathcal{V}}
\def\pzA{\mathpzc{A}}
\def\pzE{\mathpzc{E}}
\def\pzV{\mathpzc{V}}
\def\pzb{\mathpzc{b}}
\def\pzc{\mathpzc{c}}
\def\pzh{\mathpzc{h}}
\def\pzp{\mathpzc{p}}
\def\pzs{\mathpzc{s}}
\def\pzt{\mathpzc{t}}
\def\cliff{C\ell}
\def\del{\partial}
\newcommand\topwedge[1]{\mbox{\text{\Large$\wedge$}}^{\hskip -1pt #1}}
\DeclareMathOperator{\Ext}{Ext}
\DeclareMathOperator{\Sym}{Sym}
\DeclareMathOperator{\coker}{coker}
\DeclareMathOperator{\RHom}{RHom}
\DeclareMathOperator{\dg}{deg}
\DeclareMathOperator{\Mat}{Mat}
\DeclareMathOperator{\Spec}{Spec}
\DeclareMathAlphabet{\mathpzc}{OT1}{pzc}{m}{it}
\def\be{\begin{equation}}
\def\ee{\end{equation}}
\def\bc{\begin{center}}
\def\ec{\end{center}}
\def\bcd{\begin{tikzcd}}
\def\ecd{\end{tikzcd}}
\def\ba{\begin{align*}}
\def\ea{\end{align*}}
\def\bmat{\begin{pmatrix}}
\def\emat{\end{pmatrix}}
\definecolor{darkgreen}{RGB}{20,120,10}
\definecolor{gold}{RGB}{240,190,0}
\newtheorem{example}{Example}[theorem]
\begin{document}

\title{Adinkras and Pure Spinors}

\pubnum{
arXiv:2404.07167
}

\date{April 2024}

\author{
Richard Eager$^{a}$,
Simone Noja$^{b}$,
Raphael Senghaas$^{c}$,
Johannes Walcher$^{c}$
\\[0.3cm]
\it $^a$ Kishine Koen, Yokohama 222-0034, Japan  \\
\it McLean, VA 22101, United States \\
\it Department of Physics, University of Maryland, College Park, MD 20742, USA \\ [0.1cm]
\it $^b$ Institute for Mathematics\\
\it Ruprecht-Karls-Universit\"at Heidelberg, 69120 Heidelberg, Germany \\[0.1cm]
\it $^c$ Institute for Mathematics \rm and \it Institute for Theoretical Physics\\
\it Ruprecht-Karls-Universit\"at Heidelberg, 69120 Heidelberg, Germany
}

\email{eager@mathi.uni-heidelberg.de\\
noja@mathi.uni-heidelberg.de\\
raphael.senghaas@gmx.de\\
walcher@uni-heidelberg.de}

\Abstract{The nilpotence variety for extended supersymmetric quantum mechanics is a cone over a
quadric in projective space. The pure spinor correspondence, which relates the description of
off-shell supermultiplets to the classification of modules over the corresponding hypersurface ring,
reduces to a classical problem of linear algebra. Spinor bundles, which correspond to maximal
Cohen--Macaulay modules, serve as basic building blocks. Koszul duality appears as a deformed version
of the Bernstein--Gel'fand--Gel'fand correspondence that we make fully concrete. We illustrate in
numerous examples the close relationship between these connections and the powerful graphical
technology of Adinkras, which appear as a decategorification of special
complexes on quadrics. We emphasize the role of R-symmetry for recovering higher-dimensional gauge
and gravity multiplets.
\begin{center}
{\it To Sylvester James Gates, with gratitude and admiration.}
\end{center}}

\makepapertitle

\pagestyle{paper}

\newpage
\setcounter{tocdepth}{2}
\tableofcontents
\setcounter{footnote}{0}

\newpage

\hfill\parbox[b]{9.9cm}{{\it
Par ma foi! Il y a plus de quarante ans que je dis de  \\
la prose sans que j’en susse rien, et je vous suis le plus  \\
oblig\'e du monde de m’avoir appris cela.} (Moli\`ere)}

\section{Introduction}

Supersymmetric dynamics is governed by a super Lie algebra usually defined in terms of some number,
$N$, of anti-commuting supercharges, $Q_i$, that square to the Hamiltonian, $H$, in other words, they satisfy the relations
\begin{equation}
\label{commH}
\{Q_i , Q_j \} = 2 \delta_{ij} H \,,\qquad i,j = 1,\ldots N,
\end{equation}
where $\delta_{ij}=1$ when $i=j$, and $0$ otherwise. If, as in ordinary quantum mechanics, symmetry
generators are represented by Hermitian operators, the relations \eqref{commH} should be interpreted
over the real numbers. Irreducible representations are finite-dimensional, and, when $H>0$, can be
constructed in a basis-independent way as modules over the real Clifford algebra
$\cliff(\RR^N)\cong\Mat(2^n,\AA)$, where, famously, $\AA=\RR, \RR\oplus\RR, \RR, \CC,
\HHH,\HHH\oplus\HHH,\HHH,\CC$ for $N=0,\ldots, 7\bmod 8$, respectively, and $n$ is an integer such
that $\dim_\RR \Mat(2^n,\AA)=2^N$.\footnote{In actuality, this is not the most useful approach. For
starters, when $N$ is odd the simple Clifford modules are not $\ZZ_2$-graded (because the
non-trivial central element of the Clifford algebra is odd). Requiring a well-defined Fermion number
$(-1)^F$ leads after appropriate rescaling to an enhancement to the next even number of real
supercharges. (Specifically, by $Q_{N+1} = (-1)^{N(N+1)/4} H^{(1-N)/2}\cdot Q_1\cdots Q_N \cdot
(-1)^{F}$.) Second, one needs to worry about the existence of a suitable invariant Hermitian inner
product after complexification. This is most easily taken care of by the tensor product of $N/2$
copies of the standard and obvious $2$-dimensional representation of $\{Q,Q^\dagger\}=2H$. The
important point for now is that the unitary representations of \eqref{commH} are essentially unique,
of complex dimension $\lceil N/2 \rceil$. In fact, even the Hermiticity of the supercharges cannot
really be justified without appeal to higher-dimensional Lorentz, R-, or gauge symmetry.} When
$H=0$, the only unitary representation of \eqref{commH} is the trivial one, while $H<0$ is
impossible. In other words, unitary representations of \eqref{commH} are essentially unique.

When \eqref{commH} arises by ``dimensional reduction'' of the (possibly extended) $d$-dimensional
super Poincar\'e algebra, $N$ is the (real!) dimension of the underlying spinorial representation,
and usually an even number. More precisely, in the presence of either central charges or
non-trivial momentum (in particular, for massless representations), the $d$-dimensional bracket
might be degenerate, and the representations factor through its kernel. This leads to a somewhat
intricate, but well-known and easily parameterizable pattern of ``BPS representations'' of various
sizes.

In quantum field theory, one is interested in realizations of supersymmetry that are in addition
``local'' in the sense that they are supported by observables satisfying a specific set of intricate
conditions with respect to the standard topology, and usually also a metric structure, of the
underlying spacetime manifold. These conditions, from which we shall refrain to spell out here, are
formally satisfied in the representation as infinite-dimensional integrals of local functionals
(pointwise polynomial functions of fields and their derivatives) over a suitable ``field space,''
provided that field space (and the measure on it, defined with the help of the action functional)
carries a manifest ``local, off-shell'' representation of the super Poincar\'e algebra. Referring to
\cite{Eager22} for a thorough recent description of these notions (which are important background
for our present paper), and to appendix \ref{Appendix1} for some details on our conventions, we
record the intuitive idea that fields should be sections of super vector bundles on which the
algebra act by derivations, preserving the Lagrangian density, and thence close, without appeal to
the equations of motion. In general, this is a rather difficult problem to capture mathematically,
even before one comes to the definition of the integral.

Interestingly, as spelled out most clearly by Gates et al.\ \cite{GatesChallenge}, the challenge of
reconstructing off-shell representations of the super Poincar\'e algebra in arbitrary dimension has
a shadow in its dimensional reduction to $d=1$ (i.e., quantum mechanics, eq.\ \eqref{commH}) that is
already far more complicated than one might expect based on the simple results recorded above. This
can be thought of as a consequence of replacing on-shell unitarity with the grading (or rather,
filtering, see below) of fields and observables by ``engineering dimension,'' and leads in turn to
the beautiful graphical method of ``Adinkras'' \cite{Faux05}, further developed and studied, for
example, in \cite{Doran06,Doran07}.

More recently, a rather different approach to the problem that works uniformly across dimensions has
been developed as a generalization of the so-called pure spinor formalism \cite{BerkovitsParticle,
Cederwall:2001dx, CederwallReview}. The idea, as codified in \cite{Eager22,ElliottDerived}, is that
the ubiquitous differential constraints that preclude the application of ordinary superfields for
large number of supercharges can be ``resolved'', in the sense of homological algebra, over an
auxiliary geometric space known as the ``nilpotence variety'' $Y=\{ Q | Q^2=0 \}$ that parameterizes
\cite{Eager18} square-zero supercharges in the {\it complexified} super Poincar\'e
algebra.\footnote{Indeed, as emphasized above, real supercharges square to zero on-shell only on the
actual kernel of the super Lie bracket, and must then vanish identically. Off-shell, much more is
possible.} This construction plays particularly well with the recent interest in understanding, as
toy models and possibly for exact solution, all topologically and holomorphically twisted cousins of
various supersymmetric field theories \cite{SaberiSCA,6Dmultiplets}, and especially supergravity
\cite{Eager11d,Raghavendran11d,Hahner11d}.

The purpose of the present paper is to explain the relationship between off-shell multiplets, as
witnessed by Adinkras, and the pure spinor formalism, restricted to $d=1$. Specifically, after
reviewing the basics of Adinkras, in section \ref{Adinkras} we show that every Adinkra is associated
with a graded linear complex of modules over the nilpotence variety of the $d=1$ supersymmetry
algebra, which can be interpreted -- in a suitable sense -- as the input of the
pure spinor functor. In one dimension the nilpotence variety is particularly nice: it is defined by a quadric hypersurface,
\begin{equation}
    q_N = \sum\limits_{i=1}^N \lambda_i^2 = 0.
\end{equation}
The tools developed in appendix \ref{CplxAd} as the mathematical foundation for
the correspondence between Adinkras and complexes on quadrics are interesting in
their own rights. In particular, Adinkras can be understood to represent objects in the
heart of a non-standard t-structure on the bounded derived category of the
quadric and we prove a generalization of Eisenbud's unrolling procedure for the
complexes associated to Adinkras.

This embeds the graphical technology of Adinkras into an incredibly rich
algebro-geometric environment, reviewed in section \ref{mcm}. In fact, in this easy setting, the pure
spinor correspondence takes on a beautiful (and concrete!) meaning. Modules on the nilpotence
variety correspond to modules over the universal enveloping algebra of the supersymmetry algebra,
and vice versa. This equivalence, which is a \enquote{deformation} of the celebrated
Bernstein--Gel'fand--Gel'fand correspondence \cite{BGG} between sheaves on projective spaces and
modules on a related exterior algebra\footnote{Notice that the standard Bernstein--Gel'fand--Gel'fand
correspondence can in fact be thought of as a pure spinor superfield formalism in $d=0$.}, can be
interpreted as a very concrete form of Koszul duality, relating a geometric category to a
representation-theoretic one \cite{Buchweitz2, Buchweitz06}. On the geometry side, a
special class of modules is singled out by this equivalence: these are the so-called spinor bundles
\cite{Ottaviani88}, which correspond to one of the most fundamental types of Adinkras --- the valise Adinkras. On the other hand, starting from a valise Adinkra, one can construct other
Adinkras by raising or lowering its vertices. In section \ref{sec:flipping} we give this procedure a precise geometric meaning, as \enquote{taking
quotients} (in a suitable sense) in a derived category. Our endeavor, though, is not limited to
abstract theoretical constructions and geometric framing. We put to good use the homological methods
developed to construct and study several examples of Adinkra multiplets in section \ref{props}.
\emph{Inter alia}, we show in section \ref{1_7_7_1} that a specific class of Adinkras can be represented by constrained
superfields, as it allows for embeddings into the free superfield. Further, in section \ref{sec:counterexample} we offer evidence that
cohomology provides new useful \enquote{refined} invariants to distinguish Adinkras, and we describe the non-Adinkras examples
considered in \cite{Hubsch13, Doran13} as suitable extension classes of Adinkras in section \ref{ExtAdi}.

In the last section of the paper -- section \ref{beyond} --, we make contact with higher-dimensional supersymmetric gauge and gravity theories, by
recovering multiplets via dimensional reduction and emphasizing the role of R-symmetry -- this is a
promising route for future research endeavors. Finally, in a different direction, we conclude our study by introducing
a generalization of Adinkras to possibly non-positive definite quadratic forms, and we use the
pure spinor formalism in combination with projective geometry to classify families of $d=1$ multiplets and exhibit examples of multiplets that
do not arise from Adinkras.

We hope that our endeavor might serve a two-fold purpose. On the one hand, uncovering the rich
geometry disclosed under the beautiful representation theoretic method of Adinkras, which -- in an unexpected development --
prove themselves a formidable workshop for derived geometric constructions.
On the other hand, offering an easy enough (yet non-trivial!) physically relevant setting to understand in a very
concrete manner the pure spinor correspondence \cite{ElliottDerived} as an instance of Koszul
duality. In view of this, we hope this serves to further illustrate the use of the pure spinor
formalism in higher dimensions.

\section{Off-shell Representations of the Supersymmetry Algebra}
\label{representations}

Let $\kk$ be the field of real or complex numbers, and $N$ a positive integer.

\begin{definition}[Supersymmetry algebra] \label{SUSYalg} The ($N$-extended) supersymmetry algebra,
denoted $\susy$, is the $\ZZ$-graded super Lie algebra over $\kk$ whose underlying vector space is
given by
\begin{equation}
\susy =
\underbrace{\phantom{\langle}\fd\phantom{\rangle}}_{\mbox{\tiny{deg 0}}} \oplus \underbrace{\langle Q_1, \ldots , Q_N \rangle}_{\mbox{\tiny{deg 1}}} \oplus \underbrace{\langle H \rangle}_{\mbox{\tiny{deg 2}}}.
\end{equation}
where $\langle\cdots\rangle$ indicates linear span, $\fd$ is the one-dimensional Lie algebra acting
as the grading, and the fundamental bracket is
\begin{equation}
\tag{\ref{commH}}
\{Q_i, Q_j \} = 2\delta_{ij} H.
\end{equation}
for $i,j=1,\ldots N$. The largest automorphism group preserving the bracket is $Spin(N)$ and is called R-symmetry.
\end{definition}
\begin{remark}
As alluded to in the introduction, and further reviewed in appendix \ref{Appendix1}, the positively
graded part, $\ft = \susy_{>0}$, of the supersymmetry algebra can be viewed as ``reduction'' of a
higher-dimensional super-translation algebra $\ft^{d,\calN}\subset \fp^{d,\calN}$, inside
$d$-dimensional ($\mathcal{N}$-extended) super Poincar\'e. We prefer not to address $\susy$ as ``1d super Poincar\'e''
because ${\rm Spin}(1) \cong \ZZ/2$ really is non-relativistic, and cannot distinguish scalars,
vectors, and spinors. The $\so(N)$ symmetry algebra will play an important role in our story, but it
need not be represented in full, and hence is not included in the algebra.  We will see examples
where only a subgroup of the R-symmetry is preserved when the subgroup arises from the dimensional
reduction of the Lorentz and R-symmetry of a $d$-dimensional supersymmetry algebra. The largest
subalgebra $\mathfrak{r} \subseteq \so(N)$ that can be added to the degree-zero part of the algebra,
such that a multiplet is a representation of the extended algebra serves as an important
invariant. The seemingly unnatural choice of basis in the definition of $\susy$, restriction on the
coefficients, and signature of the quadratic form (when $\kk=\RR$), are dictated by convenience and
physical considerations.
\end{remark}

\subsection{Field multiplets and superfields} \label{FmSf}

To construct physically interesting quantum theories that realize the supersymmetry algebra, we look
for representations of one-dimensional supertranslations, $\ft$, on ``spaces of one-dimensional
fields'', {i.e.}, functions of a time variable $t$ valued in a finite-dimensional super vector space
$V$ over $\kk$, on which $H$ acts as the generator of translations,\footnote{The sign is
conventional \cite{DeligneClassical}.}
\begin{equation}
\label{conventional}
H = -\HH = - \del_t = - \frac{d}{dt}
\end{equation}
Intuitively, $t$ should be real, and at least all differentiable functions allowed. Since our
algebraic considerations will however focus on $V$ and require complexification of $\susy$, it is
more natural to restrict to analytic or even polynomial functions of $t$ as a formal variable, {i.e.},
work with $\calV=V\otimes_\kk\kk[t]$ as the space of fields, while still thinking of $\kk[t]$ as the
ring $\calC^\infty(\RR,\kk)$ of smooth, $\kk$-valued functions of time. At face value, such
representations are not compatible with the grading, i.e., do not lift to representations of
$\susy$, before fields and their derivatives are viewed as distinct observables in a quantum field
theory. Since this is not our primary goal in this paper, we will employ the useful short-cut to
``diagonalize time-translations by a formal Fourier transform'', in other words, work over $\kk[H]$
as universal enveloping algebra of the degree $2$ subalgebra \cite{Doran06}. This maneuver, which
is mediated by the vector space $V$, is accompanied by the identification of the physicist's time
derivative with the action of $H$:
\begin{equation}
\label{accompanied}
\text{For $\phi\in\calV$ as $\kk[t]$-module }
\quad -\dot \phi \;\; \longleftrightarrow \;\; H \phi \quad
\text{ for $\phi\in\calV$ as $\kk[H]$-module}
\end{equation}

\begin{definition}[Off-shell representations]
    \label{inthesense}
An off-shell representation, field multiplet, or simply multiplet, of the supersymmetry algebra is a
$\susy$-supermodule $\calV$ that restricts (after inducing a $U(\ft)$-module) to a finitely generated
free $\kk[H]$-module. The grading of $\calV$ by $\fd\subset\susy$, which we denote by
$\dg$, is assumed to be integral and defines what is known as twice the engineering or mass
dimension.
\end{definition}

Note that the $\ZZ$-grading on an off-shell representation descends to a $\ZZ$-grading on the
underlying super vector space $V\cong \calV/H\calV$. The generators of $V$ can be thought of as
component fields of the multiplet described by $\calV$ and the grading on $V$ captures their mass
dimension.

To see this in action in the simplest example of $N=1$ (with $Q\equiv Q_1$ and $Q^2=H$), note that
if $v \in V$ is a homogeneous element, then $v \otimes 1$ is homogeneous in $V \otimes \kk[H] \cong
\calV$. Acting with $Q^2$ maps $v \otimes 1$ to $v \otimes H$. This leaves two possibilities for the
transformation under $Q$. Either
\\[.2cm]
1. $Q \cdot v \otimes 1 = w \otimes 1$ where $w$ is homogeneous of $\dg(w)=\dg(v) + 1$ and $Q \cdot w \otimes 1 = v \otimes H$; or
\\[.2cm]
2. $Q \cdot v \otimes 1 = w \otimes H$ where $w$ is homogeneous of $\dg(w)=\dg(v) - 1$ and $Q \cdot w \otimes 1
= v \otimes 1$. \\[.2cm]
These two possibilities are isomorphic, unless distinguished by the grading. Using standard physics
notation \eqref{accompanied}, we would write \label{physics}
\\[.2cm]
1. $Qx= \psi$, $Q\psi = -\dot x$ if $v \leftrightarrow x$ is even, and $w\leftrightarrow \psi$ is
odd; or
\\[.2cm]
2. $Q\chi = F$, $Q F = -\dot\chi$ if $v \leftrightarrow \chi$ is odd and $w\leftrightarrow F$ is
even.
\\[.2cm]
Namely, in option 1, the boson is the lowest, while in option 2, the boson is the highest component
of the multiplet. This is reflected in the simplest invariant actions being\footnote{We are
displaying these to ensure consistent sign conventions.}
\begin{equation}
\label{reflected}
S_1 = \int dt \biggl ( \frac 12 \dot x^2 + \frac 12 \psi\dot\psi \biggr)
\,,\qquad
S_2 = \int dt \biggl( \frac 12 \chi\dot\chi + \frac 12 F^2 \biggr)
\end{equation}
for the two options, respectively.

As is well-known, a natural source of field multiplets are ``components of superfields''.
Unconstrained superfields are simply smooth functions on superspace, $\Phi\in
\calC^\infty(\RR^{1|N},\kk)\cong \calC^\infty(\RR,\kk)\otimes \topwedge{\bullet}(\ft_1^{\,\vee})$,
alternatively $\kk[t,\theta_1,\ldots, \theta_N]$, where $\theta_i$ are anti-commuting linear
coordinates, possibly with coefficients twisted to adjust mass dimension and parity. The space of
superfields inherits a representation of $\ft$ from infinitesimal left-translations by the
supervector fields
\begin{equation}
\label{superconfusing}
Q_i = \QQ_i = \del_{\theta_i} + \theta_i \del_t \,,
\qquad
\{\QQ_i,\QQ_j\} = 2\delta_{ij} \HH
\end{equation}
which thereby induces a representation on the component fields (coefficients of $\theta$-homogeneous
pieces of $\Phi$). For example, when $N=1$, the components of the bosonic superfield $X = x + \theta
\psi \in \calC^\infty(\RR^{1|1},\kk)$ transform like those in option 1 above, while the components
of the Fermi superfield $\Upsilon = \chi + \theta F\in \calC^\infty(\RR^{1|1},\kk[1])$ correspond to
option 2. The actions \eqref{reflected} can be written
\begin{equation}
    \label{manifest}
S_1 = - \frac{1}{2} \int dt d\theta \dot X \DD X
\,,\qquad
S_2 = \frac{1}{2} \int dt d\theta \Upsilon \DD \Upsilon
\end{equation}
where $\DD=\del_\theta-\theta \del_t$ is the familiar generator of right-translations that
anti-commutes with $\QQ$ and thereby makes the invariance of \eqref{manifest} manifest.

For general $N$, the unconstrained or free bosonic superfield $X\in \calC^\infty(\RR^{1|N},\kk)$ has
$2^N$ independent components. The associated multiplet corresponds by the above construction to the
real Clifford algebra $V=\cliff(\RR^N)$ as a module over itself -- we will give a further geometric characterization of
this multiplet in the following.

\subsection{The chiral superfield}

For example, when $N=2$, the supersymmetry transformations of the components of the free superfield
$X= x + \theta_1 \psi + \theta_2 \chi + \theta_2\theta_1 F$ induced by $\QQ_1$ and $\QQ_2$ from
\eqref{superconfusing} are explicitly
\begin{equation}
    \label{N2free}
    \begin{array}{rclrclrclrcl}
    Q_1 x &=& \psi \qquad\qquad & Q_1 \psi &=& -\dot x\qquad\qquad
    & Q_1 \chi &=& F \qquad\qquad & Q_1 F & =& -\dot \chi
    \\
    Q_2 x &=& \chi & Q_2 \chi &=& -\dot x
    & Q_2 \psi & =& -F & Q_2 F &=& \dot\psi
    \end{array}
\end{equation}
Under the two $N=1$ subalgebras, this decomposes in an obvious way into a bosonic and Fermi
multiplet with, however, one additional sign to ensure that $Q_1$ and $Q_2$ anti-commute.

But supersymmetry would not be cool if this were the only option. By repeating the steps from the
previous subsection, one finds that also the transformations
\begin{equation}
    \label{N2chiral}
    \begin{array}{rclrclrclrcl}
    Q_1 x &=& \psi \qquad\qquad & Q_1 \psi &=& -\dot x\qquad\qquad
    & Q_1 y &=& \chi \qquad\qquad & Q_1 \chi & =& -\dot y
    \\
    Q_2 x &=& -\chi & Q_2 \chi &=& \dot x
    & Q_2 y & =& \psi & Q_2 \psi &=& -\dot y,
    \end{array}
\end{equation}
which can be understood from a combination of 2 bosonic $Q_1$-multiplets with components $(x,\psi)$
and $(y,\chi)$, satisfy the supersymmetry algebra. For obvious degree reasons, this multiplet
cannot be isomorphic to \eqref{N2free}. However, it can be embedded into a combination of two free
superfields
\begin{equation}
    \label{hasbeenmoved}
    \begin{split}
    X &= x +  \theta_1\psi + \theta_2\psi_2  +  \theta_2 \theta_1 F\\
    Y &= y + \theta_1\chi  + \theta_2\chi_2  +  \theta_2 \theta_1 G
    \end{split}
\end{equation}
subject to the constraints
\begin{equation}
    \label{CR}
\DD_1 X = \DD_2 Y \qquad \DD_2 X = -\DD_1 Y
\end{equation}
where $\DD_i =\del_{\theta_i} - \theta_i \del_t$. Indeed, solving the constraints, we have
\begin{equation}
    \begin{split}
        X&= x + \theta_1\psi - \theta_2 \chi + \theta_2\theta_1 \dot y
        \\
        Y&= y + \theta_1\chi + \theta_2 \psi - \theta_2\theta_1 \dot x
    \end{split}
\end{equation}
and one easily verifies that $\QQ_1$, $\QQ_2$ generate \eqref{N2chiral}. It comes as no surprise
that the constraints \eqref{CR} have the form of Cauchy-Riemann equations, and, assuming complex
coefficients and coordinates
\begin{equation}
    \label{introduced}
\theta = \frac{1}{\sqrt{2}} \bigl( \theta_1 + i \theta_2)
\qquad
\bar \theta = \frac{1}{\sqrt{2}} \bigl( \theta_1 - i \theta_2)
\end{equation}
can be written in the form
\begin{equation}
\overline{\DD} Z = 0
\end{equation}
where
\begin{equation}
\overline{\DD} = \frac{1}{\sqrt{2}} \bigl(\DD_1 + i \DD_2) = \del_{\bar\theta} - \theta \del_t
\end{equation}
on the single superfield
\begin{equation}
Z = X+i Y = z + \theta \xi + \bar\theta \theta \dot z
\end{equation}
where $\xi = \sqrt{2}(\psi+ i\chi)$. This is referred to as the ``chiral multiplet''.
\footnote{Specifically, this multiplet arises by dimensional reduction from the 2d $(0,2)$ chiral
\cite{distler} or 1d $\calN=2$ ``boundary chiral'' multiplet \cite{lerche}.}

\subsection{Adinkras}

Faux and Gates introduced a graphical technology called ``Adinkras'' in \cite{Faux05} to describe
how off-shell representations $\calV=V\otimes \kk[H]$ of the supersymmetry algebra can be assembled
from collections of representations under the $N=1$ subalgebras that come with the Definition
\ref{SUSYalg} \footnote{For earlier work, see \cite{Gates:1995pw, Gates:1995ch,Gates:2002bc}}. The construction begins by associating nodes of a finite graph to basis elements of a
suitable homogeneous basis of $V$, with a black node for a boson and a light gray node for a fermion.  Each node is placed at a
vertical height that corresponds to the grading by engineering dimension.
For every two basis elements that are related under $Q_i$, a colored line is drawn between the two elements with a unique color for each index $i$.  In total, there are $N$ different
colors that distinguish the $N=1$ subalgebras.
The line is solid if the
transformation of the basis elements $v,w$ with $\dg(w) = \dg (v)+1$ is $Q\cdot v\otimes 1 =
w\otimes 1$, $Q\cdot w\otimes 1 = v\otimes H$ as on page \pageref{physics}, and dashed if $Q \cdot
v\otimes 1=-w\otimes 1$, $Q\cdot w \otimes 1 = - v\otimes H$, as played a role in \eqref{N2free} and
\eqref{N2chiral}. It is easy to see that there is some inherent ambiguity in the distribution of
signs --- inverting the sign of any given basis element of $V$ exchanges solid and dashed
lines incident to that node. There are also rather strong conditions on graphs that correspond to
multiplets that we specify momentarily. However, not all multiplets arise from such graphs. For example, when $N=1$ itself, we
have four different adinkras depicted in Fig.\ \ref{n1_koszul}. Two of these correspond to the free
bosonic multiplet, the two others to the Fermi multiplet.
\begin{figure}[ht!]
    \centering
    \includegraphics[height=4cm]{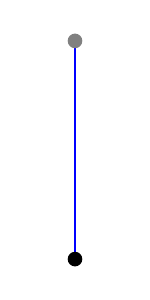}
    \qquad \quad
    \rotatebox[origin=c]{180}{%
    \includegraphics[height=4cm]{graphics/n1_1_1.pdf}}
    \qquad \quad
    \includegraphics[height=4cm]{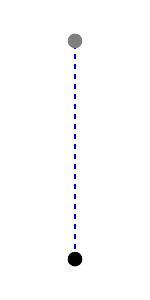}
    \qquad \quad
    \rotatebox[origin=c]{180}{%
    \includegraphics[height=4cm]{graphics/n1_1_1_minus.pdf}}
    \caption{Adinkras for $N = 1$.}
    \label{n1_koszul}
\end{figure}

The free bosonic and chiral multiplet for $N=2$ discussed in the previous subsection are captured by
the adinkras shown in the left and right of Fig.\ \ref{n2_valise}, respectively.
\begin{figure}[ht!]
    \centering
    \includegraphics[height=5cm]{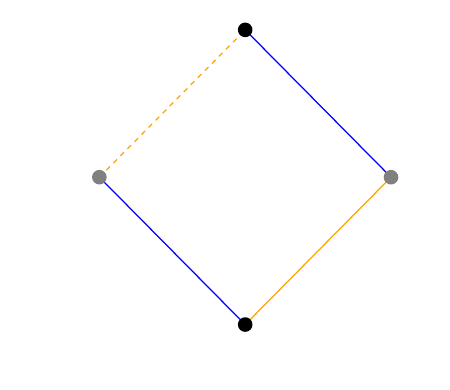}
    \qquad
    \raisebox{.5cm}{
    \includegraphics[height=4cm]{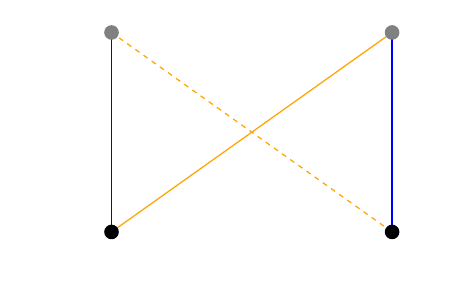} \vspace{.5cm}}
    \caption{Adinkras for free bosonic and chiral superfield, $N=2$.}
    \label{n2_valise}
\end{figure}

More formally, we can give the following definition.
\begin{definition}[Adinkra]
\label{N_Adinkra}
An Adinkra is a finite simple graph $\pzA$ equipped with two functions,
$\vert \cdot \vert : \pzV \rightarrow \ZZ/2$ (parity or ``statistics'') and
$\pzh : \pzV \rightarrow \ZZ$ (degree or ``dimension''),
on the set $\pzV$ of vertices, and two functions
$\pzc : \pzE \rightarrow \{1, \ldots, N \}$ (the ``color'') and
$\pzp : \pzE \rightarrow \ZZ/2$ (the ``dashing'')
on the set $\pzE$ of edges, subject to the following conditions;
\begin{enumerate}[leftmargin=*,topsep=0pt,itemsep=-1ex,partopsep=1ex,parsep=1ex]
\item Edges are odd with respect to $\lvert\cdot\rvert$, i.e., there can be an edge between $v,w\in
\pzV$ only if $|v|\neq |w|$. In particular, $\pzA$ is bipartite. We refer to the vertices in
$\pzV_0\defeq\{v \in \pzV \mid |v|=0\}$ as \emph{bosonic}, and the vertices in $\pzV_1\defeq \{v\in\pzV\mid
|v|=1\}$ as \emph{fermionic}.
\item Edges are unimodular with respect to $\pzh$, i.e., there can be an edge $e\in\pzE$ between
$v,w\in \pzV$ only if $|\pzh(w)-\pzh(v)| = 1$. In this case, assuming $\pzh(w)= \pzh(v)+1$, we
call $v=\pzt(e)$ the target, and $w=\pzs(e)$ the source of $e$.
\item  $\pzA$ is ``$N$-color-regular'', i.e., every vertex has exactly one incident edge of each
color.
\item For every pair $i,j\in\{1,\ldots,N\}$ of distinct colors, the set of edges colored $i$ and $j$
form a disjoint union of $4$-cycles, which we denote by $\coprod C^{(2\pzc)}_4$, that are odd with
respect to the dashing, i.e., for every such four-cycle $C^{(2\pzc)}$, we have
\begin{equation}
\sum_{e \in C_4^{(2\pzc)}} \pzp (e) = 1 \in \ZZ/2
\end{equation}
\end{enumerate}
\end{definition}

We can associate a field multiplet $\mathcal{V} (\mathpzc{A})$ to an Adinkra $\pzA$ in the sense of
definition \ref{inthesense} as follows.

\begin{definition}[Field multiplet $\mathcal{V} (\pzA)$ associated to an Adinkra $\pzA$]
\label{FieldMultiplet}
Let $\pzA$ be an Adinkra. We define the field multiplet $\mathcal{V} (\mathpzc{A})$ associated to $\pzA$ as $\mathcal{V} (\mathpzc{A}) \defeq V(\mathpzc{A}) \otimes \kk[H]$, where the vector space
$V (\mathpzc{A})$ has a basis given by the vertices $v \in \mathpzc{V}$ of $\pzA$. The $\susy$-supermodule structure on $\mathpzc{V} (\pzA)$ is induced by linearly extending the action of the $\{ Q_i\}_{i= 1,\ldots, N}$ on the basis elements of $V (\pzA)$, which (upon identifying a vertex and its corresponding basis element) is defined as follows.

For an edge $e \in \mathpzc{E}$ between a source $w \in \mathpzc{V} $ and target $v \in \mathpzc{V} $ with color $\pzc(e)=i,$ the action of $Q_i$ on the corresponding basis elements is defined as
\begin{equation}
Q_i \cdot \left(v \otimes 1\right) = (-1)^{\pzp(e)} w \otimes 1,  \qquad Q_i \cdot \left(w \otimes 1 \right)= (-1)^{\pzp(e)} v \otimes H.
\end{equation}
\end{definition}
It is immediate to check that that the action of the $Q_i$ above satisfies $\{Q_i, Q_j \} = 2 \delta_{ij} H.$ This follows from conditions 3 and 4 in the definition of an Adinkra for distinct and equal $i$ and $j$, respectively.

\begin{remark}[Graded Components of $\mathpzc{V}(\pzA)$]
The graded components of the $\susy$-supermodule
\be
\mathcal{V} (\mathpzc{A}) = \bigoplus_{i \geq 0} P^{-i} \otimes \kk[H],
\ee
are $\kk$-vector spaces, $P^{-i}$,  with dimension $\dim_\kk P^{-i} = | \lbrace v \in \mathpzc{V}(\pzA): \mathpzc{h}(v) = i \rbrace |$.  Similarly, the super vector space
\be
V (\mathpzc{A}) = \bigoplus_{i \geq 0} P^{-i}.
\ee
Finally, we remark that, in field theory, we would replace $\kk[H]$ by the ring of continuous functions $\mathcal{C}^\infty (\mathbb{R}).$
The module structure is defined analogously.
\end{remark}
Our goal is to assess what type of off-shell
representations one obtains this way and explain the relationship with the pure spinor superfield formalism.  

\section{Pure Spinor Superfields and Homological Algebra of Adinkras}
\label{Adinkras}

The pure spinor superfield formalism is another method for constructing field representations of
supersymmetry algebras in general dimensions.  In $d$ dimensions, it constructs field
representations of the super Poincar\'e algebra $\mathfrak{p}^{d,\mathcal{N}}$ starting from
geometric data defined on an algebraic variety --- the nilpotence variety of the supersymmetry algebra.  For more details, see appendix \ref{Appendix1} (in particular, \ref{PSapp}) or the recent literature \cite{Eager22, Eager18}.
We will now focus on the $d=1$ case and introduce the related nilpotence varieties. 
We let
\begin{equation} \label{SP1}
R \defeq \mbox{Sym}^\bullet_{\kk} ( \ft_1^{\,\vee} ) \cong \kk[\lambda_1 \ldots, \lambda_{N}]
\end{equation}
be the ring of polynomial functions on the odd part of the super-translation algebra $\ft \subset
\mathfrak{susy}$, understood as a graded ring with $\deg (\lambda_i) = 1$.  In equation \eqref{SP1}
we have denoted with $\{ \lambda_i \}_{i = 1, \ldots, N}$ the {linear duals} of the $\{ Q_i
\}_{i = 1, \ldots, N}$. For varying $Q\in\ft_1$, the nilpotency equation, $\{ Q, Q\} = 0,$ defines a
quadratic ideal $I$ in the polynomial ring $R$, and the quotient ring determines the ring of
functions of an algebraic variety, the \emph{nilpotence variety} of the $d=1$ supersymmetry
algebra. More concretely, expanding $Q = \sum_i \lambda_i Q_i$ and using the commutation relations in \eqref{commH}
it is immediate to see that the ideal $I$ is generated by the standard quadratic form
\begin{equation}
q_N \defeq \sum_{i =1}^N \lambda_i^2,
\end{equation}
so that $I = \langle q_N \rangle$. Posing $\mathbb{A}^N_\kk \defeq \Spec \kk [\lambda_1, \ldots,
\lambda_N]$, we summarize this in the following definition, which is the pivotal construction
underlying the pure spinor superfield formalism, to be briefly discussed in the next section.

\begin{definition}[Nilpotence Variety $Y_N$ of $\susy$] The nilpotence variety of the $N$-extended
supersymmetry algebra is the affine scheme $(Y_N, \mathcal{O}_{Y_N}) \subset \mathbb{A}^N_\kk$ such
that $Y_N \defeq \mbox{Spec} (R / I)$ and $\mathcal{O}_{Y_N} \defeq R/I,$ with $I = \langle
q_N\rangle.$
\end{definition}
Notice that $I$ is homogeneous, therefore for any $N > 1$ the above nilpotence variety descends to a quadric hypersurface in $\mathbb{P}_\kk^{N-1} \defeq \mbox{Proj}\, \kk[\lambda_1, \ldots, \lambda_N]$, which we will denote again with $Y$ by abusing the notation. These varieties, together with their characteristic bundles -- the \emph{spinor bundles}, which will play a major role in this paper -- are described in more detail in appendix \ref{NVSPapp}.

In the next section we will recover the representations derived in the
previous section from the $N=2$ Adinkras, but now using the pure spinor superfield formalism instead.

\subsection{\texorpdfstring{$N=2$}{N=2} multiplets from pure spinor superfields}

Given an $R/I$-module $M$ for $R/ I \defeq \kk[\lambda_1, \lambda_2] / \langle q_2 \rangle,$ the ring of functions of the affine quadric in 2-dimensions, the pure spinor complex reads
\begin{equation}
( \calA^\bullet (M) , \DD ) \defeq \left ( M \otimes \mathcal{C}^\infty(\RR^{1|2}),\; \sum_{i = 1}^{2} \lambda_i \otimes \DD_i \right ),
\end{equation}
where $ \mathcal{C}^{\infty} (\RR^{1|2}) \defeq \mathcal{C}^\infty (\RR) \otimes \kk[\theta_1, \theta_2]$ is the ring of $\kk$-valued functions of the split supermanifold $\RR^{1|2}$, whose elements are the \emph{free superfields}. The $\lambda_i$ act via the $R/I$-module structure and $\DD_i \defeq \partial_{\theta_i} - \theta_i \partial_{t}$.
\begin{enumerate}[leftmargin=*,topsep=0pt,itemsep=-1ex,partopsep=1ex,parsep=1ex]
\item Let us first take the constant $R/I$-module $M = \kk$, seen as $(R/I)\, \mbox{mod}\, \langle
\lambda_1, \lambda_2 \rangle.$ Then the differential $\mathcal{D}$ acts as zero, and the cohomology
is the pure spinor complex itself, \emph{i.e.}
\begin{equation}
H^\bullet (\mathcal{A}^\bullet (\kk), 0)) = \mathcal{A}^\bullet (\kk) = \kk \otimes \mathcal{C}^\infty (\mathbb{R}^{1|2}) \cong \mathcal{C}^\infty (\mathbb{R}^{1|2}).
\end{equation}
This means that the free superfield representation on elements of $\mathcal{C}^\infty (\mathbb{R}^{1|2})$ of the form
\begin{equation}
X (t, \theta_1, \theta_2) = x (t) + \sum_{i = 1}^2  \theta_i\psi_i (t) + \theta_2\theta_1 F(t)
\end{equation}
corresponds to the constant module $\kk$ via pure spinor superfield formalism---this is true in
general for any dimension and amount of supersymmetry.
\item Let us now consider the following module\footnote{It is not hard to see that this descends to an $R/I$-module.}
\begin{equation} \label{resM}
  \begin{split}
M &=
\xymatrix{
\coker   \biggl( \kk[\lambda_1, \lambda_2] \oplus \kk[\lambda_1, \lambda_2] \biggr.
\ar[rrr]^{\qquad  \scalemath{0.7}{ \begin{pmatrix} \lambda_1 & -\lambda_2 \\  \lambda_2 & \lambda_1
\end{pmatrix}}} & & &  \biggl. \kk[\lambda_1, \lambda_2] \oplus  \kk[\lambda_1, \lambda_2]
\biggr )}
\\
&= \biggl\{ \biggl(\begin{matrix} e_1 \\ e_2 \end{matrix}\biggr) \in \kk[\lambda_1, \lambda_2]
 \oplus \kk[\lambda_1, \lambda_2] \biggr\} \biggl/
\biggl(\begin{matrix} \lambda_1 \\ \lambda_2 \end{matrix}\biggr) \, \kk[\lambda_1,\lambda_2]
+
\biggl(\begin{matrix} -\lambda_2 \\ \lambda_1 \end{matrix}\biggr) \, \kk[\lambda_1,\lambda_2].
 \biggr.
  \end{split}
\end{equation}
Then, if we parameterize a generic element of the pure spinor complex $\mathcal{A}^\bullet (M)$ as the linear combination
\begin{equation}
\Phi =  e_1  X \oplus e_2  Y \in \mathcal{A}^\bullet (M)
\end{equation}
of two free superfields $X, Y \in \mathcal{C}^\infty (\mathbb{R}^{1|2})$, we find
\begin{equation}
\mathcal{D}\Phi =
\biggl(\begin{matrix}
\lambda_1 e_1\DD_1 X + \lambda_2 e_1 \DD_2 X
\\
\lambda_1 e_2\DD_1 Y + \lambda_2 e_2 \DD_2 Y
\end{matrix}
\biggr)
\equiv
\biggl(\begin{matrix}
\lambda_1e_1 \DD_1 X - \lambda_1 e_2 \DD_2 Y
\\
\lambda_1 e_2 \DD_1 Y + \lambda_1 e_1 \DD_2 X
\end{matrix}
\biggr)
\end{equation}
For this to vanish, we need that $e_2=e_1$ up to a constant, and then $\DD_1 X = \DD_2 Y$ and
$\DD_1Y = - \DD_2 X$. All the while, the image of $\DD$ is generated by elements of the form
\begin{equation}
\biggl(\begin{matrix} \lambda_1 \DD_1 \Xi \\ \lambda_1 \DD_2\Xi \end{matrix}
\biggr)\equiv
\biggl(\begin{matrix} \lambda_2 \DD_2 \Xi \\ -\lambda_2 \DD_1 \Xi \end{matrix}
\biggr)
\end{equation}
for arbitrary superfield $\Xi$, and one can show that any $\Phi\in \langle
\lambda_1,\lambda_2\rangle \cap\ker\DD$ is of this form by a version of the Dolbeault lemma for
supermanifolds.\footnote{In complex coordinates introduced in \eqref{introduced}, this lemma says
that any chiral superfield $Z$ can be written as $\overline\DD\Xi$ for a certain superfield $\Xi$,
such as $\Xi =\bar\theta Z$.} Namely, the cohomology of the pure spinor complex is exactly
isomorphic to the $N=2$ chiral superfield of \eqref{CR} in (cohomological) degree $0$.
\begin{equation}
H^\bullet(\calA^\bullet(M),\DD) = \bigl\{ (X, Y ) \in \mathcal{C}^\infty (\mathbb{R}^{1|2}) \times \mathcal{C}^\infty (\mathbb{R}^{1|2}) : \mathcal{D}_1 X - \mathcal{D}_2 Y = 0 , \; \mathcal{D}_2 X + \mathcal{D}_1 Y  = 0\bigr\}.
\end{equation}
\end{enumerate}

\subsection{Complexes from Adinkras} \label{CplAdMain}
Given an Adinkra, we now show how to define, in a very natural fashion, a linear complex of free $R =
\kk[\lambda_1, \ldots, \lambda_N]$-modules. This complex descends to a complex of $R/\langle q_N \rangle$-modules
defined on the nilpotence variety $Y_N$. We will introduce a notion of Laplacian which behaves like the multiplication by the quadratic form
$q_N$ on the complex -- this will play a major role in what follows, and it will
allow to establish a connection between Adinkras, matrix factorizations and the geometry of quadrics hypersurfaces.\\
The following is the fundamental construction of this section.
\begin{construction} \label{constr} Let $\mathpzc{A}$ be an Adinkra. Letting $n_i \defeq \vert \{ v \in \mathpzc{V} (\mathpzc{A}) : \mathpzc{h} (v) = i \} \vert, $
we define

\begin{equation}
    \label{complexA}
C^\bullet (\mathpzc{A}) = \bigoplus_{i \geq 0} C^{-i} (\mathpzc{A}) \defeq
    \bigoplus_{i \geq 0} R^{ n_{i}},
\end{equation}
that is for each vertex of $\mathbb{Z}$-degree $\mathpzc{h} (v) = i$, we have a summand
$R$.

Abusing the notation and denoting with $v$ the basis element of the free $R$-module $C^{-i} (\mathpzc{A})$ corresponding to the vertex $v$, we define maps $d^{i}_{\pzA} : C^{-i} (\mathpzc{A}) \rightarrow C^{-i+1} (\mathpzc{A})$, by setting
\begin{equation} \label{diffA}
d^{i}_{\pzA} (v) \defeq \sum_{\mathpzc{S}(v)} \mathpzc{p}(e) \lambda_{\mathpzc{c} (e)} \mathpzc{t}(e),
\end{equation}
where we have posed $ \mathpzc{S} ({v}) \defeq \{ e \in E (\mathpzc{A}) : \mathpzc{s} (e) = v \}$.
\end{construction}
\noindent For the sake of notation, we will often leave the reference to the specific Adinkra $\pzA$ and the degree $-i$ in the differential understood, \emph{i.e.}\ $d \defeq d^{i}_{\mathpzc{A}}.$
It is easy to see that the pair $(C^\bullet(\mathpzc{A}), d )$ defines a linear complex of free $R$-module, so that the following definition if well-posed.
\begin{definition}[Complex of $\mathpzc{A}$] \label{defCpl} We call
    the linear complex of free $R$-modules $(C^\bullet
    (\mathpzc{A}), d)$ the complex associated to $\mathpzc{A}$.
\end{definition}
\begin{example}[The Koszul Adinkra $\pzA_K$] \label{KAd} As a concrete example, consider the $N=2$ free superfield Adinkra as introduced in the previous sections. For $R = \kk[\lambda_1, \lambda_2]$, it gives rise to a complex of the form
\be
\xymatrix{
( C^\bullet (\mathpzc{A}), d ) \equiv \big ( 0 \ar[r] & R \ar[r]^{d_{2} \quad } & R\oplus R \ar[r]^{\; \quad d_{1}} & R  \ar[r]  & 0 \big ),
}
\ee

where $d_1 = (\lambda_1, \lambda_2)^t$ and $d_2 = (- \lambda_2, \lambda_1)$. The complex $C^\bullet (\mathpzc{A})$ is the Koszul complex, and it resolves the $R/\langle q_2 \rangle$-module $\kk$ in free $R$-modules. As such, the input module of the pure spinor formalism $\kk$ is quasi-isomorphic to the (Koszul) complex $C^\bullet (\mathpzc{A})$. It is easy to see that the same construction is true for any $N$. Accordingly we will call the $N$-Adinkra $\mathpzc{A}$ whose associated complex is the Koszul complex resolving $\kk$ as $\kk[\lambda_1, \ldots , \lambda_N] / \langle q_N \rangle$-module the Koszul Adinkra $\pzA_K$. As we have seen, the field multiplet associated with the Koszul Adinkra is described by the free superfield, which is a section of $ \mathcal{C}^\infty (\mathbb{R}^{1|N})$.
\end{example}

We remark that, as defined above, the complex $C^\bullet (\pzA)$ only belongs to $\mbox{\sffamily{D}}^\flat (R\mbox{-\sffamily{Mod}})$. On the other hand, it is possible to prove that this complex belongs to the full subcategory $\mbox{\sffamily{D}}^\flat (R/ \langle q_N\rangle \mbox{-\sffamily{Mod}})$ of $\mbox{\sffamily{D}}^\flat (R\mbox{-\sffamily{Mod}})$, thus showing that the complex can in fact be viewed as a complex over the nilpotence variety $Y_N$.
More precisely, the restriction of scalars defines a fully faithful embedding
$
\mbox{\sffamily{D}}^\flat(R/I\mbox{-\sffamily{Mod}}) \hookrightarrow \mbox{\sffamily{D}}^\flat(R\mbox{-\sffamily{Mod}}),
$
and the adjunction between restriction of scalars and extension of scalars restricts to an equivalence
of categories on the essential image of the embedding. The interested reader might find the relevant constructions and the proof
of this result in the appendix \ref{CplxAd}: for the sake of exposition, here we content ourselves to state the following.
\begin{theorem}[Embedding]  \label{embeddingThm} Let $\mathpzc{A}$ be an Adinkra. The associated complex $C^{\bullet}(\pzA)$ to $\mathpzc{A}$ is in the
essential image of the embedding $\mbox{\sffamily{\emph{D}}}^\flat(R/I\mbox{-\sffamily{\emph{Mod}}})
\hookrightarrow \mbox{\sffamily{\emph{D}}}^\flat(R\mbox{-\sffamily{\emph{Mod}}})$.
\end{theorem}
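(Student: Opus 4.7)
The plan is to construct an explicit chain homotopy $K$ on $C^\bullet(\pzA)$ such that $dK + Kd$ equals multiplication by $q_N = \sum_{i=1}^N \lambda_i^2$. Once this is established, $q_N$-multiplication is null-homotopic at the chain level, so in particular $q_N$ acts as zero on $H^\bullet(C^\bullet(\pzA))$ and each cohomology module is annihilated by $I = \langle q_N \rangle$, inheriting a natural $R/I$-module structure. By the characterization of the essential image of the embedding developed in appendix \ref{CplxAd}, this is precisely the condition placing $C^\bullet(\pzA)$ inside the essential image.

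The candidate homotopy is the ``up-differential'' obtained by interchanging source and target in Construction \ref{constr}:
\[
K(v) \defeq \sum_{e \in \mathpzc{T}(v)} (-1)^{\pzp(e)} \lambda_{\pzc(e)}\, \pzs(e),
\qquad \mathpzc{T}(v) \defeq \{ e \in \pzE : \pzt(e) = v \},
\]
so that $K : C^{-i}(\pzA) \to C^{-i-1}(\pzA)$ sums over the up-edges of $v$, dually to $d$ which sums over its down-edges. Evaluating $(dK + Kd)(v)$ on a basis vertex yields a double sum indexed by pairs of edges $(e,e')$ meeting at a common intermediate vertex, which I split according to whether $e = e'$ or $e \neq e'$. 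In the diagonal case $e = e'$, each down-edge of $v$ contributes $\lambda_{\pzc(e)}^2\, v$ via $Kd$ and each up-edge contributes $\lambda_{\pzc(e)}^2\, v$ via $dK$; by $N$-color-regularity (condition 3 of Definition \ref{N_Adinkra}) the two together exhaust the $N$ colors exactly once, giving the aggregate diagonal contribution $\sum_{c=1}^N \lambda_c^2\, v = q_N\, v$.

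The off-diagonal terms $e \neq e'$ involve two distinct colors $i \neq j$; the engineering-dimension grading forces each such contribution to land on a vertex $u$ at the same height as $v$ that is reached by a length-two path, and such a pair $(v,u)$ occupies the diagonally opposite corners of a unique 4-cycle in colors $\{i,j\}$. For each such 4-cycle $C_4^{(ij)}$ through $v$, I would identify exactly the two off-diagonal terms landing at the opposite corner $u$: one ``down-then-up'' route from $Kd$ traversing a common neighbor on one side, and one ``up-then-down'' route from $dK$ traversing a common neighbor on the other (their relative heights depending on whether the 4-cycle is of zigzag or diamond shape). Tracking the signs, the combined coefficient of $\lambda_i\lambda_j\, u$ takes the form $(-1)^{\sigma_1} + (-1)^{\sigma_2}$ with $\sigma_1 + \sigma_2 = \sum_{e \in C_4^{(ij)}} \pzp(e)$; by condition 4 this sum is odd, hence $(-1)^{\sigma_1} = -(-1)^{\sigma_2}$ and the two contributions cancel. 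Summing over all such 4-cycles yields the desired identity $dK + Kd = q_N \cdot \mathrm{id}$ on $C^\bullet(\pzA)$.

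The main technical obstacle is the case analysis underlying this cancellation: one must verify, for each possible shape of 4-cycle and each position of $v$ on it (extremum of a diamond, middle of a diamond, or one of the two strands of a zigzag), that precisely two off-diagonal contributions pair up at the opposite corner and that the dashing sum condition produces the relative minus sign needed. This is finite but delicate bookkeeping, and it is the precise locus in the argument where the 4-cycle dashing axiom earns its presence in the Adinkra definition --- without it, $q_N$-multiplication on $C^\bullet(\pzA)$ would fail to be null-homotopic and the complex would not in general descend to the nilpotence variety $Y_N$.
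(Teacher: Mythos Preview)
Your construction of the homotopy $K$ (which the paper calls $d^\dagger$) and your argument that $dK + Kd = q_N \cdot \mathrm{id}$ are correct, and your 4-cycle cancellation matches the paper's proof of Lemma~\ref{LaplacianLemma} essentially verbatim. But this lemma is only the \emph{input} to the proof of Theorem~\ref{embeddingThm}, not the proof itself.

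The gap is your final inference. You pass from ``$q_N$ is null-homotopic'' to ``each $H^i(C^\bullet(\pzA))$ is an $R/I$-module'' to ``$C^\bullet(\pzA)$ lies in the essential image of $\mbox{\sffamily{D}}^\flat(R/I)\hookrightarrow\mbox{\sffamily{D}}^\flat(R)$.'' That last step is not valid: restriction of scalars along $R\to R/I$ is not fully faithful on bounded derived categories (already $\mathrm{Ext}^{\ge 2}_{R}(\kk,\kk)$ and $\mathrm{Ext}^{\ge 2}_{R/I}(\kk,\kk)$ differ), so an iterated extension in $\mbox{\sffamily{D}}^\flat(R)$ of $R/I$-modules need not come from any complex of $R/I$-modules. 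Appendix~\ref{CplxAd} does not prove a cohomological characterization of the essential image; what it actually does is \emph{use} your homotopy $d^\dagger$ to build, by hand, a specific complex $\widehat{C}^\bullet_{R/I}(\pzA)$ of free $R/I$-modules --- countably many shifted copies of $C^\bullet(\pzA)\otimes_R R/I$ glued along the diagonals by $d^\dagger$ --- and then proves, via a lifting argument (Lemma ``Unrolling''), that the natural map $C^\bullet(\pzA)\to\widehat{C}^\bullet_{R/I}(\pzA)$ is a quasi-isomorphism of $R$-complexes. Your identity $dK+Kd=q_N$ is precisely what makes $\widehat{C}^\bullet_{R/I}(\pzA)$ a complex over $R/I$, so you have supplied the key hypothesis; but the unrolling construction and the quasi-isomorphism argument are the substantive content of the theorem, and they are missing from your proposal.
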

A crucial ingredient in the proof of the previous theorem \ref{embeddingThm} that will also play a major role in the following of this paper is the fact the differential \eqref{diffA} of the complex $C^\bullet (\mathpzc{A})$ admits an \emph{adjoint} $d^\dagger : C^{\bullet} (\mathpzc{A}) \rightarrow C^{\bullet -1} (\mathpzc{A})$ mapping in the opposite direction of $d$. On a basis element $v$ given as above, it is defined by
\begin{equation} \label{adj}
d^{\dagger} (v) \defeq \sum_{\mathpzc{T}(v)} \mathpzc{p}(e) \lambda_{\mathpzc{c} (e)} \mathpzc{s} (e),
\end{equation}
where we have posed $\mathpzc{T} (v) \defeq \{e \in E (\mathpzc{A}) : \mathpzc{t} (e) = v \}.$ Reasoning as above, it is not hard to see that also $d^\dagger$ is nilpotent, and one can introduce \emph{Laplacian} $\Delta: C^{-i} (\mathpzc{A}) \rightarrow C^{-i} (\mathpzc{A})$, in the usual fashion via
\begin{equation} \label{Lapl}
\xymatrix@R=1.5pt{
\Delta : C^{-i}(\mathpzc{A}) \ar[r] & C^{-i} (\mathpzc{A}) \\
v \ar@{|->}[r] & \Delta v \defeq (d \circ d^\dagger + d^\dagger \circ d) v.
}
\end{equation}
The action of the Laplacian is characterized by the following lemma, whose proof is given in appendix \ref{CplxAd}.
\begin{lemma} \label{LaplacianLemma} Let $(C^\bullet (\mathpzc{A}), d)$ be the complex associated to
$\pzA$. Let $q_N \defeq \sum_{i = 1}^N \lambda^2_i$ be the standard quadratic form on $R$. Then the
Laplacian acts via multiplication by $q_N $ in $C^\bullet (\mathpzc{A})$, \emph{i.e.}\
\begin{equation}
    \Delta  = q_N \cdot \mbox{id}_{C^\bullet (\mathpzc{A})}.
\end{equation}
In other words, we have $(d + d^\dagger)^2 = q_N \cdot \mbox{id}_{C^\bullet (\pzA)}.$
\end{lemma}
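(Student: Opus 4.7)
The plan is to first reduce the statement to a single identity on basis vertices. Since $d^2 = 0$ (by Definition \ref{defCpl}) and $(d^\dagger)^2 = 0$ (noted just above the lemma), one has
\begin{equation*}
(d + d^\dagger)^2 = d^2 + dd^\dagger + d^\dagger d + (d^\dagger)^2 = \Delta,
\end{equation*}
so the two formulations of the lemma are equivalent. I would therefore work directly with $d + d^\dagger$, which treats up- and down-edges symmetrically, and verify $(d + d^\dagger)^2 v = q_N \cdot v$ for every basis vertex $v \in \mathpzc{V}(\pzA)$.

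Unwinding the definitions of $d$ in \eqref{diffA} and $d^\dagger$ in \eqref{adj} gives
\begin{equation*}
(d + d^\dagger)(v) = \sum_{e \ni v} (-1)^{\pzp(e)} \lambda_{\pzc(e)} \, u_e,
\end{equation*}
where $u_e$ denotes the other endpoint of the edge $e$, so that $(d + d^\dagger)^2 v$ is a sum over 2-edge paths $v \stackrel{e_1}{\to} w \stackrel{e_2}{\to} v'$ with weight $(-1)^{\pzp(e_1) + \pzp(e_2)} \lambda_{\pzc(e_1)} \lambda_{\pzc(e_2)} \, v'$. I then collect contributions by terminal vertex $v'$. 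When $v' = v$, the path must reuse one edge (since $\pzA$ is a simple graph), yielding $\lambda_{\pzc(e)}^2 v$ for each edge $e$ incident to $v$; color-regularity (condition 3 of Definition \ref{N_Adinkra}) implies that each color appears exactly once among these edges, so the diagonal coefficient is $\sum_{c=1}^N \lambda_c^2 = q_N$, as desired.

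When $v' \neq v$, the path uses two edges of distinct colors $c, c'$, and by condition 4 of Definition \ref{N_Adinkra} these two colors span a disjoint union of 4-cycles; the cycle containing the midpoint $w$ has $v$ and $v'$ as opposite vertices and therefore supplies a second 2-step path from $v$ to $v'$ via the other midpoint. Both paths carry the same monomial $\lambda_c \lambda_{c'}$, while the combined dashings along the two paths sum to $\sum_{e \in C_4} \pzp(e) = 1 \pmod 2$, so their contributions have opposite signs and cancel. Summing over all color-pairs and 4-cycles through $v$, every off-diagonal coefficient vanishes and the lemma follows. The main bookkeeping obstacle I anticipate is the varying height profiles of a 4-cycle (diamond, zigzag, or mixed), which would normally force one to split into cases where the two paths lie in $dd^\dagger$ versus $d^\dagger d$; working with $(d + d^\dagger)^2$ directly collapses all these cases into a single uniform sum, so that only the combinatorial cancellation afforded by condition 4 needs to be invoked.
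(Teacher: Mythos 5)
Your proposal is correct and takes essentially the same approach as the paper's proof in Appendix~\ref{CplxAd}: expand $(d+d^\dagger)^2 v$ as a sum over two-edge walks from $v$, read off the diagonal contribution $q_N$ from $N$-color-regularity (the walk must reuse an edge, one per color), and cancel the off-diagonal terms pairwise using the odd-dashing condition on two-colored $4$-cycles. Your choice to avoid splitting into $dd^\dagger$ versus $d^\dagger d$ is exactly the perspective the paper adopts in its ``intuition'' paragraph before restating the same count as a double sum; you also use the sign $(-1)^{\pzp(e)}$ consistently with Definition~\ref{FieldMultiplet}, whereas the paper's displayed formula in the appendix tacitly identifies $\pzp$ with a $\{\pm 1\}$-valued function.
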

Getting back to the above $N=2$ Koszul Adinkra above one sees that
\be
\xymatrix{
( C^\bullet (\mathpzc{A}), d , d^\dagger ) \equiv \Bigg ( 0 \ar[r] & R \ar@/^1pc/[r]^{d_{2} \quad } & \ar@/^1pc/[l]^{d_{2}^\dagger \quad }  R\oplus R \ar@/^1pc/[r]^{\; \quad d_{1}} &  \ar@/^1pc/[l]^{d_1^\dagger} R  \ar[r]  & 0 \Bigg ),
}
\ee
where $d^\dagger_i$ is simply the transpose of $d_i$ for $i=1,2$. \\
We will see in the next sections how this easy result will allow us to place Adinkras in the beautiful (and much broader) mathematical framework of the geometry of quadric hypersurfaces, which will be reviewed in the next section -- with a view toward Adinkras.

\subsubsection{Chevalley--Eilenberg Complex, Adinkras and Pure Spinors}

In this subsection, we make contact between the construction of the complex
$C^\bullet (\mathpzc{A})$ attached to an Adinkra $\mathpzc{A}$ as given above,
and the Chevalley--Eilenberg complex valued in the field multiplet $\mathcal{V}
(\mathpzc{A})$. This will allow us to place our construction in the context of
pure spinor superfield formalism, as recently proposed in \cite{Eager22,
ElliottDerived}. Namely, we will show that applying the pure spinor functor
$\mathcal{A}^\bullet$ to $C^\bullet(\pzA)$ yields precisely the field multiplet
$\mathcal{V} (\pzA)$ associated to the Adinkra $\pzA$.

The Chevalley--Eilenberg complex of the $N$-extended $d=1$ super translation algebra $\mathfrak{t}$, that was introduced in section \ref{representations}, is defined as
\be
\mbox{CE}^\bullet (\mathfrak{t}) \defeq \mbox{Sym}^\bullet (\mathfrak{t}^\vee[1], \mbox{d}_{\mbox{\scriptsize{CE}}}),
\ee
where the Chevalley--Eilenberg differential $\mbox{d}_{\mbox{\scriptsize{CE}}} $ is induced by the dual of the Lie bracket on $\mathfrak{t}$. Totalizing the degrees, so that $\mathfrak{t}^\vee_1$ and $\mathfrak{t}^\vee_2$ sit in degrees $0$ and $-1$ respectively, one has
\be
\mbox{CE}^{-p} (\mathfrak{t}) = \wedge^{p} \mathfrak{t}^\vee_2 \otimes_\kk R,
\ee
for $R = \mbox{Sym}^\bullet (\mathfrak{t}_1^\vee)$. Concretely, by choosing coordinates as above, so that $R = \kk[\lambda_1, \ldots, \lambda_N]$ and $\mathfrak{t}^\vee_2$ is generated by a single (odd) basis vector $v$, one can write
\be
\xymatrix{
\mbox{CE}^{-p} (\mathfrak{t}) \equiv \big ( 0 \ar[r] & R[v] \ar[r] & R \ar[r] & 0 \big ).
}
\ee
Notice that in terms of the related super Lie group $\mathpzc{T}$ obtained by  ``exponentiating'' $\mathfrak{t}$, one has that $\lambda_\alpha = d\theta_i$ and $v$ is the \emph{einbein} $v = dt + \sum_i \theta_i d\theta_i$, if $(t, \theta_i)$ is a system of coordinates for $\mathpzc{T} = \exp (\mathfrak{t}).$ Using these coordinates, one has that
\be
\mbox{CE}^\bullet (\mathfrak{t}) = (\kk [\lambda_1, \ldots, \lambda_N | v] , \; \mbox{d}_{\mbox{\scriptsize{CE}}} = q_N \frac{\partial}{\partial v}),
\ee
where $q_N = \sum_i \lambda_i^2$ is the quadratic form -- this immediately shows that $H^0 (\mbox{CE}^\bullet (\mathfrak{t})) \cong R/\langle q_N \rangle$.\\
Given a $\mathfrak{t}$-module or a $U_\kk(\mathfrak{t})$-module $\calV$, the
$\calV$-valued Chevalley--Eilenberg complex reads
\be
\mbox{CE}^\bullet (\mathfrak{t}, \mathcal{V} ) \defeq \big ( \mbox{CE}^\bullet
(\mathfrak{t}) \otimes_\kk \calV, \; \mbox{d}_{\mbox{\scriptsize{CE}}}^{\calV} \big ),
\ee
where the differential $\mbox{d}_{\mbox{\scriptsize{CE}}}^{\calV}$ gets a correction
with respect to $\mbox{d}_{\mbox{\scriptsize{CE}}}$ coming from the
$\mathfrak{t}$-action on $\calV$.
We are interested in the case where $\calV$ is a field multiplet $\mathcal{V} (\mathpzc{A})$ associated to an Adinkra $\mathpzc{A}$.  Recall from definition \ref{FieldMultiplet} that
\be
\mathcal{V} (\mathpzc{A}) = \bigoplus_{i \geq 0} P^{-i} \otimes_{\RR} \kk[H],
\ee
for some $\kk$-vector spaces $P^{-i}$ such that $\dim_\kk P^{-i} = | \lbrace v \in \mathpzc{V}(\pzA): \mathpzc{h}(v) = i \rbrace |$.
In the above coordinates, one has
\be
\label{CEdifferential}
\mbox{CE}^\bullet (\mathfrak{t}, \mathcal{V} (\pzA)) \defeq \left ( \mbox{CE}^\bullet (\mathfrak{t}) \otimes_\kk \mathcal{V} (\mathpzc{A}) , \; \mbox{d}_{\mbox{\scriptsize{CE}}}^{\mathcal{V} (\mathpzc{A})} = q_N \partial_v \otimes 1 + \Big( \sum_{i=1}^N \lambda_i \otimes \sigma (Q_i) \Big ) + v \otimes \sigma (H) \right ),
\ee
where $\sigma(Q_i)$ and $\sigma(H)$ are the representations of the generator of $\mathfrak{t}$ on $\mathcal{V} (\pzA)$ as in \ref{FmSf} above. Using ``coordinates'' $(\theta_i, t)$, these reads $\mathcal{Q}_i \defeq \sigma(Q_i) = \partial_{\theta_i} + \theta_i \partial_t$ and $\mathcal{H} \defeq \sigma (H) = \partial_t.$  Following \cite{ElliottDerived}, taking $\mathfrak{t}_2$ invariants of the above complex -- i.e.\ taking cohomology with respect to the last piece of the differential -- yields a quasi-isomorphism of complexes
\be \label{qisoCE}
\mbox{CE}^\bullet (\mathfrak{t}, \mathcal{V} (\pzA)) \cong (\kk[\lambda_1, \ldots, \lambda_N] \otimes_\kk \mathcal{V}_0(\pzA), \; \sum_{i=1}^N \lambda_i \otimes \sigma (Q_i) |_{t=\mbox{\scriptsize{const}}} ),
\ee
where $\mathcal{V}_0 (\pzA)$ is the ``fiber over $t=0$'' of $\mathcal{V} (\pzA)$, i.e.\  $
\mathcal{V}_0 (\mathpzc{A}) = \bigoplus_{i \geq 0} P^{-i},
$
and $\sigma (Q_i) |_{t=\mbox{\scriptsize{const}}}$ is the related restriction of the differential\footnote{Note that taking $\mathfrak{t}_2$-invariants eliminates $v$, hence the action of the first piece of the differential trivializes.}, whose action on tensors in $\mathcal{V}(\mathpzc{A})$ reads
\be
\sigma (Q_i) |_{t=\mbox{\scriptsize{const}}} (v \otimes f) = \mbox{ev}_{t=0} (Q_i (v \otimes 1)).
\ee
Looking at \eqref{qisoCE}, one immediately sees that $\kk[\lambda_1, \ldots, \lambda_N] \otimes \mathcal{V}_{0} (\pzA)$ is isomorphic as a graded vector space to the complex $C^{\bullet} (\pzA)$ attached to the
Adinkra $\pzA$ introduced above in \ref{constr}.
Moreover, let $v \in \mathpzc{V}(\pzA)$ be any vertex: by abuse
    of notation, we denote by the same letter the associated basis vector to $v$ in the field multiplet $\mathcal{V}(\pzA)$ with
    the same letter. Then, by construction, for any edge $e \in \mathpzc{E}$
    ending on $v$,
    $(Q_{\mathpzc{c}(e)})|_{t=\mbox{\scriptsize{const}}} (v) = \pm v'$ if $v =
    \mathpzc{t}(v)$, where $v'$ is associated to some vertex with $h(v') = h(v)
    - 1$. More precisely, $v' = p(e) \mathpzc{s}(e)$.
    However, if $v = \mathpzc{S}(e)$, then we have $\mbox{ev}_{t = 0}
    (Q_{\mathpzc{c}(e)})(v \otimes 1) = 0$. Thus, we can write the differential
    as
    \begin{equation}
        d(v) = \sum\limits_{e \in \mathpzc{T}(v)} p(e) \lm_{\mathpzc{c}(e)}
        \mathpzc{s}(e),
    \end{equation}
    which is precisely the differential appearing in \eqref{diffA}, which implies the following quasi-isomorphism (of complexes of $R$-modules)
    \be \label{qisoCEC}
    (C^\bullet (\pzA) , d) \cong \mbox{CE}^\bullet (\mathfrak{t}, \mathcal{V} (\pzA)).
    \ee
Finally, as a consequence of theorem 4.3 in \cite{ElliottDerived} (see also \ref{derivedPS} in the appendix of this paper) and the previous quasi-isomorphism in equation \eqref{qisoCEC}, one finds that
\be
\mathcal{A}^\bullet \circ C^\bullet (\pzA) \cong \mathcal{V}(\pzA),
\ee
for every $\mathfrak{t}$-modules $\mathcal{V}(\pzA)$, where $\mathcal{A}^\bullet$ is the pure spinor functor as introduced in \cite{ElliottDerived}. We summarize the previous discussion in the proposition below.
\begin{proposition} Let $\pzA$ be an Adinkra. The complex $C^\bullet (\pzA)$ associated to $\pzA$ is isomorphic to the Chevalley--Eilenberg complex $\mbox{\emph{CE}}^\bullet (\mathfrak{t}, \mathcal{V} (\pzA))$ in $\mbox{\sffamily{\emph{D}}}^{\flat}(R\mbox{-\sffamily{\emph{Mod}}})$.
In particular, $\mathcal{A}^{\bullet} \circ C^{\bullet}(\pzA) \cong \mathcal{V}(\pzA)$ as $\mathfrak{t}$-modules.
\end{proposition}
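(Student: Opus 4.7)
The plan is to follow the route already sketched in the paragraphs preceding the proposition: construct the isomorphism directly by passing from $\mbox{CE}^\bullet(\mathfrak{t}, \calV(\pzA))$ to its reduction under $\mathfrak{t}_2$-invariants, identify the reduction with $C^\bullet(\pzA)$ on the nose, and then invoke the derived pure spinor theorem of \cite{ElliottDerived}. The hard content is the quasi-isomorphism \eqref{qisoCEC}; the statement about the pure spinor functor then follows formally.

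The first step is the $\mathfrak{t}_2$-invariants reduction. The piece $v \otimes \sigma(H)$ of the differential in \eqref{CEdifferential} is contractible, because $\sigma(H) = \partial_t$ acts freely on the $\kk[t]$-module $\calV(\pzA) = V(\pzA) \otimes \kk[H]$ and $v$ is an odd CE generator; a standard contraction homotopy yields the quasi-isomorphism \eqref{qisoCE} onto the reduced complex $(R \otimes_\kk \calV_0(\pzA), \sum_i \lambda_i \otimes \sigma(Q_i)|_{t=\text{const}})$. As a graded $R$-module this is literally $\bigoplus_{i \geq 0} R^{n_i}$: we have $\calV_0(\pzA) = \bigoplus_{i \geq 0} P^{-i}$ with $\dim_\kk P^{-i} = n_i$, and both $P^{-i}$ and $C^{-i}(\pzA)$ carry canonical $\kk$-bases labelled by $\{v \in \pzV : \pzh(v) = i\}$, so the identification of underlying graded modules is tautological.

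The crux is matching the two differentials on basis vectors. Fix a vertex $v \in \pzV$, regarded simultaneously as a basis element on both sides. By $N$-color regularity, each color $i$ contributes exactly one edge $e$ incident to $v$. Definition \ref{FieldMultiplet} gives $Q_i(v \otimes 1) = (-1)^{\pzp(e)} \mathpzc{s}(e) \otimes 1$ if $v = \mathpzc{t}(e)$, and $Q_i(v \otimes 1) = (-1)^{\pzp(e)} \mathpzc{t}(e) \otimes H$ if $v = \mathpzc{s}(e)$. The second case is annihilated by $\mbox{ev}_{t=0}$, since $H$ acts as $-\partial_t$ on $\kk[t]$ and $-\partial_t(1) = 0$. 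Summing over colors and identifying $(-1)^{\pzp(e)}$ with the $\ZZ/2$-valued symbol $\pzp(e)$ in \eqref{diffA} recovers the Adinkra differential $d$.

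The main obstacle is the bookkeeping in this last step: one must track signs and degree conventions carefully so that the Chevalley--Eilenberg differential, with its Koszul signs, assembles into $d$ with the correct direction (namely $C^{-i}(\pzA) \to C^{-i+1}(\pzA)$, lowering $\pzh$ by one) rather than into the adjoint $d^\dagger$ of \eqref{adj}, and without spurious parity factors coming from the $\ZZ/2$-grading on $\calV(\pzA)$. Once this matching is settled, the second sentence of the proposition is formal: $C^\bullet(\pzA)$ lies in $\mbox{\sffamily{D}}^\flat(R/\langle q_N \rangle\mbox{-\sffamily{Mod}})$ by Theorem \ref{embeddingThm}, so $\mathcal{A}^\bullet$ is well defined on it, and Theorem 4.3 of \cite{ElliottDerived} applied to the quasi-isomorphism \eqref{qisoCEC} with $\calV = \calV(\pzA)$ yields $\mathcal{A}^\bullet \circ C^\bullet(\pzA) \cong \calV(\pzA)$ as $\mathfrak{t}$-modules.
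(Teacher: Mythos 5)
Your route is exactly the one the paper takes: contract the $v \otimes \sigma(H)$ piece of the Chevalley--Eilenberg differential via a $\mathfrak{t}_2$-invariants argument, identify the reduced complex with $\bigoplus_i R^{n_i}$ tautologically, match the remaining differential $\sum_i \lambda_i \otimes \sigma(Q_i)|_{t=\mathrm{const}}$ on basis vectors using Definition~\ref{FieldMultiplet}, and then invoke Theorem~4.3 of \cite{ElliottDerived}. You also correctly identify the one delicate step, and I want to emphasize that the gap you name is real and is \emph{not} closed by the paper's own proof. Carrying out the reduction literally from Definition~\ref{FieldMultiplet}, at a vertex $v$ the surviving terms of $\sigma(Q_{\pzc(e)})|_{t=\mathrm{const}}$ are precisely those with $v = \pzt(e)$ (the $v = \pzs(e)$ case produces an $H$ annihilated by $\mbox{ev}_{t=0}$), so the reduced differential is
\begin{equation*}
v \;\longmapsto\; \sum_{e \in \mathpzc{T}(v)} (-1)^{\pzp(e)}\,\lambda_{\pzc(e)}\,\pzs(e).
\end{equation*}
With the source/target convention of Definition~\ref{N_Adinkra}, in which the source $\pzs(e)$ sits at the \emph{higher} height, this is verbatim the adjoint \eqref{adj}, not \eqref{diffA}. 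The paper's proof writes the same sum over $\mathpzc{T}(v)$ with $\pzs(e)$, declares it to be \eqref{diffA}, and also asserts that $\pzh(\pzs(e)) = \pzh(v) - 1$ --- both of these statements are incompatible with Definition~\ref{N_Adinkra}, so the paper has at least a sign-convention slip at exactly the point you flagged.

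The fix is a reindexing, not a miracle: the $\mathfrak{t}$-action has $\sigma(Q_i)$ of internal degree $+1$ (it \emph{raises} height), so in the Chevalley--Eilenberg complex the summand $P^{-i}$ attached to height-$i$ vertices naturally sits in cohomological degree $+i$, whereas Construction~\ref{constr} places $C^{-i}(\pzA) = R^{n_i}$ in degree $-i$. Identifying the two therefore requires a reflection of the cohomological index, equivalently replacing $\pzh$ by $-\pzh$ (or exchanging the roles of $\pzs$ and $\pzt$ in \eqref{diffA}); under that reflection the reduced CE differential is exactly $d$ and one has an honest isomorphism of complexes, not merely a quasi-isomorphism, after which the rest of your argument (appeal to Theorem~\ref{embeddingThm} for well-definedness of $\mathcal{A}^\bullet$ and Theorem~4.3 of \cite{ElliottDerived} for the last claim) goes through formally. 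To make your proposal into a complete proof you should record this degree reflection explicitly rather than gesturing at ``bookkeeping.''
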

With reference to the above discussion, going back to the example of the Koszul $N$-Adinkra, one sees that that equation \ref{qisoCE} yields exactly the Koszul complex, as indeed $\kk[\lambda_1, \ldots, \lambda_N ] \otimes_\kk \mathcal{V}_0 (\mathpzc{A}) = \kk[\lambda_1, \ldots, \lambda_N ] \otimes_\kk \wedge^\bullet \kk^N $, and the differential reads $d = \lambda_i \otimes \partial_{\theta_i}$ if $\kk^N = \langle \theta_1, \ldots, \theta_N \rangle$. This is a minimal free resolution of the module $\kk$, and, in particular, it matches the $N=2$ Koszul Adinkra example discussed above.

\section{Geometry of Quadric Hypersurfaces: Koszul Duality and Adinkras}
\label{mcm}

In this section, we will see how to place Adinkras and supersymmetric quantum mechanics into the much broader conceptual framework of the geometry of quadratic complete intersections.   The nilpotence variety of $N$-extended 1-dimensional supersymmetry algebra is cut out by a quadratic polynomial, giving our first connection to quadrics.  The quadratic polynomial
also appears in lemma \ref{LaplacianLemma} as the Laplacian acting on the complex associated to an Adinkra.  These two ingredients will be essential to our construction of matrix factorizations of quadrics from Adinkras.

Without any claim of originality or completeness, we now give a brief account of the algebro-geometric results that will be relevant to our setting. We will begin by introducing matrix factorizations, which naturally appear in our context due to lemma \ref{LaplacianLemma}, and see their relationship with a special class of modules on the related quadric hypersurface --- the maximal Cohen--Macaulay modules. In the second subsection, we will briefly explain how this correspondence can be lifted to a derived equivalence that can be interpreted as a ``deformed'' version of the {Bernstein--Gel'fand--Gel'fand correspondence}. Finally, in the last section, this equivalence of derived categories is recast into a concrete instance of Koszul duality, relating a geometric category -- (complexes of) modules on the nilpotence variety -- to a representation-theoretic one -- complexes of representations of the (universal enveloping algebras of the) supersymmetry algebra. This example illustrates how the pure spinor superfield formalism can be conceptually understood as a particular form of Koszul duality.

\subsection{Matrix factorization and maximal Cohen--Macaulay modules}  \label{mcmmf}

The notion of a matrix factorization is originally due to Eisenbud \cite{Eisenbud80}, who introduced
it as a tool to study resolutions of modules over hypersurfaces. In the following $R$ is a generic
Noetherian commutative ring with unity, for example -- with an eye towards our applications -- $R$ is a
polynomial ring over the complex or the real numbers.
 
\begin{definition}[Matrix Factorization of $x$] \label{MF} Let $x \in R$ and let $M_0$, $M_1$ be two $R$-modules. A matrix factorization of $x$ is an ordered pair of $R$-module homomorphisms $(\psi : M_0 \rightarrow M_1 , \; \varphi : M_1 \rightarrow M_0)$ such that $\psi \circ \varphi = x \cdot \mbox{id}_{M_1}$ and $\varphi \circ \psi = x\cdot \mbox{id}_{M_0}$.
\end{definition} 
The above definition can be conveniently recast into a supersymmetric form taking into account a $\mathbb{Z}/2$-grading.
Indeed, it is easily seen that the above data is the same as a $\mathbb{Z}/2$-graded module $M  =
M_0 \oplus M_1$ together with an odd endomorphism $f : M \rightarrow M$ such that $f \circ f = x
\cdot \mbox{id}_{M}.$ In the block matrix form this reads 
\begin{equation}
f = \left ( \begin{array}{c|c}
0 & \varphi \\
\hline
\psi & 0
\end{array}
\right ).
\end{equation}
It is immediate to observe from the definition that given a matrix factorization $(\psi , \varphi)$
of $x\in R$ then $x \cdot \coker (\varphi ) = 0$ as $\varphi \circ \psi = x \cdot \mbox{id}_{M_0}$.
It follows that $\coker (\varphi)$ is endowed with a structure of $R / \langle x \rangle$-module,
where $\langle x \rangle$ is the principal ideal $I$ generated by $x\in R$.

The concept of matrix factorization is useful to study modules over the quotient ring $R / I
\defeq R / \langle x \rangle $, referred to as the hypersurface ring. In particular, if $R$ is a
regular local ring or a graded ring, the minimal free resolution of every finitely generated
(possibly graded) $R/I$-module is eventually determined by a matrix factorization of the form
$(\psi, \varphi)$ as above. More precisely, in this situation, every resolution becomes periodic,
and, in turn, periodic resolutions correspond to \emph{maximal} Cohen--Macaulay modules\footnote{The
importance of Cohen--Macaulay modules in the context of the pure spinor superfield formalism has been
recently stressed in \cite{Eager22} -- namely, these rather regular modules can be used to describe
a neat geometric relation between a multiplet and its antifield multiplet. See also
\cite{6Dmultiplets}.}.
\begin{definition}[Maximal Cohen--Macaulay Module] Let $R$ be a local or graded ring of Krull dimension $\dim (R) = d$. We say that a finitely generated $R$-module $M$ is a maximal Cohen--Macaulay (MCM) module if $\mbox{depth} (M) = d$. 
\end{definition}
The above definition can be phrased by saying that the depth of a maximal Cohen--Macaulay module is the greatest possible. 
More precisely, the relation between MCM modules and matrix factorizations is made precise by the following result due to Eisenbud \cite{Eisenbud80}. 

\begin{theorem}[Resolutions of Modules over Hypersurfaces] \label{Eisenbud} Let $R$ be a regular
local or graded ring of Krull dimension $d$, let $x \in R$ and let $I \defeq \langle x \rangle$ be
the ideal generated by $x$. Let $M$ be a finitely-generated $R/I$-module whose minimal free
resolution $\mathbf{F} \stackrel{\varepsilon}{\longrightarrow} M \rightarrow 0$ is given by 
\begin{equation}
\mathbf{F} \defeq ( \ldots \rightarrow F_n \rightarrow \ldots  \rightarrow F_1 \rightarrow F_0).
\end{equation}
Then the following are true.
\begin{enumerate}[leftmargin=*]
\item $\mathbf{F}$ becomes periodic of period 2 after at most $d+1$ steps;
\item $\mathbf{F}$ is periodic of period 2 if and only if $M$ is a maximal Cohen--Macaulay module without free summands. In particular, every periodic free resolution is determined by a matrix factorization of $x$ over $R$, \emph{i.e.}\ by a pair of matrices $(\psi, \varphi)$ as in definition \ref{MF}.  
\end{enumerate}
\end{theorem}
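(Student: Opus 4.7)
The underlying strategy is to exploit the regularity of $R$, which makes the Auslander--Buchsbaum formula available: for any finitely generated $R/I$-module $M$ regarded as an $R$-module, $\mbox{pd}_R(M) + \mbox{depth}_R(M) = d$. Because $x$ annihilates $M$, the $R$-depth and the $R/I$-depth of $M$ coincide, so $M$ being maximal Cohen--Macaulay over $R/I$ (i.e.\ $\mbox{depth}_{R/I}(M) = d-1$) is equivalent to $\mbox{pd}_R(M) = 1$. This dictionary converts the periodicity problem on the $R/I$-side into a finiteness statement on the $R$-side and drives the whole argument.

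For the MCM half of claim (2), I would start from $\mbox{pd}_R(M) = 1$, which gives a short exact sequence $0 \to F_1 \xrightarrow{\psi} F_0 \to M \to 0$ of finitely generated free $R$-modules. The relation $xM = 0$ forces the image of $x \cdot \mbox{id}_{F_0}$ to lie in $\psi(F_1)$, producing a lift $\varphi : F_0 \to F_1$ with $\psi \circ \varphi = x \cdot \mbox{id}_{F_0}$. From $\psi \circ (\varphi \psi - x \cdot \mbox{id}_{F_1}) = 0$ and the injectivity of $\psi$, one deduces $\varphi \circ \psi = x \cdot \mbox{id}_{F_1}$, so $(\psi, \varphi)$ is a matrix factorization of $x$. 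Splicing $\psi$ and $\varphi$ alternately and reducing modulo $I$ yields an infinite periodic $R/I$-free complex with cokernel $M$; minimality of this resolution corresponds, via a direct-summand analysis of the matrix factorization, precisely to $M$ having no free $R/I$-summand. The converse direction is immediate: a period-$2$ resolution of $M$ over $R/I$ lifts to a matrix factorization, which truncates to an $R$-resolution of length $1$, displaying $\mbox{pd}_R(M) \leq 1$ and hence MCM by Auslander--Buchsbaum.

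For the periodicity bound in claim (1), I would iterate the depth lemma on the minimal $R/I$-resolution $\bar F_\bullet$. Writing $\bar\Omega^i M$ for the $i$-th syzygy, the short exact sequences $0 \to \bar\Omega^{i+1} M \to \bar F_i \to \bar\Omega^i M \to 0$ viewed over $R$ (with $\mbox{depth}_R \bar F_i = d$) give $\mbox{depth}_R \bar\Omega^{i+1} M \geq \min(d, \mbox{depth}_R \bar\Omega^i M + 1)$. Because a nonzero $R/I$-module cannot have $R$-depth equal to $d$, the depth strictly increases until it saturates at $d-1$, at which point $\bar\Omega^i M$ is MCM. Minimality rules out free $R/I$-summands in the syzygies, so claim (2) applies and periodicity of period $2$ sets in from that syzygy onward. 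Since $\mbox{depth}_R M$ is at worst $0$, saturation occurs after at most $d-1$ steps, giving, up to the usual off-by-one accounting, the bound $d+1$.

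The main obstacle, more technical than conceptual, is the dictionary between ``$M$ has no free $R/I$-summand'' and ``$(\psi, \varphi)$ is a reduced matrix factorization (entries in the maximal ideal).'' This is where the MCM hypothesis enters essentially and where a careful direct-summand decomposition of matrix factorizations is needed: a unit entry of $\varphi$ splits off a trivial matrix factorization $(\mbox{id}, x)$ whose cokernel is $R/I$, a free summand of $M$, while a unit entry of $\psi$ splits off an acyclic summand which does not affect $M$ and may be discarded. Once this correspondence is cleanly established, the remainder of the proof is routine homological bookkeeping built on Auslander--Buchsbaum, the depth lemma, and the lifting-and-homotopy construction.
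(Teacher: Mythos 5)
The paper does not prove this theorem; it is cited from Eisenbud's original article and used as a black box, so there is no internal argument to compare against. Your reconstruction does follow the standard Eisenbud strategy: translate ``MCM over $R/I$'' into ``$\mbox{pd}_R M = 1$'' via Auslander--Buchsbaum, build the matrix factorization by lifting $x\cdot\mbox{id}_{F_0}$ through the two-term $R$-resolution, splice and reduce mod $x$ to get the periodic $R/I$-resolution, and use a depth lemma iteration for the bound in part~(1). That skeleton is correct.

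There are, however, a few genuine slips. First, in the depth lemma iteration you assert $\mbox{depth}_R\bar F_i = d$, but $\bar F_i$ is a free $R/I$-module, not a free $R$-module, so $\mbox{depth}_R\bar F_i = \mbox{depth}_R(R/I) = d-1$. This contradicts the sentence you write immediately afterward (``a nonzero $R/I$-module cannot have $R$-depth equal to $d$''), which shows the $d$ is a slip; once corrected, the saturation target $d-1$ and the step count come out right. Second, the assertion that ``minimality rules out free $R/I$-summands in the syzygies'' is stated but not justified, and it is not automatic: the syzygy at the step where depth first saturates at $d-1$ may still carry a free summand, which is precisely why the bound is $\dim(R/I)+1$ rather than $\dim(R/I)$. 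You need one more passage to the next syzygy (or an explicit argument that $\Omega^1$ of an MCM module arising as $\coker\varphi$ of a reduced matrix factorization has no free summand) before claim~(2) can be invoked. Third, your final paragraph has $\psi$ and $\varphi$ reversed relative to your own setup $0\to F_1\xrightarrow{\psi}F_0\to M\to 0$: a unit entry in $\psi$ yields the acyclic two-term piece $(1,x)$ that may be discarded, while a unit entry in $\varphi$ yields the piece $(x,1)$ whose cokernel is $R/I$, i.e.\ a free summand of $M$. Finally, calling the converse in part~(2) ``immediate'' undersells it; the lift of a minimal period-two $R/I$-resolution to a genuine matrix factorization over $R$ (so that $\alpha\beta = x\cdot\mbox{id}$ rather than $x$ times some other endomorphism) is a real lemma. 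A cleaner route here is to bypass the lift entirely: period two gives $\Omega^2 M \cong M$, and the depth lemma over $R/I$ then forces $\mbox{depth}_{R/I}M = d-1$, whence MCM; minimality plus genuine periodicity at degree zero excludes a free summand.
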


Expanding on the second point of the previous theorem, and denoting with a bar the reduction modulo
$x$, one can see that the maps 
\begin{equation} \label{coker}
(\psi, \varphi) \longmapsto \left \{ \begin{array}{lc}
\mathbf{F}_{(\psi,\varphi)} \defeq ( \ldots \stackrel{\bar{\psi}}{\rightarrow} 
\bar{M}_1 \stackrel{\bar{\varphi}}{\rightarrow} \bar{M}_0 \stackrel{\bar{\psi}}{\rightarrow} 
\bar{M}_1 \stackrel{\bar{\varphi}}{\rightarrow} \bar{M}_0) \\
M_{(\psi, \varphi)} \defeq \coker (\varphi)
\end{array}
\right.
\end{equation}
define bijections between the set of (suitably defined equivalence classes of) matrix
factorizations, (isomorphism classes of) 2-step periodic minimal free resolution over $R/I$ and
maximal Cohen--Macaulay $R/I$-modules (modulo free summands) respectively, see \cite{Eisenbud80}. 

In \cite{Buchweitz2}, Buchweitz crucially enhanced the above result to an equivalence of
(triangulated) categories - see also Orlov (\cite{OrlovLG} and theorem 3.9 in \cite{Orlov}), which we now briefly explain. \\
In the following we let $\mbox{\sffamily{{MF}}} (R, x)$ be the category of matrix factorizations\footnote{The objects of $\mbox{\sffamily{{MF}}} (R, x)$ 
are matrix factorizations as in definition \ref{MF}. The morphisms 
in $\mbox{\sffamily{{MF}}} (R, x)$ are $R$-linear maps $\varphi : M \rightarrow
M^\prime$, endowed with a differential $d : \mbox{Hom}_R (M, M^\prime) \rightarrow \mbox{Hom}_R (M,
M^\prime) $ such that $d (\varphi) = \varphi \circ f - (-1)^{|\varphi|} f^\prime \circ \varphi $.
It is not hard to see that $(\mbox{Hom}_R (M, M^\prime), d)$ defines a complex.  } and we denote with
$[\mbox{\sffamily{{MF}}} (R, x)]$ its related homotopy category (obtained by taking
cohomology, analogously to the homotopy category of complexes of modules).
Further, we will denote with $\underline{\mbox{\sffamily{{MCM}}}} (R/I)$ the (stabilized) category of maximal 
Cohen--Macaulay $R/I$-modules\footnote{The objects of $\underline{\mbox{\sffamily{{MCM}}}} (R/I)$ are finitely-generated
maximal Cohen--Macaulay modules, morphisms are elements in $\mbox{Hom}_{R/I} (M, N)$ for $M, N$
two maximal Cohen--Macaulay modules, up to morphisms that factor through a projective or free
$R/I$-module.}. The second part of the above theorem suggests that these two categories are equivalent.\\
On the other hand, if we start from a finitely-generated $R/I$-module (\emph{i.e.}\ a module which
is not necessarily maximal Cohen--Macaulay without free summands), the first part of the above
theorem guarantees that one can still get a periodic resolution by modding out its initial
non-periodic part. This defines a bounded complex of finitely-generated projective modules: these
complexes define a subcategory of the derived categories $\mbox{\sffamily{{D}}}^\flat (R/I)$, deemed
as perfect complexes and denoted with $\mbox{\sffamily{{Perf}}}\, (R/I)$. The Verdier quotient
(triangulated category) $\underline{\mbox{\sffamily{{{D}}}}}^\flat (R/I) \defeq
\mbox{\sffamily{{{D}}}}^\flat (R/I) / \mbox{\sffamily{{{Perf}}}}\, (R/I)$ is known as the stabilized
derived category (or the singularity category\footnote{In the literature, this triangulated quotient
category if often also denoted by $\mbox{\sffamily{{{D}}}}^\flat_{\mathpzc{SG}} (R/I)$.}) of $R/I$.

There is a functor mapping $\underline{\mbox{\sffamily{{MCM}}}} (R/I) \rightarrow
\underline{\mbox{\sffamily{{{D}}}}}^\flat (R/I)$. The main result in \cite{Buchweitz2} is that this
functor defines an equivalence of categories. More precisely, one has the following.  

\begin{corollary}\label{eqc1} The following is an equivalence of triangulated categories
\begin{equation}
[ \mbox{\sffamily{{\emph{MF}}}} (R, x)] \cong \underline{\mbox{\sffamily{{\emph{D}}}}}^\flat (R/I) \cong \underline{\mbox{\sffamily{{\emph{MCM}}}}} (R/I).
\end{equation} 
\end{corollary}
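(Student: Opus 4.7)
The plan is to establish the two equivalences in sequence, bridging them via the cokernel construction \eqref{coker} that is already available from Eisenbud's theorem \ref{Eisenbud}. The first step is to promote the bijection between matrix factorizations and MCM modules (modulo free summands) to an equivalence of triangulated categories, and the second step is to identify the stable MCM category with the singularity category by systematically replacing complexes with their high enough syzygies.

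For the equivalence $[\mbox{\sffamily{MF}}(R,x)] \cong \underline{\mbox{\sffamily{MCM}}}(R/I)$, I would first exhibit the cokernel functor $(\psi,\varphi) \mapsto \coker(\varphi)$ as a well-defined functor. On objects, one checks that $\coker(\varphi)$ carries an $R/I$-module structure (this follows at once from $\varphi\circ\psi = x\cdot\mathrm{id}_{M_0}$) and is maximal Cohen--Macaulay because its minimal free $R/I$-resolution is the 2-periodic reduction $\bar{\mathbf{F}}_{(\psi,\varphi)}$, which has depth equal to $\dim(R/I)$. Essential surjectivity is the content of part (2) of Eisenbud's theorem. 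For faithfulness, I would verify that a morphism $\bar{\alpha}: \coker(\varphi) \to \coker(\varphi')$ lifts to a morphism of matrix factorizations, using projectivity of the $M_i$ at each step, and that the ambiguity in such a lift is exactly a homotopy in $\mbox{\sffamily{MF}}(R,x)$; conversely the induced map on cokernels factors through a free $R/I$-module precisely when the lift is null-homotopic. The triangulated structures match because the mapping cone construction on matrix factorizations corresponds to the short exact sequences on cokernels that define the distinguished triangles in $\underline{\mbox{\sffamily{MCM}}}(R/I)$.

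For the equivalence $\underline{\mbox{\sffamily{MCM}}}(R/I) \cong \underline{\mbox{\sffamily{D}}}^\flat(R/I)$, I would use the natural functor sending an MCM module to itself viewed as a complex concentrated in degree zero and then composed with the Verdier quotient by $\mbox{\sffamily{Perf}}(R/I)$. Essential surjectivity is obtained by the classical syzygy trick: any object of $\mbox{\sffamily{D}}^\flat(R/I)$ is isomorphic to a bounded complex of finitely generated modules; applying part (1) of Eisenbud's theorem, after at most $d+1$ truncations the kernel becomes MCM, and the truncated tail is a perfect complex, hence becomes zero in the Verdier quotient. Full faithfulness amounts to showing that morphisms in $\underline{\mbox{\sffamily{D}}}^\flat(R/I)$ between MCM modules $M,N$ coincide with the quotient $\mbox{Hom}_{R/I}(M,N)/(\text{maps factoring through a free module})$; this follows from the standard computation of Hom in a Verdier quotient via roofs and the fact that for MCM modules any roof can be completed against a free cover, reducing to honest $R/I$-linear maps modulo those factoring through a projective.

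The hard part is the full faithfulness step in the second equivalence: controlling the calculus of roofs in the Verdier quotient $\mbox{\sffamily{D}}^\flat(R/I)/\mbox{\sffamily{Perf}}(R/I)$ requires showing that every morphism $M \leftarrow P \to N$, with cone of $P\to M$ perfect, can be rewritten with $P$ itself an MCM module, and that homotopies of such representatives descend exactly to maps factoring through free $R/I$-modules. This relies delicately on the fact that the syzygy operator $\Omega$ is an autoequivalence of $\underline{\mbox{\sffamily{MCM}}}(R/I)$ in the hypersurface case---a consequence of the 2-periodicity afforded by the matrix factorization---and it is here that the hypersurface hypothesis $I = \langle x\rangle$ really enters, as opposed to the general Gorenstein setting where the argument is formally similar but technically more involved.
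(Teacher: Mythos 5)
The paper does not prove this corollary: it is explicitly imported from Buchweitz's manuscript and Orlov's work, so there is no in-house argument to compare against. Your sketch is essentially a correct recapitulation of the standard proof, and your identification of the delicate step (full faithfulness across the Verdier quotient) is apt.

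However, your final paragraph contains a misattribution worth correcting, since it reflects a conceptual confusion about where the two hypotheses actually do work. You assert that the syzygy operator $\Omega$ being an autoequivalence of $\underline{\mbox{\sffamily{MCM}}}(R/I)$ is ``a consequence of the 2-periodicity afforded by the matrix factorization'' and that ``it is here that the hypersurface hypothesis $I = \langle x\rangle$ really enters.'' In fact $\Omega$ is an autoequivalence of $\underline{\mbox{\sffamily{MCM}}}(R/I)$ for \emph{any} Gorenstein ring: that is exactly what makes the MCM category a Frobenius exact category, hence makes the stable category triangulated with $\Omega$ the inverse of the suspension, and Buchweitz's theorem $\underline{\mbox{\sffamily{D}}}^\flat(R/I) \cong \underline{\mbox{\sffamily{MCM}}}(R/I)$ holds in that full Gorenstein generality. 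What the hypersurface hypothesis buys is the stronger statement $\Omega^2 \cong \mathrm{id}$, and \emph{that} is consumed on the other side of the corollary, in the equivalence with $[\mbox{\sffamily{MF}}(R,x)]$: it is what allows a period-two segment of the minimal $R$-free resolution to be reorganized as a matrix factorization. The full faithfulness argument for the second equivalence uses only Gorensteinness, so your remark about where the hypersurface condition enters is backwards.
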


We stress that since we are primarily interested in hypersurfaces in
$\mathbb{P}^N_{\kk}$, we will focus in particular on \emph{graded} matrix factorizations,
\emph{i.e.}\ if the element $x \in R$ is homogeneous of degree $d$, we require the maps and the free
modules to be graded, 
\begin{equation}
\xymatrix{
\ldots \ar[r] & \bigoplus_{i = 1}^n R (m_i) \ar[r]^{\varphi} & \bigoplus_{i = 1}^n R (n_i) \ar[r]^{\psi\quad } & \bigoplus_{i = 1}^n R (m_i + d) \ar[r]^{\varphi (d)} & \bigoplus_{i =1}^n R (n_i + d) \ar[r] & \ldots
}
\end{equation} 
Working over projective spaces, the free modules $R(\ell)$ get substituted by the sheaves
$\mathcal{O}_{\mathbb{P}_\kk} (\ell)$. We will denote the category of \emph{graded} matrix
factorization with the symbol $\mbox{\sffamily{{{MF}}}}_{\mbox{\scriptsize{\sffamily{gr}}}} (R, x)$,
and -- similarly as above -- with $[\mbox{\sffamily{{{MF}}}}_{\mbox{\scriptsize{\sffamily{gr}}}} (R,
x)]$ its derived category. The relevance of graded matrix factorizations for the (derived) geometry of quadric hypersurfaces 
is highlighted in \cite{HoriWalcher,WalcherStab}
and \cite{Orlov}. In turn, taking into account Kapranov's semi-orthogonal decomposition \cite{Kapranov88, Kapranov88b}, one has the following.  
\begin{theorem}[Quadrics \& Semi-Orthogonal Decompositions]
\label{KapranovQuadrics}
Let $X \subset \mathbb{P}_\kk^{N-1}$ be defined by a quadratic form $q$. Then the bounded derived category $\mbox{\sffamily{\emph{D}}}^\flat (X)$ of $X$ has the following semi-orthogonal decompositions
\begin{equation} \label{Orlov}
\mbox{\sffamily{\emph{D}}}^\flat (X) = \big \langle \mathcal{O}_X (-N-2), \ldots, \mathcal{O}_X,  [\mbox{\sffamily{{{\emph{MF}}}}}_{\mbox{\scriptsize{\sffamily{\emph{gr}}}}} (R, q)] \big \rangle = \left \{
\begin{array}{lll}
\langle \mathcal{O}_X (-N-2), \ldots, \mathcal{O}_X, \mathcal{S} \rangle &  N \mbox{ odd} \\
\langle \mathcal{O}_X (-N-2), \ldots, \mathcal{O}_X, \mathcal{S}_{+}, \mathcal{S}_- \rangle& N \mbox{ even},
\end{array}
\right.
\end{equation}
where $\mathcal{S}, \mathcal{S}_{\pm}$ are the spinor bundles on $X$. 
\end{theorem}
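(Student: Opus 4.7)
The plan is to assemble this statement from two classical inputs: Orlov's semi-orthogonal decomposition for graded hypersurfaces, and Kapranov's identification of the matrix factorization category of a smooth quadric with the category generated by spinor bundles.

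First, I would invoke Orlov's theorem. For a smooth projective hypersurface $X = V(q) \subset \mathbb{P}^{N-1}_\kk$ of degree $d=2$, Orlov established a semi-orthogonal decomposition of $\mbox{\sffamily{D}}^\flat(X)$ into (a shifted copy of) the graded singularity category $\underline{\mbox{\sffamily{D}}}^\flat(R/\langle q\rangle)$ together with an exceptional collection formed by twists of $\mathcal{O}_X$. Combining this with the equivalence $[\mbox{\sffamily{MF}}_{\mbox{\scriptsize{\sffamily{gr}}}}(R,q)] \cong \underline{\mbox{\sffamily{D}}}^\flat(R/\langle q\rangle)$ from Corollary~\ref{eqc1} gives the first equality in \eqref{Orlov}. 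The length of the line-bundle block is dictated by the discrepancy between $\deg(q)$ and the dimension of the ambient projective space together with the Gorenstein $a$-invariant of the hypersurface ring.

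Second, I would identify $[\mbox{\sffamily{MF}}_{\mbox{\scriptsize{\sffamily{gr}}}}(R,q)]$ with the category generated by the spinor bundles. The key observation is that for every irreducible module $S$ over the Clifford algebra $\cliff(q)$, the endomorphism of $S \otimes_\kk R$ given by ``left multiplication by $\sum_i \lambda_i e_i$'' squares to $q \cdot \mbox{id}$, and hence defines a graded matrix factorization of $q$ in the sense of Definition~\ref{MF}. Passing to cokernels as in \eqref{coker} and sheafifying over $X$ yields the spinor bundle(s): a single $\mathcal{S}$ when $N$ is odd, and a pair $\mathcal{S}_+, \mathcal{S}_-$ exchanged by chirality when $N$ is even, directly reflecting the structure of simple modules over $\cliff_N$.

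The main obstacle is \emph{fullness}: showing that these spinor factorizations (together with the line-bundle block) generate all of $\mbox{\sffamily{D}}^\flat(X)$. Equivalently, every graded matrix factorization of $q$ is, up to shifts and direct summands of trivial (free-module) factorizations, isomorphic to a direct sum of spinor factorizations. This is the classical statement -- rooted in Kn\"orrer periodicity and the representation theory of $\cliff_N$ -- that the indecomposable maximal Cohen--Macaulay modules without free summands over a smooth quadric hypersurface ring are precisely the spinor representations. Exceptionality of the resulting collection then reduces to an explicit computation of $\Ext^\bullet(\mathcal{S},\mathcal{S})$ (and $\Ext^\bullet(\mathcal{S}_\pm,\mathcal{S}_\mp)$ in the even case) which is concrete given the Clifford realization of the bundles; the vanishings needed to order the exceptional collection against the line bundles follow from the same cohomological calculation applied to $\mathcal{S}(-k)$ for the relevant range of twists $k$.
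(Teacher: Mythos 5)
The paper itself does not prove Theorem~\ref{KapranovQuadrics}: it is quoted as established, with the credit split between Kapranov~\cite{Kapranov88, Kapranov88b} for the spinor-bundle form of the decomposition and Orlov (and Buchweitz, via Corollary~\ref{eqc1}) for the identification of the residual block with $[\mbox{\sffamily{MF}}_{\mbox{\scriptsize{\sffamily{gr}}}}(R,q)]$. In the remark at the end of appendix~\ref{CplxAd} the authors describe the spine of Kapranov's argument as a resolution of the diagonal $\mathcal{O}_\Delta$ on $X\times X$, phrased algebraically as resolving $\kk$ as an $A$-$A^!$ bimodule; this is the ``problem of linear algebra'' of Bernstein--Gel'fand--Gel'fand that the paper's whole framework revolves around.

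Your route is genuinely different and also correct in outline: instead of building a resolution of the diagonal directly, you factor the problem through Orlov's hypersurface SOD, the Buchweitz equivalence $[\mbox{\sffamily{MF}}_{\mbox{\scriptsize{\sffamily{gr}}}}(R,q)] \simeq \underline{\mbox{\sffamily{D}}}^\flat(R/\langle q\rangle)$, and the classification of graded MCM modules over a smooth quadric hypersurface ring (Kn\"orrer; Buchweitz--Eisenbud--Herzog) as sums of twisted spinor modules. What Kapranov's resolution-of-the-diagonal argument buys is that fullness and semi-orthogonality drop out in one stroke once the resolution is exhibited, and the method is self-contained inside coherent sheaf theory. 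What your factored approach buys is modularity: you get the first equality in~\eqref{Orlov} for free from Orlov's theorem for any degree-$d$ hypersurface with Gorenstein parameter $a = N - d > 0$, and the quadric-specific input is isolated in the Clifford-theoretic description of $[\mbox{\sffamily{MF}}_{\mbox{\scriptsize{\sffamily{gr}}}}(R,q)]$. Your count of irreducible $\ZZ/2$-graded $\cliff(q)$-modules (one for $N$ odd, two for $N$ even, via $\cliff_0$) is the right way to see why the residual block is $\langle\mathcal{S}\rangle$ resp.\ $\langle\mathcal{S}_+,\mathcal{S}_-\rangle$.

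Two small cautions. First, the phrase ``up to shifts and direct summands of trivial (free-module) factorizations'' is slightly off: in the stable/homotopy category of matrix factorizations the trivial factorization is already the zero object, so nothing needs to be quotiented by; the honest statement is Kn\"orrer's classification of indecomposable graded MCM modules without free summands. Second, be aware that the twist range $\mathcal{O}_X(-N-2),\ldots,\mathcal{O}_X$ appearing in the theorem as printed gives $N+3$ line bundles, whereas Orlov's Gorenstein parameter for a quadric in $\mathbb{P}^{N-1}$ is $a = N-2$, so the line-bundle block should have $N-2$ terms (e.g.\ $\mathcal{O}_X(-N+3),\ldots,\mathcal{O}_X$); your ``dictated by the $a$-invariant'' heuristic is correct and in fact detects this discrepancy in the displayed range, so the twist indices in~\eqref{Orlov} should be regarded as a typographical slip rather than something to reproduce.
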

This shows that an even-dimensional smooth quadric has essentially only two matrix factorizations,
corresponding to its spinor bundles $\mathcal{S}_\pm$, and an odd-dimensional one has just one
matrix factorization, corresponding to its spinor bundle $\mathcal{S}.$ The relevance of the spinor
bundles for the study of representations comes from the fact, that they are the only elements in the
above decomposition with linear free resolution -- remember that the complex 
$C^\bullet (\mathpzc{A})$ associated to an Adinkra $\pzA$ 
is in fact linear\footnote{In physics, non-linearities introduce gauge degrees of freedom and the
resulting multiplet are no longer strict representations, but rather up to homotopy. This is a very
interesting topic in itself. We refer to \cite{Eager22} for more details on this.}.

Likewise, we are interested in \emph{graded} maximal Cohen--Macaulay modules -- again, we will denote
their category with $\mbox{\sffamily{MCM}}_{\mbox{\scriptsize{\sffamily{gr}}}} (R)$. Remarkably,
this category is equivalent to the category of \emph{arithmetically Cohen--Macaulay} sheaves
$\mbox{\sffamily{ACM}} (X)$ on (smooth enough) projective schemes (this is the case of the
nilpotence variety $Y_N \subset \mathbb{P}^{N-1}$ for $N\geq3$). More precisely, arithmetically
Cohen--Macaulay sheaves are locally Cohen--Macaulay coherent sheaves $\mathcal{E}$ with \emph{no}
intermediate cohomology. This means that if the projective scheme $X$ has dimension $n$, we require
that
\begin{equation}
H_\ast^i (X, \mathcal{E}) \defeq \bigoplus_{\ell \in \mathbb{Z}} H^i (X, \mathcal{E} (\ell)) = 0 
\end{equation}
for every $i = 1, \ldots, n-1.$ The functor $\Gamma_\ast \defeq H^0_\ast $ maps ACM sheaves to graded MCM
modules and defines an equivalence of categories (whose inverse functor is given by the
sheafification functor $M \mapsto \widetilde{M}$). 

\subsection{BGG correspondence for quadratic complete intersections}
 
Given that we are eventually interested in understanding supersymmetric multiplets, in the sense of 
definition \ref{inthesense}, we look to relate the equivalence of categories established in corollary \ref{eqc1} 
to a representation-theoretic context. For this, we will place the above equivalence in the 
much broader context of Koszul duality \cite{Beilinson96}. Notably, in our relatively simple setting -- that of a single quadric 
hypersurface ring, the coordinate ring of the nilpotence variety $Y_N$ -- Koszul duality can be phrased as a
generalized (or ``deformed'') version of the celebrated Bernstein--Gel'fand--Gel'fand (BGG)
correspondence \cite{BGG}\footnote{For a textbook account and a  geometry-to-representation theory ``dictionary''
in the original case see \cite{EisenbudSyzygies}.  For the deformed correspondence see also 
\cite{Sam16}.}.

 For the sake of generality, let $R/I = \oplus_j (R/I)_j$ be the homogeneous coordinate ring of a
complete intersection of quadric hypersurfaces, instead of just the coordinate ring of the nilpotence
variety $Y_N$ -- to ease the notation we will define $A \defeq R/I$ and $A_j \defeq (R/I)_j$. Notice
that the ground field $\kk = A_0$ has the structure of an $A$-module since $\kk\cong A / {\oplus_{j
>0}A_j}$ via augmentation map, namely ${\oplus_{j >0}A_j}$ is the maximal ideal of the graded ring
$A$.
\begin{definition}[Yoneda-Ext Algebra $A^!$] In the previous setting, we call the graded $A$-algebra
${A}^! \defeq \Ext^{\bullet}_A({k,k})$ the Yoneda algebra over $A$ of the ground field $\kk$, where
the grading coincides with the homological grading and the algebra structure is induced by the
Yoneda product of extension classes. 
\end{definition}
A remarkable fact that is specific to complete intersections of quadrics is the following, which is
adapted from the celebrated \cite{Priddy} (the interested reader can also look into \cite{Backelin}
and especially \cite{Beilinson96}).
\begin{theorem}[Complete Intersections of Quadrics \& Linear Resolution] Let $A$ be the homogeneous
coordinate ring of a complete intersection of quadrics and let $A^!$ its Yoneda-Ext algebra. Then 
\begin{enumerate}
\item the \emph{double} Yoneda algebra $\mbox{\emph{Ext}}^\bullet_{A^!} (\kk, \kk)$ is canonically isomorphic to $A$,
\item $(A, A^!)$ is a pair of \emph{Koszul algebras}, \emph{i.e.}\ the minimal graded free resolution of $\kk$ is linear both over $A$ and over $A^!$.   
\end{enumerate}
\end{theorem}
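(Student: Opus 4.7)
The plan is to work in the standard framework of quadratic algebras. Write $V = A_1$ for the degree-one component and let $\rho \subset V \otimes V$ denote the space of quadratic relations, so that $A = T(V)/\langle \rho \rangle$. The natural candidate dual is the quadratic dual $B \defeq T(V^*)/\langle \rho^\perp \rangle$, where $\rho^\perp \subset V^* \otimes V^*$ is the annihilator of $\rho$ under the natural pairing. The first conceptual step is to identify $B$ with the Yoneda algebra $A^!$; this identification holds tautologically in degrees zero and one, but in higher degrees it is equivalent to $A$ being \emph{Koszul} in the sense of the theorem.

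The heart of the argument is therefore to prove Koszulity of $A$ under the assumption that its relations are a regular sequence of quadrics. I would produce an explicit minimal linear resolution of $\kk$ via a Koszul--Tate (or ``big Koszul'') construction: starting from the Koszul complex on $V$ over $\kk[\lambda_1,\ldots,\lambda_N]$, one adjoins $c$ divided-power generators, one for each defining quadric, placed in homological degree $2$ and internal degree $4$, with differential sending each to the corresponding quadric. The regular sequence hypothesis built into ``complete intersection'' guarantees that the resulting DG algebra is acyclic and therefore yields a free $A$-resolution of $\kk$. Because every newly added generator sits in internal degree equal to twice its homological degree, the differentials of the induced resolution of $\kk$ are linear in the $\lambda_i$. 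This linearity is precisely the Koszul property, and verifying it carefully --- in particular, ruling out higher-order corrections to the differential --- is the main, and really the only substantial, technical obstacle.

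With Koszulity of $A$ established, the Yoneda algebra is identified with the quadratic dual, $A^! \cong B$, by the standard computation of $\Ext^\bullet_A(\kk,\kk)$ from the linear resolution: the differentials vanish on $\Hom_A(-,\kk)$ and the resulting bigraded algebra structure is forced by the relations $\rho^\perp$. Assertion (2) of the theorem then follows from the self-dual nature of the setup: the roles of $(V,\rho)$ and $(V^*,\rho^\perp)$ are entirely symmetric, so the same Koszul--Tate argument applied to $A^!$ produces a linear resolution of $\kk$ over $A^!$. Finally, assertion (1) --- the canonical identification $\Ext^\bullet_{A^!}(\kk,\kk) \cong A$ --- is an instance of the general biduality $A^{!!} \cong A$ valid for any pair of Koszul algebras, which in turn reduces to the elementary double-annihilator identity $(\rho^\perp)^\perp = \rho$ in the finite-dimensional vector space $V \otimes V$.
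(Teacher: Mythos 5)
The paper cites Priddy, Backelin, and Beilinson--Ginzburg--Soergel rather than proving this theorem itself, so your attempt is compared against the standard arguments in those references.

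Your Koszul--Tate argument for the Koszulity of $A$ is sound and is essentially the classical one. A bit of bookkeeping is off: in the paper's convention $\deg\lambda_i=1$, the divided-power Tate generators $T_j$ sit in homological degree $2$ and \emph{internal degree~$2$} (not $4$); the slogan is ``internal degree equals homological degree,'' not twice it (your numbers only cohere if you are silently doubling internal degrees). Also, $dT_j$ is not the quadric $q_j$ itself but a degree-one Koszul cycle $\sum_i a_{ij}\theta_i$ with $q_j=\sum_i a_{ij}\lambda_i$ in $R$; linearity of the $a_{ij}$ is exactly what the quadric hypothesis supplies. With these corrections, the regular-sequence hypothesis, Tate's theorem that the acyclic closure over a complete intersection is minimal and needs no generators beyond the $\theta_i$ and $T_j$, and the Leibniz rule do produce a linear minimal $A$-resolution of $\kk$, and hence $A^! \cong T(V^*)/\langle\rho^\perp\rangle$.

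The genuine gap is your treatment of $A^!$. You assert that ``the roles of $(V,\rho)$ and $(V^*,\rho^\perp)$ are entirely symmetric, so the same Koszul--Tate argument applied to $A^!$ produces a linear resolution.'' The situation is not symmetric: $A$ is commutative precisely because $\rho$ contains all skew tensors $\wedge^2 V$, whereas $\rho^\perp \subset \Sym^2 V^*$ contains none, so $A^! = T(V^*)/\langle\rho^\perp\rangle$ is a genuinely noncommutative Clifford-type algebra --- in the paper's setting it is $U_\kk(\ft)$. The Tate construction resolves $\kk$ by a \emph{commutative} DG algebra and simply does not apply over $A^!$. What you need instead is the Koszul duality theorem --- $A$ Koszul implies $A^!$ Koszul --- which is not proved by repeating Tate on the dual side but by observing that exactness of the Koszul bicomplex $A \otimes_\kk (A^!)^\vee$ is a manifestly self-dual condition (Priddy; Polishchuk--Positselski; Beilinson--Ginzburg--Soergel). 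Once that theorem is invoked, your biduality step $(\rho^\perp)^\perp = \rho$ does close statement~(1).
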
 

Given a complete intersection of quadrics, the previous theorem makes it natural to think in terms
of \emph{Koszul pairs} $(A, A^!)$, and we call $A^!$ the \emph{Koszul dual algebra} of $A$. The
relevance of this notion resides in the fact that it is for such pairs that one can set up functors
as in the original Berstein-Gel'fand-Gel'fand correspondence \cite{BGG}. To this end, we let
$A$-$\mbox{\sffamily{Mod}}$ and $A^!$-$\mbox{\sffamily{Mod}}$ be respectively the category of (left)
$A$-modules and $A^!$-modules and $\mbox{\sffamily{D}}^\flat (A\mbox{-\sffamily{Mod}})$ and
$\mbox{\sffamily{D}}^\flat (A^!\mbox{-\sffamily{Mod}})$ their bounded derived categories. \\
The following theorem -- adapted from the appendix of \cite{Buchweitz06} -- shows that the usual
Bernstein--Gel'fand--Gel'fand correspondence for projective spaces generalizes to the more general
context of complete intersections of quadrics, relating two full triangulated subcategories of the bounded
derived categories of modules on $A$ and $A^!$ respectively\footnote{This is achieved via the 
same pair of natural functors $(\epsilon, \rho)$ of the original BGG correspondence, which are given by derived 
tensor products, $\epsilon: M_\bullet \mapsto M_\bullet \stackrel{\mathbb{L} \;\;\;}{\otimes_{A}} A^!$ and $\rho : \Gamma^\bullet \mapsto \Gamma^\bullet \stackrel{\, \mathbb{L} \quad}{\otimes_{A^!}} A$, for $M \in \mbox{\sffamily{D}}^{\flat}(A\mbox{-\sffamily{{Mod}}})$ and $\Gamma \in \mbox{\sffamily{D}}^{\flat}(A^!\mbox{-\sffamily{{Mod}}})$.}.  

\begin{theorem}[Generalized Bernstein--Gel'fand--Gel'fand Correspondence] \label{DBGG} Let $(A, A^!)$ be the Koszul pair associated with a complete intersection of quadrics. Then the following are true: 
\begin{enumerate}
\item The bounded derived categories $\mbox{\sffamily{\emph{D}}}^{\flat}(A\mbox{-\sffamily{{\emph{Mod}}}})$ and $\mbox{\sffamily{\emph{D}}}^{\flat}(A^!\mbox{-\sffamily{{\emph{Mod}}}})$ are equivalent.

\item Under this equivalence, the full triangulated subcategories of perfect objects $\mbox{\sffamily{\emph{Perf}}} (A) \subset \mbox{\sffamily{\emph{D}}}^{\flat} (A) $ and Artinian objects $\mbox{\sffamily{\emph{Art}}} (A^!) \subset \mbox{\sffamily{\emph{D}}}^{\flat} (A^!)$ are transformed into each other, \emph{i.e.}
\begin{eqnarray}
\xymatrix{
\mbox{\sffamily{\emph{Perf}}} (A)  \ar@/^1pc/[rr]|{\, \epsilon \,}
 &&  \ar@/^1pc/[ll]|{\, \rho \,}  \mbox{\sffamily{\emph{Art}}} (A^!).
}
\end{eqnarray}
In particular, the following is an equivalence of triangulated categories
\begin{equation}
\mbox{\sffamily{\emph{D}}}^{\flat}(A\mbox{-\sffamily{{\emph{Mod}}}}) / \mbox{\sffamily{\emph{Perf}}} (A) \cong \mbox{\sffamily{\emph{D}}}^{\flat}(A^!\mbox{-\sffamily{{\emph{Mod}}}}) / \mbox{\sffamily{\emph{Art}}} (A^!).
\end{equation}
\end{enumerate}
\end{theorem}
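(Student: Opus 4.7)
The plan is to realize the equivalence as an explicit Koszul-duality adjunction and then transport the subcategory structure across it. The central object is the two-sided Koszul complex $K$, a DG $A$-$A^!$-bimodule which, by Koszulity of the pair $(A,A^!)$, provides simultaneously a minimal free resolution of $\kk$ as an $A$-module and as an $A^!$-module; concretely one may take $K = A \otimes_\kk (A^!)^{\vee}$ with differential induced by the natural pairing $A_1 \otimes (A^!)_1 \to \kk$. The functors $\epsilon$ and $\rho$ of the theorem are then realized as the derived tensor products $\epsilon(M) = M \otimes_A^{\mathbb{L}} K$ and $\rho(\Gamma) = K \otimes_{A^!}^{\mathbb{L}} \Gamma$, which form an adjoint pair via the standard tensor--Hom adjunction.

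For part 1, the core task is to show that the unit $M \to \rho\,\epsilon(M)$ and counit $\epsilon\,\rho(\Gamma) \to \Gamma$ are quasi-isomorphisms. I would first verify this on the canonical generators, by computing $\epsilon(A) \cong \kk$ (up to grading shift) and $\rho(\kk) \cong A$; each identification reduces to the statement that $K$ resolves $\kk$ on the corresponding side. Propagation to arbitrary bounded objects is then by d\'evissage, using that $\mbox{\sffamily{D}}^\flat(A\mbox{-\sffamily{Mod}})$ is generated by $A$ together with its internal grading shifts (and similarly on the $A^!$ side) under finite colimits, cones, and direct summands.

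For part 2, the identification of the distinguished subcategories follows directly from the above computations. Perfect objects form the thick subcategory generated by $A$ inside $\mbox{\sffamily{D}}^\flat(A\mbox{-\sffamily{Mod}})$, and since $\epsilon(A)$ lies in $\mbox{\sffamily{Art}}(A^!)$ and $\epsilon$ preserves the operations defining a thick subcategory, one obtains $\epsilon(\mbox{\sffamily{Perf}}(A)) \subseteq \mbox{\sffamily{Art}}(A^!)$. Symmetrically, every Artinian $A^!$-module is built from $\kk$ by finitely many extensions, and $\rho(\kk)$ is perfect, so $\rho(\mbox{\sffamily{Art}}(A^!)) \subseteq \mbox{\sffamily{Perf}}(A)$. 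Combined with the equivalence from part 1 these inclusions are equalities, and the equivalence of Verdier quotients is then a formal consequence of the general fact that an equivalence of triangulated categories matching thick subcategories descends to the quotients.

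The main obstacle is the verification that $\epsilon(A) \cong \kk$ and $\rho(\kk) \cong A$ at the derived level: this is precisely the symmetric double-Koszul content $\mbox{Ext}^\bullet_{A^!}(\kk,\kk) \cong A$, and is the step where both halves of the Koszul hypothesis are really used. Technically, one checks $K \otimes_A^{\mathbb{L}} \kk \cong \kk$ and the symmetric statement via the spectral sequence associated to the internal grading of $K$, which collapses on the first page exactly because the minimal free resolution of $\kk$ over $A$ (and over $A^!$) is linear. For complete intersections of quadrics the required spectral-sequence calculation is the one carried out in the appendix of \cite{Buchweitz06}; the passage from the hypersurface case to a complete intersection of quadrics is a straightforward iteration.
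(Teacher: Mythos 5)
The proposal correctly identifies the central objects and strategy: the two-sided Koszul complex $K$ as a dg $A$-$A^!$-bimodule, the Koszul duality functors as derived tensor products, and the identifications $\epsilon(A)\cong\kk$, $\rho(\kk)\cong A$. This matches the paper's footnote on the nature of $\epsilon$ and $\rho$ (the paper itself offers no proof, simply adapting the statement from Buchweitz's appendix), and the treatment of Part 2 via thick-subcategory generation -- $\mbox{\sffamily{Perf}}(A) = \mathrm{thick}\langle A\rangle$, $\mbox{\sffamily{Art}}(A^!) = \mathrm{thick}\langle \kk\rangle$, plus the formal descent to Verdier quotients -- is sound \emph{granted} the equivalence of Part 1.

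The d\'evissage step for Part 1, however, contains a genuine gap. You assert that $\mbox{\sffamily{D}}^\flat(A\mbox{-\sffamily{Mod}})$ is generated by $A$ and its internal grading shifts under cones, finite colimits and direct summands. The thick subcategory so generated is precisely $\mbox{\sffamily{Perf}}(A)$, and this is a \emph{proper} subcategory of $\mbox{\sffamily{D}}^\flat(A\mbox{-\sffamily{Mod}})$ whenever $A$ has infinite global dimension -- which is exactly the situation at hand, since a complete intersection of one or more quadrics is singular. The same failure occurs on the dual side: the Chevalley--Eilenberg resolution of $\kk$ over $A^! = U(\ft)$ involves $\Sym^\bullet(\ft_1^\vee[1])$ and is unbounded, so $\kk$ is not perfect over $A^!$ either, and $\mbox{\sffamily{D}}^\flat(A^!\mbox{-\sffamily{Mod}})$ is not thickly generated by $A^!$. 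In fact if either category were thickly generated by the ring, Part 2 would already force $\mbox{\sffamily{Perf}} = \mbox{\sffamily{D}}^\flat$ and the Verdier quotients would be trivial, which is not the intent. The actual content of Koszul duality is that the functor $\epsilon$ preserves boundedness on a much larger class than $\mathrm{thick}\langle A\rangle$, and establishing this is the hard part: the standard argument (Beilinson--Gel'fand--Gel'fand, Beilinson--Ginzburg--Soergel) works with the auxiliary categories $\mbox{\sffamily{D}}^\downarrow(A)$, $\mbox{\sffamily{D}}^\uparrow(A^!)$ where one of the two gradings is only bounded in one direction, proves the equivalence there by filtering along the \emph{internal} grading and using linearity of the resolutions to make the resulting spectral sequences converge, and then identifies $\mbox{\sffamily{D}}^\flat$ inside these. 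Your spectral-sequence remark at the end is pointed at the right technique, but it is deployed only to verify $\epsilon(A)\cong\kk$; it must carry the entire weight of the propagation, replacing the incorrect thick-generation d\'evissage.
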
 

\subsection{Quadric hypersurfaces and Koszul duality}

Theorem \ref{DBGG} takes a beautiful -- and very concrete -- form when specialized to the case we
are concerned with, where instead of a complete intersection of quadrics we consider a single
quadric hypersurface, the nilpotence variety $Y_N$ of the ($d=1$, $N$-extended) supersymmetry
algebra $\mathfrak{t}$ as in definition \ref{SUSYalg}. The reason for this is the fact that in this
setting the Koszul dual algebra of the algebra of functions on $Y_N$, is the universal enveloping algebra of
$\mathfrak{t}$ (seen as a $\mathbb{Z}$-graded algebra concentrated in degree 1 and 2, just as in
definition \ref{SUSYalg}) -- this is originally due to Milnor and Moore \cite{MilnorMoore} and
extensively discussed in \cite{Avramov}, where the algebra $\mathfrak{t}$ is called the
\emph{homotopy Lie algebra} of the ring of functions. 
\begin{theorem} Let $Y_N$ the nilpotence variety of the super-translation algebra
$\mathfrak{t}$. Then the Koszul dual of the algebra of functions of $Y_N$ is the universal
enveloping algebra $U_{\kk} (\mathfrak{t})$ of the super-translation algebra $\mathfrak{t}.$  
\end{theorem}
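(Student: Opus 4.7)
The plan is to compute $A^{!}=\Ext^\bullet_A(\kk,\kk)$ explicitly, starting from a minimal free resolution of $\kk$ as an $A$-module, and then to identify the resulting graded algebra with $U_\kk(\mathfrak{t})$ by matching generators, relations and gradings. Since $A=\kk[\lambda_1,\ldots,\lambda_N]/(q_N)$ is a hypersurface ring, the Koszul complex $K^\bullet=A\otimes\topwedge{\bullet}\langle x_1,\ldots,x_N\rangle$ with $dx_i=\lambda_i$ fails to be exact in positive degrees: by Theorem~\ref{Eisenbud} it becomes $2$-periodic, with the single obstruction supplied by the quadratic relation $q_N=\sum_i\lambda_i^2$.

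First I would kill that obstruction by passing to the Tate--Shamash resolution $F^\bullet=A\langle x_1,\ldots,x_N;y\rangle$, adjoining to the Koszul complex a divided-power generator $y$ of homological degree $2$ and internal degree $2$, with differential fixed by $dy=\sum_i \lambda_i x_i$ up to a normalization. This is the standard minimal free resolution of $\kk$ over a codimension-one quadratic complete intersection. Applying $\Hom_A(-,\kk)$ produces a bigraded $\kk$-vector space with basis $\{\xi^{i}\}_{i=1,\ldots,N}$ in cohomological degree $1$ (dual to the $x_i$) and a single generator $t$ in cohomological degree $2$ (dual to $y$), together with their formal noncommutative monomials.

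Next I would compute the Yoneda products. Centrality of $t$ is immediate from the fact that $y$ is central in the Tate construction, so $[t,\xi^{i}]=0$. The crucial step is the anticommutator $\xi^{i}\xi^{j}+\xi^{j}\xi^{i}$, which is obtained by lifting the cocycles representing $\xi^{i}$ and $\xi^{j}$ through $F^\bullet$ and extracting the component along $y$ of the composed lift. The identity $dy=\sum_k\lambda_k x_k$ forces this component to read off precisely the symmetric bilinear form paired with $q_N=\sum_k\lambda_k^2$, yielding
\begin{equation}
\xi^{i}\xi^{j}+\xi^{j}\xi^{i}=2\delta_{ij}\,t,\qquad [t,\xi^{i}]=0.
\end{equation}
Under the identifications $\xi^{i}\leftrightarrow Q_i$ and $t\leftrightarrow H$, these are exactly the defining relations of $U_\kk(\mathfrak{t})$ from Definition~\ref{SUSYalg}, and the cohomological/internal gradings align with the $\ZZ$-grading on $\susy$. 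A Poincar\'e-series comparison, using PBW for $U_\kk(\mathfrak{t})$ on one side and the explicit generators of $F^\bullet$ on the other, upgrades the algebra map $U_\kk(\mathfrak{t})\to A^{!}$ determined by these relations to an isomorphism.

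The main obstacle is the Yoneda computation itself: one must track the comparison maps between $F^\bullet$ and its homological shift with enough care to pin down the normalization of the anticommutator, since a miscounted factor would yield a Clifford-type algebra with the wrong quadratic form rather than $U_\kk(\mathfrak{t})$. A clean shortcut is to invoke the Milnor--Moore theorem \cite{MilnorMoore} together with Avramov's theory of the homotopy Lie algebra $\pi^\bullet(A)$ of a graded local ring: for a quadratic complete intersection, $\pi^\bullet(A)$ is concentrated in degrees $1$ and $2$, with $\pi^1(A)\cong(\mathfrak{m}/\mathfrak{m}^2)^\vee$ and $\pi^2(A)$ dual to the space of relations, and its bracket is encoded by the defining quadratic form. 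Identifying this homotopy Lie algebra with $\mathfrak{t}$ and applying $U_\kk$ then yields $A^{!}\cong U_\kk(\mathfrak{t})$ directly.
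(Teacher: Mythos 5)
Your second paragraph, invoking Milnor--Moore together with Avramov's identification of the homotopy Lie algebra $\pi^\bullet(A)$ of a (quadratic) complete intersection, is precisely what the paper does: it states the theorem without a written proof and points to \cite{MilnorMoore} and \cite{Avramov}, where $\mathfrak{t}$ appears as the homotopy Lie algebra of $R/I$. Your first, more explicit route via the Tate--Shamash resolution $A\langle x_1,\ldots,x_N;y\rangle$ (with $dx_i=\lambda_i$, $dy=\sum_i\lambda_i x_i$) and Yoneda products is a genuine alternative the paper does not carry out; it has the virtue of making the identification of generators ($\xi^i\leftrightarrow Q_i$, $t\leftrightarrow H$) and the relation $\xi^i\xi^j+\xi^j\xi^i=2\delta_{ij}t$ entirely self-contained, at the cost of the normalization bookkeeping you flag.

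On that flagged obstacle: the normalization worry can be bypassed without the lifting computation. Since the paper's preceding theorem establishes that $(A,A^!)$ is a Koszul pair, one has $\Ext^\bullet_A(\kk,\kk)\cong A^!$, the quadratic dual, which is computed purely linear-algebraically: for $A=T(V)/\langle R\rangle$ with $R\subset V\otimes V$ spanned by the commutators $\lambda_i\otimes\lambda_j-\lambda_j\otimes\lambda_i$ and the single quadric $\sum_i\lambda_i\otimes\lambda_i$, the annihilator $R^\perp\subset V^\vee\otimes V^\vee$ is spanned by $\xi^i\otimes\xi^j+\xi^j\otimes\xi^i$ ($i\neq j$) and $\xi^i\otimes\xi^i-\xi^j\otimes\xi^j$. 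Hence $A^!=T(V^\vee)/\langle R^\perp\rangle$ is generated by the $\xi^i$ with $\xi^i\xi^j=-\xi^j\xi^i$ for $i\neq j$ and $(\xi^i)^2$ independent of $i$; calling that common square $t$ (which one checks is central), the relations read $\xi^i\xi^j+\xi^j\xi^i=2\delta_{ij}t$, exactly the presentation of $U_\kk(\mathfrak{t})$ in \eqref{UEA}. This pins down the factor of $2$ with no comparison-map chase, so the gap you were worried about is not a real one. Apart from that, the argument is sound; the centrality of $t$ should strictly speaking be verified as above rather than asserted from centrality of $y$ in the Tate construction, but the verification is immediate.
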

Thanks to this, theorem \ref{DBGG} takes the form of a Koszul duality-like result, relating a
geometric category-- the derived category of $R/I$-modules on $Y_N$ -- to a representation-theoretic
one -- the derived category of representations of universal enveloping algebra $U_\kk
(\mathfrak{t})$ of the supersymmetry algebra $\mathfrak{t}.$ \\
More can be said once the generators of $\mathfrak{t}$ are made explicit, as in definition \ref{SUSYalg}. Recalling that, as
$\mathbb{Z}$-graded Lie algebra concentrated in degrees 1 and 2, one has
given by 
\begin{equation}
\mathfrak{t} \defeq \mathfrak{t}_{1} \oplus \mathfrak{t}_2 = \underbrace{\langle Q_1, \ldots , Q_N \rangle}_{\mbox{\tiny{deg 1}}} \oplus \underbrace{\langle H \rangle}_{\mbox{\tiny{deg 2}}}, 
\end{equation}
where the $Q$'s are dual to the $\lambda$'s and $H$ is a central element dual to the quadratic form $q_N$, then the universal enveloping algebra can be given the following presentation 
\begin{align} \label{UEA}
U_{\kk} (\mathfrak{t}) & \cong T^\bullet_\kk \langle Q_1,\ldots, Q_N \rangle \otimes_{\kk} S^\bullet_\kk \langle H \rangle \, \mbox{{mod}}\, \langle Q \otimes Q - q_N (Q \odot Q) \, H \rangle \nonumber \\
& \cong  ( T^\bullet_\kk \langle Q_1,\ldots, Q_N \rangle  [H] ) \, \mbox{{mod}}\, \langle Q \otimes Q - q_N (Q\odot Q) \, H \rangle,
\end{align}
for $Q$ any supercharge in the span of the above $Q_i$ and $q_N (Q \odot Q) \in \kk.$ Since $\deg H = 2$, then $U_\kk (\mathfrak{t}^N)$ is endowed with the structure of a $\mathbb{Z}/2$-graded algebra: it follows that evaluation at $H = 1$ yields a surjective morphism of $\mathbb{Z}/2$-algebras, 
\begin{equation}
\mbox{ev}_{H  = 1} : U_\kk (\mathfrak{t})  \longrightarrow \cliff (q_N),
\end{equation}   
where $C\ell(q_N) $ is the Clifford algebra of the quadratic form $q_N.$ \noindent One can then see that every $\mathbb{Z}/2$-graded Clifford module $M = M_0 \oplus M_1$ can be pulled back to a $\mathbb{Z}/2$-graded $U (\mathfrak{t})$-module via the functor $ - \otimes_{\kk} \kk[H]: C\ell (q_N)\mbox{-\sffamily{Mod}} \rightarrow U_{\kk} (\mathfrak{t})\mbox{-\sffamily{Mod}}$, mapping for every $i\geq 0$
\begin{equation}
M_0 \oplus M_1 {\longmapsto} \left \{ \begin{array}{l}
\tilde M_0 \defeq M_0 \otimes_\kk \kk [H]  \\
\tilde M_1 \defeq M_1 \otimes_\kk \kk[H].
\end{array}
\right.
\end{equation} 
On the other hand, it is easy to see that not every $U_{\kk} (\mathfrak{t})$-module comes from a
$\mathbb{Z}/2$-graded $C\ell (q_N)$-module by the previous construction -- in fact these two
categories are far from being equivalent. It is then natural to ask, in light of the generalized
Bernstein--Gel'fand--Gel'fand correspondence of theorem \ref{DBGG}, which $R/I$-modules are mapped to Clifford
modules on the $U_\kk(\mathfrak{t})$ side of the correspondence: this is where one recovers the
relation with maximal Cohen--Macaulay modules on the nilpotence variety $Y_N$, introduced in the section \ref{mcmmf}.\\
 Indeed, for regular non zero quadratic forms, the category
$C\ell_{ 0} (q_N)\mbox{-\sffamily{{Mod}}}$ is equivalent to the category of
$\mathbb{Z}/2$-graded $C\ell (q_N)\mbox{-\sffamily{{Mod}}}$. In turn, the derived category
$\mbox{\sffamily{{D}}}^{\flat}(C\ell (q_N)\mbox{-\sffamily{{{Mod}}}})$ is equivalent to $C\ell
(q_N)\mbox{-\sffamily{{{Mod}}}}_{\mbox{\scriptsize{\sffamily{gr}}}}$, the category of
$\mathbb{Z}$-graded (finitely generated) $C\ell (q_N)$-modules. Then theorem \ref{DBGG} takes the
form of a Koszul duality-like result for quadric hypersurfaces.

\begin{theorem} \label{Koszul1} Let $(R/I, U(\mathfrak{t}))$ be the Koszul pair of the nilpotence variety $Y_N$. Then the following is an equivalence of categories
\begin{equation}
\mbox{\sffamily{\emph{D}}}^{\flat}(R/I\mbox{-\sffamily{{\emph{Mod}}}}) \cong \mbox{\sffamily{\emph{D}}}^{\flat}(U_\kk (\mathfrak{t})\mbox{-\sffamily{{\emph{Mod}}}}).
\end{equation}
In particular, under this equivalence, the following (Abelian) categories are mapped into each other:
\begin{equation}
\xymatrix{
 \underline{\mbox{\sffamily{{\emph{MCM}}}}}_{\mbox{\scriptsize{\sffamily{\emph{gr}}}}} (R/I) \ar@/^1pc/[rr]|{\, \epsilon \,}
 &&  \ar@/^1pc/[ll]|{\, \rho \,}  C\ell (q_N)\mbox{-\sffamily{\emph{Mod}}}_{\mbox{\scriptsize{\sffamily{\emph{gr}}}}}. 
}
\end{equation}
\end{theorem}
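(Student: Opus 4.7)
The plan is to obtain the derived equivalence $\mbox{\sffamily{D}}^\flat(R/I\mbox{-\sffamily{Mod}}) \simeq \mbox{\sffamily{D}}^\flat(U_\kk(\mathfrak{t})\mbox{-\sffamily{Mod}})$ as a direct specialization of point~1 of theorem~\ref{DBGG}. By the theorem immediately preceding theorem~\ref{Koszul1}, $U_\kk(\mathfrak{t})$ is the Koszul dual of the ring $R/I = \mathcal{O}_{Y_N}$, so $(R/I, U_\kk(\mathfrak{t}))$ forms a Koszul pair in the sense of the generalized BGG correspondence. Applying theorem~\ref{DBGG} then yields the equivalence, realized by the pair of derived tensor functors $\epsilon = -\otimes^{\mathbb{L}}_{R/I} U_\kk(\mathfrak{t})$ and $\rho = -\otimes^{\mathbb{L}}_{U_\kk(\mathfrak{t})} R/I$.

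For the abelian subcategory correspondence, I would work at the level of the Verdier quotients appearing in point~2 of theorem~\ref{DBGG}. On the geometric side, corollary~\ref{eqc1}, applied in the graded setting, provides an equivalence $\underline{\mbox{\sffamily{MCM}}}_{\mathrm{gr}}(R/I) \simeq \mbox{\sffamily{D}}^\flat(R/I\mbox{-\sffamily{Mod}})/\mbox{\sffamily{Perf}}(R/I)$, and point~2 of theorem~\ref{DBGG} transports this to $\mbox{\sffamily{D}}^\flat(U_\kk(\mathfrak{t})\mbox{-\sffamily{Mod}})/\mbox{\sffamily{Art}}(U_\kk(\mathfrak{t}))$. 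The task is therefore reduced to identifying this latter quotient with $C\ell(q_N)\mbox{-\sffamily{Mod}}_{\mathrm{gr}}$.

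For this final identification I would exploit the surjective $\mathbb{Z}/2$-graded algebra map $\mathrm{ev}_{H=1}\colon U_\kk(\mathfrak{t}) \twoheadrightarrow C\ell(q_N)$ constructed explicitly in the excerpt. Since $H$ is a central degree-$2$ element, a finitely generated graded $U_\kk(\mathfrak{t})$-module is Artinian precisely when $H$ acts locally nilpotently; hence passing to the Verdier quotient by $\mbox{\sffamily{Art}}$ is equivalent to inverting $H$. Using the grading to normalize $H$ to the identity, the pullback functor $M \mapsto M \otimes_\kk \kk[H]$ recalled in the paragraph above the theorem furnishes a quasi-inverse to $\mathrm{ev}_{H=1}^*$. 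Chaining this with the pre-established equivalences $C\ell_0(q_N)\mbox{-\sffamily{Mod}} \simeq C\ell(q_N)\mbox{-\sffamily{Mod}}^{\mathbb{Z}/2}$ and $\mbox{\sffamily{D}}^\flat(C\ell(q_N)\mbox{-\sffamily{Mod}}) \simeq C\ell(q_N)\mbox{-\sffamily{Mod}}_{\mathrm{gr}}$ yields the desired identification.

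The main obstacle lies precisely in this last step: rigorously checking that the Verdier quotient by $\mbox{\sffamily{Art}}(U_\kk(\mathfrak{t}))$ really coincides with the category of $H$-inverted graded modules, and that the resulting equivalence with $\mathbb{Z}$-graded Clifford modules intertwines with the BGG functors $\epsilon$ and $\rho$. One has to take care that the regularity hypothesis on $q_N$, the $\mathbb{Z}$- versus $\mathbb{Z}/2$-grading conventions, and the finiteness conditions all line up across the chain of equivalences. Once this bookkeeping is in place, the subcategory correspondence, and with it the second half of the theorem, follows.
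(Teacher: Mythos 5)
Your first part is exactly the paper's route: the derived equivalence is read off directly from point~1 of theorem~\ref{DBGG}, combined with the preceding theorem identifying $U_\kk(\mathfrak{t})$ as the Koszul dual of $R/I$. Nothing to add there.

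For the subcategory correspondence, you take a genuinely different path from the paper. You propose to work entirely on the $U_\kk(\mathfrak{t})$ side: pass to the Verdier quotient by Artinian objects, argue that this is the same as inverting the central degree-$2$ element $H$, and then descend along $\mathrm{ev}_{H=1}$ to Clifford modules. The paper instead works on the $R/I$ side and leaves the heavy lifting to appendix~\ref{AppendixC} (theorem~\ref{Cl_MCM}), where the identification $\underline{\mbox{\sffamily{MCM}}}_{\mathrm{gr}}(R/I) \simeq C\ell_0(q_N)\mbox{-\sffamily{mod}}$ is established concretely: given an MCM module one produces its $2$-periodic free resolution, reads off a linear matrix factorization $(\psi,\varphi)$ of $q_N$, and then checks by hand that the block map $f_v = (\psi_v,\varphi_v)$ satisfies the Clifford relation $f_v^2 = q_N(v)\cdot\mathrm{id}$; conversely the periodicity operator $H \in \mathrm{Ext}^2_R(\kk,\kk)$ on $\mathrm{Tor}^R_\bullet(\kk,M)$ gives the $\mbox{Ext}^\bullet_R(\kk,\kk)/\langle H-1\rangle \cong C\ell(q_N)$-module structure. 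That argument is completely explicit, following Eisenbud and Buchweitz, and in particular it never needs the claim that the Verdier quotient by Artinian objects is a ring-theoretic localization.

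The gap you flag is therefore the real content of the discrepancy, and it is nontrivial. Your assertion that a finitely generated graded $U_\kk(\mathfrak{t})$-module is Artinian iff $H$ acts locally nilpotently is correct (since $Q_i^2=H$ and the generators are finite, nilpotence of $H$ forces finite dimension, and conversely). But it does \emph{not} follow formally that the Verdier quotient $\mbox{\sffamily{D}}^\flat(U_\kk(\mathfrak{t})\mbox{-\sffamily{Mod}})/\mbox{\sffamily{Art}}$ is equivalent to $\mbox{\sffamily{D}}^\flat$ of graded $U_\kk(\mathfrak{t})[H^{-1}]$-modules: the standard localization-equals-Verdier-quotient statements live at the level of unbounded derived categories (or require a compact-generation argument to restrict to bounded complexes), and you also need that extension and restriction of scalars along $U_\kk(\mathfrak{t}) \to U_\kk(\mathfrak{t})[H^{-1}]$ induce inverse equivalences after quotienting. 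There is also a grading bookkeeping point: a $\ZZ$-graded module over $U_\kk(\mathfrak{t})[H^{-1}]$, once $H$ is normalized to $1$, carries only a residual $\ZZ/2$-grading, and recovering the $\ZZ$-graded Clifford category of the theorem's statement requires the semisimplicity of $C\ell(q_N)$ exactly as the paper invokes in the paragraph preceding the theorem. None of this is wrong, but it is precisely where a complete proof would have to do real work, and the paper's matrix-factorization route in appendix~\ref{AppendixC} is a way to avoid having to formalize the localization step.
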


Notice that, implicitly, the above result says that perfect objects $\mbox{\sffamily{{Perf}}}
(R/I)\subset \mbox{\sffamily{{D}}}^{\flat}(R/I\mbox{-\sffamily{{{Mod}}}})$ are mapped to Artinian
objects $\mbox{\sffamily{{Art}}} (U_\kk (\mathfrak{t})) \subset \mbox{\sffamily{{D}}}^{\flat}(U_\kk
(\mathfrak{t})\mbox{-\sffamily{{{Mod}}}})$ and vice versa. Due to its relevance for the present
paper, the interested reader can find in appendix \ref{AppendixC} a more detailed -- and
down-to-earth -- discussion of the correspondence between Clifford modules and maximal Cohen--Macaulay
modules on quadrics via matrix factorization, following \cite{Buchweitz06}\footnote{A nice reference for
Clifford modules in relation to supersymmetry is \cite{Deligne}.}.

In the next sections, we will spell out the significance of these equivalences (and especially of
theorem \ref{Koszul1}) in the context of Adinkras---see in particular sections
\ref{sec:valise} and \ref{sec:flipping} and related sub-sections. As a concluding
remark to this section, we observe that the BGG correspondence in the form of theorem \ref{Koszul1}
offers (in some sense) a first concrete and geometric realization of the derived equivalence
proved in the recent \cite{ElliottDerived}, see also theorem \ref{derivedPS} in the appendix to this
paper. In this sense, the relatively easy setting of supersymmetric quantum mechanics allows for a
friendly formulation of the (derived) pure spinor superfield formalism in terms of standard
algebro-geometric notions, such as the universal enveloping algebra. Finally, once again, the
relevance of a regularity condition, such as Cohen--Macaulayness for modules, manifests itself in a
very natural fashion. 

\begin{remark}[On Koszul Duality and Supersymmetry]
In general, Koszul duality provides an equivalence between the derived categories of modules over
$\mbox{\sffamily{D}}^\flat (A\mbox{-\sffamily{Mod}})$ and $\mbox{\sffamily{D}}^\flat
(A^!\mbox{-\sffamily{Mod}})$.  One might be afraid that this simply exchanges one difficult problem
for another.  The universal enveloping algebra $U_{\mathbf{k}} (\mathfrak{t})$ of the
supertranslation algebra is an associative but non-commutative algebra, while its Koszul dual---the
ring of functions over the nilpotence variety is commutative.  This offers some hope.  Similar to
the Borel--Weil--Bott theorem, the powerful geometric technique of Kempf--Lascoux--Weyman produces
syzygies from desingularizations of vector bundles over projective spaces \cite{Weyman03}.
Ultimately, it appeals to Grothendieck duality \footnote{For a modern introduction, see
\cite{Neeman20}.}.  
 
A beautiful introduction to this set of ideas is in \cite{Sam16}, which goes on to produce
representations of the orthosymplectic group using modules over the corresponding nilpotence
variety in the complete intersection case.  This is the case for the $d=6$ $(\mathcal{N}, 0)$ supertranslation algebras for $\mathcal{N} \ge 3.$  A suitable generalization is expected to apply to the $d=6$ $(2, 0)$ case.  This can be seen as a far-reaching generalization
of the classical Buchsbaum--Rim complex used to produce the BV complex for the $d=6$ (1,0)
hypermultiplet in \cite{Eager22} and \cite{6Dmultiplets}.

\end{remark}

\section{Constructions and Examples}
\label{props}

In this section, we will describe various constructions of Adinkras using the tools developed in section \ref{Adinkras} and the results of section \ref{mcm}. \\
First, we define some basic invariants. As in
construction \ref{constr}, given an Adinkra $\pzA$, we recall that the associated complex reads $C^\bullet (\mathpzc{A}) = \bigoplus_{i \geq 0} C^{-i} (\mathpzc{A})$, where $C^{-i} (\mathpzc{A}) = R^{n_i}$
for $ n_i \defeq \vert \{ v \in \mathpzc{V} (\mathpzc{A}) : \mathpzc{h} (v) = i \} \vert$, i.e.\ $n_i$ is the rank of the
free module $C^{-i} (\pzA).$ On the other hand, the integers $n_i$ correspond to the number of
component fields of a multiplet of engineering dimension $i$, for this reason we introduce the following notation.
\begin{definition}[Rank Sequence \& Length] Let $\mathpzc{A}$ be an Adinkra together with its complex $C^\bullet (\mathpzc{A})$
and let $\mathcal{V}_\mathpzc{A}$ be the multiplet related to $\mathpzc{A}$.
We will call the sequence $(n_\ell, \ldots, n_i, \ldots, n_\ell+k) \in \mathbb{N}^{k}$, where $n_i = \mbox{rank}\, C^{-i} (\pzA)$,
the \emph{rank sequence} of the multiplet $\mathcal{V}_\mathpzc{A}$. Moreover, we call the number $k \in \mathbb{N}$ the
\emph{length} of the multiplet $\mathcal{V}_\pzA$.
\end{definition}
Notice that the length of $\mathcal{V}_\mathpzc{A}$ equals the difference between the maximum and minimum
of the degree function $\mathpzc{h} : \mathpzc{V} \rightarrow \mathbb{Z}$ of the Adinkra $\mathpzc{A}$. Also, in the following we will often identify a multiplet with its
rank sequence $\mathcal{V}_\pzA \equiv (n_\ell, \ldots, n_{\ell + k})$, even if this is somewhat improper, as we shall see shortly. \\
In section \ref{sec:valise} we study one of the most fundamental Adinkra -- the
valise Adinkra -- in light of the theory of matrix factorizations and maximal Cohen--Macaulay
modules developed in section \ref{mcm}. As we shall see, these Adinkra correspond to
characteristic bundles on the quadric defined by $q_N$, the spinor bundles, see lemma
\ref{irreducibleVal}. Further, in subsection \ref{BottP} we discuss Bott periodicity for (real) Clifford algebras in relation to Kn\"orrer
periodicity for matrix factorizations, and hence Adinkras in view of the results of section \ref{sec:valise}.

In section \ref{sec:flipping}, we study the operation of ``vertex raising'' on Adinkras. This
produces a new Adinkra with the same dashed chromotopology, but different heights -- we will
interpret this from the point of view of homological algebra, proving lemma
\ref{flipping_theorem}, and making contact with theorem \ref{Koszul1}.

In section \ref{sec:counterexample} we show that the rank sequence of a multiplet is not a complete
invariant, \emph{i.e.}\ there exist two distinct multiplets with the same rank sequence. On the
other hand, we show that the cohomology of the associated complexes is capable of distinguishing
between Adinkras with the same rank sequence, giving new invariants associated with multiplets.

Section \ref{1_7_7_1} offers a detailed study of the peculiar geometry arising from the $(1,7,7,1)$
multiplet in $N=7$.  We will show that the complex associated with this Adinkra admits an
embedding inside the Koszul complex: this is a general feature of a certain class of multiplets,
which admits embedding inside the free superfield, which we discuss in section \ref{sec:embeddings}.

Finally, in section \ref{ExtAdi} we provide a description of the non-Adinkras graphs considered in \cite{Doran13, Hubsch13} as suitable
extensions classes of Adinkras.

We remark that throughout this section of the paper we have tried to always make theory go together with concrete constructions
in the hope to make this paper more comprehensible and readable. Abstract results and theorems never come alone, but they are always accompanied
by explicit examples, illustrating their relevance.

\subsection{Valise Adinkras and maximal Cohen--Macaulay modules}
\label{sec:valise}

The simplest Adinkras $\mathpzc{A}$ one can build are concentrated on just two levels or degrees. Working up to an overall shift, this means that
the vertices can only have height 0 and 1. Following \cite{Doran07} we give the following definition.
\begin{definition}[Valise Adinkra] An Adinkra is called a valise if its vertices are concentrated on two degrees only, i.e.\
\begin{equation}
\max_{v,w \in \mathpzc{V}} | \mathpzc{h} (v) - \mathpzc{h} (w) | = 1.
\end{equation}
We will say that the complex $C^\bullet (\mathpzc{A})$ associated to $\pzA$ is a valise complex if $\pzA$ is a valise Adinkra.
 \end{definition}
In the following, we will assume without loss of generality that the valise Adinkra $\mathpzc{A}$ is concentrated in degree 0 and 1: this means that the associated complex reads
\be \xymatrix{
C^\bullet (\pzA) \equiv \big ( 0 \ar[r] & C^{-1} (\pzA)  \ar[r]^d & C^0 (\pzA) \ar[r] & 0 ).
}
\ee
Notice that it follows from the definition of Adinkra that the number of vertices of degree 0 is the same as the number of vertices of degree 1. This implies immediately that
$n_1 = \mbox{rank} \, C^{-1}(\mathpzc{A}) = \mbox{rank} \, C^0 (\mathpzc{A}) = n_0$.\\
The theory previously developed in section \ref{CplAdMain} and section \ref{mcm} allows us to give the following characterization.
\begin{lemma}[Valise Adinkras and Matrix Factorizations] Let $\pzA$ be a valise $N$-Adinkra with associated valise complex $C^\bullet (\pzA).$ The pair $(M, f)$ with $M \defeq C^0(\pzA) \oplus C^{-1}(\pzA)$ and
\be
\label{MFvalise} f
\defeq \left ( \begin{array}{c|c} 0 & d \\
    \hline
    d^\dagger & 0
    \end{array}
    \right ),
\ee
defines a linear matrix factorization of the quadratic form $q_N$. If $Y_N$ is the nilpotence variety defined by the ideal $I = \langle q_N \rangle $, we have the following correspondences
\be \nonumber
\left \{
\begin{minipage}{2cm}
\begin{center} \emph{Valise} \\
\emph{Adinkras}
\end{center}
\end{minipage}
\right \} \leftrightsquigarrow \left \{ \begin{minipage}{4.8cm}
\begin{center} \emph{Linear Graded \emph{MCM}} \\
\emph{Modules on }$Y_N$
\end{center}
\end{minipage}
\right \}
 \leftrightsquigarrow \left \{ \begin{minipage}{3cm}
\begin{center} $\ZZ/2$-\emph{graded} \\
 $C\ell(q_N)$-\emph{modules}
 \end{center}
\end{minipage}
 \right \}.
 \ee
\end{lemma}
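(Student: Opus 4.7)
The verification that $(M,f)$ is a matrix factorization of $q_N$ is the main computational step, and it reduces directly to Lemma \ref{LaplacianLemma}. Since $\pzA$ is valise, the complex $C^\bullet(\pzA)$ is concentrated in degrees $-1$ and $0$; consequently $d: C^{-1}(\pzA)\to C^0(\pzA)$ and $d^\dagger: C^0(\pzA)\to C^{-1}(\pzA)$ are the only nontrivial pieces of the differential and its adjoint. Squaring the off-diagonal block endomorphism $f$ gives the block-diagonal operator with $d\circ d^\dagger$ on the $C^0(\pzA)$ summand and $d^\dagger\circ d$ on the $C^{-1}(\pzA)$ summand. Because $C^{-2}(\pzA)=0=C^{1}(\pzA)$, these two operators coincide with the restrictions of the Laplacian $\Delta = d\circ d^\dagger + d^\dagger\circ d$ to the respective summands. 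By Lemma \ref{LaplacianLemma}, $\Delta = q_N\cdot\mbox{id}$, so $f\circ f = q_N\cdot\mbox{id}_M$. Linearity of the factorization is automatic from Construction \ref{constr}: the entries of $d$, hence of its transpose $d^\dagger$, are by definition homogeneous of degree one in the $\lambda_i$.

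For the first correspondence --- valise Adinkras to linear graded MCM modules on $Y_N$ --- I would invoke Eisenbud's theorem (Theorem \ref{Eisenbud}) together with the cokernel assignment \eqref{coker}. The linear graded matrix factorization $(d,d^\dagger)$ of $q_N$ just constructed produces, via $M_{(d,d^\dagger)}\defeq \coker(d)$, a graded MCM $R/I$-module whose 2-step periodic minimal resolution is precisely the linear complex $C^\bullet(\pzA)$ read periodically; conversely, any graded MCM module without free summands arises this way by periodicity of its 2-step resolution, and linearity on the matrix factorization side corresponds exactly to linearity of the associated resolution. For the second correspondence --- linear graded MCM modules to $\mathbb{Z}/2$-graded $C\ell(q_N)$-modules --- I would appeal to Theorem \ref{Koszul1}, the Koszul-duality equivalence for the pair $(R/I, U_\kk(\mathfrak{t}))$, which restricts to an equivalence between $\underline{\mbox{\sffamily{MCM}}}_{\mbox{\scriptsize{\sffamily{gr}}}}(R/I)$ and $C\ell(q_N)\mbox{-\sffamily{Mod}}_{\mbox{\scriptsize{\sffamily{gr}}}}$. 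Under this equivalence, linear MCM modules sit at the two extremal $\mathbb{Z}/2$-degrees of the corresponding Clifford module $M_0\oplus M_1$, and the odd endomorphism $f$ is precisely the Clifford action of the universal supercharge after evaluating at $H=1$, as spelled out in the discussion after \eqref{UEA}.

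The main obstacle I expect is not computational but bookkeeping: checking that the dashing function $\pzp$ and the bipartition of the Adinkra translate into signs of the entries of $d$ and $d^\dagger$ which are simultaneously consistent with (i) the axioms in Definition \ref{N_Adinkra}, (ii) the transpose relation $d^\dagger$ used in Lemma \ref{LaplacianLemma}, and (iii) the Clifford relations under the identification with $C\ell(q_N)$-modules. Once the Laplacian identity of Lemma \ref{LaplacianLemma} has been invoked to establish $f^2 = q_N\cdot\mbox{id}$, the remainder of the argument is the orchestration of the Eisenbud--Buchweitz correspondence (Corollary \ref{eqc1}) and Theorem \ref{Koszul1}, applied in the linear graded regime.
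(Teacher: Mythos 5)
Your proof is correct and follows essentially the same path as the paper's: invoke Lemma \ref{LaplacianLemma} to get $f^2 = q_N\cdot\mathrm{id}$ on the two-term valise complex, then pass to a linear graded MCM module via Theorem \ref{Eisenbud} and the cokernel construction, and finally to a $\ZZ/2$-graded Clifford module via the correspondence of Theorem \ref{Koszul1} (the paper cites Theorem \ref{Cl_MCM}, of which Theorem \ref{Koszul1} is the categorified form). Your discussion is more explicit than the paper's short proof about why linearity of the entries and periodicity of the resolution line up, but no new ideas are introduced and nothing is missing.
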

\begin{proof} Thanks to lemma \ref{LaplacianLemma}, we have that $d\circ d^\dagger$ acts on $C^0(\mathpzc{A})$ as multiplication by $q_N$ and analogously
$d^\dagger \circ d$ acts on $C^{-1} (\pzA)$ as multiplication by $q_N$. This shows that the pair of $(C^0 (\pzA) \oplus C^{-1} (\pzA), f)$ as defined above
yields a linear matrix factorization of $q_N$. It follows from theorem \ref{Eisenbud} that the complex $C^\bullet (\mathpzc{A})$ is quasi-isomorphic to $\coker (d)$, a (linear
graded) maximal Cohen--Macaulay module. In turn, maximal Cohen--Macaulay modules correspond to $\mathbb{Z}_2$-graded $C\ell (q_N)$-modules by theorem
\ref{Cl_MCM} -- or as a particular case of theorem \ref{Koszul1}.
\end{proof}
The relation with $\mathbb{Z}/2$-graded $C\ell(q_N)$-modules allows us to attach to valise Adinkras a notion of irreducibility. In fact, recalling that the category of $\mathbb{Z}/2$-graded $C\ell (q_N)$-modules is equivalent to the category of $C\ell_0 (q_N)$-modules via the projection functor $M_0 \oplus M_1 \mapsto M_0$ (whose inverse functor maps $M_0 \longmapsto C\ell  (q) \otimes_{C\ell_{0} (q)} M_0 \cong M_0 \oplus ( C\ell_{{1}} (q) \otimes_{C\ell_{{0}} (q)} M_0)$), we can give the following definition.
\begin{definition}[Irreducible Valise] We say that a valise $N$-Adinkra is irreducible if it corresponds an irreducible $C\ell_0 (q_N)$-module.
\end{definition}
The number of bosons or fermions per degree as a function of $N$ is then given by the (real) dimension $d_\mathbb{R} $ of irreducible representations of the real Clifford algebra $C\ell_{0} (q_N)$, as in the following table ($\nu$ is the number of inequivalent irreducible representations).
\begin{table}[h!]
\label{CLmodule}
\centering
\begin{tabular}{cccc}
\toprule
$N$ \mbox{mod}\, 8 & $C\ell_0 (q_N)$ & $\nu $ &  $d_\mathbb{R}$ \\
\midrule
0  &  $\mathbb{R}(2^{\frac{N-2}{2}}) \oplus \mathbb{R}(2^{\frac{N-2}{2}}) $ & 2 & $2^{\frac{N-2}{2}}$ \\
1,7  &  $\mathbb{R}(2^{\frac{N-1}{2}})$ & 1 & $2^{\frac{N-1}{2}}$ \\
2,6 & $\mathbb{C} (2^{\frac{N-2}{2}})$ & 1 & $2^{\frac{N}{2}}$ \\
3,5 & $\mathbb{H}(2^{\frac{N-3}{2}})$ & 1 & $2^{\frac{N+1}{2}}$ \\
4 & $\mathbb{H}(2^{\frac{N-4}{2}}) \oplus \mathbb{H}(2^{\frac{N-4}{2}}) $ & 2 &$ 2^{\frac{N}{2}} $ \\
\bottomrule
\end{tabular}
\caption{Irreducible $C\ell_0$-modules and their dimensions}
\end{table}
It follows that an irreducible valise Adinkra corresponds to an essentially unique $C\ell_0 (q_N)$-module (or analogously, its related $\mathbb{Z} /2$-graded $C\ell (q_N)$-module) or, on the $R/I$-side of the correspondence, to an essentially unique maximal Cohen--Macaulay module.
\begin{lemma} \label{irreducibleVal} Let $\pzA$ be an irreducible valise $N$-Adinkra with associated valise complex $C^\bullet (\pzA).$ Then we have the following correspondence
\be \nonumber
\left \{
\begin{minipage}{3.2cm}
\begin{center} \emph{Irreducible Valise} \\
\emph{Adinkras}
\end{center}
\end{minipage}
\right \} \leftrightsquigarrow \left \{ \begin{minipage}{3cm}
\begin{center} \emph{Spinor Bundles} \\
\emph{on }$Y_N$
\end{center}
\end{minipage}
\right \}
 \leftrightsquigarrow \left \{ \begin{minipage}{4.2cm}
\begin{center} \emph{Irreducible }$\ZZ/2$-\emph{graded} \\
 $C\ell(q_N)$-\emph{modules}
 \end{center}
\end{minipage}
 \right \}.
 \ee
\end{lemma}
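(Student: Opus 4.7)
The plan is to leverage the previous lemma, which already produced the three-way correspondence between valise $N$-Adinkras, linear graded MCM modules on $Y_N$, and $\mathbb{Z}/2$-graded $C\ell(q_N)$-modules, and then restrict each corner of the triangle to its irreducible subcategory. The first step is to recall that, by Definition of an irreducible valise, one works with an irreducible $C\ell_0(q_N)$-module, which by the equivalence with $\mathbb{Z}/2$-graded $C\ell(q_N)$-modules (via $M_0 \oplus M_1 \mapsto M_0$, with inverse $M_0 \mapsto C\ell(q_N)\otimes_{C\ell_0(q_N)} M_0$) corresponds to an irreducible $\mathbb{Z}/2$-graded Clifford module. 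Following Table \ref{CLmodule}, this pins down a unique module (up to isomorphism) when $N \equiv 1,2,3,5,6,7 \pmod 8$ and one of two modules when $N \equiv 0,4 \pmod 8$.

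The second step is to transport irreducibility through the Koszul duality of Theorem \ref{Koszul1}. Under the equivalence $\underline{\mbox{\sffamily MCM}}_{\mbox{\scriptsize\sffamily gr}}(R/I) \cong C\ell(q_N)\mbox{-\sffamily Mod}_{\mbox{\scriptsize\sffamily gr}}$, an irreducible Clifford module on the right corresponds to an indecomposable graded MCM module on the left. Because the valise complex $C^\bullet(\pzA)$ is linear by construction (all differentials are linear in the $\lambda_i$), the associated MCM module has a $2$-periodic \emph{linear} resolution, i.e.\ arises from a linear matrix factorization of $q_N$.

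The third step is to identify these linear indecomposable MCM modules on $Y_N$ geometrically. This is precisely where Kapranov's semi-orthogonal decomposition (Theorem \ref{KapranovQuadrics}) intervenes: as noted in the discussion following that theorem, the spinor bundles $\mathcal{S}$ (for $N-1$ even, i.e.\ $N$ odd) and $\mathcal{S}_+,\mathcal{S}_-$ (for $N-1$ odd, i.e.\ $N$ even) are the \emph{only} elements of the decomposition with linear free resolution. Passing from the graded MCM category to ACM sheaves on $Y_N$ via the functor $\Gamma_* = H^0_*$, the irreducible candidates on the geometric side are exactly the spinor bundles. Matching cases: when $N$ is odd there is a unique $\mathcal{S}$, in agreement with the unique irreducible $C\ell_0(q_N)$-module of Table \ref{CLmodule} ($N\equiv 1,3,5,7\pmod 8$); when $N$ is even there are $\mathcal{S}_\pm$, in agreement with the two inequivalent irreducibles for $N\equiv 0,4 \pmod 8$, while for $N\equiv 2,6 \pmod 8$ the two spinor bundles are identified under the complexification/reality structure and collapse to a single real irreducible, matching the $\mathbb{C}$-type row of the table.

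The main obstacle I anticipate is reconciling these parity-of-$N$ cases: the Kapranov decomposition is stated over $\kk=\mathbb{C}$ where the behavior is governed by $\dim Y_N = N-2$, whereas the Clifford-module count in Table \ref{CLmodule} is for $\kk=\mathbb{R}$ and follows real Bott periodicity modulo $8$. The bridge is standard---one uses that the real structure on $\mathcal{S}_+ \oplus \mathcal{S}_-$ (respectively the real/quaternionic structure on $\mathcal{S}$) dictates when the two complex spinor bundles remain distinct over $\RR$ and when they are exchanged by complex conjugation and therefore descend to a single real irreducible---but verifying this case-by-case requires a careful bookkeeping that I would defer to the detailed discussion of Clifford modules and matrix factorizations in Appendix \ref{AppendixC}. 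Once this matching is in place, the three arrows of the claimed correspondence follow by simply restricting the equivalences of the previous lemma to their irreducible subcategories.
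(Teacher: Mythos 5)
Your proof is correct and follows essentially the same route as the paper's: restrict the previous lemma's three-way correspondence to irreducibles, identify the spinor bundles as the generators of the graded matrix-factorization category via Kapranov's semi-orthogonal decomposition \eqref{Orlov}, and conclude by complete reducibility (semisimplicity) of Clifford algebras. You do go further than the paper's terse two-sentence proof by flagging the real-versus-complex bookkeeping (the complex count of spinor bundles is governed by the parity of $\dim Y_N$, while the real count of irreducible $\cliff_0(q_N)$-modules in Table~1 follows Bott periodicity mod $8$, and these genuinely disagree for $N\equiv 2,6 \pmod 8$); this is a legitimate subtlety the paper glosses over, but note that deferring it to Appendix~\ref{AppendixC} does not actually close the gap, since Theorem~\ref{Cl_MCM} there is ground-field agnostic and never makes the real/complex comparison explicit.
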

Moreover, every valise Adinkra corresponds to a direct sum of irreducible $\mathbb{Z}/2$-graded $C\ell (q_N)$-modules or, analogously, a direct sum of spinor bundles on the nilpotence variety $Y_N$.
\begin{proof} It follows from \eqref{Orlov} that the category of (graded) matrix factorizations on quadric hypersurfaces is generated by the spinor bundles, which is indeed a maximal Cohen--Macaulay module. The last statement follows from the fact that matrix algebras, such as Clifford algebras, are simple and, as such, they are completely reducible.
\end{proof}
Notice that given a valise $N$-Adinkra, the above result constrains the ranks of $C^0(\mathpzc{A})$ and $C^{-1} (\mathpzc{A})$ to be powers of 2, as they need to be multiple of some of the dimensions $d_\mathbb{R}$ appearing in table \ref{CLmodule}. \\
The above discussion gives an encompassing theoretical framework for valise Adinkras, via maximal Cohen--Macaulay
modules such as the spinor bundles, and Clifford modules.
As a first concrete example, consider the valise Adinkra for $N = 4$ in figure \ref{adk_n4_4_4}.
\begin{figure}[h!]
    \centering
    \includegraphics[height=5cm]{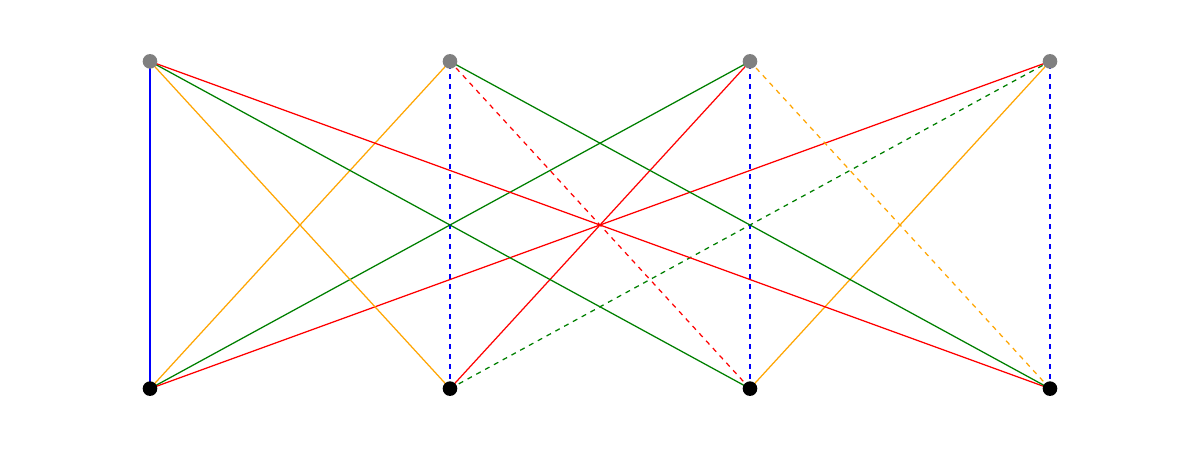}
    \caption{A Valise Adinkra for $N = 4$.}
    \label{adk_n4_4_4}
\end{figure}

The associated complex $C^\bullet (\mathpzc{A})$ is given by
\be \label{irrval}
\begin{tikzcd}
C^\bullet (\mathpzc{A}) \equiv \big (  0 \arrow[r] & C^{-1} (\pzA) = R^4 \arrow[r,"d"] & C^{0} (\pzA) = R^4 \arrow[r] & 0 \big ).
\end{tikzcd}
\ee
where
\[
d = \bmat
\textcolor{blue}{\lm_1} & \textcolor{orange}{\lm_2} & \textcolor{darkgreen}{\lm_3} & \textcolor{red}{\lm_4} \\
\textcolor{orange}{\lm_2} & \textcolor{blue}{-\lm_1} & \textcolor{red}{\lm_4} & \textcolor{darkgreen}{-\lm_3} \\
\textcolor{darkgreen}{\lm_3} & \textcolor{red}{-\lm_4} & \textcolor{blue}{-\lm_1} & \textcolor{orange}{\lm_2} \\
\textcolor{red}{\lm_4} & \textcolor{darkgreen}{\lm_3} & \textcolor{orange}{-\lm_2} & \textcolor{blue}{-\lm_1} \\
\emat
\] \label{matrixD}
\be
d^\intercal d = d d^\intercal = q_4 \cdot \operatorname{id}_4,
\ee
and where we have written $d^\intercal$ to emphasize that the adjoint differential $d^\dagger$ is
given by the transposed matrix represented in the basis defined by the Adinkra. \\
The sum of the dimensions of the free modules appearing in the complex $C^\bullet (\mathpzc{A})$ (i.e.\ the \emph{total} dimension of the complex)
is $4+4=8$ (this is twice the dimension $d_\mathbb{R}$
of the irreducible $C\ell_0 (q_4)$-module in table \ref{CLmodule}), which is the same as the (total) dimension of one of the two irreducible
$\ZZ / 2$-graded \emph{real} Clifford modules for $N = 4$, of (super)dimension $4|4$ --
and indeed it accounts for 4 bosons and 4 fermions. This complex is in fact the
minimal linear $R$-resolution of the corresponding maximal Cohen--Macaulay module $M = \coker (d)$, which in turn corresponds
to one of the spinor bundles on $Y_4$, i.e.\
\be
\begin{tikzcd}
0 \arrow[r] & R^4 \arrow[r, "d"] & R^4 \arrow[r] & M = \coker (d).
\end{tikzcd}
\ee

\subsubsection{Topologies on Valise Adinkras}
In general, we can have different topologies of graphs underlying valise Adinkras for fixed $N$. To
see an example, consider again the relevant case $N = 4$. It is easy to see that
the fully connected graph in figure
\ref{n4_koszul_folded} is an $N=4$ valise Adinkra.
\begin{figure}[ht!]
    \centering
    \includegraphics[height=5cm]{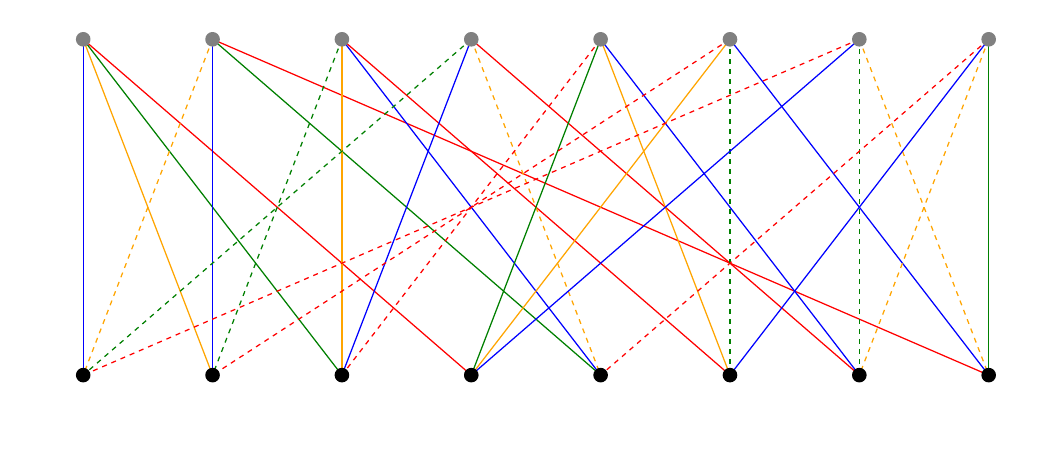}
    \caption{Valise Adinkra associated to the Koszul Adinkra $\pzA_K$ for $N = 4$.}
    \label{n4_koszul_folded}
\end{figure}
It gives rise to the associated complex
\begin{center}
    \begin{tikzcd}
   C^\bullet (\mathpzc{A}) \equiv \big (  0 \arrow[r] &  R^8 \arrow[r,"d"] & R^8 \arrow[r] & 0 \big ),
    \end{tikzcd}
\end{center}
with $d$ given by the matrix
\begin{equation}
    {d} =  \bmat
\textcolor{blue}{\lm_1} & \textcolor{orange}{\lm_2} & \textcolor{darkgreen}{\lm_3} & \textcolor{red}{\lm_4} & \textcolor{black}{0} & \textcolor{black}{0} & \textcolor{black}{0} & \textcolor{black}{0} & \\
\textcolor{orange}{-\lm_2} & \textcolor{blue}{\lm_1} & \textcolor{black}{0} & \textcolor{black}{0} & \textcolor{darkgreen}{\lm_3} & \textcolor{black}{0} & \textcolor{black}{0} & \textcolor{red}{\lm_4} & \\
\textcolor{black}{0} & \textcolor{darkgreen}{-\lm_3} & \textcolor{orange}{\lm_2} & \textcolor{black}{0} & \textcolor{blue}{\lm_1} & \textcolor{red}{\lm_4} & \textcolor{black}{0} & \textcolor{black}{0} & \\
\textcolor{darkgreen}{-\lm_3} & \textcolor{black}{0} & \textcolor{blue}{\lm_1} & \textcolor{black}{0} & \textcolor{orange}{-\lm_2} & \textcolor{black}{0} & \textcolor{red}{\lm_4} & \textcolor{black}{0} & \\
\textcolor{black}{0} & \textcolor{black}{0} & \textcolor{red}{-\lm_4} & \textcolor{darkgreen}{\lm_3} & \textcolor{black}{0} & \textcolor{orange}{\lm_2} & \textcolor{blue}{\lm_1} & \textcolor{black}{0} & \\
\textcolor{black}{0} & \textcolor{red}{-\lm_4} & \textcolor{black}{0} & \textcolor{orange}{\lm_2} & \textcolor{black}{0} & \textcolor{darkgreen}{-\lm_3} & \textcolor{black}{0} & \textcolor{blue}{\lm_1} & \\
\textcolor{red}{-\lm_4} & \textcolor{black}{0} & \textcolor{black}{0} & \textcolor{blue}{\lm_1} & \textcolor{black}{0} & \textcolor{black}{0} & \textcolor{darkgreen}{-\lm_3} & \textcolor{orange}{-\lm_2} & \\
\textcolor{black}{0} & \textcolor{black}{0} & \textcolor{black}{0} & \textcolor{black}{0} & \textcolor{red}{-\lm_4} & \textcolor{blue}{\lm_1} & \textcolor{orange}{-\lm_2} & \textcolor{darkgreen}{\lm_3} & \\
\emat .
\end{equation}
Computing $d d^\intercal + d^\intercal d$ one easily checks that this indeed defines a matrix
factorization of the quadratic form $q_4$. As seen above, the irreducible $\ZZ / 2$-graded Clifford module for $N =
4$ has dimension $4|4$, hence we know that $C^\bullet(\pzA)$ should decompose as
\be
C^\bullet (\mathpzc{A}) = C_{\mathpzc{irr},+}^\bullet (\pzA)\oplus
C_{\mathpzc{irr}, -}^\bullet (\pzA)
\ee
 where $C^\bullet_{\mathpzc{irr}, \pm} (\pzA)$ are the inequivalent complexes
 associated to the irreducible $N=4$ valise Adinkras, just like the one of
 equation \eqref{irrval}. Indeed, by a suitable change of basis to $d \in \mbox{Hom}
 (R^4, R^4)$, one finds
\begin{equation}
    C \cdot {d} \cdot B^{-1} =
\bmat
\textcolor{blue}{\lm_1} & \textcolor{orange}{\lm_2} & \textcolor{darkgreen}{\lm_3} & \textcolor{red}{\lm_4} & \textcolor{black}{0} & \textcolor{black}{0} & \textcolor{black}{0} & \textcolor{black}{0} \\
\textcolor{orange}{\lm_2} & \textcolor{blue}{-\lm_1} & \textcolor{red}{\lm_4} & \textcolor{darkgreen}{-\lm_3} & \textcolor{black}{0} & \textcolor{black}{0} & \textcolor{black}{0} & \textcolor{black}{0} \\
\textcolor{darkgreen}{\lm_3} & \textcolor{red}{-\lm_4} & \textcolor{blue}{-\lm_1} & \textcolor{orange}{\lm_2} & \textcolor{black}{0} & \textcolor{black}{0} & \textcolor{black}{0} & \textcolor{black}{0} \\
\textcolor{red}{\lm_4} & \textcolor{darkgreen}{\lm_3} & \textcolor{orange}{-\lm_2} & \textcolor{blue}{-\lm_1} & \textcolor{black}{0} & \textcolor{black}{0} & \textcolor{black}{0} & \textcolor{black}{0} \\
\textcolor{black}{0} & \textcolor{black}{0} & \textcolor{black}{0} & \textcolor{black}{0} & \textcolor{blue}{\lm_1} & \textcolor{orange}{\lm_2} & \textcolor{darkgreen}{\lm_3} & \textcolor{red}{\lm_4} \\
\textcolor{black}{0} & \textcolor{black}{0} & \textcolor{black}{0} & \textcolor{black}{0} & \textcolor{orange}{\lm_2} & \textcolor{blue}{-\lm_1} & \textcolor{red}{-\lm_4} & \textcolor{darkgreen}{\lm_3} \\
\textcolor{black}{0} & \textcolor{black}{0} & \textcolor{black}{0} & \textcolor{black}{0} & \textcolor{darkgreen}{\lm_3} & \textcolor{red}{\lm_4} & \textcolor{blue}{-\lm_1} & \textcolor{orange}{-\lm_2} \\
\textcolor{black}{0} & \textcolor{black}{0} & \textcolor{black}{0} & \textcolor{black}{0} & \textcolor{red}{\lm_4} & \textcolor{darkgreen}{-\lm_3} & \textcolor{orange}{\lm_2} & \textcolor{blue}{-\lm_1} \\
\emat.\end{equation}
which makes it apparent that the complex $C^\bullet (\pzA)$ is indeed the direct sum of two
complexes, in agreement with lemma \ref{irreducibleVal}. We can produce this complex directly from the two Adinkras shown in figure \ref{n4_disjoint}.
\begin{figure}[ht!]
    \centering
    \includegraphics[height=5cm]{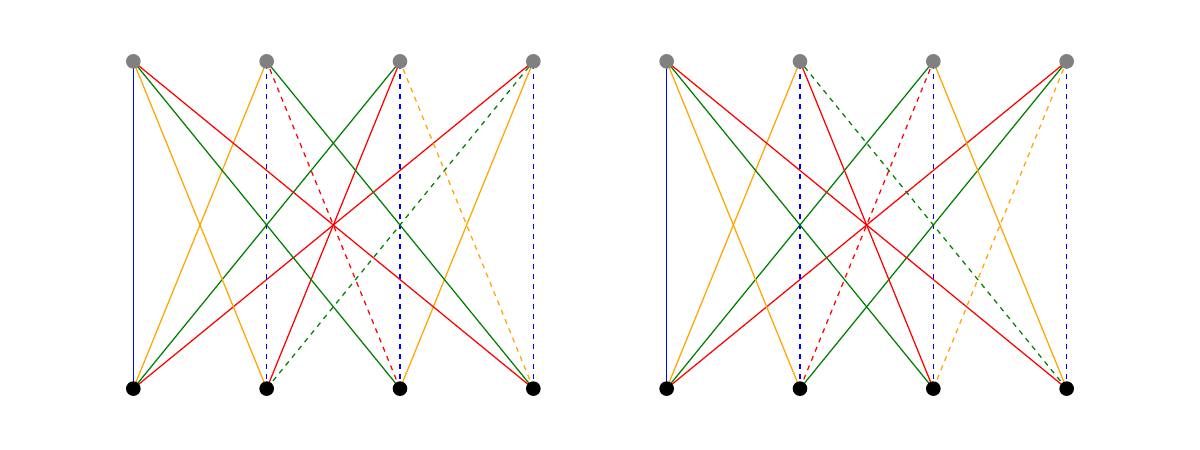}
    \caption{Disconnected valise Adinkras for $N = 4$.}
    \label{n4_disjoint}
\end{figure}
Physically, dimensional reduction of the $d=2$ $\mathcal{N} = (2,2)$ chiral and twisted chiral multiplets
\cite{Gates84b, Gates84} results in two distinct $(2,4,2)$ multiplets, which can be distinguished by
their two distinct (irreducible) valise Adinkras shown in
Figure~\ref{n4_disjoint}. One way to see that the two component Adinkras are
inequivalent is to consider the action of $\gamma_5$ on bosonic vertices. By
definition $\gamma_5$ is the product of generators $Q_4 Q_3 Q_2 Q_1$. The action
thus visually corresponds to going along the edges of color blue, yellow, green,
red in this order.  This interpretation of the action of $\gamma_5$ is whimsically described  as following the {\it rainbow} in \cite{Doran:2013cwa}.  Remarkably, for both Adinkras in Figure~\ref{n4_disjoint}, these
$4$-colored cycles always close (which is from for instance for the Adinkras
shown in Figure~\ref{n4_koszul_folded}). This means all vertices correspond to
eigenvectors of $\gamma_5$. On the left all bosonic vertices have eigenvalue $+1$ while
on the left all bosonic vertices have eigenvalue $-1$.  The homology of the two complexes
$C_{\mathpzc{irr}, \pm}^\bullet (\pzA)$ are the two modules $\mathbb{H}$ listed in Table~\ref{CLmodule}.  Since the quaternions are noncommutative, $\mathbb{H}$ can be viewed both as a left and a right $\mathbb{R}$ module.  Indeed, the two Adinkras in Figure~\ref{n4_disjoint} correspond to the left and right regular representation of the quaternions over $\mathbb{R}$ and encode the multiplication table of the quaternions.  Similarly, the left and right regular representation of the octonions correspond to the two irreducible modules for $N=8.$

The above discussion shows that an Adinkra encodes a complex \emph{together with a choice of basis}. Changing the basis
may yield an Adinkra with a different topology. However, the Adinkras giving rise to the same
complexes also give rise to the same representation of the supersymmetry algebra and vice versa.
Hence, while the topology of an Adinkra is not an invariant of a multiplet, their associated
complexes are (indeed complete) invariants.

\subsubsection{Bott-Kn\"orrer Periodicity for Adinkras}

\label{BottP}

As we have seen in section \ref{sec:valise}, multiplets coming from valise Adinkas are in one-to-one correspondence with $C\ell_{0}(q_N)$-modules and matrix
factorizations (up to a change of basis). Working over the real numbers, i.e.\ choosing $\kk = \RR$, it is well-known that real Clifford modules enjoy an 8-fold Bott
periodicity, and the same must be true for matrix factorizations -- indeed, an $8$-fold Kn\"orrer
periodicity for matrix factorization was proven in \cite{Brown16}\footnote{A relevant discussion is also
given in the recent \cite{Spellmann23}, which was inspired by \cite{Hori08}.}. In this section we aim to understand what ``tensoring" Clifford modules means from the point of view of Adinkras. \\
Let us start with two valise complexes
\begin{equation}
    \begin{tikzcd}
      0 \arrow[r] & \RR[\lm_1,\dots, \lm_K]^n \arrow[r, "\varphi"] & \RR[\lm_1,\dots,
        \lm_K]^n  \arrow[r] & 0, \\
      0 \arrow[r]  & \RR[\lm_{K+1}, \dots, \lm_N]^m \arrow[r, "\psi"] & \RR[\lm_{K+1}, \dots, \lm_N]^m  \arrow[r] & 0.
    \end{tikzcd}
\end{equation}
Posing $R = \RR[\lm_1,\dots, \lm_K] \otimes \RR[\lm_{K+1},\dots, \lm_N] = \RR[\lm_1,\dots, \lm_N]$ and denoting with abuse of notation the extension of the maps $\phi$ and $\psi$ to the ring $R$,
we can form a new complex:
\begin{equation}
    \begin{tikzcd}[ampersand replacement=\&]
        0 \arrow[r] \& R^n \otimes R^m \arrow[r, "{\tiny {\bmat \varphi \otimes \operatorname{id} \\
        -\operatorname{id} \otimes \psi \emat}}"] \& R^n \otimes R^m \oplus R^n \otimes
        R^m \arrow[r, "{\tiny {\bmat \operatorname{id} \otimes \psi \; \varphi \otimes \operatorname{id}
\emat}}"] \& R^n \otimes R^m \arrow[r] \& 0.
    \end{tikzcd}
\end{equation}
The associated valise complex is then
given by
\begin{equation}
    \begin{tikzcd}
      0 \arrow[r] & R^n \otimes R^m \oplus R^n \otimes R^m \arrow[r, "\tau"] &
        R^n \otimes R^m \oplus R^n \otimes R^m \arrow[r] & 0,
    \end{tikzcd}
\end{equation}
with the differential being
\begin{equation}
    \tau = \bmat
        \operatorname{id}_n \otimes \psi & \varphi \otimes \operatorname{id} \\
        \varphi^{\dagger} \otimes \operatorname{id}_m & -\operatorname{id}_n \otimes
        \psi^{\dagger}
    \emat.
\end{equation}
One easily checks that
\begin{equation} \label{mfA}
    f = \left ( \begin{array}{c|c}
    0 & \tau \\
    \hline
        \tau^{\dagger} & 0
    \end{array}
    \right )
\end{equation}
is a matrix factorization of $q_N$. This can also be understood as the graded tensor product of the
$\ZZ / 2$-graded matrix factorizations, where the minus sign comes from the Koszul sign rule.

In its matrix form \eqref{mfA}, the matrix factorization $f$ has rank $2 n m $. Specializing to the case $m = 8$ and taking
$\psi$ being a minimal matrix factorization of $q_8$, we find that the dimension of the related
$Cl_{0} (q_N, \mathbb{R})$-module is $16 n$. This implies that if $\varphi$ corresponds to an
irreducible Clifford module, then the Clifford module corresponding to $\tau$ is irreducible as well
(for dimensional reasons), and hence $f$ is a minimal matrix factorization of $q_N$.

On the level of the underlying graphs of Adinkras, the tensor product gives the product of the
graphs, and the above discussion shows that the product of two Adinkras graphs admits a consistent
coloring and dashing. Although we obtain presentations of the minimal matrix factorization for each
$N \ge 8$ by tensoring a minimal factorization with $N - 8$ with a $N = 8$ minimal matrix
factorization, some topologies are not products of simpler graphs -- we remark that all the possible topologies were
extensively studied for example in \cite{Doran08}.

\subsection{Vertex raising and mapping Cones} 
\label{sec:flipping}
Given any Adinkra $\mathpzc{A}$, we can canonically associate to it a valise Adinkra, $\mathpzc{A}_V$, by taking its
$\ZZ$-grading mod $2$, i.e.\ by reducing the degrees modulo 2. In particular, the topology of the graph and the dashing are unchanged and
there is (up to permutation of the vertices) a unique valise Adinkra for a given coloring, dashing,
and topology. 
The complex associated to $\mathpzc{A}_V$ is of the form
\bc
\begin{tikzcd}
0 \arrow[r] &  R^{\oplus 2^n} \arrow[r, "d"] & R^{\oplus 2^n} \arrow[r] & 0,
\end{tikzcd}
\ec
for some $n$.

In the other direction, instead of projecting down to valise Adinkras, one can
try to \enquote{raise} vertices via the following procedure.
\begin{construction}[Vertex raising]  \label{vertflip} Let $\pzA$ be an Adinkra. \begin{enumerate}
\item Let $v \in \mathpzc{V}(\pzA)$ be a vertex\footnote{
We remark that one such vertex exists since a partially ordered finite set always contains a smallest element.}
such that there is no edge $e \in \mathpzc{E}(\pzA)$ with $\mathpzc{t}(e) = v$. Analogously, $v \in \mathpzc{V}(\pzA)$ is such that
for each edge connecting $v$ with $v' \in \mathpzc{V} (\pzA)$ then $\mathpzc{h}_{\pzA}(v') =
\mathpzc{h}_{\pzA}(v) + 1 > \mathpzc{h}_{\pzA}(v)$.
\item Raise the vertex $v \in $
$\mathpzc{V} (\pzA)$ to bring it two degrees up, so that after raising all edges attached to it go down one degree, instead of going up as in the original Adinkra.
\end{enumerate}
We call this construction \emph{vertex raising}.
\end{construction}
It is a simple exercise to verify the following, whose proof is left to the reader.
\begin{lemma} Let $\pzA$ be an Adinkra and let $v \in \mathpzc{V} (\pzA)$ be as
    in construction \ref{vertflip}. The finite graph obtained by raising a vertex is an Adinkra $\pzA^\prime$.
\end{lemma}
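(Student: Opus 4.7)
The plan is to verify each of the four conditions of Definition \ref{N_Adinkra} in turn for the graph $\pzA^\prime$ obtained from $\pzA$ by the vertex raising of Construction \ref{vertflip}. The key observation is that the underlying finite simple graph, its bipartition function $|\cdot|$, its coloring $\pzc$, and its dashing $\pzp$ are all \emph{unchanged} by the construction; only the height function $\pzh$ is altered, and only at the single vertex $v$, where $\pzh_{\pzA^\prime}(v) = \pzh_{\pzA}(v)+2$.

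Consequently, conditions 1, 3, and 4 of Definition \ref{N_Adinkra} are inherited from $\pzA$ essentially for free. Indeed, condition 1 (bipartiteness with respect to $|\cdot|$) and condition 3 ($N$-color-regularity) only refer to the vertex parities, edge colors, and the graph's incidence structure, all of which are intact. Likewise condition 4 only involves the color $\pzc$ and the dashing $\pzp$ on the two-color cycles, neither of which has been modified; every $4$-cycle of $\pzA$ remains a $4$-cycle of $\pzA^\prime$ with the same dashing sum.

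The only nontrivial point is condition 2: unimodularity of edges with respect to the new height function $\pzh_{\pzA^\prime}$. For any edge $e$ not incident to $v$, both endpoints retain their heights, so $|\pzh_{\pzA^\prime}(\pzs(e)) - \pzh_{\pzA^\prime}(\pzt(e))| = 1$ as before. For an edge $e$ incident to $v$, the hypothesis that no edge has $v$ as target means that every such $e$ satisfies $\pzs(e) = v$ in $\pzA$, so its other endpoint $w$ has $\pzh_{\pzA}(w) = \pzh_{\pzA}(v)+1$. After the raising, $\pzh_{\pzA^\prime}(v) - \pzh_{\pzA^\prime}(w) = (\pzh_{\pzA}(v)+2) - (\pzh_{\pzA}(v)+1) = 1$, so unimodularity is preserved (and the roles of source and target with respect to $e$ are now swapped).

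I expect no real obstacle: the argument is a direct verification. The one thing worth emphasizing is why the hypothesis on $v$ is needed; if $v$ had instead some incident edge with $\pzt(e) = v$ in $\pzA$, then the corresponding neighbor $w$ would satisfy $\pzh_{\pzA}(w) = \pzh_{\pzA}(v)-1$, and after raising we would have $|\pzh_{\pzA^\prime}(v) - \pzh_{\pzA^\prime}(w)| = 3$, violating condition 2. Thus the role of the source-vertex hypothesis is precisely to guarantee that condition 2 survives, and the lemma follows.
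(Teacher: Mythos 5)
The paper leaves this lemma ``to the reader,'' so there is no written proof to compare against; your verification is essentially the intended one, and the main strategy is sound: conditions 1, 3, and 4 of Definition~\ref{N_Adinkra} involve only the incidence structure, the parity $|\cdot|$, the coloring $\pzc$, and the dashing $\pzp$, none of which change under vertex raising, while condition 2 is the only one touching $\pzh$ and therefore the only one needing work, and the hypothesis on $v$ is exactly what makes it survive.

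One point deserves flagging, though it traces back to an inconsistency in the paper itself. You write ``the hypothesis that no edge has $v$ as target means that every such $e$ satisfies $\pzs(e) = v$ in $\pzA$, so its other endpoint $w$ has $\pzh_{\pzA}(w) = \pzh_{\pzA}(v)+1$.'' But per Definition~\ref{N_Adinkra}(2) the \emph{source} $\pzs(e)$ is the \emph{higher} endpoint and the \emph{target} $\pzt(e)$ the lower one, so $\pzs(e)=v$ would give $\pzh_{\pzA}(w)=\pzh_{\pzA}(v)-1$, and raising $v$ by two would then produce a height gap of $3$, \emph{breaking} unimodularity rather than preserving it. What actually saves the lemma --- and what the ``analogously'' clause in Construction~\ref{vertflip} and the later equivalence $d v = 0 \iff \mathpzc{S}(v)=\emptyset$ in Theorem~\ref{flipping_theorem} make clear --- is that the intended hypothesis is the opposite one: $v$ should be a local \emph{minimum}, i.e.\ no edge has $\pzs(e) = v$ (equivalently all incident edges have $\pzt(e)=v$). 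With that reading, every neighbor $w$ sits at $\pzh_{\pzA}(v)+1$, and your arithmetic $(\pzh_{\pzA}(v)+2)-(\pzh_{\pzA}(v)+1)=1$ is exactly right, with the roles of source and target on those edges swapping as you note. So the first displayed bullet of Construction~\ref{vertflip} contains a typo ($\pzt$ should be $\pzs$), and your proof inherits a compensating sign slip in the source/target convention; the two cancel and you land on the correct conclusion. It would be cleaner to phrase the key step directly from the ``analogously'' clause --- every edge incident to $v$ has its other endpoint one level \emph{above} $v$ --- and avoid invoking the literal $\pzt(e)\neq v$ condition at all.
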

In particular, the Adinkra $\pzA^\prime$ obtained by raising a vertex has the
same data of the original $\pzA$ except that now $\mathpzc{h}_{\pzA'}(v) =
\mathpzc{h}_{\pzA}(v) + 2$ and there are no edges $e \in \mathpzc{E} (\pzA^\prime)$
such that with $\mathpzc{s} (e) = v$. In other words, after raising, $v$ is two
degrees higher and all edges attached to it now go down one degree, instead of
going up as in $\pzA$. Let us now look at an example.

\begin{example}[Vertex Raising in $N=3$] Let us consider the $N=3$ Adinkra characterized by the rank sequence $(1, 4,
3)$, as shown in figure \ref{n3_1_4_3}.
\begin{figure}[ht!]
    \centering
    \includegraphics[width=7cm]{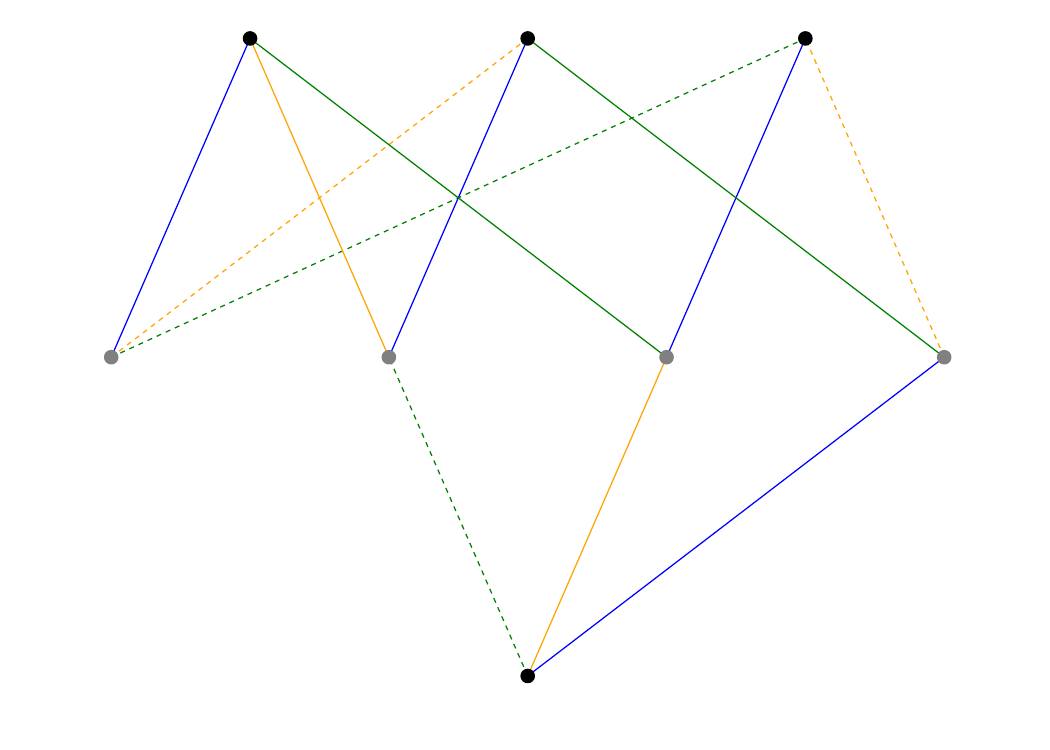}
    \caption{The $(1,4,3)$ Adinkra of $N = 3$.}
    \label{n3_1_4_3}
\end{figure}
Assuming the lowest vertex has degree 0, there is a vertex $v$ with $\mathpzc{h}(v) = 1$ whose attached edges are all going up.
Hence, by raising a vertex, we can obtain another Adinkra $\mathpzc{A}^\prime$, where the only difference is
that now $\mathpzc{h}(v) = 3$. This Adinkra, characterized by the rank sequence $(1,3,3,1)$, is shown in figure \ref{n3_1_3_3_1}.
\begin{figure}[ht!]
    \centering
    \includegraphics[width=7cm]{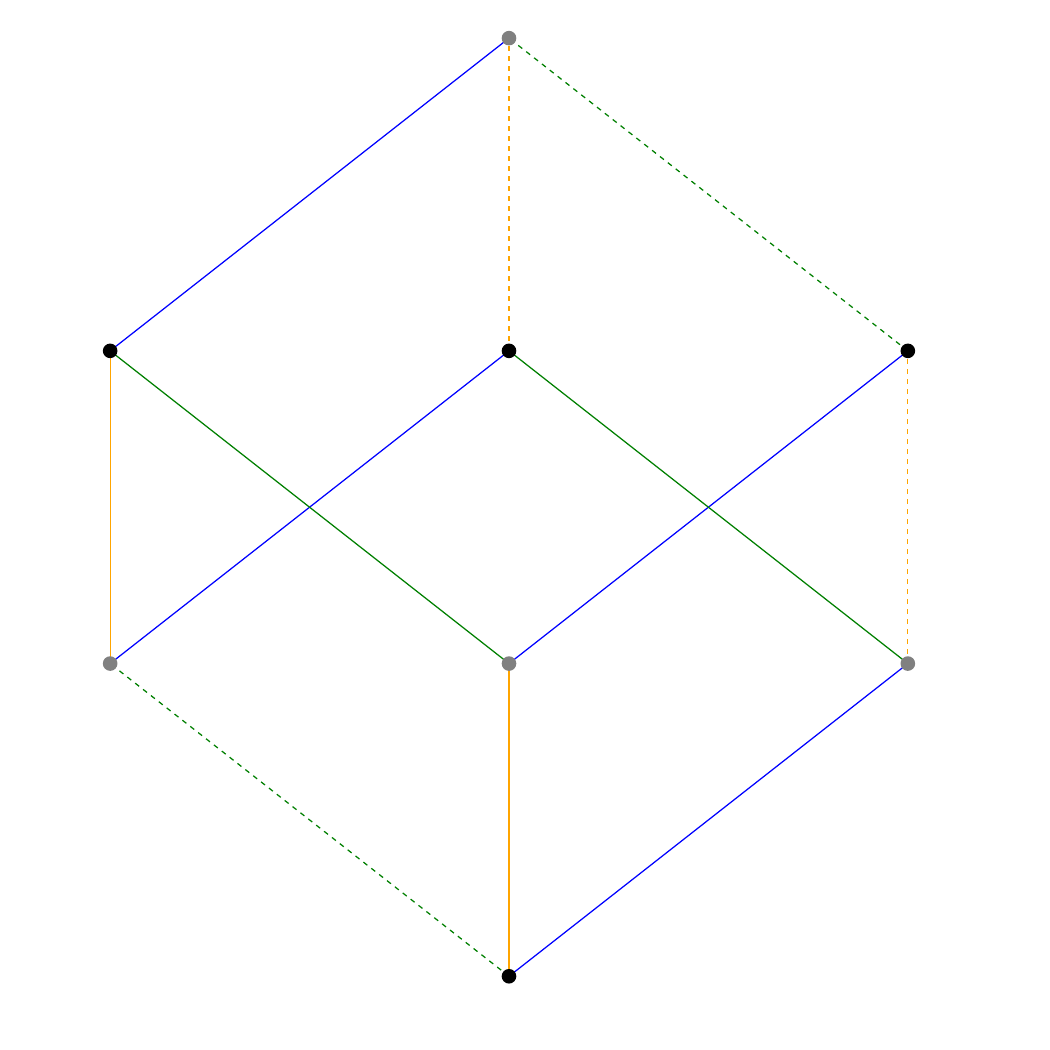}
    \caption{The $(1,3,3,1)$ Adinkras of $N = 3$.}
    \label{n3_1_3_3_1}
\end{figure}
This is an example of a Koszul Adinkra, as first introduced in \ref{KAd}: its associated complex $C^\bullet (\pzA^\prime)$ is indeed the Koszul complex, resolving the $R/q_3$-module $\kk$ in $R$-modules.
\end{example}


Given the previous example, we now aim at understanding {vertex raising} from the perspective
of the Adinkra complexes $C^{\bullet}(\pzA)$ --- this will lead us to another abstract result, theorem \ref{flipping_theorem}, that shows that
the operation of vertex raising is realized by taking a \emph{cone} in the derived category.

Before we discuss this abstract result, though, we consider the simpler case where both
$C^{\bullet}(\pzA)$ and $C^{\bullet}(\pzA')$ are quasi-isomorphic to $R / I$-modules concentrated in
just one degree. For concreteness, we will start with an example: namely, we look again the $N = 4$ irreducible valise Adinkra
$\pzA$ from the previous section and shown in figure \ref{adk_n4_4_4}.

\begin{example}[Complexes and Vertex Raising in $N=4$] Starting from the
    irreducible $N=4$ valise Adinkra, we can raise any vertex in the
bottom row. We choose the one to the outer right: this results in the Adinkra $\pzA'$ shown
in figure \ref{n4_3_4_1}.
\begin{figure}
    \centering
    \includegraphics[width=9cm]{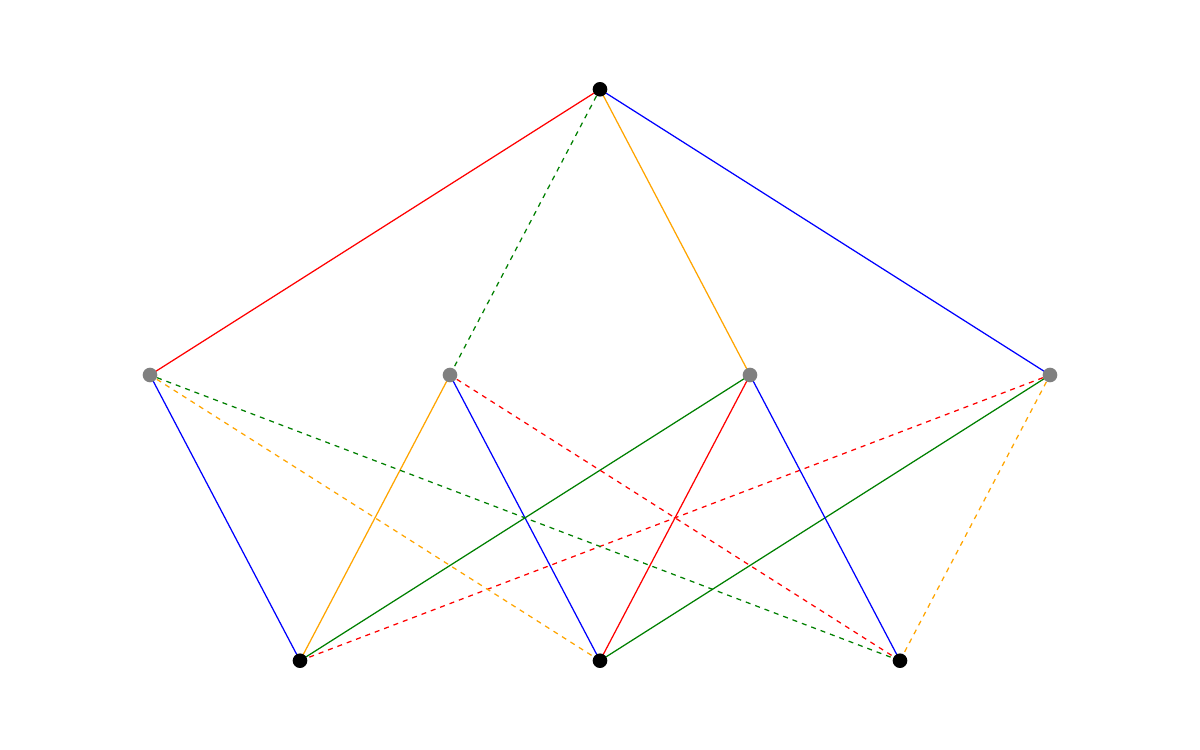}
    \caption{The $(3,4,1)$ Adinkras of $N = 4$.}
    \label{n4_3_4_1}
\end{figure}

We have seen above that the differential $d^0 = d$ of the valise Adinkras realizes a matrix factorization of
$q_4$ (all of the higher differentials are zero). Raising the vertex has the effect of eliminating the last row from the matrix $d$ as given in \eqref{matrixD}, and we have a new differential
\[
\begin{tikzcd}
   d^0_{\pzA} = \bmat \textcolor{blue}{\lm_1} & \textcolor{orange}{\lm_2} & \textcolor{darkgreen}{\lm_3} &
\textcolor{red}{-\lm_4} \\
\textcolor{orange}{-\lm_2} & \textcolor{blue}{\lm_1} & \textcolor{red}{\lm_4} &
\textcolor{darkgreen}{\lm_3} \\
\textcolor{darkgreen}{-\lm_3} & \textcolor{red}{-\lm_4} & \textcolor{blue}{\lm_1} &
\textcolor{orange}{-\lm_2} \\
\textcolor{red}{\lm_4} & \textcolor{darkgreen}{-\lm_3} & \textcolor{orange}{\lm_2} &
\textcolor{blue}{\lm_1} \emat
 \arrow[r, rightsquigarrow, "\mathpzc{Raise}"] &  \bmat \textcolor{blue}{\lm_1} & \textcolor{orange}{\lm_2} & \textcolor{darkgreen}{\lm_3} & \textcolor{red}{-\lm_4} \\
                        \textcolor{orange}{-\lm_2} & \textcolor{blue}{\lm_1} & \textcolor{red}{\lm_4} & \textcolor{darkgreen}{\lm_3} \\
                        \textcolor{darkgreen}{-\lm_3} & \textcolor{red}{-\lm_4} & \textcolor{blue}{\lm_1} & \textcolor{orange}{-\lm_2} \emat = d^0_{\pzA'}
.
\end{tikzcd}\]
It is easy to see that
\[
    d^{0}_{\pzA'} \cdot (d^{0}_{\pzA'})^\intercal = q_4 \cdot \mbox{id}_3,
\]
therefore we have again a matrix factorization of $q_4$. On the other hand, one immediately sees that the kernel of $d^0_{\pzA^\prime}$ is generated by (the transpose of) the last row of $d^0_\pzA$, which has been eliminated from $d^0_{\pzA^\prime}$, i.e.\
\be
\ker (d^0_{\pzA^\prime}) = R \cdot \langle ( \lm_4 , - \lm_3, \lm_2, \lm_1 )^t \rangle.
\ee
Hence, calling $\hat v \defeq ( \lm_4 , - \lm_3, \lm_2, \lm_1 )^t $ and posing
\begin{equation}
\xymatrix@R=1.5pt{
d^\prime_{\pzA^\prime} : R^1 \ar[r] & R^4 \\
x \ar@{|->}[r] & \hat v \cdot x,
}
\end{equation}
one immediately gets the complex associated to $\mathpzc{A}^\prime$,
\begin{equation} \label{Aprime}
\begin{tikzcd}
C^\bullet (\mathpzc{A}^\prime) \equiv \big ( 0 \arrow[r] & R^1 \arrow[r, "d^1_{\pzA'} "]  & R^4
    \arrow[r, "d^0_{\pzA'}"] & R^3 \arrow[r] & 0 \big ),
\end{tikzcd}
\end{equation}
which is quasi-isomorphic to $\coker d^0_{\pzA^\prime}$ by construction.
Finally, observe that the image of $d^0_{\pzA'}$ in $R^3$ generates an ideal containing $q_4$, and hence the
cokernel of $d^0_{\pzA'}$ descends to a module over $R / q_4$, as expected.\\
One can repeat the above procedure and obtain Adinkras -- or complexes -- characterized by the following rank sequences.
\begin{equation}
\begin{tikzcd}
(4,4)\arrow[r, rightsquigarrow, "\mathpzc{Raise}"] & (3,4,1) \arrow[r,
    rightsquigarrow, "\mathpzc{Raise}"] & (2,4,2) \arrow[r, rightsquigarrow,
    "\mathpzc{Raise}"] & (1,4,3).
\end{tikzcd}
\end{equation}
On the other hand, it is to be observed that the last complex, characterized by the $(1,4,3)$ rank sequence, has higher cohomology, i.e.\ it is not quasi-isomorphic to an
$R/I$-module concentrated in one degree. This explains why we need to consider
the full derived category $\mbox{\sffamily{D}}^{\flat}(R/I\mbox{-\sffamily{Mod}})$, rather
than just the abelian category of $R/I$-modules.

\end{example}

In the case explained above, vertex raising has a neat geometric interpretation as taking suitable
quotients of the spinor bundle by line bundles. As above, we start illustrating this in an example:
namely, we will consider the Adinkra characterized by the rank sequence $(3,4,1)$ viewed as the Adinkra
coming from the irreducible $N=4$ valise Adinkra, as in figure \ref{n4_3_4_1} above.
\begin{example}[Geometry of Vertex Raising] Starting from the $N=4$ irreducible
    valise Adinkra $\pzA$, we denote $\mathpzc{A}^\prime$ the vertex raised
Adinkra with rank sequence $(3,4,1)$ and call $v$ the raised vertex.
We have a canonical isomorphism $C^{0}(\pzA) \cong
C^{0}(\pzA') \oplus R v$. Denoting by $\pi$ the projection of $C^{0}(\pzA')$ to the direct
summand $R v$ and by $\iota_v$ the inclusion, we have the following (split) exact sequence
\begin{equation} \label{iotapi}
\xymatrix{
0 \ar[r] & R v \ar[r]_{\iota_v} & \ar@/_1pc/[l]_{\pi_v} C^0(\mathpzc{A}) \ar[r]^{\pi_{\mathpzc{A}^\prime}} & C^0 (\mathpzc{A}^\prime) \ar[r] & 0,
}
\end{equation}
where $\pi_{\mathpzc{A}^\prime} : C^{0} (\pzA) \rightarrow C^{0} (\pzA^\prime)$ denotes the projection, and where $\pi \circ \iota_v = id_{Rv}.$
Employing this notation, it is easy to see that the above short exact sequence can be completed to the
following commutative diagram
\be
\begin{tikzcd} \label{cdvf}
    0 \arrow[r] & 0 \arrow[r] \arrow[d]                    & C^{-1}(\mathpzc{A}^\prime) \arrow[r, "id"]
    \arrow[d, "d^0_{\pzA}"] & C^{-1} (\mathpzc{A}) \arrow[r] \arrow[d, "d^{0}_{\pzA'}"] & 0 \\
    0 \arrow[r] & Rv \arrow[r, "\iota_v"] &  C^0(\mathpzc{A}) \arrow[r, "\pi_{\pzA'}"]                       & C^0(\mathpzc{A}^\prime) \arrow[r]                & 0,
\end{tikzcd}
\ee
so that one has $d^{0}_{\pzA'} = \pi_{\pzA'} \circ d^0_{\pzA'}$. In terms of free $R$-modules this corresponds to
\be
\begin{tikzcd}
    0 \arrow[r] & 0 \arrow[r] \arrow[d]                    & R^4 \arrow[r, "id"]
    \arrow[d, "d^0_{\pzA}"] & R^4 \arrow[r] \arrow[d, "d^{0}_{\pzA'}"] & 0 \\
    0 \arrow[r] & R \arrow[r, "\iota_v"] &  R^4 \arrow[r, "\pi_{\pzA'}"]                       & R^3 \arrow[r]                & 0.
\end{tikzcd}
\ee
By the snake lemma, we get the following exact sequence
\bc
\begin{tikzcd}
    0 \arrow[r] & R \arrow[r, "\cdot q_4"] & R \arrow[r] & \coker d^0_{\pzA} \arrow[r]
    & \coker d^0_{\pzA'}  \arrow[r] & 0.
\end{tikzcd}
\ec
It is an easy diagram-chasing exercise to verify that the first map is given by the multiplication by the
quadratic form $q_4$. Replacing the first map with its cokernel one finds the following (extension) short exact sequence
\bc
\begin{tikzcd}
    0 \arrow[r] & R / I \arrow[r] & \coker d^0_{\pzA} \arrow[r]
    & \coker d^0_{\pzA'}  \arrow[r] & 0
,\end{tikzcd}
\ec
where we observe that the module $\coker (d^0_{\pzA})$ corresponds to the spinor bundle, according to lemma \ref{irreducibleVal}.
\end{example}

The above description holds in full generality, for any $N$. Indeed, assume we have an Adinkra characterized by a
rank sequence $(n_0, n_1, n_2)$, whose related complex has cohomology concentrated in degree zero, {i.e.}\ the complex is the free resolution of some
$R / I$-module obtained from vertex raising. Then, setting up a commutative diagram as above in equation \eqref{cdvf}, gives an exact sequence
\be
\bcd 0 \arrow[r] & R^{n_2} \arrow[r, "q_N"] & R^{n_2} \arrow[r] &
\operatorname{coker} d^{0}_{\pzA} \arrow[r] & \coker {d^0_{\pzA'}} \arrow[r] & 0, \ecd
\ee
where $d^{0}_{\pzA'}: R^{n_1} \rightarrow R^{n_0}$ is the first map in the complex characterized by the rank sequence
$(n_0, n_1, n_2)$ and obtained from raising multiple vertices from level $0$ to level $2$. It is clear that
$n_0 + n_2 = n_1$ and hence $n_2 = n_1 - n_0$. Once again, diagram-chasing shows that the first map is given by the multiplication by the quadratic form $q_N$.
It follows that one gets the exact sequence
\be
\bcd 0 \arrow[r] & (R / I)^{n_2} \arrow[r] & \operatorname{coker} d^0_{\pzA} \arrow[r] &
\operatorname{coker} d^0_{\pzA^\prime} \arrow[r] & 0 \ecd
\ee and hence it is clear that $\coker d^{0}_{\pzA'} \cong
\coker d^{0}_{\pzA} / (R / I)^{\oplus n_2}$. We emphasize again that this only works if the complex
has \emph{no higher cohomology}: indeed, in this case $\operatorname{dim}( \operatorname{ker} d^0_{\pzA'}
) \neq n_2$.

This explanation is not yet complete. Indeed, as we have seen, the complexes related to Adinkras are to be
viewed as elements in a derived category, and the right way of \enquote{taking quotients} in the derived category
is to take the cone of a map.
In view of this, we will now show that vertex raising on Adinkras is equivalent to taking a suitable cone in the derived
category of $R / I$-modules. More precisely, the following theorem holds true.
\begin{theorem}[Vertex Raising \& Mapping Cones] Let $C^\bullet (\pzA)$ be the complex
    associated to an Adinkra $\pzA$.
    \label{flipping_theorem}
    Let $0 \neq v \in C^{-i}(\pzA)$ such that $d v = 0$. Then there
    exists a morphism $j: R / I[i] \rightarrow C^{\bullet}(\pzA)$ in the derived category $\mbox{\emph{\sffamily{D}}}^\flat (R/I)$, represented by a
    cochain map
    \be
    \begin{tikzcd}
        \label{flipping_map}
        \ldots \arrow[r] & 0 \arrow[r] \arrow[d] & R \arrow[r, "q_N"] \arrow[d, "1 \mapsto
        d^{\dagger}v"] & R \arrow[r] \arrow[d,
        "1 \mapsto v"] & 0 \arrow[d] \arrow[r] & \ldots \\
        \ldots \arrow [r] & C^{-i- 2}(\pzA) \arrow[r, "d"] & C^{-i - 1}(\pzA)
        \arrow[r, "d"] & C^{-i}(\pzA) \arrow[r, "d"] & C^{-i+1}(\pzA) \arrow[r, "d"] & \ldots .
    \end{tikzcd}
    \ee
    In particular, if $v \in C^{-i} (\mathpzc{A})$ corresponds to a vertex of $\mathpzc{A}$, 
    the mapping cone $\operatorname{Cn}^\bullet \left(j: R / I[i] \rightarrow C^{\bullet}(\pzA)\right)$ of $j$ is
    quasi-isomorphic to the complex $C^{\bullet}(\pzA')$, where
    $\pzA'$ is the Adinkra with the vertex corresponding to $v$ raised up
    from level $i$ to level $i + 2$.
\end{theorem}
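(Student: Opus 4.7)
The plan is to prove the two claims of the theorem separately. First, I would verify that the diagram \eqref{flipping_map} commutes, so that $j$ is a genuine cochain map and hence a morphism in $\mbox{\sffamily{D}}^\flat(R/I\mbox{-\sffamily{Mod}})$. Second, I would construct an explicit quasi-isomorphism $\phi : \operatorname{Cn}^\bullet(j) \to C^\bullet(\pzA')$ by a direct term-by-term comparison, exploiting the fact that both complexes are built from free $R$-modules with manifest bases coming from vertices of the Adinkra.

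The first step is a one-line application of lemma \ref{LaplacianLemma}. The only nontrivial square is the middle one, which requires $d(d^\dagger v) = q_N \cdot v$. Since $d v = 0$ by hypothesis, $d^\dagger d v = 0$, so the Laplacian lemma gives $d d^\dagger v = (d d^\dagger + d^\dagger d)v = \Delta v = q_N v$. The rightmost square commutes because $d v = 0$ and the upper right corner is zero; the leftmost square is trivial.

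For the second step, raising the vertex $v$ from height $i$ to height $i+2$ removes $v$ from the basis of $C^{-i}(\pzA)$ and adds it to the basis of $C^{-i-2}(\pzA')$, leaving every other basis element unchanged. This yields canonical decompositions of free $R$-modules
\[
C^{-i}(\pzA) = C^{-i}(\pzA') \oplus R v, \qquad C^{-i-2}(\pzA') = C^{-i-2}(\pzA) \oplus R v.
\]
The crucial combinatorial observation is that, by the hypothesis $d v = 0$, every edge incident to $v$ in $\pzA$ has $v$ as target, so these are precisely the edges in $\mathpzc{T}(v)$ appearing in the definition \eqref{adj} of $d^\dagger$. Raising $v$ reverses the orientation of each such edge without altering its color or dashing, and reading off Construction \ref{constr} for $\pzA'$ yields
\[
d^{-i-2}_{\pzA'}(v) \;=\; d^\dagger_{\pzA}(v) \;\in\; C^{-i-1}(\pzA') = C^{-i-1}(\pzA).
\]
With these identifications, I would define $\phi$ as the identity on $\operatorname{Cn}^{-i-2}(j) = R \oplus C^{-i-2}(\pzA)$ (under the evident bijection with $R v \oplus C^{-i-2}(\pzA)$), the projection $R \oplus C^{-i-1}(\pzA) \to C^{-i-1}(\pzA)$ in degree $-i-1$, the quotient map $C^{-i}(\pzA) \to C^{-i}(\pzA) / R v$ in degree $-i$, and the identity everywhere else. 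That $\phi$ commutes with the differentials follows immediately from the identity $d^{-i-2}_{\pzA'}(v) = d^\dagger v$ and from $d v = 0$.

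The argument concludes with a standard cone computation. By construction $\phi$ is surjective in every degree, and its kernel is the two-term subcomplex
\[
\ker \phi \;\cong\; \bigl(R \xrightarrow{\;v\;} R v\bigr)
\]
in degrees $-i-1$ and $-i$, where the induced differential $a \mapsto a v$ is an isomorphism of rank-one free $R$-modules and hence acyclic. The short exact sequence $0 \to \ker \phi \to \operatorname{Cn}^\bullet(j) \to C^\bullet(\pzA') \to 0$ therefore forces $\phi$ to be a quasi-isomorphism. The main obstacle I anticipate is the careful bookkeeping in the identification $d^{-i-2}_{\pzA'}(v) = d^\dagger_\pzA(v)$: while intuitive, it rests on verifying that the dashing condition (axiom 4 of definition \ref{N_Adinkra}) is preserved under the simultaneous orientation reversal of every edge incident to $v$, which is a global combinatorial check on the graph. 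Once this is in hand, the rest is a routine mapping cone calculation.
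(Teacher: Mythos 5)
Your proof is correct, and it takes a genuinely different (and arguably cleaner) route from the paper's. Both proofs start the same way, invoking Lemma \ref{LaplacianLemma} together with $dv=0$ to get $dd^\dagger v = \Delta v = q_N v$, and both rest on the same key combinatorial identity $d^{-i-2}_{\pzA'}(v) = d^\dagger_\pzA(v)$ (the paper uses it implicitly when it writes the differential $C^{-i-2}(\pzA)\oplus Rv \to C^{-i-1}(\pzA)$ as the block $(d\;\;d^\dagger)$). The difference is in what happens afterward. The paper constructs an explicit pair of chain maps $f^\bullet : C^\bullet(\pzA')\to \operatorname{Cn}^\bullet(j)$ and $p^\bullet : \operatorname{Cn}^\bullet(j)\to C^\bullet(\pzA')$, checks $p\circ f = \mathrm{id}$ degree by degree, and then produces an explicit homotopy $h^\bullet$ between $f\circ p$ and the identity on $\operatorname{Cn}^\bullet(j)$; the conclusion is the strictly stronger statement that the two complexes are chain homotopy equivalent. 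You instead exhibit a single surjective chain map $\phi$ whose kernel is the two-term acyclic complex $R\xrightarrow{\,\cdot v\,}Rv$ concentrated in degrees $-i-1,-i$, and conclude by the long exact sequence. This is shorter — you never need to guess a homotopy — and delivers exactly the quasi-isomorphism the theorem asserts. (Over the polynomial ring $R$ with bounded complexes of finitely generated free modules the two conclusions agree, so nothing is lost.) You should verify, as you do implicitly, that the differential is unaffected on all vertices other than $v$ (edges between levels not touching $v$ are untouched; the only change is that the edges incident to $v$, which all have $v$ as target in $\pzA$ since $dv=0$, acquire $v$ as source in $\pzA'$).

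One small remark on your final worry: the check that raising preserves the dashing axiom is not actually a global combinatorial constraint. Axiom 4 of Definition \ref{N_Adinkra} is a condition on $2$-colored $4$-cycles, and those cycles are determined entirely by the coloring and edge set — the height function does not enter. Vertex raising changes only $\pzh$; it does not add, remove, recolor, or redash any edge, so axiom 4 (and axiom 3) are preserved automatically, and axiom 2 holds because $v$ moves by exactly two levels while all its edges remain between adjacent heights. The paper records this as the unnumbered lemma following Construction \ref{vertflip} and leaves it to the reader; it is a local observation, not a global one.
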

\begin{proof}
    Since $dv = 0$ we have $\Delta v = d d^{\dagger} v = q_N v$ and hence
    \eqref{flipping_map} commutes, showing that $j$ is a cochain map. \\
    Assume now that there is a vertex spanning the free rank-one submodule of $C^{-i}(\pzA)$ containing
$v$. By abuse of notation, we call this vertex $v$ as well: the condition $d v= 0$ is
equivalent to the condition that all edges attached to $v$ go up, hence the
    vertex $v$ can be raised.
Calling the Adinkra $\mathpzc{A}^\prime$ and its complex $C^{\bullet} (\mathpzc{A}^\prime)$
we have that the raised vertex will be of degree $i +2$ and will
correspond to a generator in $C^{-i-2} (\mathpzc{A}^\prime).$
Now consider the mapping cone complex $\operatorname{Cn}^\bullet(j)$ of the map $j$:
    \begin{equation} \label{conej}
        \begin{tikzcd}[ampersand replacement=\&]
            \ldots \arrow[r] \& C^{-i- 2}(\pzA) \oplus Rv \arrow[r,
            "{\tiny {\begin{pmatrix} d &
            d^{\dagger} \\ 0 & -q_N \end{pmatrix}}}"] \&
            C^{-i - 1}(\pzA) \oplus Rv \arrow[r, "{\tiny {\begin{pmatrix} d & id
            \end{pmatrix}}}"]
            \& C^{-i}(\pzA) \arrow[r, "d"] \& C^{-i+1}(\pzA) \arrow[r] \& \ldots.
        \end{tikzcd}
    \end{equation}
  We claim that $C^{\bullet}(\pzA')$ is quasi-isomorphic to $\operatorname{Cn}^\bullet(j)$.



Recalling that $C^{-i-2}(\mathpzc{A}^\prime) = C^{-i-2} (\mathpzc{A}) \oplus Rv$ and $C^{-1} (\mathpzc{A}^\prime) = C^{-i} (\mathpzc{A})$, we first consider the following diagram, where the top row corresponds to $C^\bullet (\mathpzc{A}^\prime)$:
    \be
        \begin{tikzcd}[ampersand replacement=\&]
            \ldots \arrow[r] \& C^{-i- 2}(\pzA) \oplus Rv \arrow[r,
                "{\tiny {\begin{pmatrix} d & d^{\dagger}
                                \end{pmatrix}}}"] \arrow[d, "id"]\&
                    C^{-i - 1}(\pzA) \arrow[r, "d^{-i-1}_{\pzA'}"] \arrow[d, "{\tiny {\begin{pmatrix} id \\
                    -\pi \circ d \end{pmatrix}}}"]
                    \& C^{-i}(\pzA') \arrow[r] \arrow[d, hook] \& C^{-i+1}(\pzA)
                    \arrow[r] \arrow[d, "id"] \&
                    \ldots \\
                    \ldots \arrow[r] \& C^{-i- 2}(\pzA) \oplus Rv \arrow[r,
                    "{\tiny {\begin{pmatrix} d &
             d^{\dagger}  \\ 0 & -q_N \end{pmatrix}}} "] \&
            C^{-i - 1}(\pzA) \oplus Rv \arrow[r, "{\tiny {\begin{pmatrix} d & \iota_v
            \end{pmatrix}}}"]
            \& C^{-i}(\pzA) \arrow[r] \& C^{-i+1}(\pzA) \arrow[r] \& \ldots,
        \end{tikzcd}
    \ee
    where $\pi : C^{-i} (\pzA) \rightarrow Rv$ denotes the projection onto $R v \subseteq C^{-i} (\pzA)$
    and $\iota_v : Rv \hookrightarrow C^{-i} (\pzA^\prime) \oplus Rv \cong C^{-i} (\pzA)$ denotes the canonical immersion
    as in \eqref{iotapi}.
    Using these maps, the differential $d^{-i-1}_{\mathpzc{A}^\prime} : C^{-i-1} (\mathpzc{A}^\prime) \cong C^{-i-1} (\mathpzc{A}) \rightarrow C^{-i} (\mathpzc{A}^\prime)$
    can be written in terms of the original $d^{-i}_{\mathpzc{A}}$ as
    $
     d^{-i-1}_{\mathpzc{A}^\prime } = (id - \iota_v \circ \pi) \circ d^{-i-1}_{\mathpzc{A}}.
    $
    Using this, it is not hard to verify that the above diagram is commutative and hence defines a cochain map
    $f^{\bullet} : C^{\bullet} (\mathpzc{A}^\prime) \rightarrow \operatorname{Cn}^\bullet (j).$
     Similarly, we have another diagram $p : \operatorname{Cn}^\bullet (j) \rightarrow C^{\bullet} (\mathpzc{A}^\prime)$, given by
       \be
        \begin{tikzcd}[ampersand replacement=\&]
            \ldots \arrow[r] \& C^{-i- 2}(\pzA) \oplus Rv \arrow[r,
                "{\tiny{\begin{pmatrix} d & d^{\dagger} \end{pmatrix}}}"] \arrow[d,
                    "id", leftarrow]\&
                    C^{-i - 1}(\pzA) \arrow[r, "d_{\pzA'}"] \arrow[d, "{\tiny {\begin{pmatrix} id
                    & 0 \end{pmatrix}}}", leftarrow]
                    \& C^{-i}(\pzA') \arrow[r] \arrow[d, "id - \iota_v \circ \pi", leftarrow] \& C^{-i+1}(\pzA)
                    \arrow[r] \arrow[d, "id", leftarrow] \&
                    \ldots \\
                    \ldots \arrow[r] \& C^{-i- 2}(\pzA) \oplus Rv \arrow[r,
                    "{\tiny {\begin{pmatrix} d &
             d^{\dagger}  \\ 0 & -q_N \end{pmatrix}}}"] \&
            C^{-i - 1}(\pzA) \oplus Rv \arrow[r, "{\tiny \begin{pmatrix} d & \iota_v
            \end{pmatrix} }"]
            \& C^{-i}(\pzA) \arrow[r] \& C^{-i+1}(\pzA) \arrow[r] \& \ldots
        \end{tikzcd}
    \ee
    Upon remembering that $\pi \circ \iota_v = id,$ it is easy to verify that also the above diagram is commutative and hence
    one defines a cochain map $p^\bullet : \operatorname{Cn}^\bullet (j) \rightarrow C^\bullet (\mathpzc{A}^\prime).$
One immediately sees that the composition $p^\bullet \circ f^\bullet$ is the identity on $C^{\bullet}(\pzA')$. The composition $f^\bullet \circ p^\bullet$ is the identity on $\operatorname{Cn}^\bullet (j)$, except in degrees $-i-1$ and $-i$. \\
We will now define a homotopy $h^\bullet : \operatorname{Cn}^\bullet (j) \rightarrow \operatorname{Cn}^{\bullet-1} (j)$ between $f^\bullet \circ p^\bullet$ and the identity on $\operatorname{Cn}^\bullet (j).$ Consider first degree $-i$: the homotopy $h^{-i + 1}$ is zero and hence one needs that $d^{-i-1} \circ h^{-i}
= 1 - f^{i} \circ p^i = \iota_v \circ \pi$. Then it is enough to choose
$
h^{-i} \defeq ( 0 , \pi )^t.
$
Next, we look at degree $- i - 1$. The homotopy $h^{-i-2}$ is zero, and the map $f^{-i-1} \circ p^{-i-1}$ is given by $(x , rv)^t \mapsto (x , -\pi \circ d x)^t$. On the other hand $h^{-i}\circ d^{-i-1}$ acts as $(x , r v)^t \mapsto (0, \pi \circ d x + r v)$ and it follows that $f^{-i -1}\circ p^{-i-1} +  h^{i-1} \circ d^{-i-1} = id^{-i-1}$, thus concluding the proof.
\end{proof}

\subsubsection{Vertex raising in \texorpdfstring{$U_{\kk}
(\ft)$-\sffamily{\emph{Mod}}}{U_k(t)-mod}}

In the previous section we focused on the operation of vertex raising from the point of view of the derived category
of $R/I$-modules. On the other hand, in view of the Bernstein--Gel'fand--Gel'fand correspondence as given in the form
\ref{Koszul1}, the previous theorem \ref{flipping_theorem} can be given an interpretation in
also in terms of $U_\kk (\ft)$-modules. Indeed, roughly speaking, the quotient by $\mbox{\sffamily{{Art}}} (U_\kk (\ft))$ in theorem \ref{Koszul1}
identifies Adinkras related by vertex raising in the Verdier quotient.


Following definition \ref{inthesense} and the remark after \ref{N_Adinkra}, we denote by $\mathcal{V}(\pzA)$ the local
field representation (multiplet) of corresponding to the Adinkra $\pzA$. 
Just like above, consider some $\pzA'$ coming from $\pzA$ by raising a vertex $v
\in \mathpzc{V} (\mathpzc{A})$ with $d v = 0$ in $C^{\bullet}(\pzA)$. This gives
a
distinguished triangle
   \be
        \begin{tikzcd}
            R/I[i] \arrow[r, "j"] & C^{\bullet}(\pzA) \arrow[r] &
            C^{\bullet}(\pzA').
        \end{tikzcd}
   \ee
   where $j : R/I[i] \rightarrow C^\bullet (\mathpzc{A})$ as in \ref{flipping_map} and $C^\bullet (\mathpzc{A}^\prime)$ is
   quasi-isomorphic to $\operatorname{Cn}^\bullet (j)$ according to theorem \ref{flipping_theorem}.
   After applying the pure spinor functor $\mathcal{A}^{\bullet}$, we get another distinguished triangle
   \be
        \begin{tikzcd}
            \kk \arrow[r] & \mathcal{V}(\pzA) \arrow[r] & \mathcal{V}(\pzA'),
        \end{tikzcd}
   \ee
where we have used that $\mathcal{A}^\bullet (R / I)$ yields the de Rham complex, which in turn is quasi-isomorphic to $\kk$.
Since all of the above are honest representations, this distinguished triangle is actually an ``exact sequence'' in the
category of $U_\kk (\ft)$-modules -- the first map being injective and the second map surjective.
Physically, the trivial submodule $\kk \subset \mathcal{V}(\pzA)$ should be thought of as a {\it zero mode},
{i.e.}\ it is annihilated by the action of all $Q$'s and the time translation $H$. In the free
superfield (which is indeed an element of the de Rham complex) such a submodule is given by \enquote{evaluating} at $\theta_i = t = 0$: this amounts to
first modding out the fermionic directions of the related supermanifold, and then evaluate the field
on the reduced space at $t=0$.

\begin{figure}[ht!]
    \centering
    \raisebox{.5cm}{
    \includegraphics[height=4cm]{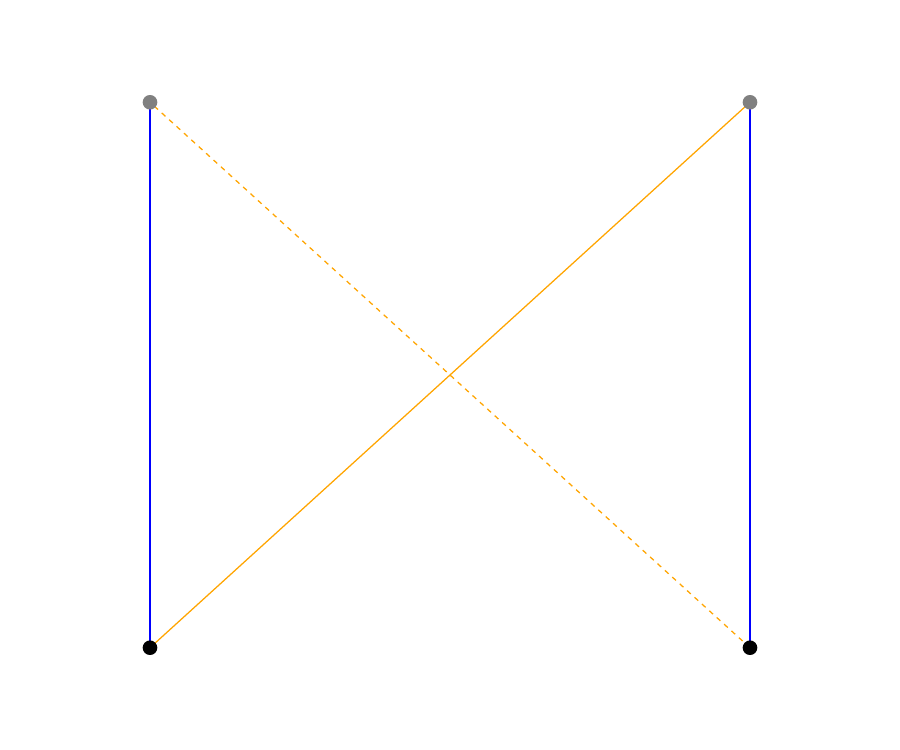} \vspace{.5cm}}
    \qquad
    \includegraphics[height=5cm]{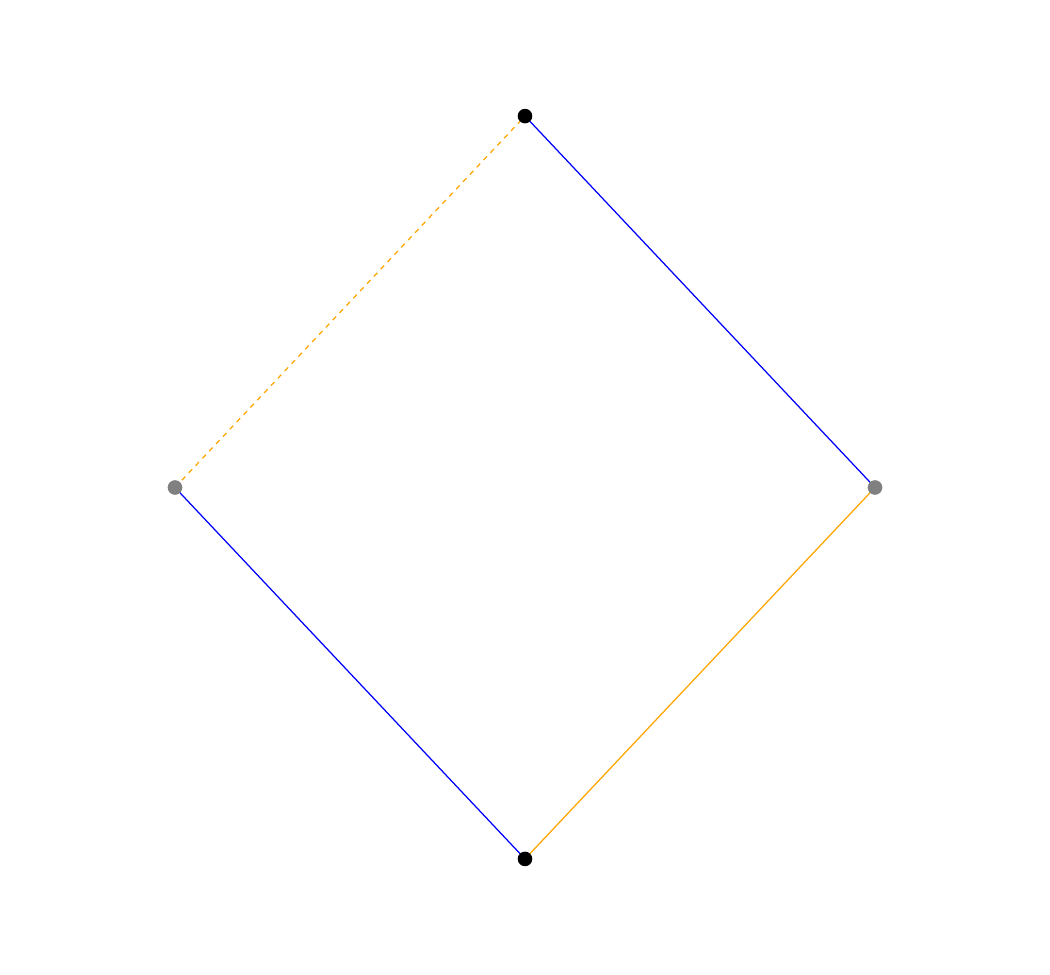}
    \caption{Valise (left) and Koszul (right) Adinkras for $N=2$.}
    \label{n2_hd}
\end{figure}
\begin{example}[Reprise: $N=2$ Valise \& Koszul Adinkras] It is possible to reinterpret the $N =2$ (irreducible) valise and the $N=2$ Koszul Adinkra introduced in section \ref{representations} considering the above discussion. The multiplets related to the irreducible valise and Koszul Adinkras in figure \ref{n2_hd} sits inside an exact sequence
   \be
        \label{Ug_sequence}
        \begin{tikzcd}
            0 \arrow[r] & \kk \arrow[r] & \mathcal{V}(\pzA_{V}) \arrow[r] &
            \mathcal{V}(\pzA_{K}) \arrow[r] & 0,
        \end{tikzcd}
   \ee
where we have denoted the valise and the Koszul Adinkras by $\pzA_V$ and $\pzA_{K}$ respectively. 
Recall that we described the multiplet $\mathcal{V} (\mathpzc{A}_V)$ in terms of two copies of the free superfield with the constrained coefficients
    \be
        \begin{split}
        X(t, \theta_1, \theta_2) &= x(t) +  \theta_1 \psi (t)- \theta_2 \chi(t) + \theta_2\theta_1 \dot{y} (t) \\
        Y(t, \theta_1, \theta_2) &= y(t) + \theta_1 \chi(t)  + \theta_2 \psi (t)  - \theta_2\theta_1\dot{x} (t).
        \end{split}
    \ee
There are two zero-modes, given by $x(0)$ and $y(0)$. Vertex raising allows one to cancel either of these.
For example, removing the constant term $y(0)$ from
$y$ has the effect that $\dot{y}$ already captures all the information -- in particular, the
superfield $X$ has now enough information to describe the quotient $\mathcal{V}(\pzA_V) / \kk$. Indeed,
as it is witnessed by the short exact sequence, the superfield $X$ becomes a free superfield and as such it is
 related to the Koszul Adinkra $\mathpzc{A}_K$.
\eqref{Ug_sequence}.

Finally, let us stress that another way to read sequence \eqref{Ug_sequence} is that we obtain the multiplet $\mathcal{V}(\pzA_V)$ as
an extension of $\mathcal{V}(\pzA_{K})$ by the trivial $U_{\kk}(\ft)$-module $\kk$. We will come back
to this point of view later in this manuscript.
\end{example}

\subsection{Cohomology modules: two \texorpdfstring{$N = 6$}{N=6} multiplets of rank \texorpdfstring{$(2,8,6)$}{(2,8,6)}}
\label{sec:counterexample}

In the previous sections, we have extensively employed the rank sequence $(n_1, \ldots, n_\ell)$ as a
useful bookkeeping device to keep track of different Adinkras. In this section, we will show that
attention must be paid, as the rank sequence related to $C^\bullet (\mathpzc{A})$ is not a complete invariant for
a multiplet $\mathcal{V} (\mathpzc{A})$ coming from $\pzA$\footnote{A priori, these dimensions or ranks are
not even a well-defined invariant in the derived
category $\mbox{\sffamily{D}}^{\flat}(R\mbox{-{\sffamily{Mod}}})$. However we can compute the
\enquote{rank} $n_i$ of the complex $C^{\bullet}(\pzA)$ in degree $i$ for some Adinkra $\pzA$
as the dimension of the $k$-vector space $\operatorname{Tor}_R^i(k, C^{\bullet}(\pzA))$ which is a
well-defined functor from $\mbox{\sffamily{D}}^{\flat}(R\mbox{-{\sffamily{Mod}}}) \rightarrow
\kk{\mbox{-\sffamily{Mod}}}$.}. In other words, this means that the multiplet cannot be \emph{not} uniquely defined by the
number of component fields in all available engineering dimensions within the multiplet. We
illustrate this phenomenon with an $N=6$ example (these Adinkras appeared first in \cite{Doran06counter}).
\begin{example}[Rank Sequence and Multiplets: an $N=6$ example]
Consider two Adinkras $\mathpzc{A}_1$ and $\mathpzc{A}_2$ both characterized by complexes of the form
\bc
\begin{tikzcd}
    0 \arrow[r] & R^6 \arrow[r] & R^8 \arrow[r] & R^2 \arrow[r] & 0.
\end{tikzcd}
\ec
\begin{figure}[ht!]
    \centering
    \includegraphics[width=9cm]{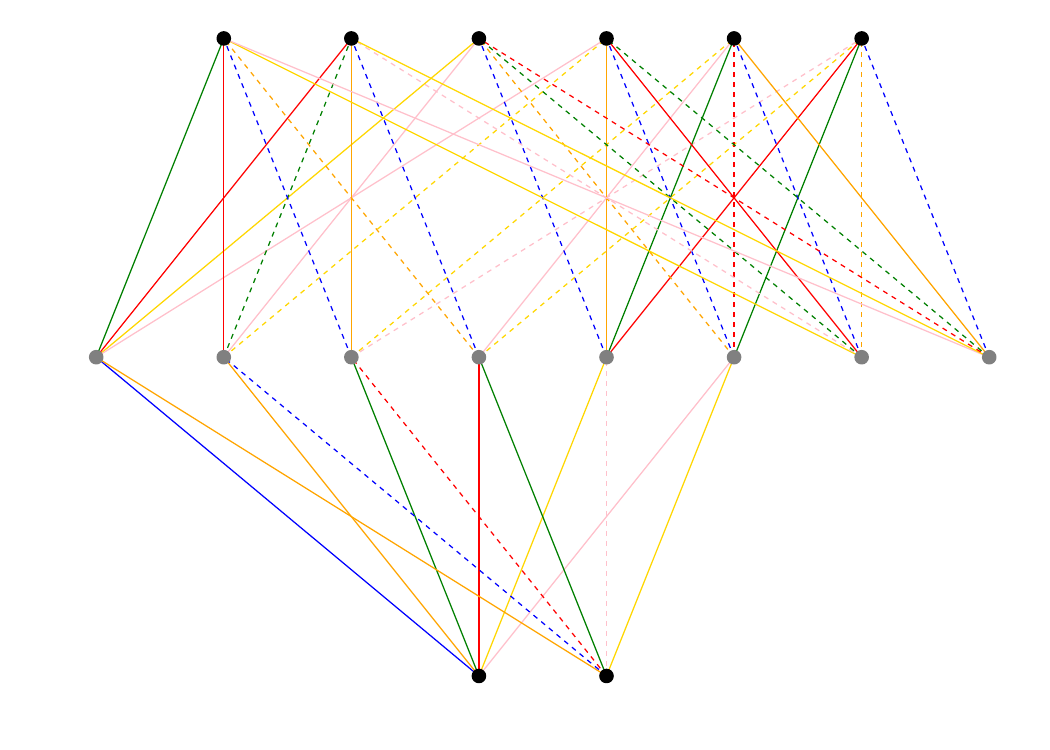}
    \caption{An $N = 6$ Adinkras with rank sequence $(6,8,2)$.}
    \label{n6_6_8_2_S}
\end{figure}

\noindent The first Adinkra $\mathpzc{A}_1$ shown in figure \ref{n6_6_8_2_S} has differentials
\begin{equation}
    \begin{split}
        d^0_{\mathpzc{A}_1} &=
    \bmat
\textcolor{blue}{\lm_1} & \textcolor{orange}{\lm_2} &
        \textcolor{darkgreen}{\lm_3} & \textcolor{red}{\lm_4} & \textcolor{gold}{\lm_5} &
        \textcolor{pink}{\lm_6} & \textcolor{black}{0} & \textcolor{black}{0} \\
\textcolor{orange}{\lm_2} & \textcolor{blue}{-\lm_1} & \textcolor{red}{-\lm_4} &
        \textcolor{darkgreen}{\lm_3} & \textcolor{pink}{-\lm_6} &
 \textcolor{gold}{\lm_5} & \textcolor{black}{0} &
        \textcolor{black}{0} \\
\emat, \\
        d^1_{\mathpzc{A}_1} &= \bmat
\textcolor{darkgreen}{\lm_3} & \textcolor{red}{\lm_4} &
        \textcolor{gold}{\lm_5} & \textcolor{pink}{\lm_6} & \textcolor{black}{0} &
        \textcolor{black}{0} \\
\textcolor{red}{\lm_4} & \textcolor{darkgreen}{-\lm_3} &
        \textcolor{pink}{\lm_6} & \textcolor{gold}{-\lm_5} & \textcolor{black}{0} &
        \textcolor{black}{0} \\
\textcolor{blue}{-\lm_1} & \textcolor{orange}{\lm_2} &
        \textcolor{black}{0} & \textcolor{black}{0} & \textcolor{gold}{-\lm_5} &
        \textcolor{pink}{-\lm_6} \\
\textcolor{orange}{-\lm_2} & \textcolor{blue}{-\lm_1} &
        \textcolor{black}{0} & \textcolor{black}{0} & \textcolor{pink}{\lm_6} &
        \textcolor{gold}{-\lm_5} \\
\textcolor{black}{0} & \textcolor{black}{0} &
        \textcolor{blue}{-\lm_1} & \textcolor{orange}{\lm_2} & \textcolor{darkgreen}{\lm_3} &
        \textcolor{red}{\lm_4} \\
\textcolor{black}{0} & \textcolor{black}{0} &
        \textcolor{orange}{-\lm_2} & \textcolor{blue}{-\lm_1} & \textcolor{red}{-\lm_4} &
        \textcolor{darkgreen}{\lm_3} \\
\textcolor{gold}{\lm_5} & \textcolor{pink}{-\lm_6} &
        \textcolor{darkgreen}{-\lm_3} & \textcolor{red}{\lm_4} & \textcolor{blue}{-\lm_1} &
        \textcolor{orange}{-\lm_2} \\
\textcolor{pink}{\lm_6} & \textcolor{gold}{\lm_5} &
        \textcolor{red}{-\lm_4} & \textcolor{darkgreen}{-\lm_3} & \textcolor{orange}{\lm_2} &
\textcolor{blue}{-\lm_1} \\
\emat. \\
    \end{split}
\end{equation}
The related complex $C^\bullet (\mathpzc{A}_1)$ has $d$-cohomology given by
\begin{align}
    & H^0 (C^{\bullet} (\pzA_1)) = \coker \bmat
        \lm_5 & \lm_6 & \lm_3 & \lm_4 & \lm_1 & -\lm_2 \\
        -\lm_6 & \lm_5 & -\lm_4 &\lm_3 &\lm_2 & \lm_1 \\
    \emat, \\
   & H^1 (C^{\bullet} (\pzA_1))= R/I \oplus R/I, \\
    & H^2 (C^{\bullet} (\pzA_1))= 0.
\end{align}

\begin{figure}[ht!]
    \centering
    \includegraphics[width=9cm]{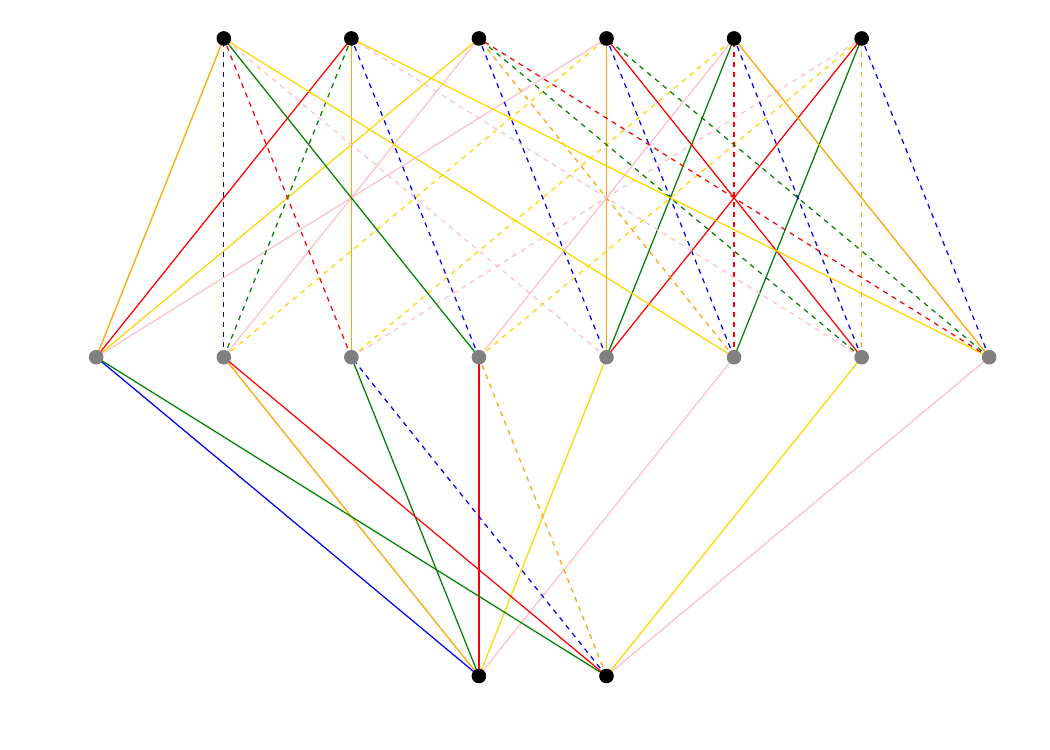}
    \caption{Another $N=6$ Adinkras with rank sequence $(6,8,2)$.}
    \label{n6_6_8_2_T}
\end{figure}

The second Adinkra $\mathpzc{A}_2$ shown in figure \ref{n6_6_8_2_T} has differentials
\begin{equation}
    \begin{split}
        d^0_{\mathpzc{A}_2} &=\bmat
\textcolor{blue}{\lm_1} & \textcolor{orange}{\lm_2} &
        \textcolor{darkgreen}{\lm_3} & \textcolor{red}{\lm_4} & \textcolor{gold}{\lm_5} &
        \textcolor{pink}{\lm_6} & \textcolor{black}{0} & \textcolor{black}{0} \\
\textcolor{darkgreen}{\lm_3} & \textcolor{red}{\lm_4} &
        \textcolor{blue}{-\lm_1} & \textcolor{orange}{-\lm_2} & \textcolor{black}{0} &
        \textcolor{black}{0} & \textcolor{gold}{\lm_5} & \textcolor{pink}{\lm_6} \\
    \emat, \\
    d^1_{\mathpzc{A}_2} &= \bmat
        \textcolor{orange}{\lm_2} & \textcolor{red}{\lm_4} & \textcolor{gold}{\lm_5} &
        \textcolor{pink}{\lm_6} & \textcolor{black}{0} & \textcolor{black}{0} \\
\textcolor{blue}{-\lm_1} & \textcolor{darkgreen}{-\lm_3} &
        \textcolor{pink}{\lm_6} & \textcolor{gold}{-\lm_5} & \textcolor{black}{0} &
        \textcolor{black}{0} \\
\textcolor{red}{-\lm_4} & \textcolor{orange}{\lm_2} &
        \textcolor{black}{0} & \textcolor{black}{0} & \textcolor{gold}{-\lm_5} &
        \textcolor{pink}{-\lm_6} \\
\textcolor{darkgreen}{\lm_3} & \textcolor{blue}{-\lm_1} &
        \textcolor{black}{0} & \textcolor{black}{0} & \textcolor{pink}{\lm_6} &
        \textcolor{gold}{-\lm_5} \\
\textcolor{pink}{-\lm_6} & \textcolor{black}{0} &
        \textcolor{blue}{-\lm_1} & \textcolor{orange}{\lm_2} & \textcolor{darkgreen}{\lm_3} &
        \textcolor{red}{\lm_4} \\
\textcolor{gold}{\lm_5} & \textcolor{black}{0} &
        \textcolor{orange}{-\lm_2} & \textcolor{blue}{-\lm_1} & \textcolor{red}{-\lm_4} &
        \textcolor{darkgreen}{\lm_3} \\
\textcolor{black}{0} & \textcolor{pink}{-\lm_6} & \textcolor{darkgreen}{-\lm_3} &
        \textcolor{red}{\lm_4} & \textcolor{blue}{-\lm_1} & \textcolor{orange}{-\lm_2} \\
\textcolor{black}{0} & \textcolor{gold}{\lm_5} & \textcolor{red}{-\lm_4} &
        \textcolor{darkgreen}{-\lm_3} & \textcolor{orange}{\lm_2} & \textcolor{blue}{-\lm_1}. \\
\emat
\end{split}
\end{equation}
This results in the following cohomology groups
\begin{equation}
    \begin{split}
        H^0 (C^{\bullet} (\pzA_2) )&= \coker \bmat
            \lm_6 &\lm_5 & 0 & 0 & \lm_2 & \lm_1 & -\lm_4 & -\lm_3 \\
            0 & 0 & \lm_6 & \lm_5 & \lm_4 & \lm_3 & \lm_2 & \lm_1 \\
        \emat, \\
        H^1(C^{\bullet} (\pzA_2)) &= \coker \bmat
            \lm_2 & \lm_1 & -\lm_4 & -\lm_3 & \lm_6 & \lm_5 & 0 & 0
            \\
            \lm_4 & \lm_3 & \lm_2 & \lm_1 & 0 & 0 & -\lm_6 & \lm_5\\
            -\lm_5 & \lm_6 & 0 & 0 & \lm_1 & \lm_2 & -\lm_3 & \lm_4\\
            \lm_6 & -\lm_5 & 0 & 0 & -\lm_2 & -\lm_1 & \lm_4 & \lm_3
            \\
            0 & 0 &-\lm_5 & \lm_6 & \lm_3 & -\lm_4 & \lm_1 & \lm_2 \\
            0 & 0 & -\lm_6 & -\lm_5 & -\lm_4 & -\lm_3 & -\lm_2 &
            \lm_1 \\
        \emat, \\
        H^2(C^{\bullet} (\pzA_2)) &= 0.
    \end{split}
\end{equation}
It is apparent that the zeroth and the first cohomology groups of these complexes are not isomorphic -- this can be most easily seen from the
dimensions of their respective minimal free resolutions. To our knowledge, these (derived or cohomological) invariants associated
to Adinkraic multiplets $\mathcal{V} (\mathpzc{A})$ -- the cohomology modules $H^{-n} ({C}^\bullet (\mathpzc{A}))$ for $n\geq0$ -- have not yet been considered:
it would be interesting to give them a physical interpretation, beyond their obvious bookkeeping service to distinguish multiplets in a
homological fashion.
\end{example}

\subsection{The $N=7$ multiplet of rank \texorpdfstring{$(1,7,7,1)$}{(1,7,7,1)}}
\label{1_7_7_1}

In this section, we will study realizations of Adinkra multiplets $\mathcal{V}(\pzA)$ as constrained superfields.
From the point of view of the associated complexes $C^\bullet (\mathpzc{A})$, this amounts to say that there
exists an embedding of $C^\bullet (\mathpzc{A})$ inside the Koszul complex $C^\bullet (\mathpzc{A}_K)$, i.e.
an injective morphism of complexes $C^\bullet (\mathpzc{A}) \hookrightarrow C^\bullet (\mathpzc{A}_K)$.
As usual, we will start with a relevant example in $N=8$, which is of interest on its own.

\begin{figure}[ht!]
    \centering
    \includegraphics[height=5cm]{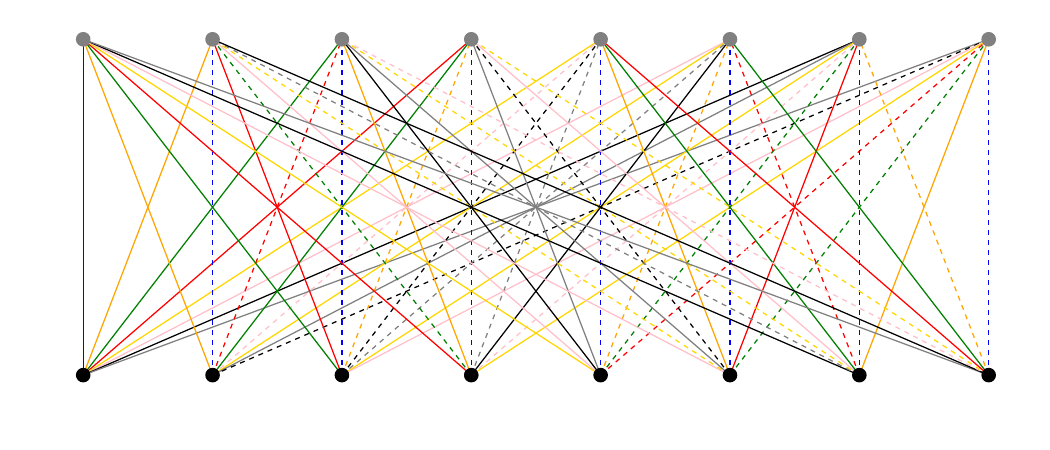}
    \caption{The valise $(8,8)$ Adinkra of $N = 8$.}
    \label{n8_8_8}
\end{figure}

\begin{example}[The rank (1,7,7,1) multiplet.] The irreducible $N= 8$ valise shown in figure \ref{n8_8_8} defines a complex
\be
    \begin{tikzcd}
      0 \arrow[r] &  R^8 \arrow[r, "d"] & R^8 \arrow[r] & 0,
    \end{tikzcd}
\ee
with differential
\be
d =  \bmat \textcolor{blue}{\lm_1} & \textcolor{orange}{\lm_2} & \textcolor{darkgreen}{\lm_3} &
\textcolor{red}{\lm_4} & \textcolor{gold}{\lm_5} & \textcolor{pink}{\lm_6} &
\textcolor{black}{\lm_7} & \textcolor{gray}{\lm_8} \\
\textcolor{orange}{\lm_2} & \textcolor{blue}{-\lm_1} & \textcolor{red}{-\lm_4} &
\textcolor{darkgreen}{\lm_3} & \textcolor{pink}{-\lm_6} & \textcolor{gold}{\lm_5} &
\textcolor{gray}{\lm_8} & \textcolor{black}{-\lm_7} \\
\textcolor{darkgreen}{\lm_3} & \textcolor{red}{\lm_4} & \textcolor{blue}{-\lm_1} &
\textcolor{orange}{-\lm_2} & \textcolor{black}{-\lm_7} &
\textcolor{gray}{-\lm_8} & \textcolor{gold}{\lm_5} & \textcolor{pink}{\lm_6} \\
\textcolor{red}{\lm_4} & \textcolor{darkgreen}{-\lm_3} & \textcolor{orange}{\lm_2} &
\textcolor{blue}{-\lm_1} & \textcolor{gray}{-\lm_8} & \textcolor{black}{\lm_7} &
\textcolor{pink}{-\lm_6} & \textcolor{gold}{\lm_5} \\
\textcolor{gold}{\lm_5} & \textcolor{pink}{\lm_6} & \textcolor{black}{\lm_7} &
\textcolor{gray}{\lm_8} & \textcolor{blue}{-\lm_1} & \textcolor{orange}{-\lm_2} &
\textcolor{darkgreen}{-\lm_3} & \textcolor{red}{-\lm_4} \\
\textcolor{pink}{\lm_6} & \textcolor{gold}{-\lm_5} & \textcolor{gray}{\lm_8} &
\textcolor{black}{-\lm_7} & \textcolor{orange}{\lm_2} & \textcolor{blue}{-\lm_1} &
\textcolor{red}{\lm_4} & \textcolor{darkgreen}{-\lm_3} \\
\textcolor{black}{\lm_7} & \textcolor{gray}{-\lm_8} & \textcolor{gold}{-\lm_5} &
\textcolor{pink}{\lm_6} & \textcolor{darkgreen}{\lm_3} & \textcolor{red}{-\lm_4} &
\textcolor{blue}{-\lm_1} & \textcolor{orange}{\lm_2} \\
\textcolor{gray}{\lm_8} & \textcolor{black}{\lm_7} & \textcolor{pink}{-\lm_6} &
\textcolor{gold}{-\lm_5} & \textcolor{red}{\lm_4} & \textcolor{darkgreen}{\lm_3} &
\textcolor{orange}{-\lm_2} & \textcolor{blue}{-\lm_1} \\
\emat
\ee

\noindent The $N= 7$ irreducible valise Adinkra can be obtained from the above $N = 8$ valise by forgetting
the gray lines in the picture or equivalently setting $\lambda_8 = 0$. Via
    consecutive vertex raising of seven degree 0 vertices followed by the
    raising of the only possible degree 1 vertex,
\be
\begin{tikzcd}
(8,8)  \arrow[r, rightsquigarrow, "\mathpzc{Raise}"] & (1,8,7)  \arrow[r,
    rightsquigarrow, "\mathpzc{Raise}"] & (1,7,7,1),
\end{tikzcd}
\ee
we obtain the $N=7$ Adinkra characterized by the rank sequence $(1,7,7,1)$, we call it $\mathpzc{A}_{1771}, $ as shown in figure \ref{n7_1_7_7_1}.
\begin{figure}[t]
    \centering
    \includegraphics[width=11cm]{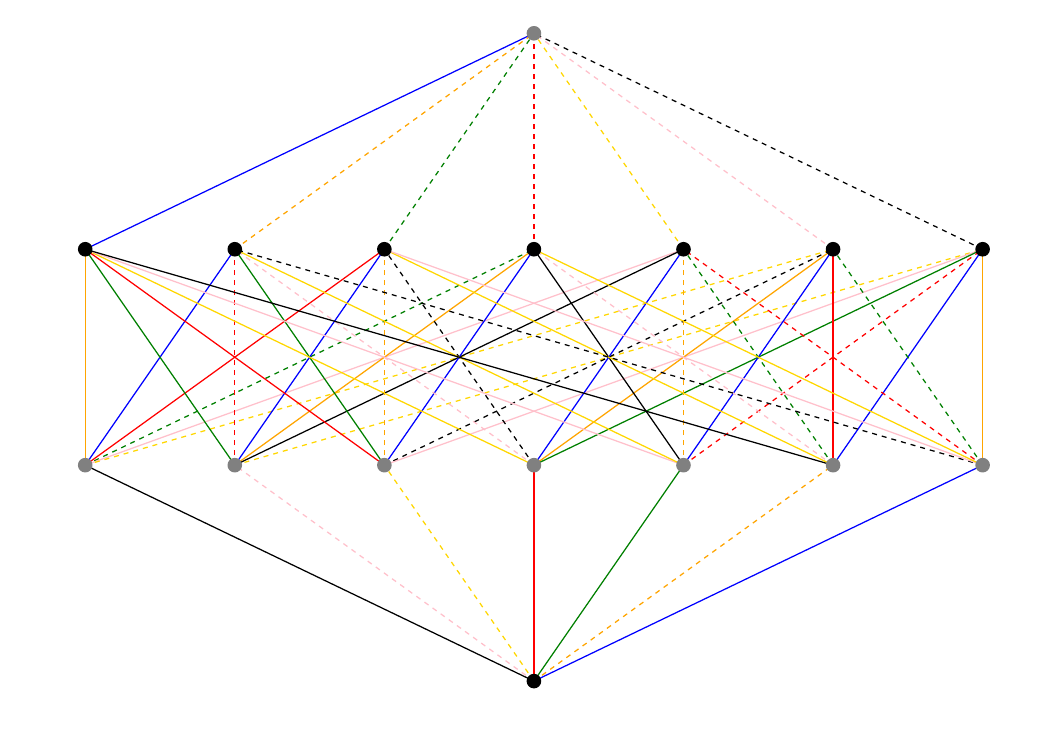}
    \caption{The $N=7$ Adinkra with rank sequence $(1,7,7,1)$.}
    \label{n7_1_7_7_1}
\end{figure}
The associated complex reads
\bc
\begin{tikzcd}
   C^\bullet (\mathpzc{A}_{1771}) \equiv \big ( 0 \arrow[r] & R^1 \arrow[r, "d_2"] & R^7 \arrow[r, "d_1"] & R^7 \arrow[r,
    "d_0"] & R^1 \arrow[r] & 0 \big )
\end{tikzcd}
\ec
with differentials given by 
\begin{equation}
    \begin{split}
    d_2 &= \bmat
        \textcolor{blue}{\lm_1} \\
        \textcolor{orange}{-\lm_2} \\
        \textcolor{darkgreen}{-\lm_3} \\
        \textcolor{red}{-\lm_4} \\
        \textcolor{gold}{-\lm_5} \\
        \textcolor{pink}{-\lm_6} \\
        \textcolor{black}{-\lm_7} \\
    \emat, \\
    d_1 &= \bmat
        \textcolor{orange}{\lm_2} & \textcolor{blue}{\lm_1} & \textcolor{red}{\lm_4} & \textcolor{darkgreen}{-\lm_3} & \textcolor{pink}{\lm_6} & \textcolor{gold}{-\lm_5} & \textcolor{black}{0} \\
        \textcolor{darkgreen}{\lm_3} & \textcolor{red}{-\lm_4} & \textcolor{blue}{\lm_1} & \textcolor{orange}{\lm_2} & \textcolor{black}{\lm_7} & \textcolor{black}{0} & \textcolor{gold}{-\lm_5} \\
        \textcolor{red}{\lm_4} & \textcolor{darkgreen}{\lm_3} & \textcolor{orange}{-\lm_2} & \textcolor{blue}{\lm_1} & \textcolor{black}{0} & \textcolor{black}{-\lm_7} & \textcolor{pink}{\lm_6} \\
        \textcolor{gold}{\lm_5} & \textcolor{pink}{-\lm_6} & \textcolor{black}{-\lm_7} & \textcolor{black}{0} & \textcolor{blue}{\lm_1} & \textcolor{orange}{\lm_2} & \textcolor{darkgreen}{\lm_3} \\
        \textcolor{pink}{\lm_6} & \textcolor{gold}{\lm_5} & \textcolor{black}{0} & \textcolor{black}{\lm_7} & \textcolor{orange}{-\lm_2} & \textcolor{blue}{\lm_1} & \textcolor{red}{-\lm_4} \\
        \textcolor{black}{\lm_7} & \textcolor{black}{0} & \textcolor{gold}{\lm_5} & \textcolor{pink}{-\lm_6} & \textcolor{darkgreen}{-\lm_3} & \textcolor{red}{\lm_4} & \textcolor{blue}{\lm_1} \\
        \textcolor{black}{0} & \textcolor{black}{-\lm_7} & \textcolor{pink}{\lm_6} & \textcolor{gold}{\lm_5} & \textcolor{red}{-\lm_4} & \textcolor{darkgreen}{-\lm_3} & \textcolor{orange}{\lm_2} \\
    \emat, \\
    d_0 &= \bmat
        \textcolor{black}{\lm_7} & \textcolor{pink}{-\lm_6} & \textcolor{gold}{-\lm_5} & \textcolor{red}{\lm_4} & \textcolor{darkgreen}{\lm_3} & \textcolor{orange}{-\lm_2} & \textcolor{blue}{\lm_1} \\
    \emat. \\
    \end{split}
.\end{equation}

Remarkably, the complex $C^\bullet (\mathpzc{A}_{1771})$ has non-trivial cohomology. In particular, $H^{-1} (C^{\bullet} (\mathpzc{A}_{1771}) )$ is a non-trivial module, with linear free resolution given by
\be
    \label{14_34_res}
    \begin{tikzcd}
    \mathcal{R}^\bullet_{H^{-1}} \equiv \big (    0 \arrow[r] & R^1 \arrow[r] & R^7 \arrow[r] & R^{21} \arrow[r] & R^{35}
        \arrow[r] & R^{34} \arrow[r] & R^{14} \arrow[r] & 0 \big ).
    \end{tikzcd}
\ee
Although this complex cannot come from an Adinkra -- indeed the ranks of the free modules do not add up to a power of two --, one can still present it as a colored,
dashed graph, violating only $N$-regularity, i.e.\ a vertex can be attached to
more than $N$ edges. This graphical representation of the minimal free resolution is shown in figure \ref{n7_14_34}.

\begin{figure}[ht!]
    \centering
    \includegraphics[width=8.5cm]{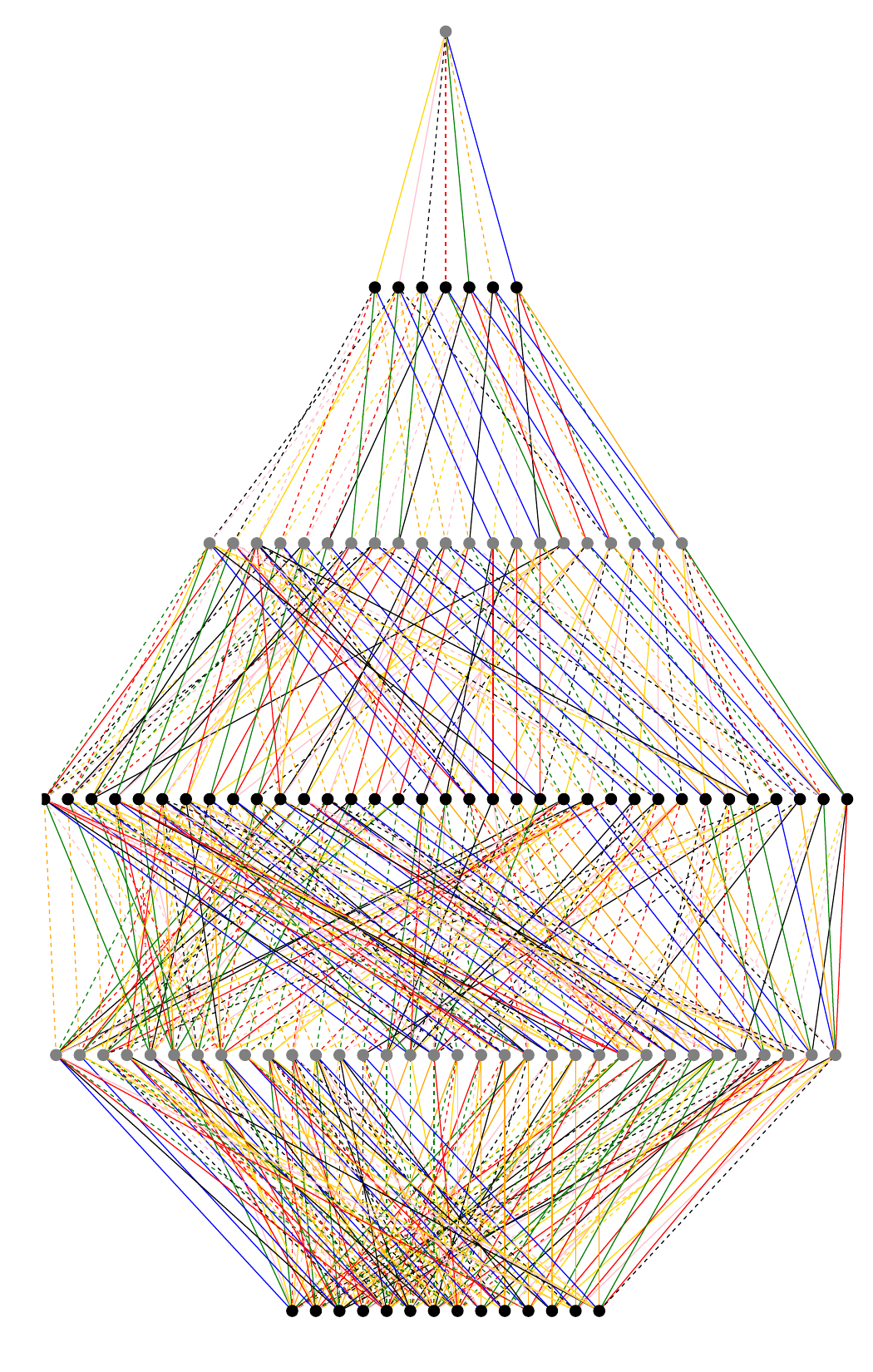}
    \caption{Graphical representation of the linear resolution of $H^{-1} (C^\bullet (\mathpzc{A}_{1771}))$
    }
    \label{n7_14_34}
\end{figure}

On the other hand, consider the $N=7$ Koszul Adinkra $\mathpzc{A}_K$ whose related complex is the $N=7$
Koszul complex $C^\bullet (\mathpzc{A}_K) $
\bc
    \label{n7_koszul_res}
    \begin{tikzcd}
        0 \arrow[r] & R^1 \arrow[r, "d'_7"] & R^7 \arrow[r, "d'_6"] & R^{21}
        \arrow[r, "d'_5"] & R^{35} \arrow[r, "d'_4"] & R^{35} \arrow[r, "d'_3"]
        & R^{21} \arrow[r, "d'_2"] & R^7 \arrow[r, "d'_1"] & R^1 \arrow[r] & 0
    \end{tikzcd}
\ec
whose rank sequence is given by $(1,7,21,35,35,21,7,1)$. It could be noted that the rank sequences of the Adinkra $\mathpzc{A}_{1771}$
and of the resolution $\mathcal{R}^\bullet_{H^{-1}}$ of $H^{-1} (C^{\bullet} (\mathpzc{A}_{1771}))$ add up exactly to that of the $N=7$ Koszul complex $C^\bullet (\mathpzc{A}_K)$:
\be
(0, 0, 0, 0, 1,7,7,1) + (1, 7, 21, 35, 34, 14, 0, 0 ) = (1,7,21,35,35,21,7,1).
\ee
This suggests that there exists a distinguished triangle in $\mbox{{\sffamily{D}}}^\flat (R/I)$ given by
\be \label{triangKos}
C^\bullet (\mathpzc{A}_{1771}) \longrightarrow C^\bullet (\mathpzc{A}_K) \longrightarrow \mathcal{R}^\bullet_{H^{-1}}
\ee
that realizes the complex $C^\bullet (\mathpzc{A}_{1771})$ as a subcomplex of the Koszul complex $C^\bullet (\mathpzc{A}_K)$. In other words,
we have an embedding of complexes $\iota : C^\bullet (\mathpzc{A}_{1771}) \hookrightarrow C^\bullet (\mathpzc{A}_K)$, realized by a commutative
diagram of the form\footnote{It is not too hard to explicitly realize the above embedding by choosing a basis
for $\kk^7$ in the Koszul complex $C^\bullet (\mathpzc{A}_K) = R \otimes_\kk \bigwedge^\bullet \kk^7$.}
\bc
\begin{tikzcd}
    &[-0.2em]   &[-0.2em]  &[-0.2em]  &[-0.2em]  0 \arrow[d] \arrow[r] &[-0.2em] R^1 \arrow[r,"d_3"] \arrow[d, "i_3"] &
    R^7 \arrow[r, "d_2"] \arrow[d, "i_2"] &[-0.2em]R^7 \arrow[r, "d_1"] \arrow[d,
    "i_1"] &[-0.2em] R^1 \arrow[r] \arrow[d, "i_0"] &[-0.2em] 0 \arrow[d] \\
    0 \arrow[r] & R^1 \arrow[r, "d'_7"] & R^7 \arrow[r, "d'_6"] & R^{21}
    \arrow[r, "d'_5"] & R^{35} \arrow[r, "d'_4"] & R^{35} \arrow[r, "d'_3"] &
    R^{21} \arrow[r, "d'_2"] & R^7 \arrow[r, "d'_1"] & R^1 \arrow[r] & 0
\end{tikzcd}
\ec

Taking the cone of the embedding $\operatorname{Cn}^\bullet (i)$ and looking at the long exact homology sequence one finds
\be
\begin{tikzcd}
        & 0 \arrow[r] \arrow[d, phantom, ""{coordinate, name=Z}]& H^{-2}(\operatorname{Cn}^\bullet (i))
\arrow[dll, rounded corners, to path={ -- ([xshift=2ex]\tikztostart.east) |- (Z) [near
end]\tikztonodes -| ([xshift=-2ex]\tikztotarget.west) -- (\tikztotarget)}] & \phantom{0} \\
        H^{-1}(C^\bullet (\mathpzc{A}_{1771})) \arrow[r] & H^{-1}(C^\bullet (\mathpzc{A}_K)) \cong 0 \arrow[r]
        \arrow[d, phantom, ""{coordinate, name=Z}]&
        H^{-1}(\operatorname{Cn}^\bullet (i))
        \arrow[dll, rounded corners, to path={ -- ([xshift=2ex]\tikztostart.east)
|- (Z) [near end]\tikztonodes
-| ([xshift=-2ex]\tikztotarget.west)
-- (\tikztotarget)}]&  \phantom{0} \\
        H^0 (C^\bullet (\mathpzc{A}_{1771})) \cong \kk \arrow[r, "id"] & H^0 (C^\bullet (\mathpzc{A}_K)) \cong \kk \arrow[r] & H^0
        (\operatorname{Cn}^\bullet(i)) \arrow[r] & 0. \\
\end{tikzcd}
\ee
That implies that $\operatorname{Cn}^\bullet (i)$ has non-trivial cohomology only in degree $-2$, and as such it is
quasi-isomorphic to a free resolution of $H^{-1}(C^\bullet (\mathpzc{A}_{1771}))$, that is we have a quasi-isomorphism
$\mathcal{R}^\bullet_{H^{-1}} \cong \operatorname{Cn}^\bullet (i)$, as suggested by equation \eqref{triangKos}.

As seen above, the complex $\mathcal{R}^\bullet_{H^{-1}}$ cannot come from an Adinkra. However, it is Koszul-dual to a cokernel of
$U_\kk (\ft)$-modules, thus providing an example of $N=7$ multiplet which does
not come from an Adinkra. The moduli space of exact triangles in
\eqref{triangKos} coincides with the moduli space of Cayley bundles which is $SO
(7) / G_2$. We
will see more examples of this phenomenology later in this paper.
\end{example}



\subsection{Embeddings in the free superfield and codes}
\label{sec:embeddings}
We now aim to generalize the previous example to a whole class of Adinrkas. Namely, we consider Adinkras
which have only a single vertex such that all attached edges go up (the rank (1,7,7,1) Adinkra considered above is an example).
In the terminology of \cite{Doran08}, the Adinkra \enquote{hangs upside down on
a single vertex}. We call such an Adinkra fully extended. In this case, it is easy to see that
the associated complex $C^\bullet (\mathpzc{A}) $ is such that $H^0 ({C}^\bullet (\mathpzc{A})) \cong \kk,$ and adapting the
proof of theorem 7.7 in \cite{Eisenbud80}, we prove that such complexes can
be embedded in a Koszul complex $C^\bullet (\pzA_K)$.
On our way to prove this result, we will show that the
degree $0$ part of the kernel of the differential of the complex $C^\bullet (\pzA)$
is generated by elements associated to vertices $v$ with $dv = 0$.
This result is interesting on its own since it says that the number of zero modes
is an invariant of the topology of an Adinkra.
\begin{lemma}
    \label{zero_modes_lemma}
    Let $\pzA$ be an $N$-Adinkra. Let $w \in C^{-i}(\pzA)$ of degree
    $0$ in the $\lm$-grading induced by $R = \kk[\lambda_1, \ldots, \lambda_N]$ on $C^{-i} (\pzA)$. Then
  \be
  w \in \mbox{span}_\kk \left \{ v \in C^{-i} (\pzA) : d v = 0 \right \}.
  \ee
\end{lemma}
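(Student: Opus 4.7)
The paragraph preceding the lemma makes clear that the intended hypothesis is that $w$ lies in the $\lambda$-degree~$0$ part of $\ker d \cap C^{-i}(\pzA)$; I proceed under that reading, since the literal statement is already defeated by any single vertex whose every incident edge descends. My plan is to expand $w$ in the vertex basis, apply $d$ explicitly via \eqref{diffA}, and read off coefficients in a natural $\kk$-basis of the $\lambda$-degree~$1$ part of $C^{-i+1}(\pzA)$, letting $N$-color regularity do the heavy lifting.

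Concretely, I would first write $w=\sum_{\pzh(v)=i}c_v\, v$ with $c_v\in\kk$, which is possible because the $\lambda$-degree~$0$ part of the free $R$-module $C^{-i}(\pzA)$ is exactly the $\kk$-span of the basis vertices. Then
\[
dw \;=\; \sum_{v}c_v\sum_{e\in\pzS(v)}\pzp(e)\,\lambda_{\pzc(e)}\,\pzt(e),
\]
and I would collect terms according to the monomials $\lambda_j v'$ with $1\leq j\leq N$ and $\pzh(v')=i-1$, which form a $\kk$-linearly independent set in $C^{-i+1}(\pzA)$. By $N$-color regularity (condition~$3$ of Definition~\ref{N_Adinkra}), $v'$ is incident to exactly one edge $e_{j,v'}$ of color $j$; by unimodularity (condition~$2$), the other endpoint of $e_{j,v'}$ has degree $i$ or $i-2$, and only in the former case does $e_{j,v'}$ contribute to $dw$, producing the single term $\pzp(e_{j,v'})\,c_{\pzs(e_{j,v'})}\,\lambda_j v'$.

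The hypothesis $dw=0$ thus forces $c_v=0$ for every $v$ that is the source of some edge descending to degree $i-1$, i.e.\ for every $v$ with $\pzS(v)\neq\emptyset$, equivalently with $dv\neq 0$. The surviving coefficients are therefore supported on vertices $v$ with $dv=0$, giving $w\in\mbox{span}_{\kk}\{v\in C^{-i}(\pzA):dv=0\}$ as desired. The argument is combinatorial and relies only on $N$-color regularity and unimodularity of $\pzh$; no feature of the dashing (condition~$4$) or of bipartiteness enters, so I do not anticipate a subtle obstruction beyond interpreting the hypothesis correctly.
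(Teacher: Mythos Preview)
Your proof is correct and follows essentially the same route as the paper: expand $w$ in the vertex basis, apply $d$, and use $N$-color regularity to see that the coefficient of each $\lambda_j v'$ in $dw$ involves at most one $c_v$, so $dw=0$ forces $c_v=0$ whenever $dv\neq 0$. Your reading of the implicit hypothesis $dw=0$ is also correct---the paper's own proof invokes ``the condition $dw=0$'' in its first line.
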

\begin{proof}
    Let $A$ be the matrix representing the differential $d : C^{-i}(\pzA)
    \rightarrow C^{-i + 1}(\pzA)$ in the basis defined by the
    vertices of $\pzA$. Then, for $w = \sum_{k} c_k v^k$, where the $v^k$ are the distinguished basis of $C^{-i} (\pzA)$,
    we can write the condition $d w = 0$ as
    \begin{equation}
        (dw)_j = \sum_{k} A_{jk} c_k = 0,
    \end{equation}
    where the index $j$ refers to
    the projection to the distinguished basis of $C^{-i + 1}(\pzA)$. Now, since
    there is at most one edge of each color with target $\pzs$ the vertex
    indexed by $j$, hence for fixed
    $k$, the coefficient $A_{jk}$ has each $\lm_l$ at most once as the
    coefficient or it is zero. This implies that we need to have $A_{jk} c_k = 0$ for all
    $j,k$. Now, if $c_k \neq 0$ we need that $A_{jk} = 0$ for all $j$ and hence
    $d v_k = 0$.
\end{proof}
With lemma \ref{zero_modes_lemma} in our inventory, we can now prove the
following embedding result.

\begin{proposition}
    \label{embedding_lemma}
    Let $\pzA$ be an $N$-Adinkra as above, with a unique vertex $v$ with minimal height such that the associated
    element in $C^{\bullet}(\pzA)$ satisfies $d v = 0.$ \\Then there is an embedding
    $C^{\bullet}(\pzA) \hookrightarrow C^{\bullet}(\pzA_K)$ where $\pzA_K$ is
    the Koszul $N$-Adinkra.
    In particular the multiplet $\mathcal{V}(\mathpzc{A})$ can be written as a constrained superfield.
\end{proposition}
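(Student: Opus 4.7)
The plan is to construct an $R$-linear, bigraded chain map $\iota_\bullet \colon C^\bullet(\pzA) \to C^\bullet(\pzA_K)$ inductively in cohomological degree, and then to deduce injectivity at each step from Lemma~\ref{zero_modes_lemma}. Both complexes carry a natural bigrading by cohomological degree together with the internal $\lambda$-grading (with each $\lambda_i$ in $\lambda$-degree $1$): the basis vertices of $C^{-p}(\pzA)$ and the Koszul generators $\theta_{i_1} \wedge \cdots \wedge \theta_{i_p}$ of $C^{-p}(\pzA_K)$ all sit in $\lambda$-degree $0$, while both differentials raise the $\lambda$-degree by $1$, so the sum ``cohomological $+$ $\lambda$-degree'' is preserved. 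Within each total-degree strand, the Koszul complex is exact, and the restriction of $d'$ to the leftmost piece $\wedge^p \kk^N \subset C^{-p}(\pzA_K)$ is injective for every $p \geq 1$, simply because this strand has no term in cohomological degrees below $-p$.

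I would begin by setting $\iota_0 \colon C^0(\pzA) \to C^0(\pzA_K)$ to be the identity $R \to R$; this uses the hypothesis that the unique vertex $v_*$ with $d v_* = 0$ lies at minimum height, normalized to $0$, so that $C^0(\pzA) = R$. Assume inductively that $\iota_0, \ldots, \iota_{p-1}$ have been constructed as $R$-linear chain maps preserving the bigrading. For a basis vertex $w$ at height $p$, the chain-map relation and $d^2=0$ give $d' \iota_{p-1} d(w) = \iota_{p-2} d^2(w) = 0$, so $\iota_{p-1}(d(w))$ lies in $\ker d'$ inside the total-degree-$p$ strand. By exactness of that strand a preimage in $\wedge^p \kk^N$ exists, and by injectivity of $d'$ on $\wedge^p \kk^N$ this preimage is \emph{unique}. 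Set $\iota_p(w)$ to be this unique lift and extend $R$-linearly.

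For injectivity, let $V_p$ denote the $\kk$-span of vertices at height $p$, so $C^{-p}(\pzA) = V_p \otimes_\kk R$. Suppose $\iota_p(x) = 0$ for some $x \in V_p$, with $\iota_{p-1}$ injective by induction. Applying $d'$ to $\iota_p(x) = 0$ and using the chain-map relation gives $\iota_{p-1}(d(x)) = 0$, whence $d(x) = 0$ by the inductive hypothesis. Lemma~\ref{zero_modes_lemma} then forces $x$ to lie in the $\kk$-span of vertices at height $p$ with $d v = 0$; by hypothesis the only such vertex in $\pzA$ is $v_*$ at height $0$, so this span is zero for $p \geq 1$ and $x = 0$. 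The images $\iota_p(w) \in \wedge^p \kk^N$ are thus $\kk$-linearly independent and therefore remain $R$-linearly independent in $R \otimes \wedge^p \kk^N = C^{-p}(\pzA_K)$, so $\iota_p$ is injective as a map of $R$-modules.

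The central difficulty is precisely this injectivity check: the bigrading is indispensable, since it reduces the problem to a statement about the finite-dimensional $\kk$-vector space $V_p$, where Lemma~\ref{zero_modes_lemma} together with the ``fully extended'' hypothesis becomes directly applicable. Everything else, including the uniqueness of the lift, is forced by the graded exactness of the Koszul complex and the vanishing $d'|_{\wedge^p \kk^N}$ kernel. The ``in particular'' clause is then immediate: applying the pure spinor functor $\mathcal{A}^\bullet$ to the embedding $\iota_\bullet \colon C^\bullet(\pzA) \hookrightarrow C^\bullet(\pzA_K)$, and using that $\mathcal{A}^\bullet (C^\bullet(\pzA_K)) \cong \mathcal{V}(\pzA_K)$ is the free superfield (Example~\ref{KAd}), yields an embedding $\mathcal{V}(\pzA) \hookrightarrow \mathcal{V}(\pzA_K)$ that exhibits $\mathcal{V}(\pzA)$ as a constrained superfield.
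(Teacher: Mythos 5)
Your proposal is correct and follows essentially the same approach as the paper's proof: both normalize the minimal height to zero, use the internal $\lambda$-grading to reduce to the finite-dimensional vector spaces of vertex generators, construct the embedding inductively degree by degree, and deduce injectivity from the hypothesis (via Lemma~\ref{zero_modes_lemma}) that only the bottom vertex is annihilated by $d$. The main stylistic difference is that where the paper lifts through projectivity of $R\otimes P^i$ and then remarks the lift can be taken in degree zero, you instead invoke exactness of the total-degree strands of the Koszul complex and observe that the lift in $\wedge^p\kk^N$ is unique — a slightly more explicit but equivalent argument.
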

\begin{proof}
Without loss of generality, we assume the minimal height is $0$. Note that for
vertices in $\pzA$ with minimal height the associated element in $C^{\bullet}(\pzA)$
always satisfies $d v = 0$. Moreover, there must be exactly $N$ vertices in $\pzV (\pzA)$ with
height equal to $1$, so that the tail of the complex $C^{\bullet}(\pzA)$ reads
\be
\label{embedding_0_1}
        \begin{tikzcd}[ampersand replacement=\&]
            \ldots \arrow[r] \&C^{-1} (\mathpzc{A}) = R^N \arrow[rrr, "{\small \bmat \pm \lm_1  &\ldots & \pm \lm_N
            \small \emat}"] \& \&  \& C^0 (\mathpzc{A}) = R \arrow[r] \& 0.
        \end{tikzcd}
\ee
It follows that $H^{0}(C^{\bullet}(\pzA)) = \kk$. One thus have to prove that $C^{\bullet}(\pzA)$
embeds into the free resolution $C^{\bullet}(\pzA_K)$ of $H^{0}(C^{\bullet}(\pzA)) \cong \kk$.

We write the free modules $C^{-i}(\pzA) \subset C^{\bullet} (\mathpzc{A})$ as $C^{-i} (\pzA) = R \otimes_{\kk} P^{-i}$, where $P^{-i}$ is
the vector space generated by the elements $v'$ with $\mathpzc{h}(v') = i$. The $\lm$-grading on
$R = \kk[\lm_1, \ldots, \lm_N]$ induces a $\mathbb{Z}$-grading on $C^{-i}(\pzA)$ with $C^{-i}(\pzA)_0 = P^{-i}$.

Now, for $i > 0$ by the assumption that there is no $v'$ with $\mathpzc{h}(v') > 0$ such that $dv' =
0$, it follows that $(\ker d^{-i})_0 = 0$, \emph{i.e.}\ the degree zero piece of the kernel of $d^{-i}$ is
zero. In particular, the map $\begin{tikzcd} P^{i} \arrow[r, "d^i"] & \ker d^{i - 1} \end{tikzcd}$
is a monomorphism.

From \eqref{embedding_0_1} we see that we can embed degrees $0$ and $-1$ into $C^{\bullet}(\pzA_K)$
by the identity map. Working by induction, assume now that we can embed up to degree $i-1$
    \be
        \begin{tikzcd}
            R \otimes_{\kk} P^{i - 1} \arrow[r, "d^{i-1}"] & R \otimes_{\kk}
            P^{i-2} \arrow[r] & \ldots
        \end{tikzcd}
    \ee
as a subcomplex into $C^{\bullet}(\pzA_K)$. Consider the diagram
    \be
        \begin{tikzcd}
            &  C^{i}(\pzA_K) \phantom{\subset C^{i- 1}(\pzA_K)} \arrow[d,
            "d^i_{\pzA_K}", two heads, xshift=-6ex] \\
            R \otimes_{\kk} P^{i} \arrow[r, "d"] &
            \ker d^{i - 1}_{\pzA_K} \subset C^{i - 1}(\pzA_K),
        \end{tikzcd}
            \ee
where surjectivity follows from the fact that the Koszul complex has no higher-cohomology. Since $R \otimes_{\kk} P^i$ is projective, we can lift this map. Note that the horizontal map
restricted to $1 \otimes P^{i}$ is injective and thus the restriction of the lift is injective as
well. The lift is of degree zero and maps $P^i$ injectively into the $\kk$-vector space generating the free
    $R$-module $C^{i}(\pzA_{K})$. This implies that the full
map $R \otimes_{\kk} P^i \rightarrow C^{i}(\pzA_K)$ is injective as well, concluding the
verification.
\end{proof}

\begin{table}[h!]
\begin{center}
\begin{tabular}{|c|c|}
\hline
${N}$ & Rank Sequence  \\
\hline
1 & $\varnothing$ \\
\hline
2 & $\varnothing$ \\
\hline
3 & $(1,3,3,1)$ \\
\hline
4 & $\varnothing$ \\
\hline
5 & $(1,5,7,3), (3,7,5,1), (1,6,7,2), (2,7,6,1), (2,6,6,2), (1,7,7,1)$ \\
\hline
6 & $(1,6,7,2), (2,7,6,1), (2,6,6,2), (1,7,7,1)$ \\
\hline
7 & $(1,7,7,1)$ \\
\hline
8 &  $\varnothing$ \\
\hline
\end{tabular}
\end{center}
\caption{Rank sequences of length four multiplets for $N \le 8$}
\label{tab:length4}
\end{table}%

The previous proposition \ref{embedding_lemma} shows that the multiplets with a
unique zero mode (i.e. fully extended Adinkras) in correspond to subcomplexes of the Koszul complex $C^{\bullet}(\pzA')$ -- note nonetheless
that the embedding will never map all vertices of $\pzA$ to $\pzA_K$, since by $N$-regularity, a
connected Adinkra has no non-trivial sub-Adinkra.
A particular interest lies in multiplets whose rank sequence has length 4 (which is the longest possible length for $N \le
8$, when the associated Clifford module is irreducible). The possible length-four sequences were
determined in \cite{Kuznetsova:2005cd} and are listed in table \ref{tab:length4}.
More generally, there is a one-to-one correspondence
between Adinkra topologies and doubly-even codes \cite{Doran08, Doran11}. A doubly-even
code is a subvector space of $(\mathbb{F}_2)^N$
generated by elements such that the sum of the entries is divisible by $4$. The
vector space $(\mathbb{F}_2)^N$ admits a grading by the Hamming weight of a vector,
i.e. the sum of the coordinates.
Since the subspace generated by a code is homogenous, the quotient admits a
grading, which gives the fully extended height assignment for the given
topology. In table \ref{tab:rank_seq} we give the examples for some
interesting maximal codes.
The fully extended rank sequence agree with some of the extremal ranks computed in
\cite{Kuznetsova:2005cd}.  The collection of all possibly rank sequences may have close connections to Boij--S\"oderberg theory \cite{Boij08} in the case of modules over the corresponding quadric.

It is known from \cite{Doran08, Doran11} that for a reducible code, i.e. a code
that can be written as a direct product, the Adinkra is given by the product of
the Adinkras corresponding to the factors. In fact, it is a graded product,
i.e. if we assign to a vertex in the product the sum of the heights of the
factors, we obtain the fully extended height assignment, in other words, the
Hilbert series is the Hilbert series of the smaller building blocks.

We leave a detailed analysis for future works. Open
questions include finding concrete embeddings to the Koszul complex, the
$R$-symmetry preserved by these sub-complexes, and a concrete description of all the possible height
assignments for a given topology.  Spinor groups naturally appear in coding theory \cite{Wood89} and might shed light on the relationship to $R$-symmetry breaking.
\begin{table}
    \begin{center}
        \begin{tabular}{|c|c|c|}
            \hline
            N & Code & Rank sequence \\
            \hline
            $4$ & $d_4$ & $(1,4,3)$ \\
            $5$ &  $t_{1} \oplus d_4$ & $(1,5,7,3)$ \\
            $6$ & $d_6$ & $(1,6,7,2)$ \\
            $7$ & $e_7$ & $(1,7,7,1)$ \\
            $8$ & $e_8$ & $(1,8,7)$ \\
            $9$ & $t_1 \oplus e_8$ & $(1,9,15,7)$ \\
            $10$ & $d_{10}$ & $(1,10, 21,20,10,2)$ \\
            $10$ & $t_2 \oplus e_8$ & $(1,10,24,22,7)$ \\
            $16$ & $e_{16}$ & $(1, 16, 57,112, 70)$ \\
            $16$ & $e_{8} \oplus e_8$ & $(1, 16, 78, 112, 49)$ \\
            \hline
        \end{tabular}
    \end{center}
    \caption{Rank sequences from some doubly-even codes.}
    \label{tab:rank_seq}
\end{table}

\subsection{Extension classes of Adinkras} \label{ExtAdi}

In \cite{Hubsch13} and \cite{Doran13} there appear the graphs in figures
\ref{n4_ext1} and \ref{n3_ext2},
\begin{figure}
    \centering
    \includegraphics[width=11cm]{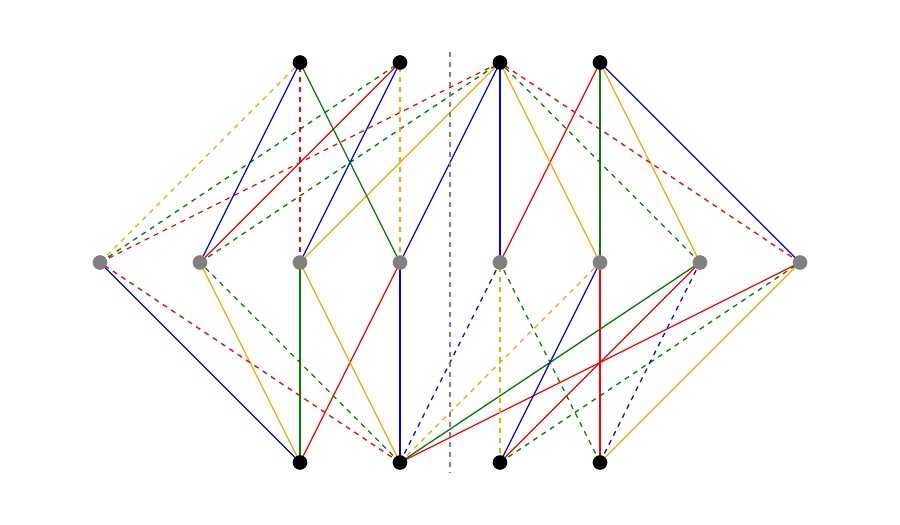}
    \caption{Extension of the $N=4$ Adinkra with ranks $(2,4,2)$ by another copy of itself.}
    \label{n4_ext1}
\end{figure}
\begin{figure}
    \centering
    \includegraphics[width=11cm]{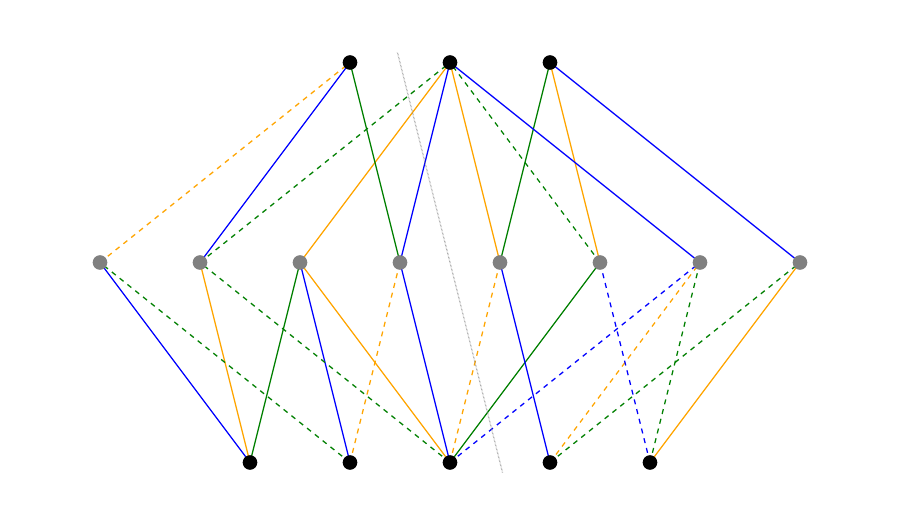}
    \caption{Extension of the $(2,4,2)$ Adinkra by the $(3,4,1)$ Adinkra of $N = 3$.}
    \label{n3_ext2}
\end{figure}
which are similar to Adinkras, except that they do not satisfy $N$-regularity, similar to the case of figure \ref{n7_14_34}.
In this section, we will show that these graphs do define complexes in the derived category of the related quadric
and thus are nicely captured in the pure spinor superfield formalism.

Consider the graph in figure \ref{n4_ext1}. Removing all the lines crossing the
dashed gray line in the middle, leaves us with $2$ copies of the $N=4$ Adinkra with rank sequence $(2,4,2)$.
This suggests that the complex associated to this graph can be understood as an extension of $R / I$
modules, i.e.\ the lines crossing the dashed gray line correspond to an element
in $Ext^1(C^{\bullet}(\pzA_2), C^{\bullet}(\pzA_1))$, where $\pzA_1$ and
$\pzA_2$ are the two copies of $N=4$ Adinkra with ranks $(2,4,2)$ inside the graph in figure
\ref{n4_ext1}. \\
Note that since the differential is defined to go down in the
Adinkra, the complex for the Adinkra $\pzA_1$ on the left is actually the target
of the morphism, i.e.\ we are considering $\mbox{Hom}( C^\bullet (\mathpzc{A}_2), C^\bullet (\mathpzc{A}_1))$. This gives rise to a diagram
\be
\begin{tikzcd}[ampersand replacement=\&, row sep=3em]
    0 \arrow[r] \& R^2 \arrow[from=d] \arrow[r, "{\tiny {\bmat
        -\lm_2 & -\lm_3 \\ \lm_1 & \lm_4 \\ -\lm_4 & \lm_1 \\ \lm_3 & -\lm_2
    \emat}}"] \&[4em] R^4
    \arrow[from=d, "v_{-2}"] \arrow[r, "{\tiny {\bmat \lm_1 & \lm_2 & \lm_3 & \lm_4 \\ -\lm_4 & -\lm_3 & \lm_2 &
    \lm_1 \emat}}"] \&[6em] R^2 \arrow[from=d, "v_{-1}"] \arrow[r]
    \&[6em] 0 \arrow[from=d] \& \\
    \& 0 \arrow[r] \& R^2 \arrow[r, "
        {\tiny {\bmat \lm_1 & \lm_4 \\ \lm_2 & \lm
        _3 \\ -\lm_3 & \lm_2 \\ -\lm_4 & \lm_1 \emat}
        }"] \& R^4
    \arrow[r, "
        {\tiny {\bmat -\lm_2 & \lm_1 & \lm_4 & -\lm_3 \\ -\lm_3 & \lm_4 & -\lm_1 & \lm_2 \emat}
    }"]
        \& R^2 \arrow[r] \&
    0. \&
\end{tikzcd}
\ee
Now giving an extension class $[v] \in Ext^1 (C^\bullet (\pzA_2), C^\bullet (\pzA_1))$ amount to finding the maps $v_{-1}$ and $v_{-0}$ that
makes the diagram commute. These are given by
\begin{equation}
    v_{-2} = \bmat -\lm_4 & 0 \\ -\lm_3 & 0 \\ \lm_2 & 0 \\ \lm_1 & 0 \emat,
    \quad
    v_{-1} = \bmat 0 & 0 & 0 & 0 \\ \lm_1 & \lm_2 & -\lm_3 & -\lm_4 \emat .
\end{equation}
Forming the cone of the map $v : C^{\bullet}(\pzA_2)[-1] \rightarrow
C^{\bullet}(\pzA_1)$ we get the complex associated to the graph in figure
\ref{n4_ext1}. Note that the cohomology class is determined up a multiplicative constant $\alpha \in \kk$, which recovers the
$Q$-continuum of multiplets of \cite{Hubsch13}.

The extension class that leads to the complex in figure \ref{n3_ext2} can be
read off similarly. It is described by two maps
\begin{equation}
    \begin{tikzcd}[ampersand replacement=\&, row sep=3em]
        v_{-1}: R^4 \arrow[r, "{\tiny{\bmat
            0 & 0 & 0 & 0 \\
            0 & 0 & 0 & 0 \\
            -\lm_{2} & \lm_{3} & -\lm_{1} & 0
        \emat}}"] \&[5em] R^3\\
    v_{-2} : R^{2} \arrow[r, "{\tiny {\bmat
        0 & 0 \\
        \lm_{3} & 0 \\
        -\lm_{2} & 0 \\
        -\lm_{1} & 0
    \emat}}"] \& R^4.
    \end{tikzcd}
\end{equation}
One can easily check that this defines a chain map of degree $1$ from the complex
having rank sequence $(2,4,2)$ to the one having rank sequence $(3,4,1)$.

\section{Beyond Adinkras} \label{beyond}

We now turn toward more speculative directions.  First, we explore the relation between
supersymmetric quantum mechanics and higher-dimensional supersymmetric theories, by looking at
multiplets arising from dimensional reduction. In a different direction, we construct further
multiplets by $R$-symmetry breaking -- this relies on our embedding result from section \ref{1_7_7_1}.
In this context, we also make contact with a generalization of the notion of instantons due to
Carri\'{o}n \cite{Carrion98}, which opens interesting future research directions.

Finally, we generalize the notion of Adinkras to account also for quadratic forms which are not
positive definite, we call them $q$-Adinkras. We provide the relevant graphic computational rules,
and we put them to good use to compute an example of physical interest, the chiral superfield.

To illustrate the power of pure spinor superfield formalism, we conclude by classifying families of
multiplets in $N=4$ using projective algebraic geometry. Similar in spirit to the recent \cite{6Dmultiplets, ElliottDerived},
we classify multiplets whose derived invariants are line bundles or ACM vector bundles on the
complex $N=4$ projective nilpotence variety. Many of these multiplets do not arise from the Adinkra
construction, but they can be derived from the pure spinor formalism.

The reader is advised that the writing style, and particularly the level of mathematical precision,
is not homogeneous throughout this section. The first part is more speculative, while the second
part of this section features precise mathematical results and theorems.


\subsection{Multiplets from dimensional reduction}

Another set of multiplets can be obtained from dimensional reduction of multiplets in higher
dimensions.  As an example, the dimensional reduction of the BV complexes for vector multiplets in
$d=4$ and $d=6$ decompose into a $d=1$ topological vector multiplet and the $(3,4,1)$ and $(5,8,3)$
multiplets respectively.  They both preserve the original higher-dimensional R-symmetry, but they
also obtain new R-symmetry from the Lorentz group in the reduced dimensions.

The $d=10$ vector multiplet has a more circuitous route.  It can be reduced performing the $d=8$
$\mbox{Spin}(7)$ partial topological twist along eight of the nine transverse dimensions and then
further reducing to one dimension \cite{Berkovits93, Baulieu07}. Since the twist breaks Lorentz
symmetry to $\mbox{Spin}(7)$, the reduced multiplet only inherits a $\mbox{Spin}(7)$ R-symmetry and
only 9 supercharges remain. These multiplets, along with their preserved R-symmetry Lie algebra
$\mathfrak{r} \subset \so(N)$, are shown in Table~\ref{tab:VM}.  Remarkably, they can be obtained
from an $\mathfrak{r}$-equivariant version of vertex raising from the valise multiplet.
\begin{table}[htp]
\begin{center}
\begin{tabular}{|c|c|c|c|}
\hline
Multiplet & $\mathfrak{r} \subset \so(N)$ & $N$ & Origin \\
\hline
(3,4,1) & $\mathfrak{spin}(3)$ & 4 & $d=4$ \\
(5,8,3) & $\mathfrak{spin}(5) \times \mathfrak{su}(2)$ & 8 & $d=6$ \\
(9,16,7) & $\mathfrak{spin}(7)$ & 9 & $d=10$ \\
\hline
\end{tabular}
\end{center}
\caption{Shadows of vector multiplets}
\label{tab:VM}
\end{table}%

Finally, we can look for sub-complexes of the de Rham complex by resolving the sheaf of functions on
homogeneous spaces $G/P$.  This can be described as a dual to the BGG resolution \cite{Baston89} --
the similarity with Adinkras was previously noted in \cite{zhang14}. The number of representations
appearing is $\chi(G/P)$ for the Cayley-Rosenfeld planes is displayed in Table~\ref{tab:Cayley}
\cite{Piccinni17}.  Some of these loose threads are tied together by observing that the work on
Koszul duality patterns of \cite{Beilinson96} was in part motivated by parabolic-singular duality
for BGG resolutions.

\begin{table}[htp]
\begin{center}
\begin{tabular}{|c|c|c|c|c|}
\hline
$k$ 	& E. Cartan & $G/P$ & $dim_{\mathbb{R}}$ & $\chi(G/P)$ \\
\hline
$\mathbb{R}$ & $FII$ & $\mbox{F}_4/\mbox{Spin}(9)$ & 16 & 3 \\
\hline
$\mathbb{C}$ & $EIII$ & $\mbox{E}_6/\mbox{Spin}(10) \cdot U(1)$ & 32  & 27 \\
\hline
$\mathbb{H}$ & $EVI$ & $\mbox{E}_7/\mbox{Spin}(12) \cdot Sp(1)$ & 64 &  63 \\
\hline
$\mathbb{O}$ & $EVIII$ & $\mbox{E}_8/\mbox{Spin}(16)_{+}$ & 128 & 135 \\
\hline
\end{tabular}
\end{center}
\caption{Euler characteristics of the Cayley--Rosenfeld Projective Planes}
\label{tab:Cayley}
\end{table}%
The connection with the decomposition of the free multiplet was previously noticed in
\cite{Pengpan98, Brink02} and the Adinkra for the $d = 10$ $\mathcal{N}=1$ free superfield was
determined in \cite{Gates20} -- the Adinkra is the Hasse diagram of the complex Cayley plane
\cite{Iliev05}.

The real Cayley plane with $\chi(G/P) = 3$ suggests breaking the ($N=32$)
$\mbox{SO}(32)$ $R$-symmetry to $\mbox{Spin}(9)$.  Indeed there is a $\mbox{Spin}(9)$-equivariant
multiplet $(84, 128, 44)$ with ${N}= 16$ obtained from the light-cone reduction of
eleven-dimensional supergravity.  The reduction breaks half of the supercharges, resulting in a
representation of the ${N}=16$ $\susy$ algebra. Its fields correspond to the graviton, gravitino,
and 3-form gauge field in the eleven-dimensional theory.


\subsection{Breaking R-symmetry and generalized instantons}

As explained in the introduction of this paper\footnote{See also appendix \ref{Appendix1}}, the
R-symmetry algebra $\fr$ acts on the super-translation algebra $\ft$. This, in turn, induces an
action of $\fr$ on the ring $R = \kk[\lambda_1, \ldots, \lambda_N]$. In the present one-dimensional
case, the R-symmetry algebra is $\so(N)$, which acts on $\ft^{\vee}$ by the vector representation.
It follows that the quadratic form $q_N = \sum_i \lm_i^2$ is preserved and $\fr$ acts on $R / I$ as
well. Moreover, the cone point of $\Spec R / I$ is a fixed-point of the action by $\fr$ and the
skyscraper sheaf on the cone point is an equivariant module. The Koszul complex $C^{\bullet}(\pzA)$
hence admits an equivariant free resolution
\be
    C^{\bullet}(\pzA_K) \cong R \otimes \topwedge{\bullet} \ft
\ee
with differential $D = \sum_i \lm_i \partial_{\theta_i}$ where the $\theta_i$
are the generators of $\ft_1$. Now, let $\pzA$ be an Adinkra with a unique zero
mode and an embedding $C^{\bullet}(\pzA) \hookrightarrow C^{\bullet}(\pzA_K)$ as in theorem \ref{embedding_lemma}.
In general, the subcomplex $C^{\bullet}(\pzA)$ will only be preserved by a subalgebra
of the R-symmetries $\fr$: in other words, these multiplets break the R-symmetry group to
some subgroup.

The equivariant subcomplexes of a Koszul-type complex have been also studied in a different context,
namely in the construction of generalized instantons given in \cite{Carrion98}. Let us briefly
describe the analogous construction in our setting.

Consider a closed subgroup $G \subset \mbox{Spin}(N)$. Since $\topwedge{2} \ft$ is the adjoint
representation of $\mbox{SO}(N)$, the decomposition of $\topwedge{2} \ft$ into irreducible representations of
$\mbox{Spin}(N)$ contains the adjoint representation $\fg$ of $G$. We take the complement
$\fg^{\perp}$ of the adjoint under the decomposition and consider the complex
\be
    \label{r_breaking_complex}
    \begin{tikzcd}[column sep = 1.0em]
        \ldots \arrow[r] & (\fg \wedge (\topwedge{2} \ft))^{\perp} \otimes_{\kk} R \arrow[r] & (\fg \wedge
        (\topwedge{1} \ft))^{\perp} \otimes_{\kk} R \arrow[r] & \fg^{\perp} \otimes_{\kk} R
        \arrow[r] & \ft \otimes_{\kk} R \arrow[r] & R \arrow[r] & 0
    \end{tikzcd}
\ee
This yields by construction a complex that admits an action of $G \subset \mbox{Spin}(N)$. The pure
spinor functor $\mathcal{A}^{\bullet}$ preserves the R-symmetry representations and hence the
associated multiplet break the R-symmetry from $\mbox{Spin}(N)$ to $G$.

A set of interesting examples is related to the infamous \emph{triality} of $\mbox{Spin}(8)$. It is
known that $\mbox{Spin}(8)$ contains three conjugacy classes of $\mbox{Spin}(7)$ \cite{Varadarajan01}. One is the
canonical $\mbox{Spin}(7)$ subgroup under which the vector representation decomposes into a vector
and a trivial representation. 

Starting from one of the two special $\begin{tikzcd} \mbox{Spin}(7) \arrow[r, "\iota'"] &
\mbox{Spin}(8) \end{tikzcd}$ we get the following commutative diagram
\be
    \label{triality_subgroups}
    \begin{tikzcd}
        \mbox{SU}(2) \arrow[r, hook] \arrow[d, hook] & \mbox{SU}(3) \arrow[r, hook] \arrow[d,
        hook]
        & \mbox{G}_2 \arrow[r, hook]
        \arrow[d, hook] & \mbox{Spin}(7) \arrow[d, "\iota'", hook] \\
        \mbox{Spin}(5) \arrow[r, hook] & \mbox{Spin}(6) \arrow[r, hook] & \mbox{Spin}(7) \arrow[r,
        hook] & \mbox{Spin}(8)
    \end{tikzcd}
\ee
where on the bottom row all maps are the canonical embeddings, and each square is a
pullback square (which are intersections of subgroups in this case).
The complexes obtained by the construction \eqref{r_breaking_complex} are
summarized in table \ref{tab:carrion}.

\begin{table}[htp]
\begin{center}
\begin{tabular}{|c|c|c|c|}
\hline
    Group & $\mathfrak{g}^{\perp}$ & $(\mathfrak{g} \wedge \topwedge{1}
    \ft)^{\perp}$ & Multiplet \\
\hline
$\mbox{SU}(2)$ &  $\wedge^2_{-}$ & 0 &  (1,4,3) \\
$\mbox{SU}(2)$ &  ?  & ? &  (1,5,7,3) \\
$\mbox{SU}(3)$ &  $\omega \oplus \Lambda^{2,0}$ & $\Lambda^{3,0}$ &  (1,6,7,2) \\
$\mbox{G}_2$ &  $\mathbb{R}^7$ & $\langle \varphi \rangle$ &  (1,7,7,1) \\
$\mbox{Spin}(7)$ &  $\mathbb{R}^7$ & 0 &  (1,8,7) \\
\hline
\end{tabular}
\end{center}
\caption{Complexes from \cite{Carrion98}}
\label{tab:carrion}
\end{table}%

It is fascinating that the construction \eqref{r_breaking_complex} applied to the subgroups in
\eqref{triality_subgroups} produces fully extended Adinkras with
$5 \le N \le 8$ which can be realized as subcomplexes of the Koszul complex. For example, we
recover the description of the $(1,7,7,1)$ multiplet discussed in section \ref{1_7_7_1} directly
from R-symmetry breaking.

Finally, given a complex constructed by \eqref{r_breaking_complex}, we can construct another
$G$-equivariant complex, as the Koszul complex modulo the subcomplex \eqref{r_breaking_complex}. For
instance, consider the multiplet $(0,0,14,34,35,21, 7, 1)$ shown in figure \ref{n7_14_34} which we
constructed as the cone of the embedding $(1,7,7,1) \rightarrow C^\bullet (\pzA_K)$. In degree $2$
this multiplet carries by construction the adjoint representation of $G$, which for $\mbox{G}_2$ is
indeed $14$.

As argued above, the special multiplets one can construct via R-symmetry breaking are closely
related to the generalized instantons studied in \cite{Carrion98}.

Instantons are non-perturbative objects that play an important role in the rich dynamics of gauge
theory, and their moduli space has been intensively studied by mathematicians.  Enumerative
invariants of moduli spaces of instantons can be defined using {\it orientation data}.  In turn,
orientation data for moduli spaces of instantons can be constructed starting with an elliptic
complex $E_{\bullet}$
\begin{equation}
\xymatrix{
0 \ar[r] & \Gamma^{\infty}(E_0) \ar[r]^{D_0} &   \Gamma^{\infty}(E_1) \ar[r]^{D_1} & \cdots  \ar[r]^{D_{k-1}}  & \Gamma^{\infty}(E_k) \ar[r] & 0\\
}
\end{equation}
where $\Gamma^{\infty}$ denotes smooth sections of the vector bundles $E_i$ over a manifold $X$ with
$G$-structure \cite{Joyce20}. Donaldson's construction of invariants of oriented four-manifolds $X$
uses the Atiyah--Hitchin--Singer complex \cite{AHSselfduality}
\begin{equation}
\xymatrix{
0 \ar[r] & \Gamma^{\infty}(Ad(P) \otimes \topwedge{0} T^{\vee} X ) \ar[r]^{D_0} &   \Gamma^{\infty}(Ad(P) \otimes \topwedge{1} T^{\vee} X) \ar[r]^{D_2} &
\Gamma^{\infty}(Ad(P) \otimes \topwedge{2}_{+} T^{\vee} X) \ar[r] & 0\\
}
\end{equation}
where $P \rightarrow X$ is a principal $G$-bundle over $X$, and $G = \mbox{SU}(2)$
\cite{Donaldson87,Donaldson90}.

There are many other famous examples where these kinds of complexes appear in relation to gauge
theory and enumerative invariants. For example, Chern--Simons theory on a three-manifold can be used
to define the Casson invariant \cite{Tabues90}. Its BV complex is the de Rham complex tensored with
$ad(\mathfrak{g})$, for $\fg$ the gauge algebra \cite{CostelloRenormalization}. Further, there is a
holomorphic analog of Casson's invariant for Calabi--Yau threefolds defined by Donaldson--Thomas
\cite{DT98, Thomas00} which is closely related to the holomorphic twist of $\mathcal{N} = 1$
supersymmetric gauge theory in six-dimensions and the program pursued by Donaldson--Segal of
defining $\mbox{G}_2$-instantons \cite{Donaldson11}.

All of these elliptic complexes have associated simpler complexes (called \emph{shadows}) that can
be tensored with $Ad(P)$ to recover the elliptic complexes. In the case of the
Atiyah--Hitchin--Singer complex \cite{AHSselfduality}, the simpler complex is given by
\begin{equation}
\xymatrix{
0 \ar[r] & \topwedge{0} T^{\vee} X  \ar[r]^{D_0} &    \topwedge{1} T^{\vee} X \ar[r]^{D_2} &
    \topwedge{2}_{+} T^{\vee} X \ar[r] & 0\\
}
\end{equation}

Carri\'{o}n starts with the observation that for a manifold $X$ with $G$-structure, there is a
natural splitting
\begin{equation}
\topwedge{2}  T^{\vee}  = \mathfrak{g} \oplus \mathfrak{g}^{\perp}.
\end{equation}
Using the splitting, Carri\'{o}n then constructs elliptic complexes of the form:
\begin{equation}
\xymatrix{
0 \ar[r] & \topwedge{0} T^{\vee} X  \ar[r]^{D_0} & \topwedge{1} T^{\vee} X \ar[r]^{\; \; D_1} & \mathfrak{g}^{\perp} \ar[r]^{D_2 \qquad }
&    (\mathfrak{g} \wedge \topwedge{1} T^{\vee} X)^{\perp} \ar[r]^{\qquad D_3 } &  \cdots .\\
}
\end{equation}
These complexes are dual to the subcomplexes of the Koszul complex we have considered above. The
relationship between Carri\'{o}n complexes, supersymmetric quantum mechanics and multiplets with
large R-symmetry group is an interesting direction for future research.


\subsection{Adinkras for generic quadratic forms}  \label{qAdinkras} We have established that
Adinkras are a convenient graphical tool to encode (monomial) matrix factorizations of the
non-degenerate positive definite quadratic form $q_N$. As discussed so far, these are inherently
real objects, which is most prominently displayed by the mod $8$ Bott-periodicity induced from the
Bott periodicity of real Clifford algebras.

In the following we extend the notion of Adinkra, to quadratic forms of the form
\begin{equation}
    \label{generalq}
    q = \sum_{i = 1}^{N_{\lm}} \lm_{i}^2 - \sum_{j = 1}^{N_{\mu}} (\mu_j^2) +
    \sum_{k = 1}^{N_{\rho}} \rho_k \rho^{*}_k \in \kk[\ft_1^{\vee}].
\end{equation}
These are non-degenerate quadratic forms for
$\dim \ft_1 = N = N_{\lm} + N_{\mu} + 2 N_{\rho}$. Note that $\rho$ and $\rho^*$
are independent variables!
To explain the form of the quadratic form, choose the
standard bilinear form
$\langle \cdot , \cdot \rangle$ on
$\ft_1$ with respect to a chosen basis, \emph{i.e.}\ we define the bilinear form by the
condition that the chosen basis (which for us it is the basis of the $Q_i$) is
orthonormal with respect to $\langle \cdot , \cdot \rangle$ and extend by
linearity. Then there exists a matrix $\mathpzc{B}$ such that the
bilinear form $B(x,y) \defeq q(x + y) - q(x) - q(y)$ can be represented by
$B(x,y) = \langle x , \mathpzc{B} y \rangle$. We say that $q$ is involutive if
$\mathpzc{B}^2 = 1$. When $q$ has the form as in \eqref{generalq}, $B$
is involutive and moreover $B$ induces an involution on the chosen set of basis
vectors of $\ft_1$ which we denote by $\mathpzc{b}$.
We can define a super-translation algebra $\ft_{q}$ with brackets
\begin{equation}
    \{ f, g \} = 2 B(f,g) H
\end{equation}
and in particular, for a basis $Q_1, \ldots, Q_N$ of $\ft_1$ such that $q$ is of
the form \eqref{generalq}, we have
\begin{equation}
    \{ Q_i , Q_j \} = \left\lbrace \begin{matrix} 2H & \quad & \textrm{if } Q_j =
        \mathpzc{b}(Q_i) \\ 0 & \quad & \textrm{else.} \phantom{Q_j = bQ_i}
    \end{matrix} \right.
\end{equation}

When $B$ is non-degenerate, $\mathpzc{B}$ defines an automorphism on $\ft_1$ and
induces an automorphism $\mathpzc{B}^{\vee}$ on $R = \Sym^\bullet \ft_1^{\vee}$. This
automorphism will play a central role in the generalization of Adinkras.
When the quadratic form is represented as in \eqref{generalq}. The isomorphism
$\mathpzc{B}^{\vee}$ induces an involution $\mathpzc{b}^{\vee}$ on the set of
variables that acts as
\begin{align*}
    \lm_i &\mapsto \lm_i \\
    \mu_j &\mapsto -\mu_j \\
    \rho_k &\mapsto \rho_k^{*} \\
    \rho_k^{*} &\mapsto \rho_k.
\end{align*}

All quadratic
forms on $\ft_1$ over $\RR$ and $\CC$ can be (non-uniquely) written in the form
\eqref{generalq} by choice of a suitable basis.
As before we start from the lowest $N$ and work our way up.

The main point of the definition of a generalized Adinkra is that the
associated complex admits a map $d^{\dagger}$ such that $(d + d^{\dagger})^2 =
q$. Then we can repeat the steps in appendix \ref{CplxAd} to define a complex
$C^{\bullet}_{q}(\pzA)$ that is in the essential image of the fully faithful
embedding
\begin{equation}
    \mbox{\sffamily{D}}^\flat(R / \langle q \rangle\mbox{-\sffamily{Mod}}) \hookrightarrow \mbox{\sffamily{D}}^\flat(R\mbox{-\sffamily{Mod}}).
\end{equation}

\paragraph{$\mathbf{N = 1}$}
Here the only new quadratic form appearing is $q = -\mu^2$. The rules of Adinkras
are the same and we get the familiar $N = 1$ Adinkras, with complex
\begin{tikzcd}
    R \arrow[r, "(\mu)"] & R
\end{tikzcd}
where $R = \kk[\mu]$. However, we are advised to use the involution $\mathpzc{b}$ to find
$d^{\dagger}$, \emph{i.e.}\ instead of just transposing the matrix, we should transpose and apply
$\mathpzc{b}$ to the set of variables\footnote{More abstractly, we should define
$(\mathpzc{B}^{\vee})^{*}C$ as the complex where an element $r \in R$ acts via multiplication via
$B(r)$. Then acting with $d^T$ on $(\mathpzc{B}^{\vee})^{*}C$ amounts to the ad hoc construction we
give without refering to the involution $\mathpzc{b}$.}.

Hence $d^{\dagger} = \bmat -\mu \emat$. This gives a matrix factorization
$(d + d^{\dagger})^2 = -\mu^2$. In general, we do not care about an overall sign
since the ideals $\langle q \rangle = \langle -q \rangle$ agree. From the point
of view of the supersymmetry algebra, we can absorb it in the definition of $H$.
This aligns well with the fact that $\cliff_0(p,q) \cong \cliff_0(q,p)$ which implies
that the theory of matrix factorizations is insensitive to an overall sign.

\paragraph{$\mathbf{N = 2}$}

In this case, there are two new quadratic forms we should consider.
The first one is $q = \lm^2 - \mu^2$. One can check that the minimal matrix
factorization over $\RR$ is given by
\begin{equation}
    \label{pq_1_1}
    d = \bmat \lm & \mu \\
    \mu  & \lm
    \emat
    \quad
    d^{\dagger} = \bmat \lm & -\mu \\
    -\mu & \lm \emat
\end{equation}
where we again obtain $d^{\dagger}$ by transposing the matrix for $d$ and
applying $\mathpzc{b}$ to the set of variables.
We can still raise vertices of
the valise Adinkra defined by the matrix factorization $d$, to obtain a complex
of length $3$. Raising a vertex from level $0$ to level $2$, means
that we delete a row in $d_0 = d$ and adding its transposed as a column to
$d_1$ with the variables transformed by $\mathpzc{b}$, e.g. for $d$ as in \eqref{pq_1_1},
raising a vertex using the lower row, gives the complex
\begin{equation}
    \begin{tikzcd}[ampersand replacement=\&]
        0 \arrow[r] \& R \arrow[r, "{\bmat -\mu \\ \lm \emat}"] \& R^2 \arrow[r,
        "{\bmat \lm & \mu \emat}"] \& R \arrow[r] \& 0
    \end{tikzcd}
\end{equation}
Hence, we recover the usual Koszul complex. These are the two types
of two-colored cycles that can appear, when we restrict an Adinkra to any
two colors.
We note, that for a two-colored cycle that extends over two levels, we need an
odd number of minus signs to have $d^2 = 0$, however for a two-colored cycles
which that is confined two one level, corresponding to \eqref{pq_1_1} we have to
require an even number of minus signs.\\
To capture this property, we need more structure.
We define an internal sign function on the set of colors $\mathpzc{c}$, which
corresponds to the sign of the involution $\mathpzc{b}$ acting on the variable
$\lm_{c}$ for a color $c \in \mathpzc{c}$ or equivalently to the sign of the
summand containing $\lm_c$ in $q$. We denote this function by
\begin{equation}
    \mathpzc{b} : \mathpzc{c} \rightarrow \{ \pm 1 \}
\end{equation}
Note, that this function is dictated by the quadratic form $q$.

Recall that in the definition \ref{N_Adinkra}, we required that for any
$2$-colored $4$-cycle $C^{(2c)}_4$, the sum of the parity $\pzp(e)$ over all edges in a
$e \in C^{(2c)}_4$ is odd mod $2$, \emph{i.e.}\ we require
\begin{equation}
    \label{2c_4}
    \sum_{e \in C_4^{(2\mathpzc{c})}} \mathpzc{p} (e) = 1 \in \mathbb{Z}_2.
\end{equation}
This condition must be altered by the internal sign.
Choose any vertex $v$ of the $2$-colored cycle, then there is precisely, one
other vertex $v'$ which is not connected to $v$ by an edge. The four-cycle
decomposes into two pairs of edges $e^{or}_1, e^{or}_2$ and $f^{or}_1, f^{or}_2$
which together connect $v$ to $v'$. The superscript ${}^{or}$ indicated that the
edges are oriented such they point towards $v'$. In the chosen
orientation, the edges can either go up or down in height by one.
If an edge points up, we define $\pzp(e^{or})$ to be $\pzp(e)$. In case, it points down
we define $\pzp(e^{or}) \defeq \mathpzc{b}(c(e)) \pzp(e)$.
With this notation replace condition \eqref{2c_4} with the new condition
\begin{equation}
    \label{2c_4_halfway}
    \pzp(e^{or}_1) \pzp(e^{or}_2) - \pzp(f^{or}_1) \pzp(f^{or}_2) = 0.
\end{equation}
It is easy to check, that this is equivalent to \eqref{2c_4} when
$\mathpzc{b}(e) =
1$ for all $e$.
This condition can be described more economically. As in construction
\ref{constr} we associate a variable $\lm(e) = \lm_{\pzc (i)}$ to an edge $e \in
\mathpzc{E}$ of color $i$.
Then for any vertex, for the chosen vertex $v$, we can write
\begin{equation}
    \label{halfway_with_vars}
    \begin{split}
        &\sum\limits_{e | \mathpzc{s}(e) = v} \left( \pzp(e) \lm(e) \cdot
        \left( \sum\limits_{e' | \mathpzc{s}(e') = \mathpzc{t}(e)}
            \pzp(e') \lm(e') + \sum\limits_{e' \neq e |
            \mathpzc{t}(e') = \mathpzc{t}(e)} \pzp(e') \pzb(\lm(e'))
        \right)
        \right)\\
        + & \sum\limits_{e|\mathpzc{t}(e) = v}
        \left( \pzp(e) \pzb(\lm(e)) \cdot
        \left( \sum\limits_{e' \neq e | \mathpzc{s}(e') = \mathpzc{s}(e)} \pzp(e')
        \lm(e') + \sum\limits_{e'| t(e') = s(e)} \pzp(e') \pzb (\lm(e'))
        \right) \right) \\
        = 0 & \in \kk [\ft_1^{\vee}]
    \end{split}
\end{equation}
where $\pzb (\lm(e)) \defeq \pzb({\pzc(e)})\lm(e)$. Note that for a given vertex
$v$ only $2$ summands in the expanded sum will be non-zero (for fixed $\lm_i,
\lm_j$) and the resulting two term sum is equivalent to \ref{2c_4_halfway}.

With this definition we can produce Adinkras for all quadratic forms of the form
\begin{equation}
    \label{no_xy}
    q = \sum_i \lm^2 - \sum_j \mu^2.
\end{equation}
Note, that for cycles running over $2$ levels, we can choose $v$ to be the
lowest and $v'$ to be the highest vertex. Then $\mathpzc{b}$ does not enter in
\eqref{2c_4_halfway} which makes obvious the fact that in this case \eqref{2c_4}
and \eqref{2c_4_halfway} are equivalent. Now one can check, that in the
Koszul Adinkra, all $2$-colored $4$-cylces are extended over $2$ levels and
hence the Koszul Adinkras (using definition \ref{N_Adinkra}) is a valid generalized
Adinkras for any $q$ of the form \ref{no_xy}. The $N = 2$ Adinkra in figure
\ref{n2_2_2_gen} has the same chromotopology, but the dashing differs.
\begin{figure}[ht!]
    \centering
    \includegraphics[width=7cm]{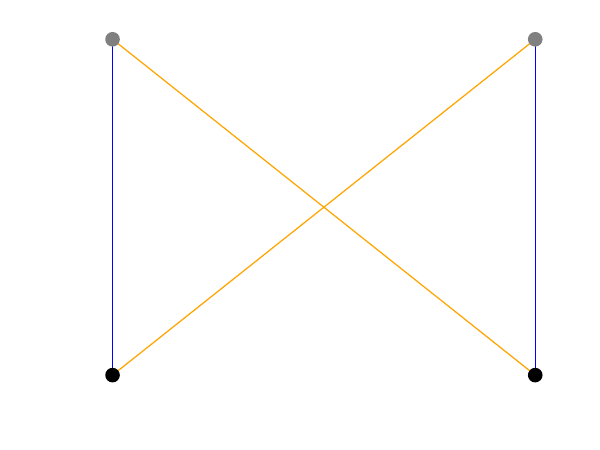}
    \caption{The $(2,2)$ Adinkras of $N = 2$ with $q = \lm^2-\mu^2$.}
    \label{n2_2_2_gen}
\end{figure}

Besides $\lm^2 - \mu^2$ we also have the quadratic form $\rho \rho^*$ for $N =
2$. These two quadratic forms are of course equivalent by the change of
variables $\rho = x + y$, $\rho^* = x - y$. However, writing the quadratic form
of signature $(1,1)$ in the form $\rho \rho^*$ makes it obvious that the minimal
matrix factorization has indeed rank one and not rank two, which one would
naively guess.
As usual we can produce a complex
\begin{equation}
    \label{wave_line_cplx}
    \begin{tikzcd}
        0 \arrow[r] & R \arrow[r, "(\rho)"] & R \arrow[r] & 0
    \end{tikzcd}
\end{equation}
This looks like the complex coming from the $N = 1$ Adinkra. There is however a
fundamental difference, \emph{i.e.}\ the complex in \eqref{wave_line_cplx} is a complex
over $R = \kk [\rho, \rho^*]$ and the corresponding cokernel module is isomorphic
to $\kk [\rho^*]$ which is a module over $\kk [\rho, \rho^*]$. For this reason, we
will treat the variables $\rho$ and $\rho^*$ differently from $\mu$ and
$\lm$.
We introduce a new type of line, that we will display as a wave lined together
with the extra datum of an arrow, i.e. we have four different possibilities
presented in figure \ref{n1_waved}.
\begin{figure}[ht!]
    \begin{minipage}{0.24\textwidth}
        \centering
        \includegraphics[height=7cm]{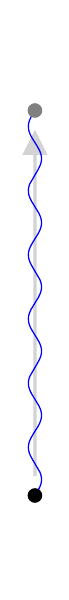}
    \end{minipage}
    \begin{minipage}{0.24\textwidth}
        \centering
        \includegraphics[height=7cm]{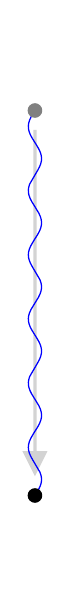}
    \end{minipage}
    \begin{minipage}{0.24\textwidth}
        \centering
        \includegraphics[height=7cm]{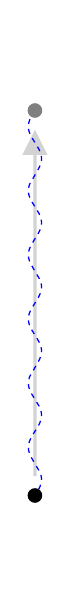}
    \end{minipage}
    \begin{minipage}{0.24\textwidth}
        \centering
        \includegraphics[height=7cm]{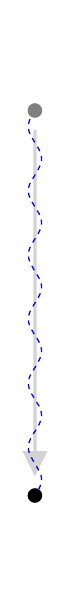}
    \end{minipage}
    \caption{Generalized Adinkras for the quadratic form $q = \rho \rho^*$
    corresponding to the matrices (from left to right): $(\rho), (\rho^*),
    (-\rho), (-\rho^*)$.}
    \label{n1_waved}
\end{figure}
As for the case of ordinary Adinkras, globally inverting the sign does not
matter, i.e. $(\rho)$ and $(-\rho)$ define the same cokernel module. The
orientation of the arrow however does matter, and the two choices correspond
to the two inequivalent $\cliff_0(1,1)$ modules. It is interesting to note, that the
matrix factorization $(\rho) (\rho^*)$ is the minimal matrix factorization of
the non-degenerate quadratic form over $\CC$. Over the complex numbers this is
equivalent to $q_2 = \lm_1^2 + \lm_2^2$ via the change of variables $\rho = \lm_1
+ i \lm_2$ and $\rho^* = \lm_1 - i \lm_2$. In fact, for even $N$,
the generalization of Adinkras (which \emph{a priori} are real objects) for
the quadratic form $q = \sum_{k = 1}^{N / 2} \rho_k \rho^*_k$ always allows to
represent a minimal matrix factorization over $\CC$.
The existence of the isomorphism
\be
    \label{Cl_1_1_iso}
    \cliff(p + 1, q + 1, \RR) \cong \cliff(p,q, \RR) \otimes \cliff(1,1, \RR)
\ee
shows, that the product of a minimal matrix factorization for some signature $(p,q)$ with the
minimal matrix factorization of $(1,1)$ gives a minimal matrix factorization of $(p + 1, q + 1)$.
For instance, the valise associated with the Koszul Adinkras with $N_{\rho}$ colors, all waved (with
consistent arrowing) gives rise to a minimal matrix factorization of the non-degenerate quadratic
form for $N = 2 N_{\rho}$. To be specific, for $N_{\rho} = 2$ and quadratic form $q = \rho_1
\rho_1^* + \rho_2 \rho_2^*$, a minimal matrix factorization is
\be
    \bmat \rho_1 & \rho_2 \\
    -\rho_2^* & \rho_1^* \emat
\ee
Note that we have essentially represented two variables by a single color and
introduced an extra datum of an arrow which is another $\ZZ / 2$ valued
function
\begin{equation}
    a : E(\mathpzc{A}) \rightarrow \{ \pm 1 \}
\end{equation}
on the set of edges of the graph with the restriction that $a(e) = 1$ for all
edges that are not waved. Taking just one color for
pairs $\rho, \rho^{*}$ allows us to keep the definition of the chromotopology of unaltered.
Before stating the full definition of an Adinkra for a quadratic from $q$ of the
form \eqref{generalq}, we must figure
out the rules for the arrows and the dashings. Indeed, the rules for the
dashings are exactly the same as in \ref{N_Adinkra} (if we were to allow terms
of the form $-\rho \rho^*$, we would need to use again the sign of the
involution $\mathpzc{b}$ on $\rho$ in order to get the correct conditions on the
dashings, however since $\rho$ and $\rho^*$ are independent variables we can
just absorb the minus sign by redefinition, without changing the real structure).

Lastly, there is a condition on the direction of arrows in $2$-colored
$4$-cycles. One can easily check that this amounts to the condition that for any
choice of pairs of edges connecting two opposite vertices $v$ and $v'$ in a
$2$-colored $4$-cylces as above, for any pair of waved edges the arrows are either both
aligned or anti-aligned with the orientation in the induced orientation from
pointing from $v$ to $v'$. We formulate this condition in yet another way: Give
any orientation to the cycle (\emph{i.e.} either clockwise or counterclockwise) then
for any pair of waved edges of the same color in the $2$-colored $4$-cycle,
the arrow of precisely one edge aligns with the chosen orientation.

We wish to combine the conditions on arrows and dashing in a single formula:
We can write a formula very similar to \ref{halfway_with_vars}, except that now
instead of applying $\pzc{b}$ if the edge is going down, we apply $\pzc{b}$ if
we go along an edge against the direction of its arrow. Note, that for unwaved
edges, this is equivalent by our conventions for arrows along unwaved edges. We
can write the condition on arrows and dashing as follows: For any vertex $v$ of
$\pzA$
\be
    \begin{split}
        &\sum_{e | t(e) = v} \left( \pzp(e) \pzb^{\frac{1 - a(e)}{2}} (\lm(e))
                    \cdot \left( \sum_{e' | t(e') = s(e)} \pzp(e') \pzb^{\frac{1 -
                    a(e)}{2}} (\lm(e'))
                    + \sum_{\substack{ e' \neq e \\  s(e') =
                    s(e) }} \pzp(e') \pzb^{\frac{1 + a(e)}{2}} (\lm(e'))
                    \right) \right)\\
        + &\sum_{e | s(e) = v} \left( \pzp(e) (\pzb^{\frac{1 + a(e)}{2}} (\lm(e)) \cdot
                    \left( \sum_{\substack{e' \neq e \\ t(e') = t(e)}}
                        \pzp(e') \pzb^{\frac{1 -
                    a(e)}{2}} (\lm(e'))
                    + \sum_{e' | s(e') = t(e)} \pzp(e') \pzb^{\frac{1 + a(e)}{2}} (\lm(e'))
                    \right) \right) \\
        = & \; 0 \in k[\ft_1^{\vee}].\\
    \end{split}
\ee

For clarity, we state the full definition of an Adinkra for a
quadratic form of the form
\begin{equation}
        q = \sum_{i = 1}^{N_{\lm}} \lm_{i}^2 - \sum_{j = 1}^{N_{\mu}} (\mu_j^2) +
                \sum_{k = 1}^{N_{\rho}} \rho_k \rho^{*}_k
\end{equation}
which we will from now on just all a $q$-Adinkras:
\begin{definition}[$q$-Adinkra]
    Let $q$ be as above. Let $\pzb$ be involution on the set of
    variables defined by $q$.
    A $q$-Adinkra is a finite simple graph $\pzA$
    equipped with two function
    $| \cdot | : \pzV \rightarrow \ZZ / 2$ (parity or \enquote{statistics}) and
    $\pzh : \pzV \rightarrow \ZZ$ (degree or \enquote{dimension})
    on the set $\pzV$ of vertices, and three functions
    $\pzc : \pzE \rightarrow c_{\lm} \coprod c_{\mu} \coprod c_{\lm} \defeq
    \{ 1, \ldots N_{\lm} \} \coprod
    \{ 1, \ldots N_{\mu} \} \coprod
    \{ 1, \ldots N_{\rho} \} \cong \{ 1, \ldots, N_{\mathpzc{red}} \}$ (the \enquote{color}),
    $\pzp : \pzE \rightarrow \ZZ / 2$ (the \enquote{dashing}
    and $a: \pzE  \rightarrow \ZZ / 2$ (the \enquote{arrow}),
    subject to the following conditions:
    \begin{enumerate}[leftmargin=*,topsep=0pt,itemsep=-1ex,partopsep=1ex,parsep=1ex]
        \item Edges are odd with respect to $\lvert \cdot \rvert$, \emph{i.e.}, there
            can be an edge between $v, w\in \pzV$ only if $|v| \neq |w|$. In
            particular, $\pzA$ is bipartite. We refer to the vertices in
            $\pzV_0:=\{ v \in \pzV \mid |v| = 0\}$ as \emph{bosonic}, and
            verticies in $\pzV_1\defeq \{v \in \pzV\mid |v|=1 \}$ as
            \emph{fermionic}.
        \item Edges are unimodular with respect to $\pzh$, \emph{i.e.}, there can be an
            edge $e \in \pzE$ between $v,w \in \pzV$ only if $| \pzh(w) -
            \pzh(v) = 1$. In this case, assuming $\pzh(w) = \pzh(v) + 1$,
            we call $v = \pzt(e)$ the target, and $w = \pzs(e)$ the source of
            $e$.
        \item $\pzA$ is \enquote{$N_{\mathpzc{red}}$-color-regular}, \emph{i.e}. every
            vertex has exactly one incident edge of each color.
        \item For edges, with color in $c_{\lm} \coprod c_{\mu}$ the arrow
            always points up, \emph{i.e.} $a(e) = 1$. Edges with color in $c_{\rho}$
            are called \enquote{waved}. Arrows on waved colors can point up or down.
        \item Fix the function
            \be
                \lm(e) : c_{\lm} \coprod c_{\mu} \coprod c_{\lm} \rightarrow \{
                    \lm_1, \ldots \lm_{N_{\lm}}, \mu_1, \ldots, \mu_{N_{\mu}},
                    \rho_1, \rho_2, \ldots, \rho_{N_{\rho}} \}.
            \ee
            that maps $c \in \mathpzc{c}_{\lm}$ to $\lm(c) = \lm_c$,
            $c \in \mathpzc{c}_{\mu}$ to $\lm(c) = \mu_c$ and
            $c \in \mathpzc{c}_{\rho}$ to $\lm(c) = \rho_c$
            Then, for each vertex
            \be
                \begin{split}
                    &\sum_{e | t(e) = v} \left( \pzp(e) \pzb^{\frac{1 - a(e)}{2}} (\lm(e))
                    \cdot \left( \sum_{e' | t(e') = s(e)} \pzp(e') \pzb^{\frac{1 -
                    a(e)}{2}} (\lm(e'))
                    + \sum_{\substack{ e' \neq e \\  s(e') =
                    s(e) }} \pzp(e') \pzb^{\frac{1 + a(e)}{2}} (\lm(e'))
                    \right) \right)\\
                +   &\sum_{e | s(e) = v} \left( \pzp(e) (\pzb^{\frac{1 + a(e)}{2}} (\lm(e)) \cdot
                    \left( \sum_{\substack{e' \neq e \\ t(e') = t(e)}}
                        \pzp(e') \pzb^{\frac{1 -
                    a(e)}{2}} (\lm(e'))
                    + \sum_{e' | s(e') = t(e)} \pzp(e') \pzb^{\frac{1 + a(e)}{2}} (\lm(e'))
                    \right) \right) \\
                =   & \; 0 \in k[\ft_1^{\vee}].\\
                \end{split}
            \ee
            This implies that for every pair $i,j \in c_{\lm} \coprod c_{\mu} \coprod c_{\rho}$
            of distinct colors, the set of edges colored $i$ and $j$ form a
            disjoint union of $4$-cycles, which we denote by $\coprod C^{(2
            \pzc)}_4$.
    \end{enumerate}
\end{definition}

\paragraph{${N = 4}$ Chiral Superfields} In the case $N = 4$ we get the following inclusions of
nilpotence varieties $Y(d; \calN)$ with $d$ the space-time dimension and $\calN$ the number of
supersymmetries, see \cite{Eager18}:
\begin{equation} \label{emb}
    Y(4; 1) \subset Y(2; 2,2) \subset Y(1; 4)
\end{equation}
where \begin{equation}
    \label{4d_N1_coord}
    Y(d;\calN) = \operatorname{Spec} \left(R /
    I_{d; \calN} \right)
\end{equation}
with $R = \CC[\rho_1, \rho_1^*, \rho_2, \rho_2^*]$ and
\be
    \begin{matrix}
        I_{4;1}  =&\langle  \rho_1 \rho_1^* + \rho_2 \rho_2^*, \\
                &           \; \rho_1 \rho_1^* - \rho_2 \rho_2^*, \\
                &           \; \rho_1 \rho_2^* + \rho_2 \rho_1^*, \\
                &           \; \rho_1 \rho_1^* + \rho_2 \rho_1^* \rangle \\
        I_{2;2,2} =& \langle \rho_1 \rho_1^* + \rho_2 \rho_2^*, \\
                &           \; \rho_1 \rho_1^* - \rho_2 \rho_2^* \rangle \\
        I_{1;4}  = & \langle \rho_1 \rho_1^* + \rho_2 \rho_2^* \rangle. \\
    \end{matrix}
\ee
Upon dimensional reduction, the embeddings \eqref{emb} of the varieties are obvious: one takes higher-dimensional variety and throws away some of its defining quadratic equations, then since the total number of supercharges is fixed, the higher dimensional variety is a subvariety of that one appearing via dimensional reduction.
Since we consider the nilpotence varieties as affine schemes, pushforward of
modules is just restriction of scalars with respect to the obvious projection
maps between the coordinate rings, and it is an exact functor.

It was shown in \cite{Eager22} that the chiral superfield corresponds to the
module $\CC[\rho_1^*, \rho_2^*]$ over $R / I_{4 ; 1}$, with free resolution
over $R$ being
\begin{equation}
    \begin{tikzcd}[ampersand replacement=\&]
        0 \arrow[r] \& R \arrow[r, "{\bmat -\rho_2 \\ \rho_1 \emat}"] \&
        R^2 \arrow[r, "{\bmat \rho_1 & \rho_2 \emat}"] \& R \arrow[r] \& 0
    \end{tikzcd}
\end{equation}
By pushing forward, this defines a module over $Y(1; 4)$ and it can be
represented by an Adinkra for the quadratic form $\rho_1 \rho_1^* + \rho_2
\rho_2^*$ defining the ideal $I_{1 ; 4}$ shown in figure
\ref{4d_chiral_antichiral}.
Similarly, the anti-chiral corresponds to the module $\CC[\rho_1, \rho_2]$. The
Adinkra in figure \ref{4d_chiral_antichiral} for the anti-chiral multiplet has the
same dashed chromotopology as the
chiral, but the direction of all the arrows is reversed.
By dimensional reduction these modules restrict to the chiral and
anti-chiral multiplets of $2d$ $\calN = (2,2)$.
\begin{figure}[ht!]
    \centering
    \begin{minipage}{.45\textwidth}
        \includegraphics[height=5cm]{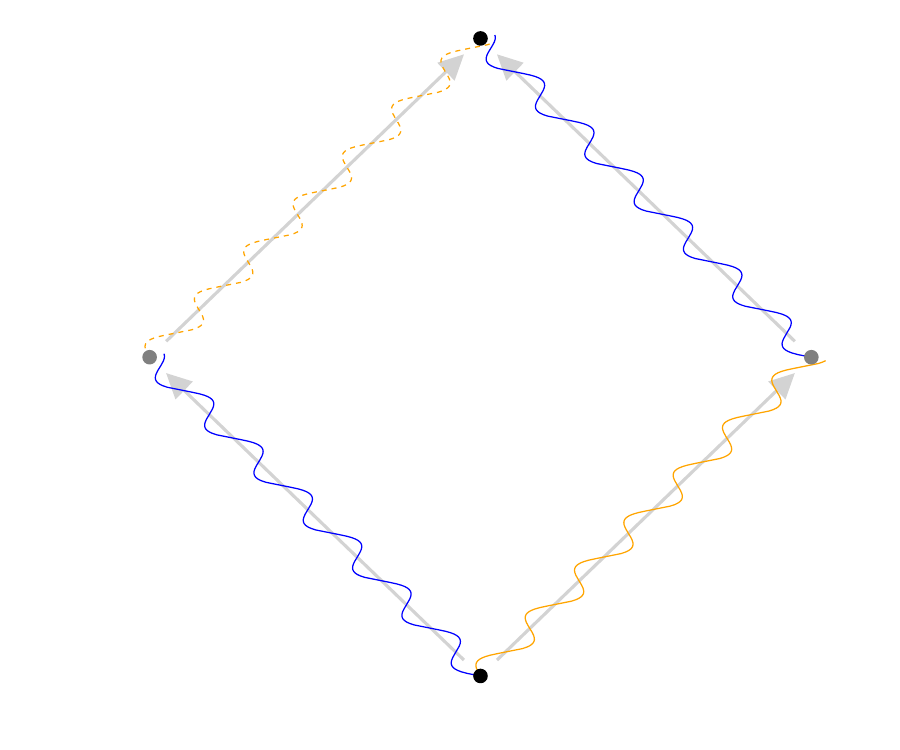}
    \end{minipage}
    \begin{minipage}{.45\textwidth}
        \includegraphics[height=5cm]{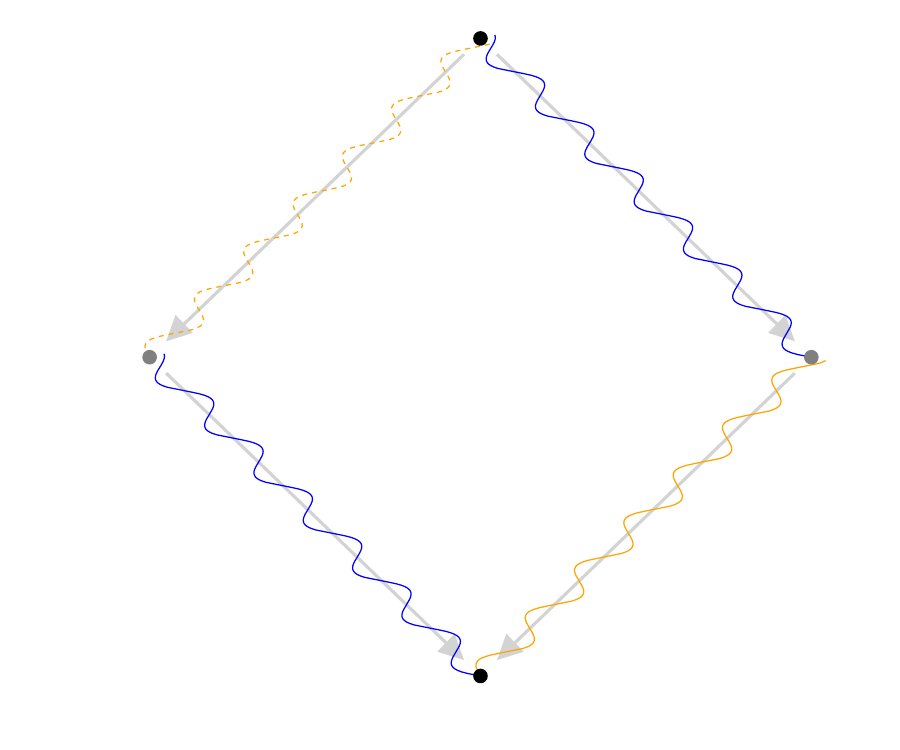}
    \end{minipage}
    \caption{Adinkras representing the $4d$ $\calN = 1$ chiral (left) and anti-chiral
    superfield (right).}
    \label{4d_chiral_antichiral}
\end{figure}

The definition of $q$-Adinkras implies that we can invert the arrows along only one of the
colors. Inverting the arrows produces two more Adinkras, shown in figure \ref{2d_twisted_chiral},
that correspond to the modules $\CC[\rho_1^*, \rho_2]$ and $\CC[\rho_1, \rho_2^*]$. These Adinkras can be
identified with the twisted chiral and twisted anti-chiral multiplet of the $2d$ $\calN = (2,2)$ supersymmetry algebra.
The modules do not lift to modules over $Y(4;1)$, \emph{i.e.} there is no twisted chiral multiplet
in four dimensions.
\begin{figure}[ht!]
    \centering
    \begin{minipage}{.45\textwidth}
        \includegraphics[height=5cm]{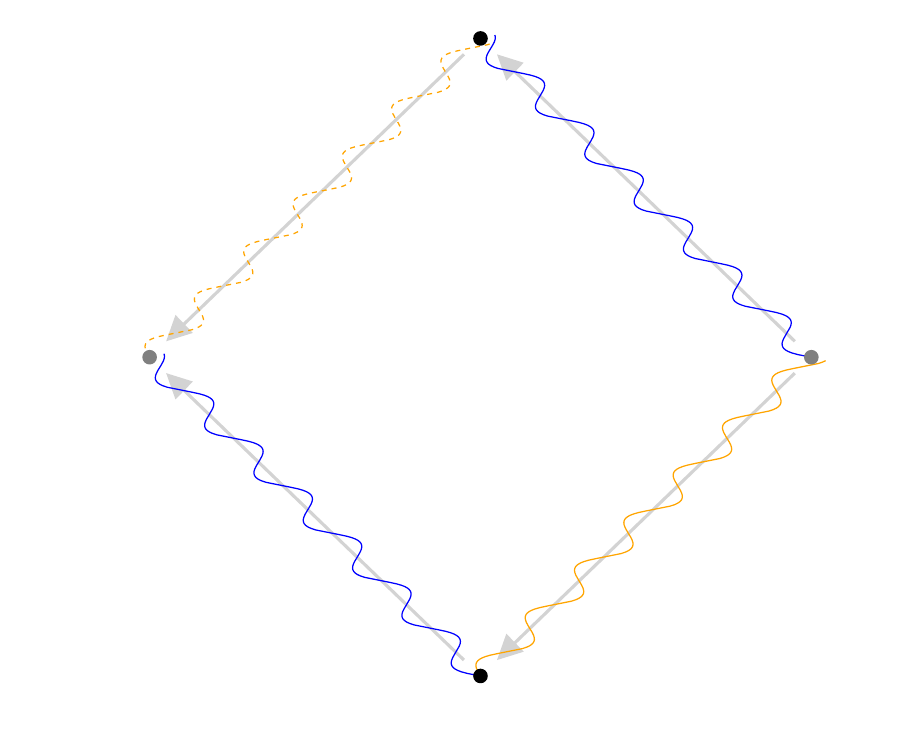}
    \end{minipage}
    \begin{minipage}{.45\textwidth}
        \includegraphics[height=5cm]{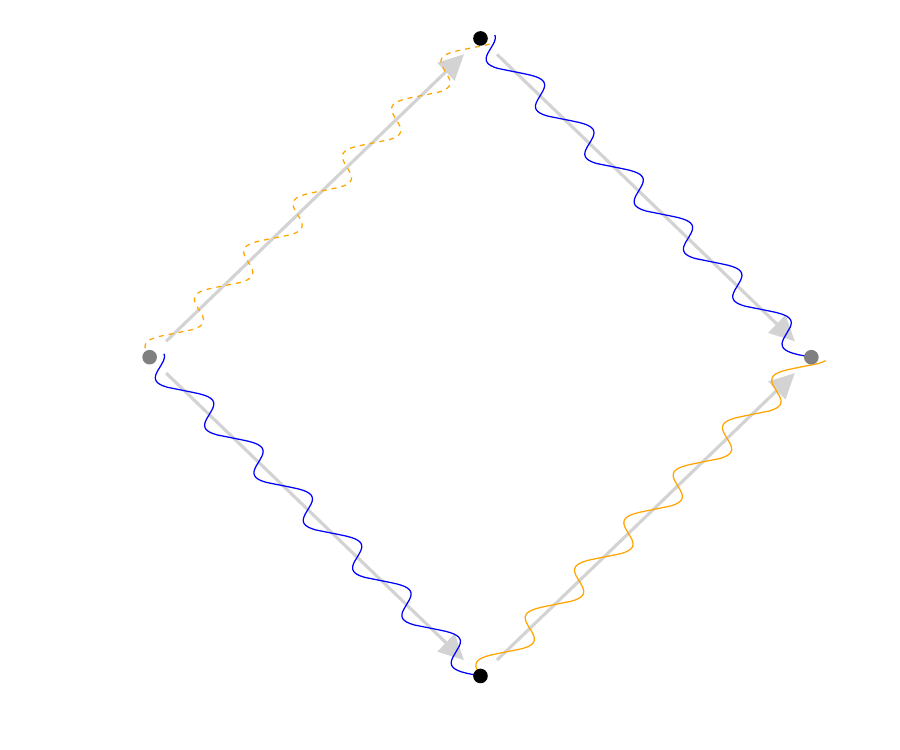}
    \end{minipage}
    \caption{Adinkras representing the $2d$ $\calN = (2,2)$ twisted chiral (left) and anti-chiral
    superfield (right).}
    \label{2d_twisted_chiral}
\end{figure}

\subsection{\texorpdfstring{$N = 4$}{N=4} multiplets from projective geometry}
\label{N=4proj}
\label{nonadinkras}

In this section, we will leave the setting of adinkraic representations, to move in a different
direction. Namely, we will show the power and effectiveness of pure spinor superfield formalism in
the context of supersymmetric quantum mechanics, by constructing families of $N=4$ multiplets coming
from geometric bundles on the related nilpotence variety. To this end, it is convenient to work with
complex projective geometry, using the methods developed in \cite{6Dmultiplets}. Working over the
complex numbers\footnote{Throughout this section, we will always leave the reference to the ground
field $\mathbb{C}$ understood.}, we recall that the projective $N=4$ nilpotence variety is a quadric
surface in $\mathbb{P}^3$ and as such it is isomorphic to the Segre variety $\Sigma_{1,1} =
\mathbb{P}^1 \times \mathbb{P}^1$ via the Segre embedding. Line bundles are easily classified: the
Picard group reads $\mbox{Pic} (\Sigma_{1,1}) \cong \mathbb{Z}\oplus \mathbb{Z}$, \emph{i.e.}\ all
line bundles are pullbacks of line bundles on one of the two copies of the projective line
$\mathbb{P}^1$. Given the natural projections
\begin{equation}
\xymatrix{
& \ar[dl]_{\pi_1}\mathbb{P}^1 \times \mathbb{P}^1 \ar[dr]^{\pi_2}& \\
\mathbb{P}^1 & & \mathbb{P}^1
}
\end{equation}
and denoting $Y_4 \defeq \mathbb{P}^1 \times \mathbb{P}^1$ for short, we write
\begin{equation}
\mathcal{O}_{Y_4} (n, m ) \defeq \pi^\ast_{1} \mathcal{O}_{\mathbb{P}^1 } (n)
    \otimes_{\mathcal{O}_{\mathbb{P}^1 \times \mathbb{P}^1}} \pi^\ast_2
    \mathcal{O}_{\mathbb{P}^1}(m).
\end{equation}
The above tensor product is sometimes denoted by $\mathcal{O}_{Y_4} (n, m) \defeq
\mathcal{O}_{\mathbb{P}^1} (n) \boxtimes \mathcal{O}_{\mathbb{P}^1} (m).$ In this setting, the two
spinor bundles are given by $\mathcal{S}_{+} \defeq \mathcal{O}_{Y_4} (-1,0)$ and $\mathcal{S}_{-}
\defeq \mathcal{O}_{Y_4} (0,-1)$.

\noindent Now, given a line bundle as above, one can obtain the associated (graded)
$\mathbb{C}[X_0,\ldots, X_3] /q_4$-module, via the functor $\Gamma_\ast$, acting as
\begin{equation} \label{gammast}
\Gamma_\ast (\mathcal{O}_{Y_4} (n,m)) \defeq \bigoplus_{d \in \mathbb{Z}} H^0 (Y_4, \mathcal{O}_{Y_4} (n+d, m+d)).
\end{equation}
We are interested in the multiplets arising from these sheaves on $Y_4$. Looking
at \eqref{gammast}, it is easy to see that considering a $d$-twist
$\mathcal{O}_{Y_4}(n+d,m+d)$ in place of $\mathcal{O}_{Y_4} (n, m)$, yields the
same module with a shift in its grading -- and hence the same multiplet with
different cohomological grading, this is for example the case of
$\mathcal{S}_{\pm}:$ indeed $\mathcal{S}_{\pm} (1) \cong
\mathcal{S}_{\mp}^\vee$ will yield the same multiplet as $\mathcal{S}_\pm$,
with a shift in degree. It follows that it suffices to consider line bundles of
the kind $\mathcal{O}_{Y_4} (n, 0)$. For non-negative $n$, one can read the
field content of the related multiplet $\mathcal{A}^\bullet (\Gamma_{\ast}
\mathcal{O}_{Y_4}(n,0))$ by reading out the Betti number of the module. This is
achieved by computing its Hilbert series. With the obvious notation
\begin{equation} \label{grdim}
\mbox{Hilb} (n,0) = \sum_{d\geq 0} (n+1+d)(d+1) t^d,
\end{equation}
where the reference to the module $\Gamma_\ast (\mathcal{O}_{Y_4} (n,0))$ is understood, and where
\begin{equation}
\dim \, H^0 (Y_4, \mathcal{O}_{Y_4} (n+d, d)) = (n+d+1) (n+d)
\end{equation}
by the K\"unneth theorem. The above Hilbert series \eqref{grdim} can be conveniently re-written as the derivative of a geometric series
\begin{equation}
\mbox{Hilb} (n,0) = \frac{\partial}{\partial t } \left ( t^{1-n} \frac{\partial}{\partial t } \sum_{d\geq 0} t^{n+d+1}\right ) = \frac{\partial}{\partial t} t^{1-n} \left ( \frac{\partial}{\partial t} \left ( \frac{t^{n+1}}{1-t}\right ) \right ).
\end{equation}
This gives
\begin{equation}
\mbox{Hilb} (n,0) = \frac{(n+1) - 2n t + (n-1)t^2}{(1-t)^4},
\end{equation}
so that
\begin{equation} \label{bettin}
\mbox{grdim} \, \mathcal{A}^\bullet (n,0) = \big [ \;  b_0 (n,0) = n+1, \quad b_1(n,0) = 2n, \quad b_2 (n,0)= n-1 \; \big ] .
\end{equation}
Equivariantly, it is not hard to read off the above resolution in terms of the representations of $\mathfrak{so}(4)^{\mathbb{C}} = \mathfrak{sl} (2) \oplus \mathfrak{sl} (2)$ appearing. Denoting the representations with $[\ell_1 | \ell_2]$, where $\ell_1$ and $\ell_2$ are the dimensions of the representation of the two copies of $\mathfrak{sl} (2)$, one has that
\begin{equation} \label{lineb}
 \, \mathcal{A}^\bullet (n,0) = \bigg [ \;  [n | 0] = \mbox{Sym}^n \mathbb{C}^2,  \; [n-1 | 1] = \mbox{Sym}^{n-1} \mathbb{C}^2 \otimes \mathbb{C}^2, \; [n-2 | 0]= \mbox{Sym}^{n-2} \mathbb{C}^2   \; \bigg ].
\end{equation}
Notice that according to \eqref{bettin}, the case corresponding to the structure sheaf is exceptional and the corresponding multiplet has only two copies of the trivial representation $[0|0]$, with opposite parity. Likewise, the case corresponding to the spinor bundles $\mathcal{S}_{Y_4, \pm}$ yields one (bosonic) copy of $[1|0]$ and one (fermionic) copy $[0|1]$ representations, for a total of 2 bosonic and two fermionic fields. Furthermore, it is important to notice that, in general, the sum of the Betti numbers of the above multiplets is not a power of two, and as such the corresponding multiplet does not arise from an Adinkra -- yet the pure spinor superfield formalism is capable to account for this kind of multiplets. \\
Following \cite{ElliottDerived} and \cite{6Dmultiplets}, the above discussion leads to the following classification result, which agrees with \ref{Koszul1}.
\begin{theorem}[$N= 4$ Multiplets from Line Bundles] Let $\susy^{N=4}_\CC$ be the complexified $N=4$ supersymmetry algebra in $d=1$. Then the following are true.
\begin{enumerate}[leftmargin=*]
\item All the multiplets of $\susy^{N=4}_{\CC}$ whose derived invariants are line bundles are of the form $\mathcal{A} (n,0)$ for some $n$ as in \eqref{lineb} up to quasi-isomorphism.
\item All the multiplets of $\susy^{N=4}_{\CC}$ whose derived invariants are ACM vector bundles $\mathcal{E}$ of rank $r$ are direct sum of multiplets of the form $\{ \mathcal{A}(0,0), \mathcal{A} (1,0) \}$ up to quasi-isomorphism, \emph{i.e.}\
\begin{equation} \label{summult}
\mathcal{A} (\mathcal{E}) \cong \bigoplus_{\ell = 1}^r \mathcal{A} (n_\ell, 0),
\end{equation}
up to permutation, for $n_\ell = \{ 0, 1\}$. In particular, $\mathcal{A}(0,0)$ and $\mathcal{A}(1,0) $ are the only multiplets whose derived invariants are indecomposable ACM vector bundles -- the structure sheaf and the spinor bundles.
\end{enumerate}
\end{theorem}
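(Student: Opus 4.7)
The plan is to leverage the geometry of $Y_4 \cong \mathbb{P}^1 \times \mathbb{P}^1$ together with the classification of indecomposable MCM modules on smooth quadric surfaces, and then apply the pure spinor functor $\mathcal{A}^\bullet$ to translate these geometric statements into the desired statements about multiplets. Throughout, I shall use that the Picard group $\mathrm{Pic}(Y_4) \cong \mathbb{Z} \oplus \mathbb{Z}$ is generated by $\mathcal{O}_{Y_4}(1,0)$ and $\mathcal{O}_{Y_4}(0,1)$, that twisting by $\mathcal{O}_{Y_4}(d,d)$ shifts the internal grading of the associated $R/I$-module without changing the underlying multiplet (only its cohomological grading), and that $\mathcal{A}^\bullet$ commutes with direct sums.

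For Part 1, I would first reduce to canonical form. Given a line bundle $\mathcal{L} = \mathcal{O}_{Y_4}(a, b)$, setting $d = \min(a, b)$ and twisting by $\mathcal{O}_{Y_4}(-d, -d)$ yields either $\mathcal{O}_{Y_4}(n, 0)$ or $\mathcal{O}_{Y_4}(0, n)$ with $n = |a - b| \geq 0$. The involution of $Y_4$ swapping the two $\mathbb{P}^1$-factors identifies $\mathcal{A}(n, 0) \cong \mathcal{A}(0, n)$ up to relabeling of the two chiralities, so up to this symmetry and to cohomological shift we may assume $\mathcal{L} = \mathcal{O}_{Y_4}(n, 0)$ for $n \geq 0$. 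The field content of $\mathcal{A}(n, 0)$ is then read off from the minimal free $R$-resolution of $\Gamma_\ast(\mathcal{O}_{Y_4}(n, 0))$: the Hilbert series computation of \eqref{grdim}, together with the K\"unneth decomposition under $\so(4)^{\mathbb{C}} \cong \mathfrak{sl}(2) \oplus \mathfrak{sl}(2)$, yields the Betti numbers \eqref{bettin} and the equivariant description \eqref{lineb}, confirming the claim.

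For Part 2, the main ingredient is the classification of indecomposable graded MCM modules (equivalently, indecomposable ACM sheaves) on the smooth quadric surface $Y_4$. By Kn\"orrer periodicity and theorem \ref{KapranovQuadrics} applied to the $2$-dimensional quadric $Y_4$, the category $[\mathrm{MF}_{\mathrm{gr}}(R, q_4)]$ of graded matrix factorizations is generated by the two spinor bundles $\mathcal{S}_\pm$, and $Y_4$ has finite Cohen--Macaulay representation type with indecomposable MCM modules (up to free summands and twist) being precisely $\mathcal{S}_+ = \mathcal{O}_{Y_4}(-1, 0)$ and $\mathcal{S}_- = \mathcal{O}_{Y_4}(0, -1)$; adjoining the structure sheaf $\mathcal{O}_{Y_4}$ gives the complete list of indecomposable ACM sheaves on $Y_4$. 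Given an ACM bundle $\mathcal{E}$ of rank $r$, the Krull--Schmidt theorem provides a decomposition $\mathcal{E} \cong \bigoplus_{\ell = 1}^r \mathcal{F}_\ell$ into indecomposable ACM line bundles, each of which reduces (up to twist and the $\mathbb{P}^1$-swap involution) to either $\mathcal{O}_{Y_4}$ or $\mathcal{O}_{Y_4}(1, 0)$. Applying $\mathcal{A}^\bullet \circ \Gamma_\ast$, which commutes with direct sums, yields the decomposition \eqref{summult}.

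The hard part will be Part 2: one must carefully invoke the classification of indecomposable ACM bundles on $Y_4$ up to twist as the three line bundles $\mathcal{O}_{Y_4}, \mathcal{S}_+, \mathcal{S}_-$. This is a classical result for smooth quadric surfaces, following from Kn\"orrer's theorem and the fact that $Y_4$ has finite Cohen--Macaulay representation type; however, the bookkeeping of how the cohomological shift induced by twisting interacts with the swap symmetry $\mathcal{O}_{Y_4}(a,b) \leftrightarrow \mathcal{O}_{Y_4}(b,a)$ of the two $\mathbb{P}^1$-factors --- which is precisely what reduces the three indecomposables to the two multiplets $\mathcal{A}(0,0)$ and $\mathcal{A}(1,0)$ in the final statement --- deserves to be spelled out explicitly, as does the verification that $\mathcal{A}^\bullet$ is well-behaved with respect to the Krull--Schmidt decomposition at the level of derived categories.
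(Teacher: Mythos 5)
Your proposal is correct and follows the same route as the paper's own (quite brief) proof: Part 1 via the Hilbert-series/Künneth discussion preceding the theorem, and Part 2 via Knörrer's classification of ACM bundles on the smooth quadric surface, the observation that the indecomposable ACM sheaves up to twist are $\mathcal{O}_{Y_4}$, $\mathcal{S}_+$, $\mathcal{S}_-$ (with the two spinor bundles yielding isomorphic multiplets under the swap involution), and the fact that the pure spinor functor commutes with direct sums. The only small remark is that the operative ingredient is Knörrer's classification theorem for MCM modules on the two-dimensional quadric hypersurface rather than ``Knörrer periodicity'' per se (though the classification in Knörrer's paper is indeed derived via periodicity, so this is a harmless conflation); you also correctly flag that invoking Krull--Schmidt and reconciling the cohomological twist with the $\mathbb{P}^1$-swap symmetry is where care is needed, and the paper glosses over exactly these points.
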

\begin{proof} The first part follows from the discussion above. For the second
    part, by Kn\"orrer's classification of ACM bundles on smooth quadric
    surfaces \cite{Knorrer}, each ACM vector bundle splits into a sum of line
    bundles and (possibly twisted) spinor bundles -- in particular, the only
    indecomposable ACM bundles are the structure sheaf and the spinor bundles
    (which yield isomorphic multiplets). Finally, since the pure spinor functor
    commutes with direct sums, the related multiplet will be of the form
    \eqref{summult} because of the first part of the theorem\footnote{Here
    we have used the definition of direct sum of multiplets as in
    \cite{ElliottDerived}.}.
\end{proof}

\section*{Acknowledgements}
\addcontentsline{toc}{section}{Acknowledgments}
We would like to thank Fabian Hahner, Ingmar Saberi, Steffen Schmidt, for valuable discussions and
collaboration on related projects, and Daniele Faenzi for correspondence.\\
Special thanks to Sylvester James Gates, Chuck
Doran and Tristan H\"ubsch for discussions and continuous interest in our
endeavor.%
The authors also thank Owen Gwilliam, Kevin Iga and Stefan M\'endez-Diez for helpful comments.
R.E.\ thanks S.J. Gates for encouragement.
R.E. and R.S. thank Bard College, Annandale-on-Hudson, for hospitality and the
opportunity to present this work during the workshop \enquote{MathScape 2024: The Mathematics of
Supersymmetry}.
J.W.\ thanks the International Centre for Mathematical Sciences, Edinburgh, for support and hospitality during the ICMS Visiting Fellows programme where this work was completed. This work was supported by EPSRC grant EP/V521905/1.

This work is funded by the Deutsche Forschungsgemeinschaft (DFG, German Research Foundation) under
project number 517493862 (Homological Algebra of Supersymmetry: Locality, Unitary, Duality).
This work is funded by the Deutsche Forschungsgemeinschaft (DFG, German Research Foundation) under
Germany’s Excellence Strategy EXC 2181/1 — 390900948 (the Heidelberg STRUCTURES Excellence Cluster)

\appendix

\section{Super Poincar\'e algebras, nilpotence varieties and pure spinor formalism}
\label{Appendix1}

\subsection*{Super Poincar\'e Algebras.} Let $V$ be the $d$-dimensional $\kk$-vector space (for $\kk$ the real or complex numbers) interpreted as a flat spacetime. We recall that, depending on the dimension $d$, the group $\mbox{Spin} (V)$ has either one $\mathcal{S}_0$ or two $\mathcal{S}_{\pm}$ inequivalent representations, together with a pairing $\Gamma : \mathcal{S} \otimes \mathcal{S} \rightarrow V$ for a generic spinor representation $\mathcal{S}$. These representations can be used to extend the ordinary Abelian translation algebra -- identified with $V$ via $V = \langle P_1, \ldots, P_d \rangle_{k}$, with $[P_i, P_j] = 0$ -- to give the $\mathcal{N}$-extended supertranslation or supersymmetry algebra $\mathfrak{t}^{d,\mathcal{N}} \defeq V \oplus \mathfrak{p}_{{1}}^{d, \mathcal{N}},$ 
\begin{equation}
\xymatrix{
0 \ar[r] & V \ar[r] & \mathfrak{t}^{d, N} \ar[r] & \mathcal{S} \otimes U \ar[r] & 0,
}
\end{equation}
where its odd part $\mathfrak{p}_{ 1} \defeq \mathcal{S} \otimes U$ is given by the tensor product of a spin representation with an auxiliary vector space, that can be endowed with a symmetric or antisymmetric bilinear form, that enters the definition of the bracket on $\mathfrak{t}^{d, \mathcal{N}}.$ The number $N$, which is referred to as \emph{degree of extended supersymmetry}, is a multiple of the minimal possible dimension for $U$ - \emph{i.e}\ in the case $U$ is symplectic, $N = \dim U /2$. Basis elements in $\mathfrak{p}_{ 1}^{d, \mathcal{N}}$ are usually denoted with $\{ Q_i \}$ and are referred to as supercharges.\\
The $\mathcal{N}$-extended super Poincar\'e algebra $\mathfrak{p}^{d, \mathcal{N}} = \mathfrak{p}^{d, \mathcal{N}}_{ 0} \oplus \mathfrak{p}^{d, \mathcal{N}}_{ 1}$ also includes the infinitesimal automorphisms of the super translation algebra. As such, it is defined by the extension
\begin{equation}
\xymatrix{
0 \ar[r] & \mathfrak{t}^{d, \mathcal{N}} \ar[r] & \mathfrak{p}^{d, \mathcal{N}} \ar[r] & \mathfrak{aut} (\mathfrak{t}^{d, \mathcal{N}}) \ar[r] & 0,
}
\end{equation}
where, concretely, $\mathfrak{aut} (\mathfrak{t}^{d,\mathcal{N}}) = \mathfrak{so} (V)\oplus \mathfrak{d} \oplus \mathfrak{r}$. Here $\mathfrak{so} (V)$ are the Lorentz transformations, $\mathfrak{d}$ is an Abelian factor accounting for scale transformations, and $\mathfrak{r}$ is the $R$-symmetry algebra which accounts for the infinitesimal automorphisms of $U.$\\
It is convenient to lift the $\mathbb{Z}/2$-grading of the super Poincar\'e algebra to a $\mathbb{Z}$-grading, and look at super Poincar\'e algebra as a certain $\mathbb{Z}$-graded Lie algebra concentrated in degree 0, 1, 2 such that 
\begin{equation} \label{superP}
\mathfrak{p}^{d, \mathcal{N}} \defeq ( \underbrace{\mathfrak{so} (V)\oplus \mathfrak{d} \oplus \mathfrak{r}}_{\mbox{\tiny{deg 0}}} )\oplus ( \underbrace{\mathcal{S} \otimes U}_{\mbox{\tiny{deg 1}}} ) \oplus ( \underbrace{V}_{\mbox{\tiny{deg 2}}} ),
\end{equation}
together with its obvious brackets: the $\mathbb{Z}/2$-grading can be recovered by reducing modulo $2$ the above $\mathbb{Z}$-grading. Notice that, in this way, the supertranslation algebra is concentrated in positive degrees.

\subsection*{Nilpotence varieties in \texorpdfstring{$d=1$}{d=1} \& spinor bundles} \label{NVSPapp}

We now take the above super translations of super Poincar\'e algebras, which we will denote by abuse of notation by the same symbol, and we let  
\begin{equation}
R \defeq \mbox{Sym}^\bullet_{k} ( (\mathfrak{t}^{d,\mathcal{N}}_{ 1 })^\vee) 
\end{equation}
be the ring of polynomial functions on the odd part of the $d$-dimensional, $\mathcal{N}$-extended super translation algebra. Denoting with $\{ \lambda_i \}_{i = 1, \ldots, \dim \mathfrak{t}_{1}}$ the dual of the $\{ Q_i \}_{i = 1, \ldots, \dim \mathfrak{t}_{ 1}}$, we have that $R = k[\lambda_1, \ldots, \lambda_{\dim \mathfrak{t}_{ 1}}]$. If we let $Q \in \mathfrak{t}^{d,N}_{ 1 }$, the nilpotency equation $\{ Q, Q\} = 0$ defines a quadratic ideal $I$ in the polynomial ring $R$, and the quotient ring determines the ring of functions of an algebraic variety $Y$, called \emph{nilpotence variety} of $\mathfrak{p}^{d,\mathcal{N}}$. Concretely, expanding $Q = \sum_i \lambda_i Q_i$ and using the commutation relations characterizing the specific super translation algebra (these depends on $d$ and $\mathcal{N}$), the ideal is generated by 
\begin{equation} \label{psideal}
I = \sum_{i,j = 1}^{\dim \mathfrak{t}_{ 1}} \lambda_i \Gamma^\mu_{ij} \lambda_j, 
\end{equation}
if $\{ Q_i, Q_j\} = 2 \Gamma^\mu_{ij} P_\mu$. Posing $\mathbb{A}_{k}^{\dim \mathfrak{t}_{ 1}} \defeq \mbox{Spec}\, k[\lambda_1, \ldots, \lambda_{\dim \mathfrak{t}_{ 1}}]$, the following general definition can be given.
\begin{definition}[Nilpotence Variety $Y_{d,\mathcal{N}}$ of $\mathfrak{p}^{d,\mathcal{N}}$] The nilpotence variety of the $d$-dimensional, $\mathcal{N}$-extended super Poincar\'e algebra is the affine scheme $(Y_{d,\mathcal{N}}, \mathcal{O}_{Y_{d,\mathcal{N}}}) \subset \mathbb{A}^{\dim \mathfrak{t}_{1}}_k$ such that $Y_{d,N} \defeq \mbox{Spec} (R / I)$ and $\mathcal{O}_{Y_{d,\mathcal{N}}} \defeq R/I,$ with $I$ as in equation  \eqref{psideal}.
\end{definition}
\noindent Since the ideal $I$ is homogeneous, the above nilpotence variety descends to quadric hypersurface in $\mathbb{P}^{{\dim \mathfrak{t}_{ 1}} -1}_k \defeq \mbox{Proj} \, k[\lambda_1, \ldots, \lambda_{\dim \mathfrak{t}_{ 1}}]$, which we will denote again with $Y_{d,\mathcal{N}}$. \\
The case of interest for this paper is $d=1$ when $I = q_N$, the standard quadratic form $\sum_{i}^N \lambda^2_i$ -- notice that $ \dim \ft_1 = N$ in one dimension. Working over the complex numbers $\kk = \mathbb{C}$, for $N$ small, these quadric hypersurfaces $Y_{N} \defeq Y_{1, N} $ can be described in a very concrete manner.
\begin{enumerate}[leftmargin=*]
\item For $N=1$ the nilpotence variety $Y_1$ is a fat point at the origin of the affine line $\mathbb{A}^1_\mathbb{C}$, that is $Y_1 = \mbox{Spec} (\mathbb{C}[\lambda] / \langle \lambda \rangle)$.
\item For $N=2$ the nilpotence variety $Y_2 $ is given by two lines crossing at the origin in $\mathbb{A}^2_\mathbb{C}$ or two points $Y_2 = P_1 \sqcup P_2 \subset \mathbb{P}^1$ in the projective line $\mathbb{P}_{\mathbb{C}}^1$ - for example $P_1 = [1:0]$ and $P_2 = [0:1]$.  
\item For $N=3$ the projective nilpotence variety $Y_3$ is a smooth conic in the projective plane, and hence it is isomorphic to the projective line $Y_3 \cong \mathbb{P}_{\mathbb{C}}^1$ via Veronese embedding.
\item For $N=4$ the projective nilpotence variety $Y_4$ is a smooth quadric in $\mathbb{P}^3_{\mathbb{C}}$, and hence it is isomorphic to the $(1,1)$-Segre variety $\Sigma_{1,1} = \mathbb{P}^1_{\mathbb{C}} \times \mathbb{P}^1_{\mathbb{C}} \subset \mathbb{P}^3_{\mathbb{C}} $ via Segre embedding.
\item For $N=5$ the projective nilpotence variety $Y_5$ is is a smooth quadric in $\mathbb{P}^4_{\mathbb{C}}$, that can be identified with the 3-fold Lagrangian Grassmannian $LG(2,4)$ of 2-planes in a 4-dimensional complex symplectic vector space $(V^4, \omega)$ on which the $\omega |_{LG (2,4)} = 0.$ The identification can be seen as a consequence of the exceptional isomorphism $\mbox{SO}(5)\cong \mbox{Sp}(4) / \mathbb{Z}/2$.
\item For $N=6$ the projective nilpotence variety $Y_6 $ is a smooth quadric in $\mathbb{P}^5_{\mathbb{C}}$, and hence it is isomorphic to the 4-fold Grassmannian $G(2,4)$ of 2-planes in a 4-dimensional complex vector space via Pl\"ucker embedding.  
\item For $N=7$ the projective nilpotence variety $Y_7$ is a smooth quadric in $\mathbb{P}^6_{\mathbb{C}}$, which can be identified with the 5-fold orthogonal Grassmannian $OG(2,6)$, the variety of lines in a quadric 4-fold hypersurface.  The projective nilpotence variety also arises as the quotient of the exceptional Lie group $\mbox{G}_2$ by a parabolic subgroup $P_{\alpha_1}$ \cite{Ottaviani90}.
\item For $N=8$ the projective nilpotence variety $Y_8$ is a smooth quadric in $\mathbb{P}^7_{\mathbb{C}}$, which can be identified with the 6-fold orthogonal Grassmannian $OG(3,7)$, the variety of planes in quadric 5-fold hypersurface. Similarly, both components of the variety of 3-planes in a quadric 6-fold, are isomorphic to a quadric 6-folds by triality of $\mbox{SO}(8)$. 
\end{enumerate}
We remark that all nilpotence varieties $Y_N$ are smooth except for $N=1$ and $N=2$, and that, over the complex numbers, every smooth quadric hypersurface of fixed dimension is projectively equivalent: in other words, every quadric is the smooth quadric hypersurface $Y_N$ defined by $q_N$, up to a linear automorphism of $\mathbb{P}^{N-1}_{\mathbb{C}}.$ 
\vspace{.3cm}


Of great relevance to this paper are certain natural bundles that can be defined on a (smooth) quadric hypersurface, such as the nilpotence variety $Y_{1,N}$ for $N\geq3$: the spinor bundles. 
We now briefly introduce these bundles in a representation-theoretic fashion\footnote{For a more geometric definitions, we refer for example to \cite{Ottaviani88}.}.\\
Keep on working over the complex numbers, we let $Y_N \subset \mathbb{P}_{\mathbb{C}}^{N-1}$ be a smooth (projective) quadric of dimension $N-2$. The group $\mbox{SO}(N)$, and hence its double covering $\mbox{{Spin}}(N)$, acts on $Y_N$ and indeed, $Y_N$ can be seen to be a homogeneous space of $\mbox{Spin}(N).$ More precisely, fixing a maximal torus $\mathbb{T}_N \subset \mbox{Spin}(N)$ and setting $\dim Y_N = 2m+1$ or $\dim Y_N = 2m$, the group $\mbox{Spin} (N)$ has $m+1$ simple roots, that we call $\{ \alpha_1, \ldots, \alpha_{m+1} \}$, and the quadric $Y_N$ is a homogeneous space $Y_N = \mbox{Spin} (N) / P(\alpha_1)$, for $P(\alpha_1)$ maximal parabolic subgroup of $\mbox{Spin}(N).$ We denote $\mathpzc{w}_j$ the fundamental weights such that $2 \langle \mathpzc{w}_j, \alpha_i \rangle / \langle \alpha_i , \alpha_j \rangle = \delta_{ij}.$ Then, for $N = 2m$, there exists two spin representations of dimension $2^{m-1}$ of the Levi quotient of $P (\alpha_1)$ with highest weight $\lambda_{m+1}$ and $\lambda_m$ - we call them $\rho_{{\pm}} : \mbox{Spin} (N) \rightarrow GL (2^{m-1})$. For $N = 2m+1$ there is one spin representation of dimension $2^m$ of the Levi quotient $P (\alpha_1)$, with highest weight $\lambda_{m+1}$ - we call it $\rho : \mbox{Spin} (N) \rightarrow GL (2^m). $
\begin{definition}[Spinor Bundle on $Y_N$] Let $Y_N \subset \mathbb{P}^N_{\mathbb{C}}$ be a smooth $(N-2)$-dimensional quadric, and let $\mathcal{P} \defeq ( P (\alpha_1) \rightarrow \mbox{Spin} (N) \stackrel{\pi}{\rightarrow} Y_N)$ be the principal $P(\alpha_1)$-bundle over $Y_N$.
For $N$ odd, we define the spinor bundle $\mathcal{S}$ to be the rank $2^{m}$ vector bundle defined as the dual of associated bundle to $\mathcal{P}$ via the representation $\rho,$ \emph{i.e.} $\mathcal{S}^{\vee} \defeq \mathcal{P} \times_\rho \mathbb{C}^{2^{m}}$.
For $N$ even we define the spinor bundles $\mathcal{S}_{\pm}$ to be the rank $2^{m-1}$ vector bundles defined as the dual of the associated bundle $\mathcal{P}$ via the representations $\rho_{\pm}$, \emph{i.e} $\mathcal{S}^\vee \defeq \mathcal{P} \times_{\rho_{\pm}} \mathbb{C}^{2^{m-1}}$.
\end{definition}
\noindent It is easy to see that the for $N = 3$, when $Y_1 = \mathbb{P}^1_{\mathbb{C}}$, the spinor bundle is the tautological bundle, \emph{i.e.}\ $\mathcal{S} = \mathcal{O}_{\mathbb{P}^1} (-1)$ -- indeed it is unique theta characteristic of the Riemann sphere $\mathbb{P}^1$, the ``square root'' of the canonical bundle $K_{\mathbb{P}^1} = \mathcal{O}_{\mathbb{P}^1} (-2)$ -- and accordingly, for $Y_{4} = \mathbb{P}^1_{\mathbb{C}} \times \mathbb{P}^1_\mathbb{C}$ the spinor bundles are the pull-backs of the spinor bundles on the two copies of $\mathbb{P}^1_\mathbb{C}$ respectively, that is 
\begin{equation}
\mathcal{S}_+ = \mathcal{O}_{\mathbb{P}^1 \times \mathbb{P}^1} (-1,0), \qquad \mathcal{S}_- = \mathcal{O}_{\mathbb{P}^1 \times \mathbb{P}^1} (0,-1).
\end{equation}
where $\mathcal{O}_{\mathbb{P}^1 \times \mathbb{P}^1} (\ell_1, \ell_2) \defeq \mathcal{O}_{\mathbb{P}^1} \boxtimes \mathcal{O}_{\mathbb{P}^1} (\ell_2).$ 


\subsection*{Elements of the Pure Spinor Superfield Formalism} 

\label{PSapp}

\noindent The main idea of the pure spinor superfield formalism is that of obtaining multiplets\footnote{For a general definition, based of $L_{\infty}$-actions, we refer to the recent \cite{Eager22, ElliottDerived}} as byproducts of the cohomology of a certain complex, which is constructed out of the data of the nilpotence variety $Y_\mathfrak{p}$ of a super Poincar\'e algebra $\mathfrak{p}$ (in any dimension) and a ``canonical'' split supermanifold $\mathpzc{M}$, the superspacetime $\mathpzc{M}$, whose sheaf of functions reads 
\begin{equation}
\mathcal{C}^\infty (\mathpzc{M}) \defeq \mathcal{C}^\infty (\mathfrak{t}_{ 2}^\vee) \otimes_\mathbb{C} \bigwedge^\bullet \mathfrak{t}_{ 1}^\vee.
\end{equation} 
Usually, the spacetime is given local coordinates $x^\mu | \theta^\alpha$, for $\mu = 1, \ldots, \dim \mathfrak{t}_{ 2} $ and $\alpha = 1, \ldots, \dim \mathfrak{t}_{ 1},$ and elements of $\mathcal{C}^\infty(\mathpzc{M})$ are referred to as \emph{free} superfields. The superspacetime $\mathpzc{M}$ enjoys two commuting actions of $\mathfrak{t}$, we call them $(\ell, r) : \mathfrak{t} \rightarrow \mbox{End} (\mathpzc{M})$. In the coordinates $x^\mu | \theta^\alpha$, the supercharges $Q$'s are represented as follows
\begin{align} \label{landr}
&\ell (Q_\alpha) \equiv \mathcal{Q}_\alpha \defeq \frac{\partial}{\partial \theta^\alpha } - \Gamma^\mu_{\alpha \beta} \theta^\beta \frac{\partial}{\partial x^\mu}, \qquad  r (Q_\alpha) \equiv \mathcal{D}_\alpha \defeq \frac{\partial}{\partial \theta^\alpha } + \Gamma^\mu_{\alpha \beta} \theta^\beta \frac{\partial}{\partial x^\mu},
\end{align} 
where one has $\{Q_\alpha, Q_\beta\} = 2 \Gamma^\mu_{\alpha \beta} P_\mu$, for $\langle P_\mu \rangle = \mathfrak{t}_{0}$. Given a $\mathfrak{p}_{0}$-equivariant module $M$ on $Y_\mathfrak{p} $, the pure spinor complex (or $M$) is defined as follows 
\begin{equation}
( \mathcal{A}^\bullet (M) , \mathcal{D} ) \defeq ( M \otimes_{\mathbb{C}} {C}^\infty (\mathfrak{t}^\vee),\; \sum_{\alpha = 1}^{\dim \mathfrak{t}_{ 1}} \lambda_\alpha \otimes \mathcal{D}_\alpha ),
\end{equation} 
where the $\lambda$'s act via the $R/I$-module structure. This is easily seen to be a complex: indeed $\mathcal{D}^2 = 0$ upon using the fact that $I = \langle \lambda_\alpha \Gamma_{\alpha \beta}^\mu \lambda^\beta \rangle$.  Assigning $\mathbb{Z}\times \mathbb{Z}/2$-degrees as 
$
\deg (\lambda_\alpha) = (1, -),  \deg (x^\mu) = (0,+),  \deg (\theta^\alpha) = (0, -),
$
one can interpret the complex $\mathcal{A}^\bullet (M)$ as the global sections of a differentially graded super vector bundle over $\mathpzc{M}$, whose fiber over a point look 
\begin{equation}
\begin{tikzcd}
\big (\cdots \arrow[r, "\lambda_\alpha"] & M^{(k-1)} \arrow[r, "\lambda_\alpha"] & M^{(k)} \arrow[r, "\lambda_\alpha"]  & M^{(k+1)} \arrow[r, "\lambda_\alpha"] & \cdots \big ) \otimes \mathcal{C}^{\infty} (\mathpzc{M}) \arrow[loop right, "\mathcal{D}_\alpha"]
\end{tikzcd}
\end{equation}
Note that since $\{ \mathcal{D}_\alpha , \mathcal{Q}_\alpha \} = 0,$ the pure spinor complex $\mathcal{A}^\bullet (M)$ inherits a $\mathfrak{t}$-module structure, that can in turn be lifted to a full $\mathfrak{p}$-module, hence defining a \emph{multiplet} for the super Poincar\'e algebra $\mathfrak{p}.$\\
It is to be observed that the pure spinor complex $\mathcal{A}^\bullet (M)$ presents a multiplet as a cochain complex of super vector bundles over the superspacetime $\mathpzc{M}$, while usually the \emph{field content} of a multiplet is expressed in terms of sections of vector bundles over ordinary spacetime. This physical presentation can be obtained by filtering $\mathcal{A}^\bullet (M)$ so that the dependence on smooth functions factors out -- more precisely, the procedure of passing from the pure spinor multiplet $\mathcal{A}^\bullet (M)$ to the component field description amounts to finding a finite-rank ``minimal'' free resolution of the module $M$ over the polynomial ring $R$. Concretely, this can be achieved by computing the cohomology of the Koszul complex of $M$.

A natural question is whether the pure spinor superfield formalism, as presented, is capable of accounting for \emph{all} possible multiplets of a certain super Poincar\'e algebra $\mathfrak{p}$. Examples show that this is not the case - one such physically relevant example is given by the antifield multiplet in $d=4$ - and that this is related to the regularity properties of the modules (in particular the Cohen--Macaulay property). On the other hand, upon considering suitable \emph{derived} replacements in the previous construction, all the multiplets can be reached by the pure spinor formalism, as recently proved in \cite{ElliottDerived}. The idea is that of replacing $R/I$ with a complex, whose $0$-cohomology gives back $R/I$. This is nothing but the Chevalley--Eilenberg complex of the supertranslation algebra $\mathfrak{t}$, 
\begin{equation} \label{filtered}
\mbox{CE}^\bullet (\mathfrak{t}) \defeq ( \mbox{Sym}^\bullet (\mathfrak{t}^\vee[1]),\; \mbox{d}_{\mbox{\scriptsize{CE}}} ).
\end{equation}

Accordingly, instead of considering only $R/I$-modules one considers $\mbox{CE}^\bullet (\mathfrak{t})$-modules as input: geometrically, this amount to consider modules on the derived scheme $\mbox{Spec} (\mbox{CE}^\bullet (\mathfrak{t}))$ instead of modules on the nilpotence variety $Y = \mbox{Spec} (R/I).$ In particular, the following holds. 
\begin{theorem}[\cite{ElliottDerived}] \label{derivedPS} Given a super Poincar\'e algebra $\mathfrak{p} = \mathfrak{p}_0 \oplus \mathfrak{t}$, the pure spinor functor defines an equivalence of dg categories 
\begin{equation}
\mbox{\emph{CE}}^\bullet (\mathfrak{t})\mbox{\emph{-\sffamily{Mod}}}^{\mathfrak{p}_0} \cong \mbox{\emph{\sffamily{Mult}}}^{\mathpzc{strict}\mbox{\tiny{-}}\mathpzc{ob}}_{\mathfrak{g}}
\end{equation}
between the category of $\mathfrak{p}_0$-equivariant $\mbox{\emph{CE}}^\bullet (\mathfrak{t})$-modules and the full dg subcategory of the category of multiplets whose objects are strict multiplets.
\end{theorem}
\noindent The upshot of the theorem is twofold: on the one hand, from a physical point of view, it says that (up to quasi-isomorphism) all multiplets can be constructed via a suitable derived ``enhancement'' of the pure spinor formalism. On the other hand, mathematically, it relates a geometric category, to a representation-theoretic one, in a Koszul duality-like fashion.  

\section{Elements of the Pure Spinor Superfield Formalism} \label{PSapp}

\noindent The main idea of the pure spinor superfield formalism is that of obtaining multiplets as byproducts of the cohomology of a certain complex. For us, the datum of a \emph{multiplet} is roughly that of a $\mathfrak{p}$-module structure on a collection of sections of vector bundles. More precisely, the following is adapted from \cite{Eager22, ElliottDerived} 
\begin{definition}[Multiplet] Let $\mathfrak{p}$ be any super Poincar\'e algebra of the flat spacetime $V$. Then a multiplet is a local $L_\infty$ $\mathfrak{p}$-module , \emph{i.e.} a triple $(E, D, \rho)$, where $(E, D)$ is a differentially graded super vector bundle on $V$ endowed with an action $\sigma : \mathfrak{Aff}(V)\otimes \mathcal{E} \rightarrow \mathcal{E}$ of the infinitesimal affine transformations $\mathfrak{Aff} (V) \subset \mathfrak{p}$ of $V$ on its sections $\mathcal{E} = \Gamma (V,E)$ and 
$
\rho : \mathfrak{p} \otimes \mathcal{E} \rightarrow \mathcal{E}
$ 
is a local $L_\infty$-action of $\mathfrak{p}$ on $\mathcal{E}$ such that $\rho |_{\mathfrak{Aff} (V)} = \varphi. $
\end{definition}
\noindent Multiplets can be arranged into a category: the full subcategory of the category of $\mathfrak{p}$-modules whose objects are multiplets in the sense of the previous definition. We will denote this dg category by $\mbox{\sffamily{{Mult}}} _\mathfrak{p}$. Moreover, for future use, we will denote with $\mbox{\sffamily{Mult}}^{\mathpzc{strict}\mbox{\tiny{-}}\mathpzc{ob}}_{\mathfrak{g}}$ the full dg subcategory of the category of multiplets whose objects are strict multiplets, \emph{i.e.}\ the action is a strict Lie algebra action, but morphisms can be generic $L_\infty$-maps.\\
Given these definitions, we will now review the main steps of the construction of this cochain complex and the computation of its cohomology, leaving the details to the literature.  
\begin{enumerate}[leftmargin=*]
\item The cochain complex is constructed out of the data of the nilpotence variety $Y_\mathfrak{p}$ of a super Poincar\'e algebra $\mathfrak{p}$ (in any dimension) and a ``canonical'' split supermanifold $\mathpzc{M}$, the superspacetime, which is attached to the supertranslation sub-algebra $\mathfrak{t} \defeq \mathfrak{t}_{\bar 0} \oplus \mathfrak{t}_{\bar 1} \subseteq \mathfrak{p}$, by taking 
\begin{equation}
\mathcal{C}^\infty (\mathpzc{M}) \defeq \mathcal{C}^\infty (\mathfrak{t}_{\bar 0}^\vee) \otimes_\mathbb{C} \bigwedge^\bullet \mathfrak{t}_{\bar 1}^\vee.
\end{equation}
Notice that since $\mathfrak{t}_{\bar 0}$ is generated by the spacetime translations, then $\mathcal{C}^\infty (\mathfrak{t}_{\bar 0}^\vee)$ can be interpreted as the ring of smooth functions on the spacetime on which the theory is defined - the reduced spacetime underlying $\mathpzc{M}.$ Usually, the spacetime is given local coordinates $x^\mu | \theta^\alpha$, for $\mu = 1, \ldots, \dim \mathfrak{t}_{\bar 0} $ and $\alpha = 1, \ldots, \dim \mathfrak{t}_{\bar 1},$ and elements of $\mathcal{C}^\infty(\mathpzc{M})$ are referred to as \emph{free} superfields.
\item The superspacetime $\mathpzc{M}$ enjoys two commuting actions of $\mathfrak{t}$, we call them $(\ell, r) : \mathfrak{t} \rightarrow \mbox{End} (\mathpzc{M})$. In the above coordinates, the supercharges $Q$'s in are represented as follows
\begin{align} \label{landr}
&\ell (Q_\alpha) \equiv \mathcal{Q}_\alpha \defeq \frac{\partial}{\partial \theta^\alpha } - \Gamma^\mu_{\alpha \beta} \theta^\beta \frac{\partial}{\partial x^\mu}, \\
& r (Q_\alpha) \equiv \mathcal{D}_\alpha \defeq \frac{\partial}{\partial \theta^\alpha } + \Gamma^\mu_{\alpha \beta} \theta^\beta \frac{\partial}{\partial x^\mu},
\end{align} 
where one has $\{Q_\alpha, Q_\beta\} = 2 \Gamma^\mu_{\alpha \beta} P_\mu$, for $\langle P_\mu \rangle = \mathfrak{t}_{\bar 0}$.
\item Let $M$ be a $\mathfrak{p}_{0}$-equivariant module on $Y_\mathfrak{p} $, where $\mathfrak{p}$ is seen as $\mathbb{Z}$-graded super algebra with $\mathfrak{p} = \mathfrak{p}_0 \oplus \mathfrak{t}$, \emph{i.e.}\ infinitesimal Lorentz transformations and $R$-symmetries are in degree zero, and consider the following pair
\begin{equation}
( \mathcal{A}^\bullet (M) , \mathcal{D} ) \defeq ( M \otimes_{\mathbb{C}} \mathcal{C}^\infty (\mathfrak{t}^\vee),\; \sum_{\alpha = 1}^{\dim \mathfrak{t}_{\bar 1}} \lambda_\alpha \otimes \mathcal{D}_\alpha ),
\end{equation} 
where the $\lambda$'s act via the $R/I$-module structure. This is easily seen to be a complex: indeed $\mathcal{D}^2 = 0$ upon using the fact that $I = \langle \lambda_\alpha \Gamma_{\alpha \beta}^\mu \lambda^\beta \rangle$.  
\item Assigning suitable $\mathbb{Z}\times \mathbb{Z}/2$-degrees, in particular taking 
\begin{equation}
\deg (\lambda_\alpha) = (1, -), \quad \deg (x^\mu) = (0,+) , \quad \deg (\theta^\alpha) = (0, -),
\end{equation}
one can interpret the complex $\mathcal{A}^\bullet (M)$ as the global sections of a differentially graded super vector bundle over $\mathpzc{M}$, whose fiber over a point look 
\begin{equation}
\begin{tikzcd}
\big (\cdots \arrow[r, "\lambda_\alpha"] & M^{(k-1)} \arrow[r, "\lambda_\alpha"] & M^{(k)} \arrow[r, "\lambda_\alpha"]  & M^{(k+1)} \arrow[r, "\lambda_\alpha"] & \cdots \big ) \otimes \mathcal{C}^{\infty} (\mathpzc{M}) \arrow[loop right, "\mathcal{D}_\alpha"]
\end{tikzcd}
\end{equation}
We call $\mathcal{A}^\bullet (M)$ the pure spinor superfield complex associated to $M$, or pure spinor complex of $M$ for short.
\item Since $\{ \mathcal{D}_\alpha , \mathcal{Q}_\alpha \} = 0,$ \emph{i.e.}\ the left and right actions of $\mathfrak{t}$ on $\mathcal{M}$ in \eqref{landr} are compatible, the pure spinor complex $\mathcal{A}^\bullet (M)$ inherits a $\mathfrak{t}$-module structure, that can in turn be lifted to a full $\mathfrak{p}$-module structure by verifying equivariance with respect to $\mathfrak{p}_0$. This defines a \emph{multiplet} for the super Poincar\'e algebra $\mathfrak{p}.$ 
\end{enumerate}
\noindent Notice that $\mathcal{A}^\bullet (M)$ presents a multiplet as a cochain complex of super vector bundles over the superspacetime $\mathcal{M}$, while usually the \emph{field content} of a multiplet is expressed in terms of sections of vector bundles over ordinary spacetime. This physical presentation can be obtained by filtering $\mathcal{A}^\bullet (M)$ so that the dependence on smooth functions factors out, and $\mathcal{A}^\bullet (M)$ becomes a (differentially graded) filtered vector bundle over the spacetime. This is achieved by assigning filtered weight 1 to the odd sections, \emph{i.e.} the $\theta$'s and the pure spinor variables $\lambda$'s, and zero filtered weight to the even sections $x$'s. The graded complex associated to this filtration is nothing but the Koszul complex of the module $M$ (twisted by smooth functions on the complexified spacetime $\mathpzc{M}_{\mathpzc{red}}$):
\begin{align}
\mbox{Gr}^\bullet_F ( \mathcal{A}^\bullet(M) ) = (\mathcal{K}^\bullet (M) \otimes_\mathbb{C} \mathcal{C}^\infty (\mathpzc{M}_{\mathpzc{red}}), \; \mathcal{D}_F) 
\end{align} 
where we have defined
\begin{align}
\mathcal{K}^\bullet (M) \defeq M \otimes_\mathbb{C} \bigwedge^\bullet [\theta_1, \ldots, \theta_N] ,\quad  \mathcal{D}_F \defeq \sum_{\alpha } \lambda_\alpha \otimes \frac{\partial}{\partial \theta_\alpha}. 
\end{align}
The first page of the spectral sequence of the filtered complex is hence the cohomology of the Koszul complex of $M$. The field content of the multiplet can be read off    
\begin{equation}
E_1 \defeq H^\bullet (\mathcal{K}^\bullet (M), \mathcal{D}_F) \otimes_\mathbb{C} \mathcal{C}^\infty (\mathpzc{M}_{\mathpzc{red}}), 
\end{equation}
and the page 1 differential is the BV or BRST differential. Likewise, the action of the $Q$'s descends to $E_1$ and determines the supersymmetry transformations of the physical fields. Notice that the procedure or passing from the pure spinor multiplet $\mathcal{A}^\bullet (M)$ to the component field description amounts to finding a finite-rank ``minimal'' free resolution of the module $M$ over the polynomial ring $R$.  \\
As observed early on, the nilpotence variety is cut out by homogeneous equations, and hence it descends to a projective variety. In order to exploit powerful methods from projective algebraic geometry, a \emph{projective version} of the pure spinor superfield formalism was recently developed by two of these authors in \cite{6Dmultiplets}. In this context, the input of the formalism is a vector bundle or a sheaf $\mathcal{E}$ defined on the projective nilpotence variety $\mathbb{P}Y_{\mathfrak{p}} \hookrightarrow \mathbb{P}^n$, that gets mapped to a $R$-module via the functor 
\begin{equation}
\mathcal{E} \longmapsto \Gamma_\ast (\mathcal{E}) \defeq \bigoplus_{\ell \in \mathbb{Z}} H^0 (Y, \mathcal{E} (\ell)),
\end{equation}
where $\mathcal{E} (\ell) = \mathcal{E} \otimes_{\mathcal{O}_{\mathbb{P}Y_{\mathfrak{p}}}} (\ell)$ for $\mathcal{O}_{\mathbb{P}Y_{\mathfrak{p}}} (\ell) = \iota^\ast \mathcal{O}_{\mathbb{P}^n} (\ell)$. Resolving the output $R$-module yields a multiplet as explained above. Notice, though, that the projective formalism misses some bundles: namely, multiplets that correspond to modules that are concentrated in finitely many degrees cannot be obtained from sheaves on the projective nilpotence variety. An important example of this is provided by the free superfield: this is constructed by skyscraper sheaf with value $\mathbb{C}$ at the cone point on the affine nilpotence variety, which is obviously trivial on the projective nilpotence variety. More generally, the projective pure spinor formalism loses information on sheaves having zero-dimensional support. \\

A good question is whether the pure spinor superfield formalism, as presented, is capable of accounting for \emph{all} possible multiplets of a certain super Poincar\'e algebra $\mathfrak{p}$. Examples show that this is not the case - one such physically relevant example is given by the antifield multiplet in $d=4$ - and that this is related to the regularity properties of the modules (in particular the Cohen-Macaulay property \cite{Eager22}) . On the other hand, upon considering suitable \emph{derived} replacements in the previous construction, all the multiplets can be reached by the pure spinor formalism, as recently proved in \cite{ElliottDerived}. The idea is that of replacing $R/I$ with a complex, whose $0$-cohomology gives back $R/I$. This is nothing but the Chevalley-Eilenberg complex of the supertranslation algebra $\mathfrak{t}$, 
\begin{equation} \label{filtered}
\mbox{CE}^\bullet (\mathfrak{t}) \defeq ( \mbox{Sym}^\bullet (\mathfrak{t}^\vee[1]),\; \mbox{d}_{\mbox{\scriptsize{CE}}} ).
\end{equation}
Totalizing the degree, the even generators sit in degree $-1$ and the odd generator sits in degree $0$, and the previous definition yields 
\begin{equation}
\mbox{CE}^{-p} (\mathfrak{t}) = \bigwedge^p \mathfrak{t}^\vee_{\bar 0} \otimes \mbox{Sym}^\bullet (\mathfrak{t}_{\bar 1}^\vee) = \bigwedge^p \mathfrak{t}^\vee_{\bar 0} \otimes R.
\end{equation} 
The action of the differential is given by the dual of the Lie bracket, and it can be read from the Maurer-Cartan equations of the related Lie supergroup. In particular, posing $V^\mu \defeq dx^\mu + \theta^\alpha \Gamma^\mu_{\alpha \beta} d\theta^\beta$ and $\lambda^\alpha \defeq d\theta^\alpha$, one see that
\begin{equation}
\left \{ \begin{array}{l}
\mbox{d}_{\mbox{\scriptsize{CE}}} \, V^\mu = \lambda^\alpha \Gamma^\mu_{\alpha \beta} \lambda^\beta, \\
\mbox{d}_{\mbox{\scriptsize{CE}}} \, \lambda^\alpha = 0,
\end{array}
\right.
\end{equation}
where the vielbeins $V$'s and the pure spinors $\lambda$'s are the (shifted) dual to the left-invariant vector fields $Q$'s and $P$'s generating $\mathfrak{t}.$ It is then immediate to see that 
\begin{equation}
H^0 (\mbox{CE}^\bullet (\mathfrak{t})) \cong R \big / \langle \lambda^\alpha \Gamma^\mu_{\alpha \beta} \lambda^\beta \rangle = R/I.
\end{equation}
Accordingly, instead of considering only $R/I$-modules one considers $\mbox{CE}^\bullet (\mathfrak{t})$-modules as input: geometrically, this amount to consider modules on the derived scheme $\mbox{Spec} (\mbox{CE}^\bullet (\mathfrak{t}))$ instead of modules on the nilpotence variety $Y = \mbox{Spec} (R/I).$ In particular, the following holds. 
\begin{theorem}[\cite{ElliottDerived}] \label{derivedPS} Given a super Poincar\'e algebra $\mathfrak{p} = \mathfrak{p}_0 \oplus \mathfrak{t}$, the pure spinor functor defines an equivalence of dg categories 
\begin{equation}
\mbox{\emph{CE}}^\bullet (\mathfrak{t})\mbox{\emph{-\sffamily{Mod}}}^{\mathfrak{p}_0} \cong \mbox{\emph{\sffamily{Mult}}}^{\mathpzc{strict}\mbox{\tiny{-}}\mathpzc{ob}}_{\mathfrak{g}}
\end{equation}
between the category of $\mathfrak{p}_0$-equivariant $\mbox{\emph{CE}}^\bullet (\mathfrak{t})$-modules and the full dg subcategory of the category of multiplets whose objects are strict multiplets.
\end{theorem}
\noindent The upshot of the theorem is twofold: on the one hand, from a physical point of view, it says that (up to quasi-isomorphism) all multiplets can be constructed via a suitable derived ``enhancement'' of the pure spinor formalism. On the other hand, mathematically, it relates a geometric category, to a representation-theoretic one, in a Koszul duality-like fashion.  

\section{Complexes from Adinkras: Constructions and Proofs} \label{CplxAd}

In this appendix, we give the detailed constructions and proofs lying at the basis of the results exposed in section \ref{CplAdMain}, thus justifying definition \ref{defCpl}.
First off, we prove the that the pair $(C^\bullet (\mathpzc{A}), d)$ introduced in Construction \ref{constr} defines indeed a linear complex of $R$-modules. 
\begin{lemma} Given an Adinkra $\mathpzc{A}$, the pair $(C^\bullet(\mathpzc{A}), d )$ defines a
    linear complex of free $R$-modules. 
\end{lemma}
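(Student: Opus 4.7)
The plan is to verify three things: first, that each $d^{i}$ is a well-defined $R$-linear map from $C^{-i}(\pzA)$ to $C^{-i+1}(\pzA)$; second, that the matrix entries of $d^{i}$ in the distinguished basis are homogeneous of degree one in the $\lambda$'s, so that $C^{\bullet}(\pzA)$ is linear; and third, that $d^{i+1}\circ d^{i} = 0$.

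The first two are immediate from the construction. The unimodularity condition (item 2 of Definition \ref{N_Adinkra}) guarantees that for any basis vertex $v$ of height $i$, each $e \in \mathpzc{S}(v)$ satisfies $\pzh(\pzt(e)) = i-1$, so that $\pzt(e)$ is a basis element of $C^{-i+1}(\pzA)$; extending $R$-linearly yields a well-defined morphism whose nonzero entries all have the form $\pm\lambda_{\pzc(e)}$, establishing linearity of the complex.

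The main content is the identity $d^{2} = 0$. I would fix a basis vertex $v$ of height $i$ and expand
\[
d^{i+1}\circ d^{i}(v) \;=\; \sum_{e \in \mathpzc{S}(v)}\sum_{e' \in \mathpzc{S}(\pzt(e))}
(-1)^{\pzp(e)+\pzp(e')}\,\lambda_{\pzc(e)}\lambda_{\pzc(e')}\,\pzt(e'),
\]
then group contributions by the ordered pair of colors $(i_{1},i_{2}) = (\pzc(e),\pzc(e'))$ and the terminal vertex $w = \pzt(e')$ of height $i-2$. For $i_{1} = i_{2}$, $N$-color regularity (condition 3) forces the unique color-$i_{1}$ edge at $\pzt(e)$ to be $e$ itself, which lies in the set of incoming edges $\mathpzc{T}(\pzt(e))$ rather than in $\mathpzc{S}(\pzt(e))$; hence such diagonal-color terms do not arise. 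For distinct colors $i_{1} \neq i_{2}$, condition 4 places the two edges of these colors incident to $v$ inside a unique $2$-colored $4$-cycle $C_{4}^{(2\pzc)}$.

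The main technical hurdle is a short case analysis of the possible height profiles of such a $4$-cycle under the unimodularity constraint: only the ``descending diamond'' profile with heights $(i,i-1,i-2,i-1)$ contributes a descending length-two path through $v$, since the flat and ascending profiles place at least one of the two relevant edges in a $\mathpzc{T}$ set rather than a $\mathpzc{S}$ set. In the diamond configuration the two descending paths $v \to u \to w$ and $v \to u' \to w$ traverse the cycle in opposite directions, and by commutativity of the $\lambda$'s their contributions pool to
\[
\bigl((-1)^{\pzp(e)+\pzp(e')} + (-1)^{\pzp(e'')+\pzp(e''')}\bigr)\,\lambda_{i_{1}}\lambda_{i_{2}}\,w,
\]
where $e,e',e'',e'''$ denote the four edges of the diamond. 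Condition 4 asserts exactly that $\pzp(e)+\pzp(e')+\pzp(e'')+\pzp(e''') \equiv 1 \pmod 2$, so the parenthesized sign factor vanishes. This exhausts all contributions to $d^{i+1}\circ d^{i}(v)$ and establishes the complex property.
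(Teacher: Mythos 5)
Your proof is correct and takes essentially the same route as the paper's argument in Appendix C: expand $d^2(v)$, rule out repeated-color terms via $N$-regularity, and pair the remaining terms along $2$-colored $4$-cycles so that the odd-dashing axiom forces cancellation. The only cosmetic difference is that you spell out the height-profile analysis (the ``descending diamond'') explicitly, whereas the paper leaves the fact that unimodularity forces this profile implicit when it asserts the existence of $e'\in\mathpzc{S}(v)$ and $f'\in\mathpzc{S}(\mathpzc{t}(e'))$ closing the cycle.
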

\begin{proof} We are left with proving that $d : C^{-i} (\mathpzc{A}) \rightarrow {C}^{-i+1} (\mathpzc{A})$ defined in equation \eqref{diffA} is a differential. One has 
\begin{equation}
d (dv) = \sum_{\mathpzc{S}(v)} ( \mathpzc{p} (e) \lambda_{\mathpzc{c} (e)} \left ( \sum_{\mathpzc{S (\mathpzc{t} (e))}} \mathpzc{p} (f) \lambda_{\mathpzc{c} (f)} \mathpzc{t} (f) \right ),
\end{equation} 
where $\mathpzc{S} (\mathpzc{t} (e)) = \{ f \in E(\mathpzc{A}) : \mathpzc{s}(f)
    = \mathpzc{t} (e) \}$. Assume that both sums are non-empty and fix an
    $e \in \mathpzc{S}(v)$ and an $f \in S(\mathpzc{t} (e))$ in the double sum
    above. The definition of Adinkras implies that $\mathpzc{c}(e) \neq \mathpzc{c} (f)$
    and hence $e$ and $f$ will be part of a certain 2-colored 4-cycle. This
    means there exists $e^\prime \in \mathpzc{S} (v)$ with $\mathpzc{c}
    (e^\prime) = \mathpzc{c} (f)$ and $f^\prime \in \mathpzc{S}(\mathpzc{t}(e^\prime))$ with $\mathpzc{t} (f^\prime) = \mathpzc{t} (f)$ and $\mathpzc{c} (f^\prime) = \mathpzc{c} (e)$ such that one gets a contribution to $d^2 (v)$ of the form
\begin{equation} \label{contrib}
d^2 (v) \owns (\mathpzc{p} (e) \mathpzc{p} (f) + \mathpzc{p}(e^\prime) \mathpzc{c}(f^\prime) ) \lambda_{\mathpzc{c} (e)} \lambda_{\mathpzc{c} (f)} \mathpzc{t} (f).
\end{equation}
Now, since $\langle e, e^\prime, f, f^\prime \rangle$ generates a 2-colored
    4-cycle by construction, it follows from axiom 4 in the definition of
    Adinkras that the sum $\mathpzc{p}(e) + \mathpzc{p}(f) + \mathpzc{p}
    (e^\prime) + \mathpzc{p} (f^\prime)$ will be odd, and hence if
    $\mathpzc{p}(e) \mathpzc{p}(f) = \pm 1$ then $\mathpzc{p}(e^\prime)
    \mathpzc{p} (f^\prime) = \mp 1,$ hence the contribution of equation
    \eqref{contrib} vanishes. By extending to all 2-colored 4-cycles containing 
    a vertex $v$, one gets that $d^2=0$ concluding the proof.
\end{proof}
As remarked in the main text, the complex $C^\bullet (\pzA)$ is defined so that it belongs to $\mbox{\sffamily{D}}^\flat (R\mbox{-\sffamily{Mod}})$. In the rest of this appendix, we will show that it can be seen to belong to the full subcategory $\mbox{\sffamily{D}}^\flat (R/ \langle q_N\rangle \mbox{-\sffamily{Mod}})$ of $\mbox{\sffamily{D}}^\flat (R\mbox{-\sffamily{Mod}})$. This says that this $C^\bullet (\pzA)$ can be seen as (coming from) a complex defined over the nilpotence variety $Y_N$. As stressed, a crucial step in this direction is to prove the existence of a Laplacian operator, that behaves on the free modules of the complex as the multiplication by the quadratic form $q_N$. Starting from definitions \ref{adj} and \ref{Lapl}, we provide the proof of lemma \ref{LaplacianLemma} in the main text.
\begin{lemma} Let $(C^\bullet (\mathpzc{A}), d)$ be the complex associated to
$\pzA$. Let $q_N \defeq \sum_{i = 1}^N \lambda^2_i$ be the standard quadratic form on $R$. Then the
Laplacian acts via multiplication by $q_N $ in $C^\bullet (\mathpzc{A})$, \emph{i.e.}\
\begin{equation}
    \Delta  = q_N \cdot \mbox{id}_{C^\bullet (\mathpzc{A})}. 
    \end{equation}
\end{lemma}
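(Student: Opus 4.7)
The plan is to evaluate $\Delta(v)$ on a basis vertex $v \in C^{-i}(\mathpzc{A})$ by expanding the compositions
\begin{equation*}
(d \circ d^\dagger)(v) = \sum_{e \in \mathpzc{T}(v)} \mathpzc{p}(e) \lambda_{\mathpzc{c}(e)} \sum_{f \in \mathpzc{S}(\mathpzc{s}(e))} \mathpzc{p}(f) \lambda_{\mathpzc{c}(f)} \mathpzc{t}(f),
\end{equation*}
\begin{equation*}
(d^\dagger \circ d)(v) = \sum_{e \in \mathpzc{S}(v)} \mathpzc{p}(e) \lambda_{\mathpzc{c}(e)} \sum_{f \in \mathpzc{T}(\mathpzc{t}(e))} \mathpzc{p}(f) \lambda_{\mathpzc{c}(f)} \mathpzc{s}(f),
\end{equation*}
and indexing every summand by the corresponding 2-edge path $v \to u \to w$, where $u$ sits at height $\mathpzc{h}(v) \pm 1$ and $w$ at height $\mathpzc{h}(v)$. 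I would then partition these paths into two families: the \emph{return} paths, for which $f = e$ and $w = v$, and the \emph{non-return} paths, for which $w \neq v$. The claim is that the return paths collectively produce $q_N \cdot v$, while the non-return contributions cancel in pairs.

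For the return contributions, the color-$i$ edge incident to $v$ contributes $\mathpzc{p}(e)^2 \lambda_i^2 v = \lambda_i^2 v$ either to $d \circ d^\dagger(v)$ (if this edge lies in $\mathpzc{T}(v)$, \emph{i.e.} goes up) or to $d^\dagger \circ d(v)$ (if it lies in $\mathpzc{S}(v)$, \emph{i.e.} goes down). By condition 3 of Definition \ref{N_Adinkra} ($N$-color-regularity), each of the $N$ colors contributes once and only once, and the total is $\sum_{i=1}^N \lambda_i^2 \cdot v = q_N \cdot v$.

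The main work is in showing that the non-return contributions cancel. The key observation is that any 2-edge path $v \to u \to w$ of distinct colors $i \neq j$ uniquely extends, by axiom 4, to a 2-colored 4-cycle $v, u, w, u'$ in which the colors appear in the cyclic order $i, j, i, j$. This 4-cycle also provides a second 2-edge path $v \to u' \to w$. A case analysis on the relative heights of $u, u'$ with respect to $v$ (both above $v$, both below, or one of each) shows that the two paths are distributed between $d \circ d^\dagger(v)$ and $d^\dagger \circ d(v)$ in a way that the total contribution at $w$ always takes the form
\begin{equation*}
\bigl(\mathpzc{p}(e_1)\mathpzc{p}(e_2) + \mathpzc{p}(e_3)\mathpzc{p}(e_4)\bigr)\lambda_i \lambda_j\, w,
\end{equation*}
where $e_1, e_2, e_3, e_4$ are the four edges of the 4-cycle in cyclic order. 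Axiom 4, which demands that $\mathpzc{p}(e_1) + \mathpzc{p}(e_2) + \mathpzc{p}(e_3) + \mathpzc{p}(e_4) \equiv 1 \pmod{2}$, forces $\mathpzc{p}(e_1)\mathpzc{p}(e_2) = -\mathpzc{p}(e_3)\mathpzc{p}(e_4)$, so the bracketed coefficient vanishes. This is formally parallel to the proof of $d^2 = 0$ given in the first lemma of this appendix, and the cancellation is of exactly the same combinatorial origin.

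The main obstacle will be the bookkeeping required to verify, for each height configuration of the 4-cycle, that the two 2-edge paths sit with the \emph{correct} combination of signs in $d \circ d^\dagger + d^\dagger \circ d$ (in particular, that the commutativity $\lambda_i \lambda_j = \lambda_j \lambda_i$ produces the same monomial from both paths, so that their coefficients indeed add rather than appear in separate summands). Once this case check is complete, combining the return and non-return computations gives $\Delta(v) = q_N \cdot v$ on every basis vertex; $R$-linearity of $\Delta$ then extends the identity to all of $C^\bullet(\mathpzc{A})$.
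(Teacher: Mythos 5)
Your proposal is correct and follows essentially the same strategy as the paper's own proof: decompose $\Delta v$ into return contributions (which give $q_N v$ by $N$-color-regularity) and non-return contributions (which are paired up by 2-colored 4-cycles and cancel via the parity axiom). The one place you add value is in explicitly flagging the height-configuration bookkeeping as the point that needs checking; the paper's proof treats this somewhat informally, so your more careful acknowledgment of the case analysis is a minor improvement in rigor rather than a different approach.
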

\begin{proof} The intuition behind this proof is simple. Indeed $d + d^\dagger$ maps a certain
vertex $v \in V(\mathpzc{A})$ to all the vertices connected to $v$. Applying $d+d^\dagger$ again one
reaches all the vertices $v^\prime$ of the same $\mathbb{Z}$-degree of $v$, whose distance from $v$
is either 0 or 2 edges. Now, if this distance is 0, then the vertex $v^\prime$ is $v$ itself, and
mapping two times back and forth along the same edge picks up a factor $\lambda^2_i$: since by
definition of Adinkra $v$ has exactly one edge of each color attached, this gives overall a factor
$q_N $ by $N$-regularity. Let now the distance be 2: then the vertices are connected by two edges of
different colors. For each such pair, there is a second pair that completes a 2-colored 4-cycle,
and these pairs are the same up to a relative sign so that they cancel
pairwise and they do not contribute to $\Delta v$, so that $\Delta v = q_N v.$ In formulas, one has
\begin{equation}
\Delta v =  \sum_{\mathpzc{S}(\mathpzc{s}(f))} \sum_{\mathpzc{T}(v)} 
\mathpzc{p}(e) \mathpzc{p}(f) \lambda_{\mathpzc{c}(e)}\lambda_{\mathpzc{c}(f)} \mathpzc{t}(e) +
\sum_{\mathpzc{T}(\mathpzc{t}(e^\prime))} \sum_{\mathpzc{S}(v)} 
\mathpzc{p}(e^\prime) \mathpzc{p}(f^\prime) 
\lambda_{\mathpzc{c}(e^\prime)}\lambda_{\mathpzc{c}(f^\prime)} \mathpzc{s}(e^\prime)  
\end{equation}
If $v^\prime \neq v$, then the projection to $v^\prime \subset \Delta v$ is the sum over all the
2-colored 4-cycles containing $v$ and $v^\prime, $ and hence by the axiom 4 in the definition of
Adinkra these elements vanish. On the other hand, for $v^\prime = v$, the above becomes
\begin{equation}
\Delta v = \left ( \sum_{\mathpzc{T}(v)} \mathpzc{p}(f)^2 \lambda_{\mathpzc{c}(f)}^2 + \sum_{\mathpzc{S}(v)} p(f^\prime)^2 \lambda_{c(f^\prime)}^2 \right ) v,
\end{equation} 
which equals $q_N$ by $N$-regularity and the coloring axiom 3 in the definition of Adinkras.
\end{proof}
\noindent The previous lemma leads to the following immediate corollary. 
\begin{corollary} In the previous setting, one has 
  \begin{equation} \label{LaplacianQ}
    (d+d^\dagger)^2 = d\circ d^\dagger + d^\dagger\circ d =  \Delta = q_N.
  \end{equation}
\end{corollary}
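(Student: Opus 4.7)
The plan is to establish the claimed chain of equalities by expanding the square and reducing everything to two prior statements in the appendix. First I would write
\begin{equation*}
(d+d^\dagger)^2 = d^2 + d\circ d^\dagger + d^\dagger\circ d + (d^\dagger)^2,
\end{equation*}
so that the corollary reduces to two sub-claims: nilpotency of both differentials (to kill the $d^2$ and $(d^\dagger)^2$ terms), and the identification of the cross terms with $\Delta$, which is a tautology once \eqref{Lapl} is recalled. The final equality $\Delta = q_N\cdot \mbox{id}_{C^\bullet(\mathpzc{A})}$ is then exactly Lemma \ref{LaplacianLemma}.

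Nilpotency of $d$ was established in the first lemma of this appendix, so the only point that still needs checking is $(d^\dagger)^2 = 0$; this is flagged in the main text but not proved in detail. I would run the same combinatorial argument used for $d$, now dualised from targets to sources. Fixing a vertex $v\in\mathpzc{V}(\mathpzc{A})$, two successive applications of $d^\dagger$ produce a sum indexed by pairs of edges $(e,f)$ with $\mathpzc{t}(e)=v$ and $\mathpzc{t}(f)=\mathpzc{s}(e)$. Since the dashing condition on $2$-coloured $4$-cycles in Definition \ref{N_Adinkra} treats edges symmetrically, each such pair with $\mathpzc{c}(e)\neq \mathpzc{c}(f)$ belongs to a unique $4$-cycle that also contains a mirror pair $(e',f')$ with $\mathpzc{c}(e')=\mathpzc{c}(f)$, $\mathpzc{c}(f')=\mathpzc{c}(e)$, and $\mathpzc{t}(f')=\mathpzc{t}(f)$. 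The oddness of $\sum_{e\in C_4^{(2\mathpzc{c})}}\mathpzc{p}(e)$ forces the contributions of $(e,f)$ and $(e',f')$ to the coefficient of $\lambda_{\mathpzc{c}(e)}\lambda_{\mathpzc{c}(f)}\mathpzc{t}(f)$ in $(d^\dagger)^2 v$ to cancel, exactly as in the original argument for $d^2=0$ with sources and targets swapped.

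With both nilpotency statements in hand, the chain
\begin{equation*}
(d+d^\dagger)^2 = d\circ d^\dagger + d^\dagger\circ d = \Delta = q_N\cdot\mbox{id}_{C^\bullet(\mathpzc{A})}
\end{equation*}
follows immediately: the first equality is the expansion after dropping the vanishing terms, the second is Definition \eqref{Lapl}, and the third is Lemma \ref{LaplacianLemma}. I do not anticipate any genuine obstacle, since the verification of $(d^\dagger)^2=0$ is structurally identical to the one already carried out for $d$; the only care needed is bookkeeping the orientation of edges when transferring the cycle argument from $\mathpzc{S}(v)$ to $\mathpzc{T}(v)$.
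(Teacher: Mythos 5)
Your proposal is correct and takes essentially the same route as the paper: the corollary is immediate once one has nilpotency of $d$ (proved in the appendix), nilpotency of $d^\dagger$ (which the paper only asserts ``reasoning as above, it is not hard to see that also $d^\dagger$ is nilpotent''), and Lemma~\ref{LaplacianLemma}, and your dualised two-coloured four-cycle argument for $(d^\dagger)^2=0$ is precisely the intended filling-in of that assertion. Beware one bookkeeping slip of exactly the kind you yourself warned about at the end: in the dual argument the term $(d^\dagger)^2 v$ carries the vertex $\pzs(f)$ (not $\pzt(f)$), so the mirror pair $(e',f')$ must satisfy $\pzs(f')=\pzs(f)$ rather than $\pzt(f')=\pzt(f)$, since the pairwise cancellation must occur in the coefficient of $\lambda_{\pzc(e)}\lambda_{\pzc(f)}\,\pzs(f)$.
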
 

Relying on this result, we now show that the complex of $\mathpzc{A}$ as defined above is quasi-isomorphic to
a complex of $R / I$-modules, for $I = \langle q_N \rangle $ the coordinate ring of the quadric
hypersurface cut out by the equation $q_N = \sum_{i = 1}^N \lambda^2_i = 0.$ \\
For this, we start introducing the \emph{extension of scalar complex} $C^\bullet_{R / I} (\mathpzc{A}) \defeq
C^\bullet (\mathpzc{A}) {\otimes_R} R / I$ \footnote{We remark that the above tensor product
coincides with the derived tensor product, since $C^\bullet (\pzA)$ consists of free modules.}.
Since every summand of $C^\bullet (\mathpzc{A})$ is a free $R$-module, we obtain a complex $(C_{R /
I}^\bullet (\mathpzc{A}), d_{R / I}) $ with  
\begin{equation}
    C_{R / I}^\bullet (\mathpzc{A}) =  \left ( \bigoplus_{i \geq 0} R^{\oplus
    n_i} \right ) {\otimes_R}\,  R / I \cong \bigoplus_{i \geq
    0} R / I^{\oplus n_i} ,
\end{equation}
and differential given by $d_{R / I} \defeq d \otimes 1$. Using this, we introduce a new complex 
\begin{equation}
    \widehat C^\bullet_{R / I} (\mathpzc{A}) \defeq \bigoplus_{\ell \geq 0}
    \widehat C^{-\ell}_{R / I} (\mathpzc{A}), 
\end{equation}
by considering isomorphic copies of the complex $C^\bullet_{R / I, j} \cong C^\bullet_{R / I}
(\mathpzc{A})$ for every $j\geq 0$ and posing
\begin{equation}
\widehat C^{-\ell}_{R / I} (\mathpzc{A}) \defeq \bigoplus_{\ell = i-2j}
    C^{-i}_{R / I, j}[2j].
\end{equation}
The differential is defined as 
\begin{equation} \label{dhat}
\xymatrix@R=1.5pt{
    \widehat d : \widehat C^{-\ell}_{R / I } (\mathpzc{A}) \ar[r] &
    C^{-\ell+1}_{R / I} (\mathpzc{A}) \\
    x \ar@{|->}[r] & \widehat d x \defeq (d_{R / I} + (1-\delta_{0j})d^\dagger_{R / I} ) x,
}
\end{equation}
where here $d_{R / I} : C^{-i}_{R / I,j} (\mathpzc{A}) \rightarrow C^{-i+1}_{R / I, j}$, and similarly  
\begin{equation}\xymatrix@R=1.5pt{
d^\dagger_{R/I} : C^{-i}_{R / I, j } (\mathpzc{A}) \ar[r] & C^{-i-1}_{R / I, j-1} (\mathpzc{A}) \\
x \ar@{|->}[r] & (d^\dagger \otimes 1) x,
}
\end{equation}
where we stress that for $j = 0$ we let $d^{\dagger}_{R / I}$ act as $0$ in definition \eqref{dhat}.
Keeping the subscript $R / I$ understood for the sake of notation, one has the following picture: 
\begin{equation}
\xymatrix{ 
& & & & \iddots & \iddots & \iddots & \iddots &  \iddots &\\
& & \ldots \ar[r] & C^{-4}_{j-1}\ar@{}[d]|\oplus \ar[ur] \ar[r] & C^{-3}_{j-1} \ar@{}[d]|\oplus \ar[ur]  \ar[r]^d & C^{-2}_{j-1} \ar@{}[d]|\oplus \ar[ur]  \ar[r]^d & C^{-1}_{j-1} \ar[r] \ar[ur]  & C^{0}_{j-i} \ar[ur] \ar[r] & 0 \\
& \ldots \ar[r] & C^{-3}_j \ar[ur] \ar[r] &  C^{-2}_j \ar@{}[d]|\oplus \ar[ur]^{d^\dagger}  \ar[r]^d & C^{-1}_j \ar[ur]^{d^\dagger} \ar@{}[d]|\oplus \ar[r]^d & C^{0}_j \ar[ur] \ar[r] & 0 \\
 \ldots \ar[r] & C^{-2}_{j+1} \ar[ur]  \ar[r]^d & C^{-1}_{j+1} \ar[ur]^{d^\dagger} \ar[r]^d & C^{0}_{j+1} \ar[ur]^{d^\dagger} \ar[r] & 0. \\
\iddots \ar[ur] & \iddots \ar[ur] & \iddots \ar[ur]  
}
\end{equation}
Fixing $\ell$, the $R / I$-module $\widehat C^{-\ell}_R (\mathpzc{A})$ is given by the direct sums
of the modules $C^{-i}_{j} (\mathpzc{A})$ on the same column and, leaving the $j$-index understood
with abuse of notation, we will denote an element $x \in \widehat C^{-\ell}_{R / I} (\mathpzc{A})$ 
\begin{equation} \label{vectorell}
x \defeq (x_\ell, x_{\ell -2}, \ldots, x_{\ell-4}, x_{|\ell|}) \in 
    \widehat C^{-\ell}_{R / I} (\mathpzc{A}), 
\end{equation}
where we have set a lexicographic order \emph{top-to-bottom} in the above diagram and where the last
index $|\ell |$ can be either $0$ or $1$. Note that the action of the differential $d$ on a generic
element $x\in C^{-\ell}_{R} (\mathpzc{A})$ reads 
\begin{equation} \label{diffhat}
\widehat d x = (d x_\ell + d^\dagger x_{\ell -2}, d x_{\ell
        - 2} + d^\dagger x_{\ell - 4}, \dots, dx_{|\ell| + 2} + d^\dagger x_{|\ell|}, d
    x_{|\ell|}).
\end{equation}
The following lemma makes sure that the above definitions are well-given.
\begin{lemma} The pair $\left ( \widehat{C}^\bullet_{R / I} (\mathpzc{A}) ,
    \widehat{d} \; \right )$ defines a complex of $R/I$-modules.
\end{lemma}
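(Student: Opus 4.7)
My plan is to unpack the definition of $\widehat{d}$ componentwise and reduce the verification of $\widehat{d}^2 = 0$ to the Laplacian lemma \ref{LaplacianLemma} together with the fact that $q_N$ lies in the ideal $I$. The first step is to check that $\widehat{d}$ is well-defined on each graded piece, which amounts to observing that in $\widehat{C}^{-\ell}_{R/I}(\mathpzc{A})$ the summand labelled by $(i,j)$ with $\ell = i - 2j$ maps under $d_{R/I}$ to the summand $(i-1,j)$ and under $d^\dagger_{R/I}$ to the summand $(i-1, j-1)$, so both contributions land in $\widehat{C}^{-\ell+1}_{R/I}(\mathpzc{A})$; the Kronecker delta $1-\delta_{0j}$ simply accounts for the boundary row $j=0$, where there is no $j-1$ summand to receive $d^\dagger_{R/I}$.

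Next, I compute $\widehat{d}^2$ componentwise. Writing $x = (x_\ell, x_{\ell-2}, \ldots, x_{|\ell|})$ as in \eqref{vectorell} and applying \eqref{diffhat} twice, the component of $\widehat{d}^2 x$ sitting in position $k$ (for $k$ strictly between $|\ell|$ and $\ell$) is
\begin{equation*}
d_{R/I}\bigl(d_{R/I} x_k + d^\dagger_{R/I} x_{k-2}\bigr) + d^\dagger_{R/I}\bigl(d_{R/I} x_{k-2} + d^\dagger_{R/I} x_{k-4}\bigr).
\end{equation*}
Expanding and collecting terms gives $d_{R/I}^2 x_k + (d_{R/I} d^\dagger_{R/I} + d^\dagger_{R/I} d_{R/I}) x_{k-2} + (d^\dagger_{R/I})^2 x_{k-4}$. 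The first term vanishes because $(C^\bullet(\mathpzc{A}), d)$ is already a complex, and the third vanishes by the analogous nilpotency of $d^\dagger$ (which follows, as remarked after \eqref{adj}, by the same four-cycle argument used for $d$, applied now to the set $\mathpzc{T}(v)$ rather than $\mathpzc{S}(v)$). The middle term is $\Delta x_{k-2}$, which by lemma \ref{LaplacianLemma} equals $q_N \cdot x_{k-2}$. Since $q_N \in I$, this vanishes in $R/I$, so $(\widehat{d}^2 x)_k = 0$.

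The remaining checks are the two extremal positions. At the top ($k=\ell+1$) only the $d_{R/I} d_{R/I} x_\ell$ term arises, which is zero; at the bottom ($k=|\ell|$ or $|\ell|\pm 1$, depending on parity) the contributions involving $d^\dagger_{R/I}$ acting on the $j=0$ row are set to zero by the factor $1-\delta_{0j}$, and the remaining purely $d_{R/I}$ terms vanish for the same reason. In all cases one obtains $\widehat{d}^2 x = 0$, completing the verification that $(\widehat{C}^\bullet_{R/I}(\mathpzc{A}), \widehat{d})$ is a complex of $R/I$-modules.

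I do not expect a real obstacle here: the construction is designed precisely so that the ``off-diagonal'' anticommutator obstruction $dd^\dagger + d^\dagger d$ to having a complex over $R$ is exactly $q_N$, which is then killed by descending to $R/I$. The only bookkeeping subtlety is ensuring the boundary row conventions in the definition of $\widehat{d}$ do not produce spurious terms, but this is handled uniformly by the $(1-\delta_{0j})$ factor.
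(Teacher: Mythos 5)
Your proof is correct and follows essentially the same strategy as the paper's: decompose $\widehat{d}^2$ into $d^2$, the anticommutator $\{d,d^\dagger\} = \Delta$, and $(d^\dagger)^2$, then invoke nilpotency of $d$ and $d^\dagger$ together with Lemma \ref{LaplacianLemma} to conclude that everything vanishes modulo $\langle q_N \rangle$. The paper states this in a single sentence; you unpack the componentwise bookkeeping, which is fine but not a different argument.
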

\begin{proof} Since $d_{R / I}^2 = 0 = (d^{\dagger}_{R / I})^{2}$, it is enough to
    observe that $\{ d_{ R / I} , d^{\dagger}_{R / I} \} = \Delta \otimes 1 = q_N
    (\lambda ) \cdot 1 \otimes 1$, and hence
    $\widehat{d}\circ \widehat d \equiv 0 \, \mbox{mod} \, \langle q_N \rangle$.
\end{proof}
\noindent In the following lemma, we characterize a feature of the cohomology of the above complex.
\begin{lemma}[Unrolling] \label{unrolling} For every element $x \in \ker
    \widehat d \cap C^{-\ell}_{R / I} (\mathpzc{A})$ of the form
    \eqref{vectorell} there exists an element $z \in \widehat C^{-\ell-1}_{R /
    I} (\mathpzc{A})$ such that $\widehat d z = (d^\dagger z_{\ell - 1}, x_{\ell
    - 2}, \ldots, x_{|\ell|})$. In particular, every cocycle of the form
    $(x_{\ell}, x_{\ell-2}, \ldots, x_{|\ell |})$ satisfying $d^{\dagger}
    x_{\ell} = 0$ is a coboundary.
\end{lemma}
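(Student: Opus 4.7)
The plan is to construct $z \in \widehat C^{-\ell-1}_{R/I}$ one component at a time, starting from the ansatz $z_{\ell+1} := 0$, which forces the first slot of $\widehat d z$ to equal $d^\dagger z_{\ell-1}$ automatically. The remaining task is to solve the staircase of equations
\begin{equation*}
d\, z_{\ell-1-2k} + d^\dagger z_{\ell-3-2k} \equiv x_{\ell-2-2k} \pmod{I}, \qquad k = 0, 1, \ldots,
\end{equation*}
together with the parity-dependent bottom equation at the tail. The central structural input is Lemma~\ref{LaplacianLemma}: the identity $d\, d^\dagger + d^\dagger d = q_N \cdot \mathrm{id}$ reduces, modulo $I = \langle q_N \rangle$, to the anti-commutation $d\, d^\dagger + d^\dagger d \equiv 0$ on $C^\bullet_{R/I}$, so that $(C^\bullet_{R/I}, d, d^\dagger)$ is an anti-commuting bicomplex whose relevant totalization is precisely $\widehat C^\bullet_{R/I}$.

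I would proceed by an inductive construction starting at the tail of the staircase and climbing upward. At the tail, a lift $\tilde x \in C^\bullet_R$ of $x_{|\ell|}$ satisfies the Laplacian identity $q_N \tilde x = d(d^\dagger \tilde x) + d^\dagger(d\, \tilde x)$, and combined with the bottom cocycle condition $d\, x_{|\ell|+2} + d^\dagger x_{|\ell|} \equiv 0 \pmod I$ this produces a candidate $z_{|\ell+1|}$ for which $d\, z_{|\ell+1|}$ matches $x_{|\ell|}$ up to an error in the image of $d^\dagger$. Continuing upward, each subsequent $z_{\ell-1-2k}$ is chosen analogously so that $d\, z_{\ell-1-2k}$ realizes $x_{\ell-2-2k}$ up to a $d^\dagger$-exact correction, and the cocycle condition at the corresponding level guarantees that this correction is exactly what the adjacent $d^\dagger z_{\ell-3-2k}$ term can absorb, with compatibility across levels ensured by the anti-commutation relation.

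The main obstacle will be the seeding of the induction at the tail: a priori $x_{|\ell|}$ need not lie in the image of $d$ in $C^\bullet_{R/I}$ alone, and it is only the extra rows of $\widehat C^\bullet_{R/I}$ together with the full cocycle data that let us solve for $z_{|\ell+1|}$; turning the heuristic $q_N \tilde x = d\, d^\dagger \tilde x + d^\dagger d\, \tilde x$ into a clean formula for $z_{|\ell+1|}$ and propagating it consistently up the staircase while tracking all correction terms is the delicate step. For the concluding claim, once $\widehat d z = (d^\dagger z_{\ell-1}, x_{\ell-2}, \ldots, x_{|\ell|})$ has been produced, the difference $x - \widehat d z$ takes the form $(x'_\ell, 0, \ldots, 0)$ with $x'_\ell := x_\ell - d^\dagger z_{\ell-1}$, and the hypothesis $d^\dagger x_\ell \equiv 0 \pmod I$ combined with $(d^\dagger)^2 = 0$ gives $d^\dagger x'_\ell \equiv 0$, while the cocycle condition on the residual gives $d\, x'_\ell \equiv 0$. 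A final application of the Laplacian identity to a lift $\tilde x'_\ell \in C^\bullet_R$ yields $\tilde x'_\ell = d\, v + d^\dagger u$ with $d\, u = 0$ (using that $q_N$ is a non-zero divisor in $R$), and placing the reductions $\bar v$ and $\bar u$ in rows $0$ and $1$ of $\widehat C^{-\ell-1}_{R/I}$ produces an explicit preimage of $(x'_\ell, 0, \ldots, 0)$ under $\widehat d$, exhibiting $x$ itself as a coboundary.
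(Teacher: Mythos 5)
Your proposal identifies the same approach and the same key ingredients as the paper's proof: lift everything to $R$, exploit the Laplacian identity $d\, d^\dagger + d^\dagger d = q_N\cdot\mathrm{id}$ from Lemma~\ref{LaplacianLemma}, and use that $q_N$ is a nonzerodivisor in $R$. Your argument for the concluding ``in particular'' claim is correct and in fact more explicit than the paper's terse remark.

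What you flag as ``the delicate step'' is exactly what the paper supplies, and it is simpler than an error-propagation induction. The paper defines \emph{all} the components $z_{j+1}$ at once, by lifting each cocycle identity $d x_{j+2} + d^\dagger x_{j} \equiv 0 \pmod{I}$ to an equation $d x_{j+2} + d^\dagger x_{j} = q_N z_{j+1}$ over $R$; then, for each $j$ independently, it checks $x_j = d z_{j+1} + d^\dagger z_{j-1}$ by showing that $(d+d^\dagger)^2$ annihilates the difference — this uses only the two defining relations at levels $j$ and $j+2$, together with $d^2 = (d^\dagger)^2 = 0$ — and invoking injectivity of multiplication by $q_N$. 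Nothing needs to be threaded up the staircase, because each $z_{j+1}$ is determined directly from the lifts of $x_{j+2}$ and $x_j$, not from previously constructed $z$'s. If you instead run your induction literally, choosing $z_{j+1}$ to absorb the error $x_j - d^\dagger z_{j-1}$, you face the question of why $x_j - d^\dagger z_{j-1}$ is a $d$-coboundary in $C^{-j}_{R/I}$: showing it is a $d$-cocycle is not sufficient, since $(C^\bullet_{R/I}, d)$ has nonzero cohomology, and resolving this requires exactly the lift-to-$R$ and divide-by-$q_N$ move anyway — so the two formulations converge to the same computation. One minor imprecision to fix: at the very bottom of the staircase (when $|\ell| = 0$) the identity $d z_1 = x_0$ must hold on the nose; there is no $d^\dagger$-exact correction available in total degree $0$, so your phrase ``up to an error in the image of $d^\dagger$'' applies only at the intermediate levels.
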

\begin{proof} Let $x \in \ker \widehat d$ and let $z \in \widehat C^{-\ell-1}_{R / I} (\mathpzc{A})$
of the form $z = (0, z_{\ell - 1}, z_{\ell - 3}, \dots, z_{|\ell- 1|})$. We will now construct an
element such that  
\begin{equation}
\widehat{d} z = (d^\dagger x_{\ell}, x_{\ell - 2}, x_{\ell - 4}, \dots, x_{|\ell|}). 
\end{equation}

To this end, let first assume $|\ell| = 0$, and consider (by right-exactness of restriction of
scalars) two lifts of $x_0 \in C^{0}(\mathpzc{A})$ and $x_2 \in C^{-2}(\mathpzc{A})$. Then there
exists $z_1 \in C^{-1}(\pzA)$ such that 
\begin{equation}
d x_2 + d^\dagger x_0 = q_N z_1.
\end{equation}
This implies the relations $dd^\dagger x_0 = q_N dz_1 =  dd^\dagger d z_1$, which in turn imply
\begin{equation}
(d + d^\dagger)^2 (dz_1 - x_0) = 0,
\end{equation}
and hence, by injectivity of $(d + d^\dagger)^2$ over $R$ one has that $x_0 = dz_1$.

Next, let assume instead that $|\ell| = 1$, and consider a lift of $x_1 \in C^{-1}(\pzA)$ and $x_3
\in C^{-3}(\pzA)$ over $R$ as above. Then, there exists $z_0 \in C^{0} (\mathpzc{A})$ and $z_2 \in
C^{-2} (\mathpzc{A})$ such that 
\begin{equation}
dx_1 = q_N z_0, \qquad d^\dagger x_1 + dx_3 = q_N z_2.
\end{equation}
These lead to the relations $d^\dagger d x_1 = q_N d^\dagger z_0 = d^\dagger d d^\dagger z_0$ and
$dd^\dagger x_1 = q_N dz_2 = d d^\dagger d x_2$, which in turn imply that 
\begin{equation}
(d+d^\dagger)^2 (d^\dagger z_0 + d z_2 - x_1) = 0,
\end{equation}
and hence $x_1 = d z_2 + d^\dagger z_1$. For generic $\ell = j$, if the lift $x_j \in
C^{-j}(\mathpzc{A})$ is such that  
\begin{equation}
dx_j + d^{\dagger} x_{j - 2} = q_N z_{j - 1}, \qquad d x_{j + 2} + d^{\dagger} x_j = q_N z_{j +1},
\end{equation}
for some $z_{j-1} \in C^{-j+1} (\mathpzc{A})$ and $z_{j+1} \in {C}^{-j -1} (\mathpzc{A})$, then,
proceeding as above, one finds that 
\begin{equation}
x_j = dz_{j+1} + d^\dagger z_{j-1}.
\end{equation}
Keep going up in $\ell$ this way, it is then easy to see that only the last top term cannot be made
exact, whilst all the terms of the form $(x_{\ell}, x_{\ell -2}, \ldots, x_{|\ell|}) \in C^{-\ell}_R
(\mathpzc{A})$ with $d^{\dagger}x_l = 0$ are $\widehat d$-exact.
\end{proof}

Now, observe that there is a map $\widehat \iota : C^\bullet (\mathpzc{A}) \rightarrow C^\bullet_{R
/ I, 0} (\mathpzc{A}) \hookrightarrow \widehat{C}^\bullet_{R / I} (\mathpzc{A})$, given by the
extension of scalars followed by the embedding in $\widehat{C}^\bullet_R (\mathpzc{A})$. On elements
of the form \eqref{vectorell} it reads 
\begin{equation} 
\xymatrix@R=1.5pt{
\widehat \iota : C^{-\ell} (\mathpzc{A}) \ar[r]  & \widehat C^{-\ell}_{R / I} (\mathpzc{A})\\
x \ar@{|->}[r] & (x \, \mbox{mod}\, q_N, 0, \ldots, 0).
}
\end{equation}
Denoting by $\widehat C^\bullet_R (\mathpzc{A})$ the restriction of scalars (pullback) to $R$ of
the complex $\widehat C^{\bullet}_{R / I} (\mathpzc{A})$, we have a map $\iota : {C}^{-\ell}
(\mathpzc{A}) \rightarrow \widehat C^{-\ell}_R (\mathpzc{A})$. The restriction of
scalars defines a fully faithful embedding 
$
\mbox{\sffamily{D}}^\flat(R/I\mbox{-\sffamily{Mod}}) \hookrightarrow \mbox{\sffamily{D}}^\flat(R\mbox{-\sffamily{Mod}}).
$ and the adjunction between restriction of scalars and extension of scalars restricts to an equivalence
of categories on the essential image of the embedding. More precisely, one proves the the following theorem, which shows that the complex
$C^{\bullet}(\pzA)$ can in fact be viewed as a complex over $R/I$ as claimed above.

\begin{theorem}[Quasi-Isomorphism] The map $\iota : C^\bullet (\mathpzc{A}) \rightarrow \widehat
C^\bullet_R (\mathpzc{A})$ is a cochain map, \emph{i.e.}\ it descends to a well-defined map in
cohomology. It is a quasi-isomorphism of complexes of $R$-modules. In particular, $C^{\bullet}(\pzA)$ is in the
essential image of the embedding $\mbox{\sffamily{\emph{D}}}^\flat(R/I\mbox{-\sffamily{\emph{Mod}}})
\hookrightarrow \mbox{\sffamily{\emph{D}}}^\flat(R\mbox{-\sffamily{\emph{Mod}}})$.
\end{theorem}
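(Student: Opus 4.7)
My plan is to first verify the cochain condition by a direct computation, and then establish the quasi-isomorphism by a two-sided argument built on the unrolling lemma plus the matrix factorization identity $(d+d^\dagger)^2 = q_N \cdot \mathrm{id}$.

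For the cochain map condition: given $x \in C^{-\ell}(\pzA)$, the element $\iota(x) = (x \bmod q_N, 0, \ldots, 0)$ is supported entirely on the $j=0$ layer. Applying the differential formula \eqref{dhat}, every $d^\dagger$-contribution coming out of $\iota(x)$ is killed by the $(1-\delta_{0j})$ factor, so only $d$ fires on the top slot. Therefore $\widehat{d}\, \iota(x) = (dx \bmod q_N, 0, \ldots, 0) = \iota(dx)$, as required.

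For surjectivity on cohomology, let $y = (x_\ell, x_{\ell-2}, \ldots, x_{|\ell|})$ be a $\widehat{d}$-cocycle in degree $-\ell$. By Lemma \ref{unrolling}, there exists $z \in \widehat{C}^{-\ell-1}_{R/I}(\pzA)$ with $\widehat{d}z = (d^\dagger z_{\ell-1}, x_{\ell-2}, \ldots, x_{|\ell|})$, so modulo coboundaries $y$ is cohomologous to $y' = (\tilde x, 0, \ldots, 0)$ where $\tilde x = x_\ell - d^\dagger z_{\ell-1}$. The cocycle condition forces $d\tilde x \equiv 0 \pmod{q_N}$. Pick any lift $\hat x \in C^{-\ell}(\pzA)$ of $\tilde x$; then $d\hat x = q_N w$ for some $w \in C^{-\ell+1}(\pzA)$, and since $C^{-\ell+2}(\pzA)$ is torsion-free over $R$ we deduce from $q_N \, dw = d(d\hat x) = 0$ that $dw = 0$. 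The matrix factorization identity then gives $q_N w = dd^\dagger w + d^\dagger dw = dd^\dagger w$, so that $x := \hat x - d^\dagger w$ is a genuine cocycle in $C^\bullet(\pzA)$. To finish, I will check that $\iota(x)$ represents $[y]$ by exhibiting an explicit primitive: the element $u = (0,\, -w \bmod q_N,\, 0, \ldots, 0) \in \widehat{C}^{-\ell-1}_R(\pzA)$ satisfies $\widehat{d}u = (d^\dagger w \bmod q_N, 0, \ldots, 0)$, because the position $j=1$ contribution $dw \bmod q_N$ vanishes. Thus $\iota(x) = y' - \widehat{d}u$, so $[\iota(x)] = [y]$ in cohomology.

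For injectivity on cohomology, suppose $x \in C^{-\ell}(\pzA)$ is a cocycle and $\iota(x) = \widehat{d} z$ for some $z = (z_0, z_1, z_2, \ldots) \in \widehat{C}^{-\ell-1}_R(\pzA)$. The equations $\widehat{d}z = \iota(x)$ expand, after lifting every $z_j$ to an element $\hat z_j \in C^{-\ell-1+2j}(\pzA)$, into the system $x = d\hat z_0 + d^\dagger \hat z_1 + q_N r_0$ and $d\hat z_j + d^\dagger \hat z_{j+1} = q_N r_j$ for $j \geq 1$, with $d\hat z_{j_{\max}} = q_N r_{j_{\max}}$ at the bottom. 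The plan is to run a downward induction starting from $j_{\max}$: at each stage, use $d^2 = 0$ and freeness of $C^\bullet(\pzA)$ to conclude that the residual $r_j$ is itself a cocycle, rewrite $q_N r_j = dd^\dagger r_j$, and absorb $d^\dagger r_j$ into a redefinition of $\hat z_{j-1}$ that preserves all lower equations. After finitely many steps only the top equation $x = d\hat z_0$ survives (modulo now-trivial corrections), producing the required primitive in $C^\bullet(\pzA)$.

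The main obstacle is the bookkeeping in Step 3: one has to verify that each successive absorption $\hat z_{j-1} \mapsto \hat z_{j-1} - d^\dagger r_{j-1}$ preserves the compatibility equations at all lower indices without introducing new obstructions. This is a dual version of the unrolling procedure, and it is essentially forced once one sets up the recursion carefully and uses $\{d, d^\dagger\} = q_N$ together with the freeness of the $C^{-i}(\pzA)$ over $R$ at every step. Once injectivity and surjectivity on cohomology are established, the final assertion that $C^\bullet(\pzA)$ lies in the essential image of $\mathsf{D}^\flat(R/I\text{-}\mathsf{Mod}) \hookrightarrow \mathsf{D}^\flat(R\text{-}\mathsf{Mod})$ is immediate, since $\widehat{C}^\bullet_{R/I}(\pzA)$ is a complex of $R/I$-modules whose restriction of scalars is quasi-isomorphic to $C^\bullet(\pzA)$ via $\iota$.
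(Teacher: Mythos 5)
Your surjectivity argument is the same as the paper's and is correct modulo a sign slip: with your choice $u = (0,\,-w,\,0,\ldots,0)$ one computes $\widehat d u = (-d^\dagger w,\, -dw,\, 0,\ldots) = (-d^\dagger w,\,0,\ldots)$, so $y' - \widehat d u = (\tilde x + d^\dagger w, 0, \ldots)$, whereas $\iota(x) = (\tilde x - d^\dagger w,0,\ldots)$; you should take $u = (0,\,w,\,0,\ldots,0)$ instead. This is cosmetic. The cochain-map computation and the closing deduction that $C^\bullet(\pzA)$ lies in the essential image are fine.

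The injectivity half, however, is left as a plan, and the plan as stated contains a genuine error. You claim that ``at each stage ... the residual $r_j$ is itself a cocycle, rewrite $q_N r_j = dd^\dagger r_j$.'' This holds only for the bottom residual $r_{j_{\max}}$, where applying $d$ to $d\hat z_{j_{\max}} = q_N r_{j_{\max}}$ and using $d^2 = 0$ plus torsion-freeness gives $dr_{j_{\max}} = 0$. For $j < j_{\max}$ the same manipulation on $d\hat z_j + d^\dagger \hat z_{j+1} = q_N r_j$ gives $dd^\dagger \hat z_{j+1} = q_N\, dr_j$, hence (after the previous absorption, which makes $\hat z_{j+1}$ a $d$-cocycle) $dr_j = \hat z_{j+1}$, which is generically nonzero. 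So $r_j$ is \emph{not} a cocycle, and the identity $q_N r_j = dd^\dagger r_j$ does not hold; instead you must use the full factorization identity $dd^\dagger r_j + d^\dagger d r_j = q_N r_j$ together with $dr_j = \hat z_{j+1}$ to compute $d^\dagger \hat z_{j+1} = d^\dagger d r_j = q_N r_j - dd^\dagger r_j$, and substitute this back into the $j$-th equation to conclude $d(\hat z_j - d^\dagger r_j) = 0$. With this correction the downward induction closes: after finitely many steps the top equation becomes $d(\hat z_0 - d^\dagger r_0) = x$ (using $dx = 0$ at the final step), which is exactly what you need. This repaired recursion is in spirit the same as the paper's route, which reapplies the unrolling lemma to the primitive $z$ and then disposes of the $q_N f$ ambiguity in the lifts. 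As submitted, though, the injectivity step does not yet constitute a proof; the claim about $r_j$ being a cocycle must be removed and the induction re-derived along the lines above.
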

\begin{proof} To start, let $x \in C^{-i} (\mathpzc{A})$. Then one has 
\begin{equation}
(\iota \circ d x - \widehat d \circ \iota ) (x) = (dx \,\mbox{mod}\, q_N - d ( x \, \mbox{mod} \, q_N ) , 0 , \ldots, 0) = 0,
\end{equation}
since $\widehat{d} \circ \iota (x)  = d_{R / I} \circ \iota(x)$, by degree reasons and $d ( x \,
\mbox{mod} \, q_N ) = dx \,\mbox{mod}\, q_N$ by definition of $d$. 

For the second part, thanks to lemma \ref{unrolling}, without loss of generality, one can restrict
to consider classes of the form $(x_l, 0, \ldots, 0) \in C^{-\ell}_{R / I} (\mathpzc{A}) \cap \ker
\widehat d$. Let us first prove that $(x_l, 0, \ldots, 0) \in \iota(\ker d)$, i.e.\ the class $x_l$
has a representative in the image of $\ker d$ under the map $C^\bullet (\mathpzc{A}) \hookrightarrow
\widehat C^\bullet_{R / I} (\mathpzc{A})$. Take any lift $x$ of $x_l$. Let $z \defeq (0, z_{\ell -
1}, 0, \dots, 0)$ where $d x = q_N z_{\ell - 1}$. Then, $\widehat{d} z = (d^\dagger z_{\ell - 1}, 0,
\dots, 0)$ since $d z_{\ell - 1} = 0$ and $q_N z_{\ell - 1} = d d^\dagger z_{\ell - 1}$. If we
consider $\tilde{x} = x - d^{\dagger} z_l$, then $d \tilde{x}= 0$ in $C^{-l}(\pzA)$. On the other
hand, $\iota(\tilde{x}) = \iota(x) - \widehat{d} (0, z_{l - 1}, 0, \ldots, 0)$. This proves that
$\ker \widehat d \subseteq \iota(\ker d) + \mbox{im}\, \widehat{d}.$ This shows the surjectivity of
the map on cohomology.

On the other hand, let $x \in \mbox{im}\, \widehat d$, with $x = \widehat d z.$ By the previous
part, we assume that $x = (x_\ell, 0, \ldots, 0)$ and we can bring $z$ into the form $(z_{\ell + 1},
z_{\ell - 1}, 0, \dots, 0)$ as before and $d z_{\ell - 1} = 0$ and $x_\ell = d z_{\ell + 1} +
d^\dagger z_{\ell - 1} + q_N f$, for any choice of lifts and some $f$. Of course, we can modify
$x_\ell$ by $(d + d^\dagger) f$ so we can assume without loss of generality that $f = 0$. Now, since
$d x_\ell = 0$, we have $d d^\dagger z_{\ell - 1} = q_N z_{\ell -1} = 0$. Thus we have $x_\ell = d
z_{\ell + 1}$, and hence $x_\ell \in \mbox{im}\, d$, proving that $\mbox{im}\, \widehat d \subseteq
\iota(\mbox{im} \, d)$. 

Assume $x \in \mbox{im}\, d$ with $x = dz$, then $x - q_N f \in \mbox{im}\, d$, for any $f \in \ker
d$. Clearly, $d (z - d^{\dagger} f) = x - q_N f$. Now consider, $x \in \ker d$ such that $\iota(x)
\in \mbox{im}\, \widehat d$. Since $\mbox{im}\, \widehat d \subseteq \iota(\mbox{im} \, d)$, there
is an element $f$, such that $x + q_N f \in \mbox{im}\, d$. Since $dx = 0$, and $d(x + q_N f) = 0$,
clearly, $d f = 0$ and hence $x = x + q_N f - q_N f \in \mbox{im}\, d$ so that $x$ is already
contained in $\mbox{im} \, d$. This shows injectivity of the map on cohomology, thus concluding the
proof.
\end{proof}

\begin{remark}
The above double complex with the differential given in equation~\ref{CEdifferential} plays
a leading role in Kapranov's proof \cite{Kapranov88,Kapranov88b} of theorem \ref{KapranovQuadrics} on the semi-orthogonal decomposition of the derived category of coherent sheaves on the quadric.  The geometric idea of ``resolving the diagonal'' appears algebraically as resolving $\kk$ as an $A\text{-}A^{!}$ bimodule.
This is the {\it problem of linear algebra} from \cite{BGG} alluded to in the abstract.
\end{remark}

\section{Maximal Cohen--Macaulay Modules and Clifford Modules}
\label{AppendixC}

In this appendix, we construct an equivalence between the category of even Clifford modules and the subcategory of graded maximal Cohen--Macaulay modules without free summands, admitting a \emph{linear} graded free presentation 
\begin{equation}
\xymatrix{
R^n (-1) \ar[r]^{\; \varphi} & R^n \ar[r] & M \ar[r] & 0,
}
\end{equation} 
where $\varphi$ is a linear map -- these are the kind of modules we are dealing with in the present paper, associated with adinkras. We call this subcategory $\mbox{\sffamily{MCM}}^{\mbox{\scriptsize{\sffamily{lin}}}}_{\mbox{\scriptsize{\sffamily{gr}}}} (R).$
More precisely, we prove the following result.
\begin{theorem}[$C\ell (q)$-modules and MCM modules] \label{Cl_MCM} Let $q $ be a non-singular quadratic form on a $k$-vector space $V$ and let $R$ be the homogeneous coordinate ring of the quadric hypersurface defined by $q$. Then the functor 
\begin{equation}
\xymatrix@R=1.5pt{
\mathpzc{T} : \mbox{\sffamily{\emph{MCM}}}^{\mbox{\scriptsize{\sffamily{\emph{lin}}}}}_{\mbox{\scriptsize{\sffamily{\emph{gr}}}}} (R) \ar[r] & C\ell_0 (q)\mbox{-\sffamily{\emph{mod}}} \\
M \ar@{|->}[r] & \mathpzc{T} (M) \defeq k \otimes_R M 
 }
\end{equation} 
is an equivalence of categories. Under this equivalence, the first syzygy module $\mbox{Syz}_1 (M)$ of $M$ is mapped to 
\begin{equation}
\mathpzc{T} (\mbox{\emph{Syz}}_1 (M)) = C\ell_1 (q) \otimes_{C\ell_0 (q)} \mathpzc{T} (M),  
\end{equation}
and the $\mathbb{Z}/2$-graded module $\mathpzc{T} (M) \oplus \mathpzc{T} (\mbox{\emph{Syz}}_1 (M))$ is a $C\ell (q)$-module. 
\end{theorem}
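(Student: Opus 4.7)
The plan is to leverage Eisenbud's theorem (Theorem \ref{Eisenbud}) to translate between linear MCM modules and linear matrix factorizations, and then to recognize that a linear matrix factorization of a quadratic form is essentially the same data as a $\mathbb{Z}/2$-graded Clifford module. Given $M \in \mbox{\sffamily{MCM}}^{\mbox{\scriptsize{\sffamily{lin}}}}_{\mbox{\scriptsize{\sffamily{gr}}}}(R)$ with no free summands, Theorem \ref{Eisenbud} produces a $2$-periodic minimal free resolution encoded by a matrix factorization $(\varphi,\psi)$ with $\psi\circ\varphi=\varphi\circ\psi=q\cdot\mathrm{id}$. Linearity of the presentation $R^n(-1)\xrightarrow{\varphi} R^n\to M\to 0$ forces $\varphi$ to have linear entries, and by minimality the same holds for $\psi$.

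Applying $\mathpzc{T}=k\otimes_R(-)$ to the linear presentation annihilates $\varphi$ (its entries are in $\mathfrak{m}$), so $\mathpzc{T}(M)\cong k^n$ as a $k$-vector space; similarly, since $\mbox{Syz}_1(M)$ is the cokernel-type MCM module associated to the swapped factorization $(\psi,\varphi)$, one obtains $\mathpzc{T}(\mbox{Syz}_1(M))\cong k^n$ (sitting in degree shifted by $1$). Because $\varphi$ and $\psi$ are linear in the $\lambda_i$, we can evaluate each $\lambda\in V=R_1$ as endomorphisms $\varphi(\lambda),\psi(\lambda)\in\mathrm{End}(k^n)$. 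Define an odd action of $V$ on the $\mathbb{Z}/2$-graded space $W\defeq\mathpzc{T}(M)\oplus\mathpzc{T}(\mbox{Syz}_1(M))$ by
\begin{equation}
\lambda\cdot (x,y)\defeq\bigl(\psi(\lambda)\,y,\;\varphi(\lambda)\,x\bigr).
\end{equation}
The matrix factorization relation evaluates to $\lambda\cdot\lambda\cdot(x,y)=q(\lambda)(x,y)$, so by the universal property of $C\ell(q)$ the space $W$ becomes a $\mathbb{Z}/2$-graded $C\ell(q)$-module. Restricting to the even part gives the $C\ell_0(q)$-structure on $\mathpzc{T}(M)$, while by construction the odd summand is $\mathpzc{T}(\mbox{Syz}_1(M))\cong C\ell_1(q)\otimes_{C\ell_0(q)}\mathpzc{T}(M)$, proving both auxiliary claims.

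For the inverse functor, starting from a $C\ell_0(q)$-module $W_0$, one forms the $\mathbb{Z}/2$-graded $C\ell(q)$-module $W_0\oplus(C\ell_1(q)\otimes_{C\ell_0(q)}W_0)$. Restricting the Clifford action to $V\subset C\ell_1(q)$ produces $k$-linear maps $V\otimes W_0\to W_1$ and $V\otimes W_1\to W_0$, which extend $R$-linearly to a linear matrix factorization $(\varphi,\psi)$ of $q$; its cokernel is a linearly presented MCM module. A morphism of linear MCM modules, lifted to a morphism of the 2-periodic minimal resolutions, reduces modulo $\mathfrak{m}$ to an even $C\ell_0(q)$-linear map, and conversely any $C\ell_0(q)$-morphism extends to a morphism of the associated matrix factorizations by $R$-linearity; this yields functoriality and shows the two constructions are mutually quasi-inverse.

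The main subtlety will be to verify that the two assignments are truly inverse at the level of morphisms, not merely on objects. Since the MCM modules in question have minimal linear presentations and no free summands, any $R$-linear morphism $f\colon M\to M'$ lifts uniquely up to chain homotopy to a morphism of the $2$-periodic resolutions, and the ambiguity in the lift (which would normally force us to work in the stable category $\underline{\mbox{\sffamily{MCM}}}(R/I)$, see Corollary \ref{eqc1}) is absent precisely because the lift is degree-zero and the free resolutions start in degrees $0$ and $-1$. Care must also be taken with the Koszul signs in the identification $\mathpzc{T}(\mbox{Syz}_1(M))\cong C\ell_1(q)\otimes_{C\ell_0(q)}\mathpzc{T}(M)$, but once consistent conventions are fixed the verification is mechanical.
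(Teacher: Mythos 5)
Your proposal is correct and its backbone is the same as the paper's: both proofs route through Eisenbud's theorem~\ref{Eisenbud} and the observation that a \emph{linear} matrix factorization of $q$ is the same data as a $\mathbb{Z}/2$-graded Clifford module. However, there is a genuine difference in how you construct the Clifford action on $\mathpzc{T}(M)\oplus\mathpzc{T}(\mbox{Syz}_1(M))$. The paper first passes from a Clifford module to a matrix factorization (this is your ``inverse functor'' direction and is done identically), and then for the opposite direction it invokes the $2$-periodicity isomorphism $\mathpzc{P}\colon\mathrm{Ext}^2_R(k,k)\otimes_R\mathrm{Tor}^R_\bullet(k,M)\to\mathrm{Tor}^R_{\bullet+2}(k,M)$, identifying the resulting action of $\mathrm{Ext}^\bullet_R(k,k)/\langle H-1\rangle\cong C\ell(q)$ on $\mathrm{Tor}^R_0(k,M)\oplus\mathrm{Tor}^R_1(k,M)$. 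This connects the statement to the Yoneda algebra, the homotopy Lie algebra of the hypersurface ring, and the Koszul-duality theme of Theorem~\ref{Koszul1}. Your argument is more elementary: you simply evaluate the linear entries of the matrix factorization at $\lambda\in V$ to obtain the odd endomorphisms directly on $k^n\oplus k^n$, bypassing $\mathrm{Ext}$ and the periodicity map entirely. What you lose is the conceptual link the paper wants to emphasize; what you gain is a self-contained verification that does not presuppose the Priddy/Avramov description of $\mathrm{Ext}^\bullet_R(k,k)$. You also spend more care than the paper's ``detailed sketch'' on functoriality and on explaining why one does not need to descend to the stable category $\underline{\mbox{\sffamily{MCM}}}(R/I)$: your point that minimality of the resolution makes the reduction mod $\mathfrak{m}$ independent of the choice of lift is exactly the right observation, and is left implicit in the paper.
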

\begin{proof}[Proof (a detailed sketch)] Before we start, we observe that the category $C\ell_{\bar{0}} (q)\mbox{-\sffamily{{Mod}}}$ is equivalent to the category of $\mathbb{Z}/2$-graded $C\ell (q)\mbox{-\sffamily{{Mod}}}$, provided that $q$ is non-singular. Indeed, the projection functor $M_0 \oplus M_1 \rightarrow M_0$ has an inverse given by 
\begin{equation}
M_0 \longmapsto C\ell  (q) \otimes_{C\ell_{\bar{0}} (q)} M_0 \cong M_0 \oplus ( C\ell_{\bar{1}} (q) \otimes_{C\ell_{\bar{0}} (q)} M_0).
\end{equation} 
Relying on this, let first $M = M_0 \oplus M_1$ be a $C\ell (q)$-module obtained as above from a $C\ell_{\bar 0} (q)$-module $M_0$, so that $\mbox{rank}\, M_0 = \mbox{rank}\, M_1 = m$. By the universal property of Clifford algebra, the inclusion $\iota: V \hookrightarrow C\ell (q_N)$ is a Clifford map $\iota^2 (v ) \defeq \lambda^2 = q_N (v) \cdot \mbox{id}_{C\ell (q)}$. Then the $C\ell (q)$-module structure on $M$ is a lift of a $V$-action compatible with the Clifford condition, that is if 
\begin{equation}
\xymatrix@R=1.5pt{
f : C\ell (q) \otimes M \ar[r] & M\\
v\otimes m \ar@{|->}[r] & f (v, m) \defeq v \cdot m
}
\end{equation}
we have a family of maps $V \owns v \longmapsto f_v \in \mbox{End}(V)$,
such that $f^2_v \defeq f_v \circ f_v = q_N (v) \cdot \mbox{id}_M $, since $v \cdot (v \cdot m) = (v^2) \cdot m$.  If $M$ is also $\mathbb{Z}/2$-graded, then the map $f_v$ splits into two components
\begin{equation}
f_v = (\psi_v : M_0 \longrightarrow M_1 , \; \varphi_v : M_1 \longrightarrow M_0).
\end{equation}    
Notice that $V \subset C\ell_{\bar 1} (q_N)$, so that the $|V \cdot m | = |m| + 1$, for $m$ homogeneous in $M$, which justifies the parity of the pair $(\psi_v, \varphi_v)$. 
In block forms, one has that 
\begin{equation}
f_\lambda = \left ( 
\begin{array}{c|c}
0 & \varphi_v \\
\hline
\psi_v & 0
\end{array}
\right ).
\end{equation}
Accordingly, the condition $f^2_v = q_N (v) \cdot \mbox{id}_M $ is easily seen to become 
\begin{equation}
f^2_\lambda = \left ( 
\begin{array}{c|c}
\varphi_v \circ \psi_v & 0 \\
\hline
0 & \psi_v \circ \varphi_v 
\end{array}
\right ) = q_N \cdot (\lambda) \left ( 
\begin{array}{c|c}
1 & 0 \\
\hline 
0 & 1
\end{array}
\right ).
\end{equation}
That is 
\begin{equation}
\varphi_v \circ \psi_v = q_N (v) \cdot \mbox{id}_{M_0}, \qquad \psi_v \circ \varphi_v = q_N (v) \cdot \mbox{id}_{M_1}.
\end{equation}
Finally, notice that letting $\lambda$ free, by hom-tensor adjunction one has that the map $f \in \mbox{Hom} (V \otimes M, M) \cong \mbox{Hom} (M , \mbox{Hom} (V, M) $ get decomposed as 
\begin{equation}
f = (\psi \in \mbox{Hom} (M_0, \mbox{Hom} (V, M_1)), \; \varphi \in \mbox{Hom} (M_1, \mbox{Hom} (V, M_1))\; ).
\end{equation}
In other words, $f$ can be represented as a block matrix whose non-zero entries are the off-diagonal blocks, which are both linear forms in $V^\vee$, 
\begin{equation}
f \in \left ( \begin{array}{c|c}
0 &  (\varphi)_{m\times m}[V^\vee] \in \mbox{Hom} (M_1, M_0 \otimes V^\vee) \\
\hline
(\psi)_{m\times m}[V^\vee] \in \mbox{Hom} (M_0, M_1 \otimes V^\vee) & 0
\end{array}
\right ).
\end{equation}
This shows that the non-zero odd diagonal entries, the pair of square matrices $(\psi , \varphi)$, defines a linear matrix factorization of the quadratic form $q_N$ over $\kk[V^\vee]$, 
\begin{equation}
\varphi \psi = \psi \varphi = q_N \cdot \mbox{id}_{m\times m}.
\end{equation} 
In other words any $\mathbb{Z}/2$-graded $C\ell (q_N)$-module corresponds to a linear matrix factorization of $q_N,$ and hence to a maximal Cohen--Macaulay $R$-module via the bijection $(\psi , \varphi ) \mapsto M_{(\psi, \varphi)} = \coker (\varphi)$ in \eqref{coker}. \\
To show the other direction, let $M$ be a maximal Cohen--Macaulay module $R$-module. Then its free resolution $\mathbf{F}_{(\psi, \varphi)}$ is determined by a matrix factorization $(\psi, \varphi)$ of $q$ and it is periodic of period 2, by theorem \ref{Eisenbud} -- notice that $M \cong \coker (\varphi)$. The periodicity isomorphism of the resolution is given by a map 
\begin{equation}
\xymatrix@R=1.5pt{
\mathpzc{P}: \mbox{Ext}^2_R (k, k) \otimes_R \mbox{Tor}^R_\bullet (k, M) \ar[r] & \mbox{Tor}^R_{\bullet + 2} (k, M) \\
H \otimes F^M_{\bullet} \ar@{|->}[r] & \mathpzc{P} (H \otimes F^M_{\bullet} ) \defeq H \cdot F^M_\bullet.
}
\end{equation}
where $H \in \mbox{Ext}^2_R (k,k)$ is an element of the Yoneda algebra $\mbox{Ext}^\bullet_R (k,k)$. This implies that the direct sum $M_0 \oplus M_1 \defeq \mbox{Tor}^R_0 (k, M) \oplus \mbox{Tor}^R_1 (k, M)$ has an action of the quotient algebra $\mbox{Ext}^\bullet_R (k, k) / \langle H - 1 \rangle$. On the other hand, we have seen that setting $H = 1$ in the Yoneda algebra of a quadric -- which is isomorphic to the universal enveloping algebra of the homotopy algebra induced by the quadrics -- defines a $\mathbb{Z}/2$-graded algebra isomorphism $\mbox{Ext}^\bullet_R (k, k) / \langle H -1 \rangle \cong C\ell (q)$, and hence $M_0 \oplus M_1$ defines a $\mathbb{Z}/2$-graded $C\ell (q)$-module. Its even part $M_0 = k \otimes_R M \cong M / \mathfrak{m}\cdot M$, where $\mathfrak{m}$ is the maximal ideal generated by $V^\vee$ is a $C\ell_0 (q)$-module. Its odd part $M_1$ is isomorphic to $k \otimes_R \mbox{Syz}_1 (M)$, where $\mbox{Syz}_1 (M)$ is the first syzygy module of $M$.   
\end{proof}
\noindent It turns out that if the quadratic form is regular, all linear MCM modules are linearizable in the sense of \cite{Buchweitz06}, and hence linear MCM modules are really arbitrary MCM modules - see theorem 2.2 in cite \cite{Buchweitz06}.

\printbibliography[heading=bibintoc]

\end{document}